\pgfplotsset{compat=1.15}
\renewcommand{\epsilon}{\varepsilon}
\renewcommand{\delta}{\updelta}
\newcommand{\vargaussian}{\sigma_{\mathsmaller{\mathcal{N}}}^{2}}
\newcommand{\vargamma}{\sigma_{\mathsmaller{\Gamma}}^{\smash{2}\vphantom{c}}}
\newcommand{\ssum}{\mathsmaller{\sum}\limits}
\newcommand{\probop}[1]{\mathop{\mathbb{#1}}\displaylimits}
\newcommand{\Expop}{\probop{E}}
\newcommand{\Expwrt}[2]{\Expop_{#1}\left[#2\right]}
\newcommand{\rand}[1]{\mathscr{#1}}
\newcommand{\distributed}{\sim}
\newcommand{\naive}{na\"{\i}ve\@}
\newcommand{\floor}[1]{\left\lfloor{} #1 \right\rfloor{}}
\newcommand{\ceil}[1]{\left\lceil{} #1 \right\rceil{}}
\newcommand{\abs}[1]{\left\lvert{} #1 \right\rvert{}}
\newcommand{\norm}[1]{\left\lVert{} #1 \right\rVert{}}
\DeclareMathOperator{\sgn}{sgn}
\newcommand{\F}{\mathcal{F}}
\newcommand{\R}{\mathbb{R}}
\newcommand{\N}{\mathbb{N}}
\newif\iflsymb
\newcommand{\slfrac}[2]{#1/#2}
\newcommand{\sabs}[1]{\lvert{} #1 \rvert{}}
\DeclareMathOperator*{\Prob}{\mathbb{P}}
\DeclareMathOperator*{\Expect}{\mathbb{E}}
\DeclareMathOperator*{\Var}{\mathbb{V}}
\newcommand{\VarWimpy}{v}
\newcommand{\EVarWimpy}{\hat{\VarWimpy}}
\let\todo\undefined
\newcommand{\mydef}[1]{\emph{#1}}
\newcommand{\amy}[1]{{\color{red}[\textsc{Amy}: \emph{#1}]}}
\newcommand{\samy}[2]{{\color{pink}\sout{#1}\color{blue}#2}}
\newcommand{\enrique}[1]{{\color{blue}[\textsc{Enrique}: #1]}}
\newcommand{\cyrus}[1]{{\color{green!25!black}[\textsc{Cyrus}: \emph{#1}]}}
\newcommand{\bhaskar}[1]{{\color{purple}[\textsc{Bhaskar}: \emph{#1}]}}
\newcommand{\sbhaskar}[2]{{\color{pink}\sout{#1}\color{purple}#2}}
\newcommand{\todo}[1]{{\color{red!75!black}[\textsc{ToDo}: \emph{#1}]}}
\renewcommand{\amy}[1]{}
\renewcommand{\enrique}[1]{}
\renewcommand{\cyrus}[1]{}
\renewcommand{\bhaskar}[1]{}
\renewcommand{\todo}[1]{}
\renewcommand{\samy}[2]{#2}
\renewcommand{\sbhaskar}[2]{#2}
\algnewcommand{\Input}{\textbf{input:}~}
\algnewcommand{\Output}{\textbf{output:}~}
\algnewcommand{\Error}{\textbf{error}~}
\algnewcommand{\Continue}{\textbf{continue}~}
\algnewcommand{\Break}{\textbf{break}~}
\newcommand{\Samples}{\bm{Y}}
\newcommand{\SamplePoint}{\ConditionValue}
\newcommand{\NumberOfSamples}{m}
\newcommand{\UtilityRange}{c}
\newcommand{\UtilityVariance}{\bm{v}}
\newcommand{\EUtilityVariance}{\hat{\bm{v}}}
\newcommand{\ConditionSpace}{\mathcal{Y}}
\newcommand{\ConditionValue}{y}
\newcommand{\ConditionDistribution}{\rand{D}}
\newcommand{\SampleIndex}{j}
\newcommand{\TimeIndex}{t}
\newcommand{\ScheduleLength}{T}
\newcommand{\GameTuple}{\Gamma}
\newcommand{\ConditionalGame}[1]{\GameTuple_{#1}}    
\newcommand{\InducedGame}[1]{\GameTuple_{#1}}        
\newcommand{\EmpiricalGame}[1]{\hat{\GameTuple}_{#1}} 
\newcommand{\NumberOfPlayers}{|\SetOfPlayers|}
\newcommand{\SetOfPlayers}{P}
\newcommand{\PlayerIndex}{p}
\newcommand{\StrategySet}{S}
\newcommand{\Strategy}{s}
\newcommand{\StratProfile}{\bm{s}}
\newcommand{\StratProfileAlt}{\bm{t}}
\newcommand{\StratProfileSpace}{\bm{\StrategySet}}
\newcommand{\MixedStrategySet}{{\StrategySet}^{\diamond}}
\newcommand{\MixedStratProfileAlt}{\bm{\tau}}
\newcommand{\MixedStratProfileSpace}{\bm{\MixedStrategySet}}
\newcommand{\Utility}{\bm{u}}
\newcommand{\SizeOfGame}[1]{N_{#1}}
\newcommand{\MV}{\mathrm{R}}
\newcommand{\AdversarialValue}{\mathrm{A}}
\newcommand{\MinimaxValue}{\mathrm{MM}}
\newcommand{\AnarchyRatio}{\mathrm{AR}}
\newcommand{\AnarchyGap}{\mathrm{AG}}
\newcommand{\StabilityRatio}{\mathrm{SR}}
\DeclareMathOperator{\Welfare}{W}
\DeclareMathOperator{\Nash}{E}
\DeclareMathOperator{\Adjacent}{Adj}
\DeclareMathOperator{\Regret}{Reg}
\newcommand{\GS}{\ensuremath{\operatorname{GS}}}
\newcommand{\PSP}{\ensuremath{\operatorname{PSP}}}
\newcommand{\UtilityIndices}{\bm{\mathcal{I}}}
\newcommand{\Indices}{\mathcal{I}}
\newcommand{\SetOfFacilities}{E}
\newcommand{\NumberOfFacilities}{|\SetOfFacilities|}
\newcommand{\FacilityIndex}{e}
\newcommand{\FacilityCostFunction}{f}
\newcommand{\PlayerCost}{C}
\newcommand{\CongestionGame}{\mathcal{C}}
\newcommand{\RandomZeroSum}{\mathrm{RZ}}
\newcommand{\RandomCongestionGame}{\mathrm{RC}}
\newcommand{\smallabs}[1]{\Bigl\lvert #1 \Bigr\rvert}
\newcommand{\smallsup}[1]{\mathrel{\raisebox{0.25ex}{\ensuremath{\displaystyle\sup_{\mathsmaller{#1}}}}}}
\newcommand{\smallinf}[1]{\mathrel{\raisebox{0.25ex}{\ensuremath{\displaystyle\inf_{\mathsmaller{#1}}}}}}
\newtheorem{observation}{Observation}[section]
\renewcommand{\todo}[1]{}
\renewcommand{\amy}[1]{}
\renewcommand{\cyrus}[1]{}
\renewcommand{\enrique}[1]{}
\renewcommand{\bhaskar}[1]{}
\renewcommand{\samy}[2]{#2}
\renewcommand{\sbhaskar}[2]{#2}
\theoremstyle{definition}
\newtheorem{definition}{Definition}[section]
\newtheorem{lemma}{Lemma}[section]
\newtheorem{theorem}{Theorem}[section]
\newtheorem{corollary}{Corollary}[section]
\title
{Learning Properties of Simulation-Based Games}
\title{Computational and Data Requirements for Learning Generic Properties of Simulation-Based Games}
\author{Cyrus Cousins\quad Bhaskar Mishra\quad Enrique Areyan Viqueira\quad Amy Greenwald}
\date{August 2022}
\affiliation{\institution{Unaffiliated}}
\affiliation{\institution{University of Florida}}
\affiliation{\institution{Unaffiliated}}
\affiliation{\institution{Brown University}}
\newif\ifpoa
\begin{document}


\maketitle
\begin{abstract}
Empirical game-theoretic analysis (EGTA) is primarily focused on learning the equilibria of simulation-based games.
Recent approaches have tackled this problem by learning a uniform approximation of the game's utilities, and then applying precision-recall theorems: i.e., all equilibria of the true game are approximate equilibria in the estimated game, and vice-versa. 
In this work, we generalize this approach to all game properties that are 
well behaved (i.e., Lipschitz continuous in utilities), 
including regret (which defines Nash and correlated equilibria), adversarial values, and power-mean and Gini social welfare.
\if 0
Towards this end, practitioners build empirical games from samples: i.e., simulation queries.
Statistical EGTA seeks guarantees of the form, ``a property in the learned empirical game is an accurate estimate of the property in the simulation-based game.''
\fi
Further, we introduce a novel algorithm---progressive sampling with pruning (\PSP)---for learning a uniform approximation and thus any well-behaved property of a game, which prunes strategy profiles once the corresponding players' utilities are well-estimated, and we analyze its data and query complexities in terms of the \emph{a priori\/} unknown utility variances.
We experiment with our algorithm extensively, showing that
1)~the number of queries that PSP saves is highly sensitive to the utility variance distribution, and 
2)~PSP consistently outperforms theoretical upper bounds, achieving significantly lower query complexities than natural baselines.
We conclude with experiments that uncover some of the remaining difficulties with learning properties of simulation-based games, in spite of recent advances in statistical EGTA methodology, including those developed herein.

\if
While optimal welfare turns out to be a well-behaved property,
optimal equilibria (i.e., welfare-maximizing or minimizing), and likewise the price of anarchy, are not well-behaved properties, which implies that they cannot be estimated with uniformly bounded complexity.
We thus define a related property based on a Lagrangian relaxation of the equilibrium constraints that is well behaved.
We call this property $\Lambda$-stability. 
As determining the value of an optimal equilibrium is an essential step in computing the price of anarchy, we conclude with a discussion of an alternative, more stable notion of anarchy based on $\Lambda$-stability, which we call the anarchy gap.
\fi
\end{abstract}

\if 0
\amy{a game is ``\samy{uniformly}{} learnable'' with bounded complexity---``efficiently learnable''---when a uniform approximation (that holds simultaneously, for all utilities) can be achieved. 
(complexity bound can depend on eps-delta, but has to work for all utilities, simultaneously.)
when a game is \samy{uniformly}{efficiently} learnable, and a property is well-behaved/Lipschitz, the property is \samy{}{efficiently} learnable.}

\amy{``well-behaved'' = Lipschitz continuous in utilities}

\fi

\if 0
Whereas past algorithms for learning Nash equilibria prune strategies that
that provably (with high probability) do not comprise a Nash equilibrium, our algorithms prune strategies as soon as their utility estimates are sufficiently accurate.
We demonstrate empirically that this approach reduces sample complexity in a simple variant of poker where we control for the variance of different hands, with low variance hands yielding more substantial savings than high variance hands.
We also show that combining these two ideas---pruning both non-equilibrium strategies and well-estimated ones---reduces sample complexity even further.
\fi

\if 0
We then identify an important game-theoretic property, namely optimal equilibria---either welfare-maximizing or minimizing---that is not well behaved, and show they cannot be estimated from finitely many samples of arbitrary simulation-based games with bounded noise to any constant additive error.
We thus define a related property based on a Lagrangian relaxation of the equilibrium constraints that is well behaved.
We call this property $\Lambda$-stable.
So while estimating a welfare-optimizing exact Nash equilibrium of a fixed game can require unbounded sample complexity, estimating an optimal $\Lambda$-stable outcome for any finite $\Lambda \ge 0$ is achievable, albeit with larger and larger confidence intervals as $\Lambda$ approaches infinity, i.e., as a $\Lambda$-stable outcome approaches an exact equilibrium.
As determining the value of an optimal equilibrium is an essential step in computing the price of anarchy, we end with a discussion of an alternative, more stable notion of anarchy.
\fi


\section{Introduction}
\label{sec:intro}

\if 0
In recent years, empirical game-theoretic analysis (EGTA) has emerged as a powerful tool by which to analyze multiagent systems~\cite{Wellman06,tuyls2018generalised,tuyls2020bounds,areyan2020improved}, particularly when only a simulator of the game is available, rather than a precise description of the system, e.g., because of complicated stochastic dynamics.
Such systems are called \mydef{simulation-based games}, or \mydef{black-box} games, and their empirical counterparts, which are derived 
from simulation data, are called \mydef{empirical games}.

Some properties of interest in classical game-theoretic analysis include the set of equilibria,%
\footnote{All mention of equilibria in this paper refer to Nash equilibria.}
the maximal welfare, the price of anarchy, etc.
EGTA, however, has primarily focused on characterizing the equilibria of simulation-based games, which is typically done by learning equilibria in the corresponding empirical games.
Statistical EGTA~\cite{viqueira2019learning,areyan2020improved,tuyls2020bounds,vorobeychik2010probabilistic,Wiedenbeck14} is intended to give practitioners tools by which they can then test the hypothesis, the set of equilibria of an empirical game $\hat{\GameTuple}$ coincides with the set of equilibria of the corresponding simulation-based game $\GameTuple$.

Simulation-based games are noisy by their very nature.
Indeed, multiple simulation queries are necessary before a practitioner can feel confident they have produced an accurate empirical game: i.e., accurate estimates of the
players' utilities at all strategy profiles.
Even when a game itself is well-estimated (e.g., in the sense of PAC-learning~\cite{valiant1984theory}), it is still not 
obvious how to derive a guarantee for a game \emph{property\/} (e.g., its equilibria) from one about the game's accuracy,
as game properties are complex, non-linear functions of a game's utilities.

\if 0
This relationship (and any concerning set-valued properties of a game) can be understood as a form of \emph{precision\/} and \emph{recall}, in the sense of information retrieval: \amy{citation?}
when all equilibria of the simulation-based game are ``recalled'' as equilibria in the empirical game, this corresponds to perfect recall; conversely, when there are no spurious equilibria in the empirical game, i.e., when all equilibria in the empirical game are also equilibria in the simulation-based game, this corresponds to perfect precision.
\fi

\Citet{areyan2020improved} and Tuyls, \emph{et al.}~\shortcite{tuyls2020bounds} provide finite-sample guarantees on the estimated equilibria of simulation-based games, starting from the notion of a uniform approximation.
A \mydef{uniform approximation} of a game is one in which all utilities are estimated to within the same error, simultaneously.
\cyrus{Why Vorobeychik~shortcite?}
Building on the work of Vorobeychik~\shortcite{vorobeychik2010probabilistic}, who analyzed this problem in the case of infinitely many samples, the aforementioned authors
prove that when an empirical game $\hat{\GameTuple}$ uniformly approximates a (true) game $\GameTuple$, 
all equilibria in $\GameTuple$ are approximate equilibria in $\hat{\GameTuple}$~\cite{tuyls2020bounds}, 
and all equilibria in $\hat{\GameTuple}$ are approximate equilibria in $\GameTuple$~\cite{areyan2020improved}. 
\if 0
More formally, letting $\Nash_\epsilon (\GameTuple)$ denote the set of $\epsilon$-Nash equilibria, if $\GameTuple'$ is an $\epsilon$-uniform approximation of $\GameTuple$, then $\Nash (\GameTuple) \subseteq \Nash_{2\epsilon} (\GameTuple') \subseteq \Nash_{4\epsilon} (\GameTuple)$.
\fi
They then develop various algorithms based on standard concentration inequalities (e.g., Hoeffding~\shortcite{hoeffding1963probability}, Bennett~\shortcite{bennett1962probability}) to learn uniform approximations of simulation-based games, and thereby estimate the equilibria of those games with finite sample guarantees.

In this paper, we seek methods to learn properties of simulation-based games beyond equilibria.
Specifically, we ask: what other game-theoretic properties of interest can be well approximated
given only a uniform approximation?
\fi

Game theory is the standard conceptual framework used to analyze strategic interactions among rational agents in multiagent systems.
Traditionally, a game theorist assumes access to a complete specification of a game, including any stochasticity in the environment.
More recently, driven by high-impact applications, e.g., wireless spectrum~\citep{gandhi2007general,weiss2017sats} and online advertisement auctions~\citep{jordan2010designing,varian2007position}, researchers have turned their attention to analyzing games for which a complete specification is either too complex or too expensive to produce~\citep{wellman2006methods}.
When such games are observable through simulation, they are called \emph{simulation-based\/}
games~\cite{vorobeychik2008stochastic}.
The literature on \mydef{empirical game-theoretic analysis} (EGTA) aims to analyze these games.
This paper concerns \mydef{statistical} EGTA, in which the goal is to develop algorithms, ideally with finite-sample guarantees, that learn properties of games (e.g., their equilibria), even when a complete description of the game is not available.
\cyrus{Assumption: Care about expectations}
\cyrus{Fix name / year sorting issue}



\if 0
While the number of agents and their respective strategy sets are assumed to be known, we do not assume {\it a priori} access to the players' utility functions.
On the contrary, we assume access to a 
simulator \amy{oracle} from which we can sample noisy utilities for any strategy profile.
The stochasticity in these utilities can stem either from nature (e.g., the weather) or from the players' mixed strategies.
\samy{}{In our model of simulation-based games, WLOG,\amy{why WLOG?} we assume a chance player first draws all the necessary samples (i.e., data) from nature, after which an analyst queries the simulator.}
The number of \emph{samples} (from nature) required to learn a game is called the \emph{\samy{sample}{data} complexity}, while the number of \emph{queries} (to the simulator) is called the \emph{query complexity}.
\fi

In the statistical analysis of simulation-based games, it is generally assumed that the game analyst has access to an \emph{oracle}, i.e., a simulator, that can produce agent utilities, given an arbitrary strategy profile and a random condition.
This random condition can represent any number of exogenous random variables (e.g., the weather), and it can also incorporate an entropy source based on which any randomness within the game (e.g., the roll of a dice) is generated.

Randomness renders it necessary to query the simulator multiple times at each strategy profile to estimate the game's utilities.
This, in turn, necessitates that we sample many random conditions.
To measure the amount of ``work'' needed to produce estimates within some specified error with high probability, we define two complexity measures.
\mydef{Data complexity} is the requisite number of sampled random conditions,
while \mydef{query complexity} is the requisite number of simulator queries. 
%
Data complexity is relevant in Bayesian games, such as private-value auctions.
Doing market research (e.g., collecting data to learn about the bidders' private values) can be expensive, but each of these expensive samples can then be used to query all strategy profiles.
Query complexity is the more natural metric in computation-bound games, such as Starcraft~\cite{tavares2016rock}, where simulating the game to determine utilities 
can be expensive, regardless of access to a random seed.




\if 0
\sbhaskar{In the statistical analysis of simulation-based games, it is generally assumed that the game analyst has access to an \emph{oracle}, i.e., a simulator, that can produce agent utilities, given an arbitrary strategy profile and a random condition.
This random condition can represent any number of exogenous random conditions (e.g., the weather), and it can also incorporate an entropy source based on which any randomness in the game is generated.
Data complexity corresponds to the number of exogenous data samples collected, while query complexity corresponds to the number of times the simulator is queried.
Query complexity is the natural metric for computation-bound settings (e.g., Starcraft~\cite{tavares2016rock}), while data complexity is more appropriate for data-intensive applications (e.g., auctions).}{}
\fi

Prior work in EGTA has focused primarily on identifying a specific property of games, usually one or all equilibria.
For example, \citet{tuyls2020bounds} show that if one can accurately estimate all the players' utilities in a game, then one can also accurately estimate all equilibria.
If one is willing to settle for a single equilibrium, more efficient algorithms can be employed~\citep{fearnley2015learning,wellman2006methods}, and even when all equilibria are of interest, algorithms that prune strategy profiles, even heuristically~\citep{areyan2019learning,areyan2020improved}, can potentially reduce the requisite amount of work.
In this paper, we aim to broaden the scope of statistical EGTA beyond its primary focus on equilibria to arbitrary properties of games.
In particular, we consider properties like welfare at various strategy profiles, as well as extreme properties (e.g., maximum or minimum welfare) and their witnesses (e.g., strategy profiles that realize maximum welfare).

We adopt the uniform approximation framework of \citet{tuyls2020bounds}, wherein one game is an $\epsilon$-approximation of another if players' utilities differ by no more than $\epsilon$ everywhere.
\Citet{tuyls2020bounds} show that an $\epsilon$-uniform approximation of a game implies a $2\epsilon$-uniform approximation of regret (i.e., equilibria).
We extend this idea to approximate many such properties, by showing that Lipschitz continuity is a sufficient condition for a property (and its extrema) to be uniformly approximable.
For example, both power-mean and Gini social welfare --- two classes of welfare measures that capture utilitarian and egalitarian welfare --- and their extrema, 
can be estimated using our framework.
We also rederive known results about approximating
equilibria~\citep{areyan2019learning,tuyls2020bounds} as special cases of our general theory, and we argue that our theory applies to equilibria of derived games, such as correlated equilibria, and to derived games themselves, such as team games.

Next, we develop a novel learning
algorithm that estimates simulation-based games. 
%
We move beyond earlier work by \citet{tuyls2020bounds},
who employed standard Hoeffding bounds for mean estimation,
and follow \citet{areyan2020improved},
who derive empirical variance-aware bounds: they bound each utility in terms of an empirical estimate of the largest variance across all utilities.
We rederive their bounds, obtaining tighter constants; however, instead of uniform bounds, we present non-uniform, variance-sensitive bounds: we bound each utility in terms of its own (empirical) variance.
\citeauthor{areyan2020improved} employed their bounds to prune strategy profiles
with provably high regret.
We develop an algorithm that uses our bounds to prune based on an alternative criterion: profiles are pruned as soon as they are provably well estimated.
Finally, we derive a sampling schedule that guarantees the correctness of our algorithm upon termination.

In addition to establishing its correctness, we also characterize the efficiency of our pruning algorithm.
\if 0
in terms of both its data
and query complexities,
\fi
We argue that our algorithm is nearly optimal with respect to both data and query complexities.
It can thus be used to learn myriad game properties, like welfare and regret, with complexity dependent on the size of the game, without requiring any \emph{a priori\/} information about the players' utilities beyond their ranges; even variances need not be known \emph{a priori\/}.

We also experiment extensively with our algorithm and a variety of natural baselines on a suite of games generated using GAMUT~\cite{nudelman2004run}, enhanced with various noise structures so that they mimic simulation-based games.
Our findings can be summarized as follows:
\begin{itemize}
\item We show that the number of queries our pruning algorithm requires to reach its target error \amy{double check: accuracy} guarantee is highly sensitive to the utility variance distribution.

\item We demonstrate that our pruning algorithm consistently outperforms theoretical upper bounds, and achieves significantly lower query complexities than all baselines.
\end{itemize}
We conclude with experiments that uncover some of the remaining difficulties with learning properties of simulation-based games, in spite of recent advances in statistical EGTA methodology, including those developed herein.

\if 0
We conclude with experiments that validate our bounds.
In no experiment were these bounds violated, even though they only hold with high probability. 
\fi

\ifpoa
In the final sections of the paper, we give several examples of properties---extreme equilibria (i.e., optimal and pessimal), the price of anarchy~\cite{koutsoupias1999worst}, and the price of stability~\cite{Schulz03,Anshelevich08}---which are not amenable to 
statistical EGTA methodology.
Indeed, \naive{} estimation of extreme equilibria via a uniform approximation
can be arbitrarily incorrect.
\if 0
Learning extreme equilibria is intractable via sampling, because it can be arbitrarily difficult to differentiate between a true equilibrium and a near equilibrium, resulting in arbitrarily large uncertainty about the value of extreme equilibria.
\amy{maybe delete this sentence. it seems confusing. someone might wonder, why, then, can equilibria alone be well-estimated?}
\amy{a near eqm means low regret, but doesn't tell you anything about the value}
\fi
The prices of anarchy and stability are likewise inapproximable, because they are functions of extreme equilibria.

In light of these observations, we introduce a relaxation of extreme equilibria, which we call $\Lambda$-stable outcomes.
These outcomes are defined based on an extreme value of welfare discounted by the distance from an equilibrium, i.e., $\Lambda$ acts like a Lagrangian.
Unlike extreme equilibria, this new concept is amenable to statistical EGTA methodology.

Finally, we investigate the price of anarchy, i.e., the ratio between the best and worst welfare, the latter restricted to values at equilibria.
We argue that even if the worst equilibrium were replaced by a pessimal $\Lambda$-stable outcome, we still could not estimate the price of anarchy, because the sensitivity to the best welfare outcome approaches infinity as the worst equilibrium welfare approaches zero.
We thus define an alternative concept, the \emph{anarchy gap}, which instead measures the \emph{difference\/} between the best welfare and the maximally dissonant 
$\Lambda$-stable outcome (a relaxation of a worst welfare equilibrium).
This new concept is amenable to statistical EGTA methodology, as we show experimentally in Appendix~\ref{app:poa_appendix}.
An analogous story holds for the price of stability.

\fi

\paragraph{Related Work.}
Many important strategic situations of interest are too complex for standard game-theoretic methodologies to apply directly.
Empirical game-theoretical analysis (EGTA) has emerged as a methodological approach to extending game-theoretic analysis with computational tools \cite{wellman2006methods}.
EGTA typically assumes the existence of an oracle
that can be queried to (actively) learn about a game, or past (batch) data that comprise relevant information, or both.
Since the central assumption is that the game is unavailable in 
closed-form, games that are encoded in this way have been dubbed black-box games~\citep{picheny2016bayesian}.
At the same time, 
since each query often requires potentially expensive computer simulations---for which trace data may be available (i.e., white boxes)---they are better known as simulation-based games~\citep{vorobeychik2008stochastic}.

The EGTA literature, while relatively young, is growing rapidly, with researchers actively contributing methods for myriad game models.
Some of these methods are designed for normal-form games~\citep{areyan2020improved,areyan2019learning,tavares2016rock,fearnley2015learning,vorobeychik2008stochastic}, and others, for extensive-form games~\citep{learningmaximin,Gattieqapprox,Zhang_Sandholm_2021}.
Most methods apply to games with finite strategy spaces, but some apply to games with infinite strategy spaces \citep{learningmaximin,vorobeychik2007learning,wiedenbeck2018regression}.
A related line of work aims to empirically design mechanisms via EGTA methodologies~\citep{vorobeychik2006empirical,DBLP:conf/uai/ViqueiraCMG19}.
Our analysis of game properties applies to normal-form games with finite or mixed (hence, infinite) strategy spaces, but our learning methodology applies only in the finite case.

While the aforementioned methods were designed for a wide variety of game models, most share the same goal: to estimate a Nash equilibrium of the game. \Citet{areyan2019learning,areyan2020improved} are exceptions, as they prove a dual containment, thereby bounding the \emph{set\/} of equilibria of a game.
In this paper, we extend their methodology to estimate not only all equilibria, but any well-behaved game property of interest, such as power-mean and Gini social welfare.
Since regret (which defines equilibria) is one such well-behaved property, our methodology subsumes this earlier work.

\Citet{areyan2020improved} also develop two learning algorithms.
The first (Global Sampling) uniformly estimates the game using empirical variance-aware bounds.
The second, also a progressive sampling algorithm with pruning, prunes high regret strategy profiles to improve query efficiency, no longer learning empirical games that are $\epsilon$-uniform approximations, but still learning $2\epsilon$-approximate equilibria of the simulation-based game.
\if 0
Our algorithm, in contrast, stops sampling once utilities are provably well estimated, yielding a uniform approximation upon termination.
\fi
Although they demonstrate the efficacy of their pruning algorithm experimentally, they fail to provide an efficiency analysis.


\section{Approximation Framework}
\label{sec:approx}

We begin by presenting our approximation framework: defining games, their properties, and
well-behaved game properties.
Doing so requires two pieces of technical machinery: the notion of uniform approximation and that of Lipschitz continuity.
Given this machinery, it is immediate that if a game property is Lipschitz continuous in utilities, then it is \emph{well behaved}, meaning it can be learned by any algorithm that produces a uniform approximation of the game.

In this work, we emphasize \emph{extremal\/} game properties, such as the optimal or pessimal welfare (the latter being relevant, for example, when considering the price of anarchy~\citep{christodoulou2005price}).
We are interested in learning not only the \emph{values\/} of these extremal properties, but further the
strategy profiles
that generate those values: i.e., the arguments that realize the solutions to the optimization problems that correspond to extremal properties.
We refer to these strategy profiles as \mydef{witnesses} of the corresponding property.
For example, an equilibrium is a witness of the regret property: i.e., it is a strategy profile that minimizes regret (and thereby attains an extremum).

Given an approximation of one game by another, there is not necessarily a connection between their extremal properties.
For example, there may be equilibria in one game with no corresponding equilibria in the other, as small changes to the utilities can add or remove equilibria.
Nonetheless, it is known~\citep{areyan2019learning} that finding the equilibria of a uniform approximation of a game is sufficient for finding the approximate equilibria of the game itself.
The main result of this section is to generalize this result beyond regret (which defines equilibria) to all well-behaved game properties.
We thus explain the aforementioned result as a consequence of a more general theory.
%
We begin by defining several focal properties of games that we are interested in learning.


\begin{definition}[Normal-Form Game]
A \mydef{normal-form game} (NFG) $\smash{\GameTuple \doteq \langle \SetOfPlayers, \{ \StrategySet_\PlayerIndex \}_{\PlayerIndex \in \SetOfPlayers},
\Utility \rangle}$
consists of a set of players $\SetOfPlayers$,
with \mydef{pure strategy set} $\smash{\StrategySet_\PlayerIndex}$ available to player $\PlayerIndex \in \SetOfPlayers$.
We define $\smash{\StratProfileSpace \doteq \StrategySet_1 \times \dots \times \StrategySet_{\NumberOfPlayers}}$ to be the \mydef{pure strategy profile space}, and then $\smash{\Utility : \StratProfileSpace \to \R^{\NumberOfPlayers}}$ is a vector-valued utility function (equivalently, a vector of $\NumberOfPlayers$ scalar utility functions $\smash{\Utility_\PlayerIndex}$).  
\end{definition}

\if 0
Given such a NFG $\GameTuple$, we define its size $\SizeOfGame{\GameTuple} \doteq \NumberOfPlayers \prod_{\PlayerIndex=1}^{\NumberOfPlayers} \abs{\StrategySet_\PlayerIndex}$.  \cyrus{We're usually carrying around an index set $\Indices$ now, whose size fills the role of $\SizeOfGame{\GameTuple}$}
\fi

Given an NFG $\GameTuple$ with finite $\smash{\StrategySet_\PlayerIndex}$ for all $\PlayerIndex \in \SetOfPlayers$, we denote by $\smash{\StrategySet_\PlayerIndex^\diamond}$ the set of distributions over $\smash{\StrategySet_\PlayerIndex}$; this set is called player $\PlayerIndex$'s \mydef{mixed strategy set}. We define $\smash{\MixedStratProfileSpace = \MixedStrategySet_1 \times \dots \times \MixedStrategySet_{\NumberOfPlayers}}$ to be the \mydef{mixed strategy profile space}, and then, overloading notation, we write $\Utility(\StratProfile)$
to denote the expected utility of a mixed strategy profile $\smash{\StratProfile \in \MixedStratProfileSpace}$.
We denote this \emph{mixed\/} game that comprises mixed strategies $\MixedStratProfileSpace$ by $\GameTuple^{\diamond}$.

We call two NFGs with the same player sets and strategy profile spaces \mydef{compatible}.
Our goal in this paper is to develop algorithms for estimating one NFG by another compatible one, by which we 
mean estimating one game's \emph{utilities} by another's.
Thus, we define the game properties of interest in terms of $\Utility$ rather than $\GameTuple$, as the players and their strategy sets are usually clear from context.
(Likewise, we write $\Utility^{\diamond}$, rather than $\GameTuple^{\diamond}$.)
We assume a NFG 
$\Utility$ in the definitions that follow.

\begin{definition}[Property]
A \mydef{property} of a game $\Utility$ is a functional mapping an index set $\mathcal{X}$ and utilities $\Utility$ to real values:
i.e., $f: \mathcal{X} \times (\StratProfileSpace \to \R^{\NumberOfPlayers}) \to \R$.
\end{definition}

In this work, two common choices for $\mathcal{X}$ are the set of mixed and pure strategy profiles $\MixedStratProfileSpace$ and $\StratProfileSpace$, respectively.
Another plausible choice is the set of pure strategies for just one player $\PlayerIndex$, namely $\StrategySet_{\PlayerIndex}$.

In this section, we focus on four properties, which 
we find to be well behaved: power-mean welfare, Gini social welfare, adversarial values, and regret.

\begin{definition}[Power-Mean Welfare]
Given strategy profile $\StratProfile$, power $\rho \in \R$, and stochastic weight vector $\bm{w} \in \triangle^{\NumberOfPlayers}$, i.e., $\bm{w} \in 
\R_{0+}^{\NumberOfPlayers}$ s.t.\ $\norm{\bm{w}}_{1} = 1$, the \mydef{$\rho$-power-mean welfare}%
\footnote{Power-mean welfare is more precisely defined as $\lim_{\rho' \to \rho} \sqrt[\rho']{\bm{w} \cdot \Utility (\StratProfile)^{\rho'}}$,
to handle the special cases when $\rho$ is $0$ or $\pm\infty$.}
at $\StratProfile$ is defined as $\Welfare_{\rho, \bm{w}} (\StratProfile; \Utility) \doteq 
\sqrt[\rho]{\bm{w} \cdot \Utility (\StratProfile)^{\rho}}$.
\if 0
\[
\Welfare_{\rho} (\StratProfile; \Utility) \doteq \lim_{x \to \rho} \sqrt[x]{\frac{1}{|\SetOfPlayers|}\sum_{\PlayerIndex \in \SetOfPlayers}  \Utility_\PlayerIndex (\StratProfile)^{x}}
\]
\fi
\end{definition}

A few special cases of $\rho$-power mean welfare are worth mentioning.
When $\rho = 1$, power-mean welfare corresponds to utilitarian welfare, while when $\rho = -\infty$, power-mean welfare corresponds to egalitarian welfare.
Finally, when $\rho = 0$, taking limits yields
$\Welfare_{0, (\frac{1}{\NumberOfPlayers}, \ldots, \frac{1}{\NumberOfPlayers})} (\StratProfile; \Utility) \doteq \sqrt[|\SetOfPlayers|]{\prod_{\PlayerIndex \in \SetOfPlayers}  \Utility_\PlayerIndex (\StratProfile)}$, which defines Nash social welfare~\citep{nash1953bargaining}.

\newcommand{\Gini}{\Welfare}


\begin{definition}[Gini Social Welfare \citep{weymark1981generalized}]
Given a strategy profile $\StratProfile$ and a decreasing stochastic weight vector $\bm{w}^{\downarrow} \in \triangle^{\NumberOfPlayers}$,
the \mydef{Gini social welfare} at $\StratProfile$ is defined as
$\Gini_{\bm{w}^{\downarrow}} (\StratProfile; \Utility) \doteq 
\bm{w}^{\downarrow} \cdot \Utility^{\uparrow}(\StratProfile)$,
where $\Utility^{\uparrow}(\StratProfile)$ denotes the entries in $\Utility(\StratProfile)$ in ascending sorted order.
\end{definition}

\if 0
\amy{sort the players in terms of utility.
weighting depends on rank within the sort.
highest utility lowest weight.
lowest utility highest weight.
all 1/n welfare, util welfare.
0,0,0,0,...,1 egal welfare.
all weights b/n 0 and 1, this is 1-Lipschitz.}
Both the utilitarian and egalitarian welfare arise as special cases, although the remaining cases lack the axiomatic justification of the power mean welfare function.
\fi

The weights vector $\bm{w}$ controls the trade-off between society's attitude towards well-off and impoverished players.
We recover utilitarian welfare with $\bm{w} = (\frac{1}{\NumberOfPlayers}, \frac{1}{\NumberOfPlayers}, \dots, \frac{1}{\NumberOfPlayers})$, and egalitarian welfare with $\bm{w} = (1, 0, \dots, 0)$.

The last two properties we consider, adversarial values and regret, relate to solution concepts.
A player's \mydef{adversarial value} for playing a strategy is the value they obtains assuming worst-case behavior on the part of the other players: i.e., assuming all the other players were out to get them.
A player's \mydef{regret} for playing one strategy measures how much they regret not playing another, fixing all the other players' strategies.

\begin{definition}[Adversarial Values]
A player $\PlayerIndex$'s \mydef{adversarial value} at strategy $\tilde{\Strategy} \in \MixedStrategySet_\PlayerIndex$ is defined as $\AdversarialValue_{\PlayerIndex} (\tilde{\Strategy}; \Utility) \doteq \inf_{\StratProfile \in \StratProfileSpace \mid \StratProfile_\PlayerIndex = \tilde{\Strategy}} \Utility_\PlayerIndex (\StratProfile)$.%
\footnote{Equipping the players other than $\PlayerIndex$ with mixed strategies affords them no added power.}
\if 0
\amy{i want to use the notion of adjacency here, but as defined, it seems to be the opposite of what we want. Here we want
$\Adjacent_{\PlayerIndex, \StratProfile}$ to be the set of adjacent pure strategy profiles: i.e., those in which the strategy of player $\PlayerIndex$ is fixed, but all others players $q \ne \PlayerIndex$ may vary.
$\AdversarialValue_{\PlayerIndex} (\Strategy; \Utility) \doteq \inf_{\StratProfile \in \Adjacent_{\PlayerIndex, \tilde{\StratProfile}}} \Utility_\PlayerIndex (\StratProfile).$}
\fi
A player $\PlayerIndex$'s \mydef{maximin value} is given by $\MinimaxValue_{\PlayerIndex} (\Utility) \doteq \sup_{\tilde{\Strategy} \in \MixedStrategySet_\PlayerIndex} \AdversarialValue_{\PlayerIndex} (\tilde{\Strategy}; \Utility) =
\sup_{\tilde{\Strategy} \in \MixedStrategySet_\PlayerIndex} 
\inf_{\StratProfile \in \StratProfileSpace \mid \StratProfile_\PlayerIndex = \tilde{\Strategy}} \Utility_\PlayerIndex (\StratProfile)$.
A strategy $\tilde{\Strategy}$ is $\epsilon$-\mydef{maximin optimal} for player $\PlayerIndex$ if $ \AdversarialValue_{\PlayerIndex} (\tilde{\Strategy}; \Utility) \geq \MinimaxValue_{\PlayerIndex} (\Utility) - \epsilon$.
\end{definition}

\if 0
A solution to a NFG is a prediction of how strategic players will play the game. One solution concept that has received a great deal of attention in the literature is Nash equilibrium~\citep{nash1950equilibrium}, a (pure or mixed) strategy profile at which each player selects a utility-maximizing strategy, fixing all other players' strategies. In this paper, we are concerned with $\epsilon$-Nash equilibrium, an approximation of Nash equilibrium that is amenable to statistical estimation.
\fi

Fix a player $\PlayerIndex$ and a strategy profile $\StratProfile \in \StratProfileSpace$.
We define $\Adjacent_{\PlayerIndex, \StratProfile} \doteq \{ \StratProfileAlt \in \StratProfileSpace \mid \StratProfileAlt_q = \StratProfile_q, \forall q \neq \PlayerIndex \}$: i.e., the set of adjacent strategy profiles, meaning those in which the strategies of all players $q \ne \PlayerIndex$ are fixed at $\StratProfile_q$,  while $\PlayerIndex$'s strategy may vary.


\begin{definition}[Regret]
A player $\PlayerIndex$'s \mydef{regret} at 
$\StratProfile \in \StratProfileSpace$ is defined as $\Regret_\PlayerIndex (\StratProfile; \Utility) \doteq \sup_{\StratProfile' \in \Adjacent_{\PlayerIndex, \StratProfile}} \Utility_{\PlayerIndex} (\StratProfile') - \Utility_{\PlayerIndex} (\StratProfile)$, with
$\Regret (\StratProfile; \Utility) \doteq \max_{\PlayerIndex \in \SetOfPlayers} \Regret_{\PlayerIndex} (\StratProfile; \Utility)$.
\end{definition}

Note that $\Regret_\PlayerIndex (\StratProfile; \Utility)\ge 0$, since player $\PlayerIndex$ can deviate to any strategy,
including $\StratProfile_\PlayerIndex$ itself.
Hence, $\Regret (\StratProfile; \Utility) \ge 0$.
A strategy profile $\StratProfile \in \StratProfileSpace$ that has regret at most $\epsilon \ge 0$
is a called an \mydef{$\epsilon$-Nash equilibrium}~\citep{nash1950equilibrium}: i.e., $\StratProfile$ is an $\epsilon$-Nash equilibrium if and only if $0 \le \Regret (\StratProfile; \Utility) \le \epsilon$.

\paragraph{Lipschitz Continuous Game Properties}
\label{sec:lipschitz}

Next, we define Lipschitz continuity, and show that the aforementioned game properties are all Lipschitz continuous in utilities.
Then later, we define uniform approximation, and prove that these properties and their extrema are all well behaved, 
meaning they can be well estimated via a uniform approximation of the game.

\if 0
\begin{definition}[Lipschitz Continuous]
Given a game $\GameTuple$ with utility function $\Utility$, 
a family of functions $\F \subseteq \mathcal{X} \to \mathcal{Y}$, with norms $\norm{\cdot}_{U}$ over $\mathcal{X}$ and $\norm{\cdot}_{Y}$ over $\mathcal{Y}$,
is called $\lambda$-Lipschitz for some $\lambda \ge 0$ iff
\[
\norm{\F(x) - \F(x')}_{Y} \leq \lambda \norm{x - x'}_{U} \enspace.
\]
\end{definition}
\fi

\begin{definition}[Lipschitz Property]
Given $\lambda \ge 0$,
a \mydef{$\lambda$-Lipschitz property} is one that is $\lambda$-Lipschitz continuous in utilities: i.e.,
$\norm{f(\cdot; \Utility) - f(\cdot; \Utility')}_{\infty} \doteq \sup_{x \in \mathcal{X}} \abs{f(x; \Utility) - f(x; \Utility')} \leq \lambda\norm{\Utility - \Utility'}_\infty$,
for all pairs of compatible games $\Utility$ and $\Utility'$.
\label{def:lipschitz}
\end{definition}

\if 0
In this work, we are particularly interested in properties that are Lipschitz continuous \emph{in utilities}.%
\footnote{Arguably, all other families are dysfunctional.}
Consequently, we choose $\mathcal{U} = \StratProfileSpace \to \R^{\NumberOfPlayers}$: i.e., the space of utility functions.
Additionally, we always take $\mathcal{Y} = \R^{d}$, for some $d \in \N$;
and we usually take both $U$ and $Y$ to be the infinity norm.
So, for our purposes, the Lipschitz condition reduces to:
\[
\sup_{f \in \F} \abs{f(\Utility) - f(\Utility')} \leq \lambda\norm{\Utility - \Utility'}_\infty \enspace .
\]

To show that the game properties of interest are Lipschitz continuous in utilities, we instantiate the functional family $\F$ \amy{PROPERTIES} in the above definition as follows:
\begin{enumerate}
\item Power-mean welfare: $\F \doteq \{ f(\Utility) \doteq \Welfare_{\rho} (\StratProfile; \Utility) \mid \StratProfile \in \StratProfileSpace \}$.

\item Adversarial values: $\F \doteq \{ f(\Utility) \doteq \AdversarialValue_{\PlayerIndex} (\Strategy; \Utility) \mid \Strategy \in \MixedStrategySet_{\PlayerIndex}, \PlayerIndex \in \SetOfPlayers \}$.

\item Regret: $\F \doteq \{ f(\Utility) \doteq \Regret (\StratProfile; \Utility) \mid \StratProfile \in \StratProfileSpace \}$.
\end{enumerate}
\fi

\if 0
\begin{lemma}
\label{lem:union}
Given two function families that are both $\lambda$-Lipschitz, the union of those two function families is also $\lambda$-Lipschitz.
\end{lemma}

\begin{proof}
\amy{make a $U$-norm change to utilities by epsilon, changes every function in first family by up to epsilon lambda, and same for second family, therefore changes every function in the union by epsilon lambda.}
\end{proof}
\fi

To show that the game properties of interest are Lipschitz properties, we instantiate $\mathcal{X}$ and the property $f$ in Definition~\ref{def:lipschitz} as follows:
1.~Power-Mean Welfare:
Let $\mathcal{X} = \StratProfileSpace$ and
$f(\StratProfile; \Utility) = \Welfare_{\rho, \bm{w}} (\StratProfile; \Utility)$, for some $\rho \in \R$ and $\bm{w} \in \triangle^{\NumberOfPlayers}$.
%
2.~Gini Social Welfare:
Let $\mathcal{X} = \StratProfileSpace$ and
$f(\StratProfile; \Utility) = \Gini_{\bm{w}^{\downarrow}} (\StratProfile; \Utility)$, for some $\bm{w}^{\downarrow} \in \triangle^{\NumberOfPlayers}$.
%
3.~Adversarial Values: For 
$\PlayerIndex \in \SetOfPlayers$, let $\mathcal{X} = \StrategySet_{\PlayerIndex}$ and
$f_{\PlayerIndex} = \AdversarialValue_{\PlayerIndex}$: i.e.,
$f_{\PlayerIndex} (\tilde{\Strategy}; \Utility) = \AdversarialValue_{\PlayerIndex} (\tilde{\Strategy}; \Utility)$.
%
4.~Regret: Let $\mathcal{X} = \StratProfileSpace$ and $f = \Regret$:
i.e., $f(\StratProfile; \Utility) = \Regret (\StratProfile; \Utility)$.

Results on gradients and Lipschitz constants for power-mean welfare can be found in \citet{beliakov2009some,cousins2021axiomatic,cousins2022uncertainty}.
In short, whenever $\rho \ge 1$, power-mean welfare is Lipschitz continuous with $\lambda = 1$.
It is also $\max_{\PlayerIndex \in \SetOfPlayers} \bm{w}_{\PlayerIndex}^{\nicefrac1\rho}$-Lipschitz continuous for $\rho \in [-\infty,0)$,
but it is Lipschitz-discontinuous for $\rho \in [0, 1)$.


\if 0
\amy{power-mean: not Lipschitz for $\rho = [0,1)$. do we have a counterexample for $\rho =0$?}
\cyrus{2d. take $x$ to $0$. hold $y=1$. square root of $x$ goes to $\infty$.}
\amy{or is it $1/2$? do we have a counterexample for $\rho = 1/2$?}
\cyrus{2d again. take $y=2$, $x \approx 1$.  $(\sqrt{x}/2 + \sqrt{y}/2)^2 = (\frac{1}{2} + \sqrt{y})^2 = \frac{1 + 2y}{4} + \sqrt{y}$.}
\fi

Regarding the latter three properties, Gini social welfare and adversarial value are both $1$-Lipschitz properties, while regret is a $2$-Lipschitz property.
These proofs of these claims rely on a ``Lipschitz calculus'' 
(\Cref{thm:lipschitz-facts}), which is a straightforward consequence of the definition of Lipschitz continuity.
The interested reader is invited to consult \citet{heinonen2005lectures} for details.
%
We use this Lipschitz calculus to prove the aforementioned Lipschitz continuity claims 
(see Appendix~\ref{app:lipschitz_proofs}).%
\footnote{All proofs are deferred to the appendix.}
\if 0
assuming two compatible games
with utility functions $\Utility$ and $\Utility'$.%
\fi

\if 0
\begin{theorem}[Lipschitz Facts]
\label{thm:lipschitz-facts}
%
1.~Linear Combination: If $g_{1:m}$ are $\lambda_{1:m}$-Lipschitz, all w.r.t.\ the same two norms, and $\bm{w} \in \R^{m}$, then the function $x \mapsto \bm{w} \cdot g(x)$
is $\sum_{i=1}^{m} \lambda_{i} \abs{\bm{w}_{i}}$-Lipschitz.
\if 0
\item Scalar Multiplication: Given a scalar value $\alpha \in \R$, the function $f: A \to A$ s.t.\ $f(x) = \alpha x$ is $|\alpha|$-Lipschitz w.r.t.\ any norm $\norm{\cdot}_{A}$ over $A$.

\item Summation: If $g_{1:m}$ are $\lambda_{1:m}$-Lipschitz, all w.r.t.\ the same two norms, then the function $x \mapsto \sum_{i=1}^{m} g_{i} (x)$ is $\norm{\lambda}_{1}$-Lipschitz. 

\item Convex Combination: If $g_{1:m}$ are each $\lambda$-Lipschitz, all w.r.t.\ the same two norms, and probability vector $\bm{w} \in [0, 1]^{m}$ s.t. $\norm{\bm{w}}_{1} = 1$, then the function $x \mapsto \sum_{i=1}^{m} \bm{w}_{i} g_{i} (x)$ is $\lambda$-Lipschitz.
\fi
%
2.~Composition: If $h: A \to B$ and $g: B \to C$ are $\lambda_{h}$- and $\lambda_{g}$-Lipschitz w.r.t.\ norms $\norm{\cdot}_{A}$, $\norm{\cdot}_{B}$, \& $\norm{\cdot}_{C}$, then
$(g \circ h): A \to C$ is $\lambda_{h} \lambda_{g}$ Lipschitz w.r.t.\ $\norm{\cdot}_{A}$ and $\norm{\cdot}_{C}$.
%
3.~The infimum and supremum operations are 1-Lipschitz continuous: i.e.,
if for all $x \in \mathcal{X}$, $f(x; \Utility)$ is $\lambda$-Lipschitz in $\Utility$ then $\inf_{x \in \mathcal{X}} f(x; \Utility)$ is also $\lambda$-Lipschitz in $\Utility$.
Likewise, for the supremum.
\end{theorem}
\fi

The property $f$ that computes a convex combination of utilities is 1-Lipschitz by the linear combination rule (\Cref{thm:lipschitz-facts}), because utilities are 1-Lipschitz in themselves.
Consequently, any findings about the Lipschitz continuity of game properties immediately apply to games with mixed strategies, because any $\lambda$-property $g$ of a game $\Utility$ may be composed with $f$ to arrive at a property $g \circ f$ of the mixed game $\Utility^{\diamond}$, which, by the composition rule  (\Cref{thm:lipschitz-facts}), is $\lambda$-Lipschitz.

\if 0
\amy{cyrus proved a generalization of this claim, where $\epsilon$ is a vector instead of a scalar}

If $\Utility_{\PlayerIndex} (\StratProfile) = u$ is the same for all $\PlayerIndex$, then adding $\epsilon$ to each
\[
\Welfare_{\rho} (\Utility_{\PlayerIndex} (\StratProfile) + \epsilon)
= \sqrt[\rho]{\frac{1}{\NumberOfPlayers} \sum_{\PlayerIndex \in \SetOfPlayers} (u + \epsilon)^{\rho}}
= \sqrt[\rho]{(u + \epsilon)^{\rho}}
= u + \epsilon
= \sqrt[\rho]{\frac{1}{\NumberOfPlayers} \sum_{\PlayerIndex \in \SetOfPlayers} u^{\rho}} + \epsilon
= \sqrt[\rho]{\frac{1}{\NumberOfPlayers} \sum_{\PlayerIndex \in \SetOfPlayers} \Utility_{\PlayerIndex} (\StratProfile)^{\rho}} + \epsilon
= \Welfare_{\rho} (\Utility_{\PlayerIndex} (\StratProfile)) + \epsilon
\]
\fi



The Lipschitz calculus (Theorem~\ref{thm:lipschitz-facts}) implies that any findings about the Lipschitz continuity of game properties immediately apply to mixed games, i.e., games with mixed strategies (Lemma~\ref{lem:mixed}).
\if 0
\begin{lemma}
The functional $f: \MixedStratProfileSpace \times (\StratProfileSpace \to \R^{\NumberOfPlayers}) \to \R$ that computes a player's utility given a mixed strategy,
e.g., $\Utility^{\diamond}_{\PlayerIndex} (\StratProfile)$ for some $\PlayerIndex \in \SetOfPlayers$,
is 1-Lipschitz continuous in utilities.
\label{lem:mixed}
\end{lemma}

\begin{proof}
The functional $f$ converts a player's utility at a mixed strategy profile into the expected value of their utility over pure strategy profiles, given the mixture.
This operation is $1$-Lipschitz, because taking an expectation is a convex combination, which is 1-Lipschitz by the linear combination rule, because utilities are 1-Lipschitz themselves.
\end{proof}
\fi
Moreover, any $\lambda$-property $g$ of a game $\Utility$
may be composed with the functional $f$ in \Cref{lem:mixed} to arrive at a property $g \circ f$ of the mixed game $\Utility^{\diamond}$, which, by composition, is then $\lambda$-Lipschitz.

\if 0
The utility of a mixed strategy profile is an expectation (convex combination) of the utility of pure strategy profiles,
making $f$ 1-Lipschitz by the summation property.
\[
\forall \StratProfile \in \StratProfileSpace^{\diamond}: \Utility_{\PlayerIndex}(\StratProfile) = f \bigl( (\PlayerIndex, \StratProfile); \Utility) \bigr)
\]

\[
\norm{f(\cdot; \Utility) - f(\cdot; \Utility')}_{\infty} \doteq \sup_{x \in \mathcal{X}} \abs{f(x; \Utility) - f(x; \Utility')} \leq \norm{\Utility - \Utility'}_\infty \enspace .
\]
\fi

\if 0

\amy{uniform approx of pure util fn. new game, with mixed strategies. 2.9 says we know the util fn of this new game.
2.8 now applies with $S^{\diamond}$ as a special case of $S$.}

We define the \mydef{$\ell_{\infty}$-norm} between two compatible games $\GameTuple$ and $\GameTuple'$ with utility functions $\Utility$ and $\Utility'$, respectively, as follows:%
\footnote{We use $\sup$ in this definition, rather than $\max$, because the space of mixed strategy profiles is infinite.}
%
$\norm{\GameTuple - \GameTuple'}_{\!\infty} 
\doteq \norm{\Utility - \Utility'}_{\!\infty} 
\doteq \sup_{\PlayerIndex \in \SetOfPlayers, \StratProfile \in \StratProfileSpace} \lvert \Utility_{\PlayerIndex} (\StratProfile) - \Utility'_{\PlayerIndex} (\StratProfile) \rvert$.

While the $\ell_{\!\infty}$-norm as defined applies only to pure normal-form games, it is in fact sufficient to use this metric even to show that the utilities of mixed strategy profiles approximate one another.  We formalize this claim in the following lemma.\amy{cite this lemma from old AAMAS paper.}

\amy{if you know the pure utilities, you also know the mixed utilities.}

\begin{lemma}[Approximations in Mixed Strategies]
\label{lemma:mixed-approximation}
If $\GameTuple$ and $\GameTuple'$ are two compatible games that differ only in their utility functions $\Utility$ and $\Utility'$, then
$\sup_{\PlayerIndex \in \SetOfPlayers, \StratProfile \in \MixedStratProfileSpace} \lvert{\Utility_\PlayerIndex(\StratProfile) - \Utility'_\PlayerIndex(\StratProfile)}\rvert = \norm{\Utility - \Utility'}_{\!\infty}$.
\end{lemma}

\begin{proof}
For any player $\PlayerIndex \in \SetOfPlayers$ and mixed strategy profile $\MixedStratProfileAlt \in \MixedStratProfileSpace$, $\Utility_\PlayerIndex (\MixedStratProfileAlt) = \sum_{\StratProfile \in \StratProfileSpace} \MixedStratProfileAlt (\StratProfile) \Utility_\PlayerIndex (\StratProfile)$, 
where $\MixedStratProfileAlt (\StratProfile) = \prod_{\PlayerIndex' \in \SetOfPlayers} \MixedStratProfileAlt_{\PlayerIndex'} (\StratProfile_{\PlayerIndex'})$.
So, $\Utility_\PlayerIndex (\MixedStratProfileAlt) - \Utility'_\PlayerIndex (\MixedStratProfileAlt) = 
\sum_{\StratProfile \in \StratProfileSpace} \MixedStratProfileAlt (\StratProfile) (\Utility_\PlayerIndex (\StratProfile) - \Utility'_\PlayerIndex (\StratProfile)) \leq
\sup_{\StratProfile \in \StratProfileSpace} \sabs{\Utility_\PlayerIndex (\StratProfile) - \Utility'_\PlayerIndex (\StratProfile)}$, by H{\"o}lder's inequality.
Hence,
$\sup_{\MixedStratProfileAlt \in \MixedStratProfileSpace} \sabs{\Utility_\PlayerIndex (\MixedStratProfileAlt) - \Utility'_\PlayerIndex (\MixedStratProfileAlt)} \leq
\sup_{\StratProfile \in \StratProfileSpace} \sabs{\Utility_\PlayerIndex (\StratProfile) - \Utility'_\PlayerIndex(\StratProfile)}$, from which it follows that
$\sup_{\PlayerIndex \in \SetOfPlayers, \MixedStratProfileAlt \in \MixedStratProfileSpace} \sabs{\Utility_\PlayerIndex (\MixedStratProfileAlt) - \Utility'_\PlayerIndex (\MixedStratProfileAlt)} \leq
\norm{\Utility - \Utility'}_{\!\infty}$.
Equality holds for any $\PlayerIndex$ and $\StratProfile$ that realize the supremum in $\norm{\Utility - \Utility'}_{\!\infty}$, as any such pure strategy profile is also mixed.
\end{proof}

\fi

\paragraph{Uniform Approximations of Game Properties}



Next we observe that the Lipschitz properties of a game are well behaved, and thus can be well estimated by a uniform approximation of the game.
Thus, all four of our focal game-theoretic properties are well behaved.
Using this observation, we proceed to show that the extrema of such properties (e.g., optimal and pessimal welfare, maximin values, etc.) are also well behaved, and that this result extends to their witnesses, all of which remain approximately optimal in their approximate game counterparts.

This final result can be understood as a form of recall and (approximate) precision:
recall, because the set of witnesses of the approximate game $\hat{\GameTuple}$ contains all the true positives (i.e., all the witnesses in the game $\GameTuple$); 
precision, because all false positives (witnesses in the approximate game $\hat{\GameTuple}$ that are not witnesses in the game $\GameTuple$) are nonetheless approximate witnesses in the game $\GameTuple$.
Taking the property of interest to be regret, so that its minima correspond to Nash equilibria, this final result can be understood as the precision and recall result obtained in \citet{areyan2019learning}.

In a uniform approximation of one NFG by a compatible one, the bound between utility deviations in the two games holds uniformly over \emph{all\/} players and strategy profiles:
%
One game $\Utility$ is said to be a \mydef{uniform $\epsilon$-approximation} of another game $\Utility'$ whenever $\norm{\Utility - \Utility'}_{\!\infty} \doteq \sup_{\PlayerIndex \in \SetOfPlayers, \StratProfile \in \StratProfileSpace} \abs{\Utility_{\PlayerIndex} (\StratProfile) -\Utility'_{\PlayerIndex} (\StratProfile)} \leq \epsilon$.
%
%
Our first observation, which follows immediately from the definitions of Lipschitz property 
and uniform approximation, characterizes well-behaved game properties:

\if 0
\begin{observation}
Let $\epsilon, \lambda > 0$.
Given a $\lambda$-Lipschitz function family $\F$ (i.e., $\sup_{f \in \F} \abs{f(\Utility) - f(\Utility')} \leq \lambda \norm{\Utility - \Utility'}_{\infty}$), it is immediate that $\smallsup{f \in \F} \abs{f(\Utility) - f(\Utility')} \leq \lambda\epsilon$.
\end{observation}
\fi

\begin{observation}
\label{obs:lipschitz-uniform}
Let $\lambda, \epsilon \ge 0$.
If $f$ is $\lambda$-Lipschitz
and $\norm{\Utility - \Utility'}_{\infty} \leq \epsilon$,
then
$\norm{f(\cdot; \Utility) - f(\cdot; \Utility')}_{\infty} 
\leq \lambda \epsilon$.
%
Further, if $\mathcal{X} = \{ x \}$ is a singleton, then $\abs{f(x; \Utility) - f(x; \Utility')} \leq \lambda \epsilon$.
\end{observation}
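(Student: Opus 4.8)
The plan is to unwind the definitions; this statement is essentially a one-line consequence of Definition~\ref{def:lipschitz} together with the notion of uniform approximation just introduced. First I would apply the defining inequality of a $\lambda$-Lipschitz property directly to the pair of compatible games $\Utility$ and $\Utility'$ (they are compatible precisely because the uniform-approximation relation is only defined between compatible NFGs), which yields $\norm{f(\cdot;\Utility) - f(\cdot;\Utility')}_\infty \le \lambda \norm{\Utility - \Utility'}_\infty$. Next I would substitute the hypothesis $\norm{\Utility - \Utility'}_\infty \le \epsilon$ and use $\lambda \ge 0$ to get $\lambda \norm{\Utility - \Utility'}_\infty \le \lambda\epsilon$; chaining the two inequalities proves the first claim. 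For the singleton case, I would observe that when $\mathcal{X} = \{x\}$ the supremum over $\mathcal{X}$ defining $\norm{f(\cdot;\Utility) - f(\cdot;\Utility')}_\infty \doteq \sup_{x' \in \mathcal{X}} \abs{f(x';\Utility) - f(x';\Utility')}$ collapses to the single term $\abs{f(x;\Utility) - f(x;\Utility')}$, so the two quantities coincide and the bound $\lambda\epsilon$ transfers verbatim.

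There is no substantive obstacle here, so I do not expect a "hard part'' in the usual sense; the only points that merit a word of care are (i) confirming that $\Utility$ and $\Utility'$ being a uniform $\epsilon$-approximation pair places us exactly in the regime where Definition~\ref{def:lipschitz}'s inequality is guaranteed to hold (same players, same strategy profile space), and (ii) noting that the degenerate cases $\lambda = 0$ or $\epsilon = 0$ require no special treatment, since the same chain of inequalities goes through without any division or limiting argument. I would then remark that this observation is the workhorse for the subsequent results: applied with $\mathcal{X} = \StratProfileSpace$ it bounds the deviation in power-mean and Gini welfare, with $\mathcal{X} = \StratProfileSpace$ again (and $\lambda = 2$) it bounds the deviation in regret, and with $\mathcal{X} = \StrategySet_\PlayerIndex$ it bounds the deviation in each player's adversarial value, after which the $1$-Lipschitzness of the infimum/supremum operations lifts these pointwise bounds to their extrema and witnesses.
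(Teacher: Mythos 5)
Your proof is correct and matches the paper's intent exactly: the paper presents this as an observation that ``follows immediately from the definitions of Lipschitz property and uniform approximation,'' and your chain $\norm{f(\cdot;\Utility)-f(\cdot;\Utility')}_\infty \le \lambda\norm{\Utility-\Utility'}_\infty \le \lambda\epsilon$, together with the collapse of the supremum for singleton $\mathcal{X}$, is precisely that one-line argument.
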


Equivalently, if $f$ is $\lambda$-Lipschitz
and $\norm{\Utility - \Utility'}_{\infty} \leq \nicefrac{\epsilon}{\lambda}$,
then
$\norm{f(\cdot; \Utility) - f(\cdot; \Utility')}_{\infty} \leq \lambda (\nicefrac{\epsilon}{\lambda}) = \epsilon$.
For example, regret is 2-Lipschitz, and thus can be $\epsilon$-approximated, given an $\nicefrac{\epsilon}{2}$-uniform approximation.
Likewise for adversarial values and variants of welfare, with their corresponding Lipschitz constants.

Based on this observation, we derive two-sided approximation bounds on the \emph{extrema\/} of $\lambda$-Lipschitz game-theoretic properties.
Specifically, we derive a two-sided bound on their values,
and a dual containment (recall and approximate precision) result characterizing their witnesses.

\begin{restatable}[Approximating Extremal Properties of Normal-Form Games]{theorem}{thmExtremal}
\label{thm:lipschitz-properties}
Let $\epsilon, \lambda, \alpha > 0$.
Given a $\lambda$-Lipschitz property $f$, the following hold:
\if 0
\begin{enumerate}
\item \label{lem:lipschitz-properties:opt-val} \emph{Approximate Optimal Values}:
$\displaystyle 
\bigl\lvert \smallsup{f \in \F} f(\Utility) - \smallsup{f' \in \F} f'(\Utility') \bigr\rvert \leq \lambda\epsilon$ and $\bigl\lvert \smallinf{f \in \F} f(\Utility) - \smallinf{f' \in \F} f'(\Utility') \bigr\rvert \leq \lambda\epsilon \enspace.$

\item \label{lem:lipschitz-properties:opt-fun} \emph{Approximate Optimal Functions:}
if $\displaystyle \hat{f}(\Utility') \geq \sup_{f' \in \F} f'(\Utility') - \alpha$ for some $\alpha \in \R$, then $\displaystyle \hat{f}(\Utility) \geq \sup_{f' \in \F} f'(\Utility) - 2 \lambda \epsilon - \alpha$, and if $\displaystyle \hat{f}(\Utility') \leq \inf_{f' \in \F} f'(\Utility') + \alpha$, then $\displaystyle \hat{f}(\Utility) \leq \inf_{f' \in \F} f'(\Utility) + 2 \lambda \epsilon + \alpha \enspace.$
\end{enumerate}
\fi
%
1.~\emph{Approximately-optimal values}: 
$\bigl\lvert \smallsup{x \in \mathcal{X}} f(x; \Utility) - \smallsup{x \in \mathcal{X}} f(x; \Utility') \bigr\rvert \leq \lambda\epsilon$ and $\bigl\lvert \smallinf{x \in \mathcal{X}} \hspace{-1.5mm} f(x; \Utility) - \smallinf{x \in \mathcal{X}} \hspace{-1.5mm} f(x; \Utility') \bigr\rvert \leq \lambda\epsilon$.
%
2.~\emph{Approximately-optimal witnesses}: for some $\hat{x} \in \mathcal{X}$, if $\hat{x}$ is $\alpha$-optimal according to $\Utility'$, then $\hat{x}$ is $2\lambda\epsilon + \alpha$-optimal according to $\Utility$: i.e.,
if $\displaystyle f(\hat{x}; \Utility') \geq \sup_{x \in \mathcal{X}} f(x; \Utility') - \alpha$, then $\displaystyle f(\hat{x}; \Utility) \geq \sup_{x \in \mathcal{X}} f(x; \Utility) - 2 \lambda \epsilon - \alpha$, and if $\displaystyle f(\hat{x}; \Utility') \leq \inf_{x \in \mathcal{X}} f(x; \Utility') + \alpha$, then $\displaystyle f(\hat{x}; \Utility) \leq \inf_{x \in \mathcal{X}} f(x; \Utility) + 2 \lambda \epsilon + \alpha$.
\end{restatable}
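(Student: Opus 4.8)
The plan is to derive everything from \Cref{obs:lipschitz-uniform}, which already supplies the pointwise bound $\abs{f(x; \Utility) - f(x; \Utility')} \le \lambda\epsilon$ for every $x \in \mathcal{X}$, once $\Utility$ uniformly $\epsilon$-approximates $\Utility'$ (i.e.\ $\norm{\Utility - \Utility'}_\infty \le \epsilon$). No new structural facts about games or properties are needed; the entire argument is about how a uniform pointwise gap propagates through the $\sup$ and $\inf$ operators, so the proof is essentially bookkeeping with inequalities.

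For Part~1 (approximately-optimal values), I would handle the supremum and let the infimum follow symmetrically. Starting from $f(x; \Utility) \le f(x; \Utility') + \lambda\epsilon$ for all $x$, taking the supremum over $x \in \mathcal{X}$ on both sides gives $\sup_{x \in \mathcal{X}} f(x; \Utility) \le \sup_{x \in \mathcal{X}} f(x; \Utility') + \lambda\epsilon$; crucially, this step uses only monotonicity of $\sup$ and never assumes either supremum is attained. Since the hypothesis $\norm{\Utility - \Utility'}_\infty \le \epsilon$ is symmetric in $\Utility$ and $\Utility'$, swapping their roles yields the reverse inequality, and combining the two gives $\abs{\sup_{x} f(x;\Utility) - \sup_{x} f(x;\Utility')} \le \lambda\epsilon$. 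The infimum statement follows either by rerunning this with $\inf$ in place of $\sup$, or by applying the supremum result to $-f$, which is also $\lambda$-Lipschitz.

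For Part~2 (approximately-optimal witnesses), I would chain three inequalities. Fix $\hat{x} \in \mathcal{X}$ with $f(\hat{x}; \Utility') \ge \sup_{x \in \mathcal{X}} f(x; \Utility') - \alpha$. Then
\[
f(\hat{x}; \Utility) \;\ge\; f(\hat{x}; \Utility') - \lambda\epsilon \;\ge\; \sup_{x \in \mathcal{X}} f(x; \Utility') - \alpha - \lambda\epsilon \;\ge\; \sup_{x \in \mathcal{X}} f(x; \Utility) - 2\lambda\epsilon - \alpha ,
\]
where the first step is \Cref{obs:lipschitz-uniform} evaluated at $\hat{x}$, the second is the hypothesis on $\hat{x}$, and the third is Part~1 rewritten as $\sup_{x} f(x;\Utility') \ge \sup_{x} f(x;\Utility) - \lambda\epsilon$. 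The $\inf$ version is the mirror-image chain (or, again, an application of the $\sup$ version to $-f$). Note the argument goes through verbatim for any $\alpha$, including $\alpha = 0$, so it also covers exact optimizers/witnesses.

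The only thing requiring care — and what I would flag as the ``main obstacle,'' though it is minor — is to phrase the $\sup$/$\inf$ manipulations so they never presuppose an optimizing $x$ exists (relevant because $\mathcal{X}$ may be an infinite mixed-strategy-profile space, e.g.\ $\MixedStratProfileSpace$), and to track the direction in which the Lipschitz inequality is invoked at each link of the chain so the slack totals exactly $2\lambda\epsilon$ rather than accumulating further error. Everything else is immediate from the definitions and \Cref{obs:lipschitz-uniform}.
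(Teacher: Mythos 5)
Your proposal is correct and follows essentially the same route as the paper's proof: Part~1 via the uniform pointwise bound from Observation~\ref{obs:lipschitz-uniform} pushed through $\sup$ (the paper writes this as $\sup_x f(x;\Utility) - \sup_x f(x;\Utility') \le \sup_x (f(x;\Utility) - f(x;\Utility')) \le \lambda\epsilon$, which is the same monotonicity argument), symmetry for the reverse direction, and $-f$ for infima; Part~2 via the identical three-link chain $f(\hat{x};\Utility) \ge f(\hat{x};\Utility') - \lambda\epsilon \ge \sup_x f(x;\Utility') - \lambda\epsilon - \alpha \ge \sup_x f(x;\Utility) - 2\lambda\epsilon - \alpha$. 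Your extra care about not presupposing that suprema are attained is a valid (if minor) refinement of the same argument, not a different one.
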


\if 0
\begin{proof}
To show (\ref{lem:lipschitz-properties:opt-val}), first note the following:
\[
\sup_{f \in \F} f(\Utility) - \sup_{f' \in \F} f'(\Utility') \leq \sup_{f \in \F} (f(\Utility) - f(\Utility')) \leq \sup_{f \in \F} \abs{ f(\Utility) - f(\Utility')} \leq \lambda\epsilon \enspace.
\]
Symmetric reasoning yields $- \sup_{f \in \F} f(\Utility) + \sup_{f' \in \F} f'(\Utility') = \sup_{f' \in \F} f'(\Utility') - \sup_{f \in \F} f(\Utility) \leq \lambda \epsilon$, implying the statement for suprema.  Now, for any function family $\F$, we obtain a similar result for infima, by noting that suprema in $-\F$ are infima in $\F$.
The infimum claim of (\ref{lem:lipschitz-properties:opt-fun}) follows similarly from the supremum claim, and the supremum claim holds as follows, where the first inequality is due to (1), the second, to the assumption that $\hat{f} (\Utility') \geq \sup_{f' \in \F} f'(\Utility') - \alpha$, and the third, to (2):
\[
\hat{f} (\Utility) \geq \hat{f} (\Utility') - \lambda\epsilon \geq \sup_{f' \in \F} f'(\Utility') - \lambda\epsilon - \alpha \geq \sup_{f' \in \F} f'(\Utility) - 2 \lambda \epsilon - \alpha \enspace.
\]
\end{proof} 
\fi

\noindent
Applying this theorem to welfare and adversarial values yields the following corollary.

\begin{corollary}
\label{cor:lipschitz-properties}
The following hold:
1.~Welfare: Let $\Welfare$ denote a $\lambda$-Lipschitz welfare function. Then
$\bigl\lvert \smallsup{x \in \mathcal{X}} \Welfare(x; \Utility) - \smallsup{x \in \mathcal{X}} \Welfare(x; \Utility') \bigr\rvert \leq \lambda\epsilon$ and $\bigl\lvert \smallinf{x \in \mathcal{X}} \Welfare(x; \Utility) - \smallinf{x \in \mathcal{X}} \Welfare(x; \Utility') \bigr\rvert \leq \lambda\epsilon$.
%
2.~Maximin-Optimal Strategies:
If strategy $\tilde{\Strategy} \in \StrategySet_{\PlayerIndex}$ is $\alpha$-maximin optimal for player $\PlayerIndex$ in $\Utility'$, so that $\AdversarialValue_{\PlayerIndex} (\tilde{\Strategy}; \Utility') \geq \MinimaxValue_{\PlayerIndex} (\Utility') - \alpha$, then it is $2 \epsilon - \alpha$ maximin-optimal in $\Utility$, meaning $\AdversarialValue_{\PlayerIndex} (\tilde{\Strategy}; \Utility) \geq \MinimaxValue_{\PlayerIndex} (\Utility) - 2 \epsilon - \alpha$.
\end{corollary}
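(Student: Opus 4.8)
The plan is to obtain both parts of the corollary as immediate specializations of \Cref{thm:lipschitz-properties}; essentially no new ideas are needed, only the identification of the right property $f$, the right index set $\mathcal{X}$, and the relevant Lipschitz constants, together with the bookkeeping that matches the abstract extremum in \Cref{thm:lipschitz-properties} to the concrete welfare/maximin quantities here. Throughout I would use the hypothesis $\norm{\Utility - \Utility'}_\infty \le \epsilon$.

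For part~1, I would first invoke the Lipschitz-continuity facts recorded just before \Cref{thm:lipschitz-properties} (with proofs deferred to Appendix~\ref{app:lipschitz_proofs}): $\rho$-power-mean welfare is $1$-Lipschitz in utilities when $\rho \ge 1$ and $\max_{\PlayerIndex \in \SetOfPlayers} \bm{w}_{\PlayerIndex}^{\nicefrac1\rho}$-Lipschitz when $\rho \in [-\infty,0)$, and Gini social welfare is $1$-Lipschitz. Writing $\Welfare$ for any such $\lambda$-Lipschitz welfare function and taking $\mathcal{X} = \StratProfileSpace$, part~1 of \Cref{thm:lipschitz-properties} yields $\bigl\lvert \smallsup{x \in \mathcal{X}} \Welfare(x;\Utility) - \smallsup{x \in \mathcal{X}} \Welfare(x;\Utility') \bigr\rvert \le \lambda\epsilon$ and the analogous infimum bound, which is exactly part~1 of the corollary.

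For part~2, the plan is to instantiate part~2 of \Cref{thm:lipschitz-properties} with $f_\PlayerIndex = \AdversarialValue_\PlayerIndex$ and $\mathcal{X} = \MixedStrategySet_\PlayerIndex$. The property $\AdversarialValue_\PlayerIndex$ is $1$-Lipschitz in utilities on $\MixedStrategySet_\PlayerIndex$: it is an infimum (a $1$-Lipschitz operation, by \Cref{thm:lipschitz-facts}) over the other players' profiles of an affine combination of player $\PlayerIndex$'s utilities whose weights are determined by $\tilde{\Strategy}$ and sum to one (hence $1$-Lipschitz, by the linear-combination/composition rules of \Cref{thm:lipschitz-facts}, exactly as in the passage following the Lipschitz instantiations). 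The key identification is that the supremum of this property over $\mathcal{X}$ is precisely the maximin value: $\smallsup{\tilde{\Strategy} \in \MixedStrategySet_\PlayerIndex} \AdversarialValue_\PlayerIndex(\tilde{\Strategy};\Utility) = \MinimaxValue_\PlayerIndex(\Utility)$, by definition. Consequently the hypothesis $\AdversarialValue_\PlayerIndex(\tilde{\Strategy};\Utility') \ge \MinimaxValue_\PlayerIndex(\Utility') - \alpha$ is exactly the statement that the (pure, hence mixed) strategy $\tilde{\Strategy}$ is an $\alpha$-optimal witness according to $\Utility'$, so the witness conclusion of \Cref{thm:lipschitz-properties} with $\lambda = 1$ gives $\AdversarialValue_\PlayerIndex(\tilde{\Strategy};\Utility) \ge \smallsup{\tilde{\Strategy}' \in \MixedStrategySet_\PlayerIndex} \AdversarialValue_\PlayerIndex(\tilde{\Strategy}';\Utility) - 2\epsilon - \alpha = \MinimaxValue_\PlayerIndex(\Utility) - 2\epsilon - \alpha$, as claimed.

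I do not anticipate a substantive obstacle here, since the corollary is a direct reading of the theorem. The only point requiring care is choosing $\mathcal{X} = \MixedStrategySet_\PlayerIndex$ rather than $\StrategySet_\PlayerIndex$ in part~2, so that the extremum in \Cref{thm:lipschitz-properties} genuinely coincides with $\MinimaxValue_\PlayerIndex$ (which maximizes over player $\PlayerIndex$'s mixed strategies), while noting that $\tilde{\Strategy}$ being pure poses no issue because a pure strategy is a degenerate mixed strategy; this is also where the footnoted remark that a mixed-strategy adversary confers no additional power is invoked to keep $\AdversarialValue_\PlayerIndex$ well defined over pure adversarial profiles.
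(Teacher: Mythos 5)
Your proposal is correct and matches the paper's own (essentially one-line) argument: the corollary is obtained by instantiating \Cref{thm:lipschitz-properties} with the welfare functions and with $f_{\PlayerIndex} = \AdversarialValue_{\PlayerIndex}$, using the Lipschitz constants established earlier ($1$-Lipschitz for the relevant welfare functions and for adversarial values). Your added care in taking $\mathcal{X} = \MixedStrategySet_{\PlayerIndex}$ so that the supremum in the theorem coincides with $\MinimaxValue_{\PlayerIndex}$, while treating the pure $\tilde{\Strategy}$ as a degenerate mixed strategy, is exactly the right bookkeeping and is consistent with the paper's definitions.
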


We can likewise obtain a dual containment result as a corollary of Theorem~\ref{thm:lipschitz-properties} by applying it to regret.
Since the extreme value is known (at equilibrium, regret is 0), we obtain a stronger result.

\begin{restatable}[Approximating Witnesses of Normal-Form Games]{theorem}{thmWitness}
\label{thm:lipschitz-properties-target-case}
If $v^*$ denote a target value of a $\lambda$-Lipschitz property $f$ and $F_{\alpha} (\Utility) = \{ x \mid | f(x; \Utility) - v^* | \le \alpha \}$, then
$F_0 (\Utility) 
    \subseteq 
F_{\lambda \epsilon} (\Utility') 
    \subseteq 
F_{2 \lambda \epsilon} (\Utility)$.
\end{restatable}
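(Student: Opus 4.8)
The plan is to reduce the whole statement to Observation~\ref{obs:lipschitz-uniform} combined with the triangle inequality. Throughout, $\Utility$ and $\Utility'$ are the two compatible games under consideration, and $\epsilon$ is the uniform-approximation radius, so $\norm{\Utility - \Utility'}_\infty \le \epsilon$. Since $f$ is $\lambda$-Lipschitz, Observation~\ref{obs:lipschitz-uniform} yields $\norm{f(\cdot;\Utility) - f(\cdot;\Utility')}_\infty \le \lambda\epsilon$; equivalently, $\abs{f(x;\Utility) - f(x;\Utility')} \le \lambda\epsilon$ for every $x \in \mathcal{X}$. This single pointwise estimate drives both inclusions. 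The key thing to keep straight is that $v^*$ is a fixed, game-independent constant, so it never contributes any slack of its own.

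For the first containment $F_0(\Utility) \subseteq F_{\lambda\epsilon}(\Utility')$, I would take $x \in F_0(\Utility)$, i.e., $f(x;\Utility) = v^*$, and then write $\abs{f(x;\Utility') - v^*} = \abs{f(x;\Utility') - f(x;\Utility)} \le \lambda\epsilon$, so $x \in F_{\lambda\epsilon}(\Utility')$. For the second containment $F_{\lambda\epsilon}(\Utility') \subseteq F_{2\lambda\epsilon}(\Utility)$, I would take $x \in F_{\lambda\epsilon}(\Utility')$, i.e., $\abs{f(x;\Utility') - v^*} \le \lambda\epsilon$, and apply the triangle inequality: $\abs{f(x;\Utility) - v^*} \le \abs{f(x;\Utility) - f(x;\Utility')} + \abs{f(x;\Utility') - v^*} \le \lambda\epsilon + \lambda\epsilon = 2\lambda\epsilon$, so $x \in F_{2\lambda\epsilon}(\Utility)$. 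Chaining the two inclusions gives $F_0(\Utility) \subseteq F_{\lambda\epsilon}(\Utility') \subseteq F_{2\lambda\epsilon}(\Utility)$.

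There is essentially no obstacle here — the ``hardest'' part is purely bookkeeping. The one point worth flagging in the writeup is why this is sharper than the generic extremal-witness bound of Theorem~\ref{thm:lipschitz-properties}: there the optimal value of $f$ is itself only known up to additive $\lambda\epsilon$ error, which contributes an extra slack term, whereas here $v^*$ is exact, so the recall side ($F_0(\Utility) \subseteq F_{\lambda\epsilon}(\Utility')$) costs nothing beyond one Lipschitz hop and the precision side costs only one more, yielding $2\lambda\epsilon$ rather than the looser constant of the general theorem. I would close by noting that instantiating $f = \Regret$ and $v^* = 0$ recovers the dual-containment (precision/recall) characterization of $\epsilon$-Nash equilibria of \citet{areyan2019learning}, since an $\epsilon$-Nash equilibrium is exactly a strategy profile whose regret lies within $\epsilon$ of the target value $0$.
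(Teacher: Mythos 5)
Your proof is correct and follows essentially the same route as the paper's: both containments are obtained from Observation~\ref{obs:lipschitz-uniform} (one Lipschitz hop for recall, one more via the triangle inequality for precision), exactly as in the paper's own two-line argument. Your version is in fact slightly more careful, since the paper elides the absolute values that you track explicitly.
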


Finally, 
we recover the dual containment theorem
of~\citet{areyan2019learning}.

\begin{corollary}[Approximating Equilibria in NFGs]
\label{cor:dual_containment}
If $\Nash_{\alpha} (\Utility) = \{ x \mid \Regret(x; \Utility) \le \alpha \}$,
then
$\Nash_{0} (\Utility) 
    \subseteq 
\Nash_{2\epsilon} (\Utility') 
    \subseteq 
\Nash_{4\epsilon} (\Utility)$
and
$\Nash^\diamond_{0} (\Utility) 
    \subseteq 
\Nash^\diamond_{2\epsilon} (\Utility') 
    \subseteq 
\Nash^\diamond_{4\epsilon} (\Utility)$.
\end{corollary}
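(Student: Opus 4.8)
The plan is to obtain this corollary as an immediate specialization of Theorem~\ref{thm:lipschitz-properties-target-case}. First I would record that regret is a $2$-Lipschitz property with index set $\mathcal{X} = \StratProfileSpace$: each map $\Utility \mapsto \Utility_\PlayerIndex(\StratProfile')$ is $1$-Lipschitz in utilities, so $\Utility \mapsto \sup_{\StratProfile' \in \Adjacent_{\PlayerIndex, \StratProfile}} \Utility_\PlayerIndex(\StratProfile')$ is $1$-Lipschitz by the supremum rule of \Cref{thm:lipschitz-facts}, hence $\Regret_\PlayerIndex(\StratProfile; \cdot)$ is $2$-Lipschitz by the linear-combination rule, and taking the maximum over the finite set $\SetOfPlayers$ preserves the constant, so $\Regret(\cdot;\Utility) = \max_{\PlayerIndex \in \SetOfPlayers} \Regret_\PlayerIndex(\cdot;\Utility)$ is $2$-Lipschitz. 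Next, since each player can always deviate to their own strategy, $\Regret(\StratProfile;\Utility)\ge 0$; therefore, taking target value $v^* = 0$, the level set $F_\alpha(\Utility) = \{ x \mid \abs{\Regret(x;\Utility) - 0} \le \alpha \}$ coincides exactly with $\Nash_\alpha(\Utility) = \{ x \mid \Regret(x;\Utility) \le \alpha \}$.

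With these identifications, applying Theorem~\ref{thm:lipschitz-properties-target-case} with $f = \Regret$, $\lambda = 2$, and $v^* = 0$ yields $\Nash_0(\Utility) = F_0(\Utility) \subseteq F_{2\epsilon}(\Utility') = \Nash_{2\epsilon}(\Utility') \subseteq F_{4\epsilon}(\Utility) = \Nash_{4\epsilon}(\Utility)$, which is the first containment chain.

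For the mixed-strategy chain I would repeat the argument on the mixed extensions $\Utility^\diamond$ and $\Utility'^\diamond$. Two observations make this work, both already present in the excerpt: (i)~the convex-combination (expected-utility) map $f$ is $1$-Lipschitz in utilities, so $\norm{\Utility^\diamond - \Utility'^\diamond}_\infty \le \norm{\Utility - \Utility'}_\infty \le \epsilon$, i.e., a uniform $\epsilon$-approximation lifts to the mixed extension; and (ii)~regret of the mixed game is the composition of the $2$-Lipschitz regret property with this $1$-Lipschitz map, hence is itself $2$-Lipschitz by the composition rule of \Cref{thm:lipschitz-facts}. Applying Theorem~\ref{thm:lipschitz-properties-target-case} once more, now with $\Utility^\diamond$ in place of $\Utility$ and $\Utility'^\diamond$ in place of $\Utility'$, gives $\Nash^\diamond_0(\Utility) \subseteq \Nash^\diamond_{2\epsilon}(\Utility') \subseteq \Nash^\diamond_{4\epsilon}(\Utility)$ in exactly the same way.

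I do not expect a genuine obstacle here: all the substance lives in Theorem~\ref{thm:lipschitz-properties-target-case} and the Lipschitz calculus, and the only points requiring mild care are bookkeeping the constant $\lambda = 2$ and being explicit, in the mixed case, that ``regret'' there means best-responding over deviations while the profile itself ranges over $\MixedStratProfileSpace$, so that it genuinely factors as $\Regret$ composed with the expected-utility map and the lifted-approximation bound is not lost.
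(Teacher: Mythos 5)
Your proposal is correct and matches the paper's intended argument: the paper likewise obtains the corollary by instantiating Theorem~\ref{thm:lipschitz-properties-target-case} with the $2$-Lipschitz regret property (established via the same Lipschitz-calculus decomposition in Theorem~\ref{thm:LipschitzProperties}) at target value $v^* = 0$, using nonnegativity of regret to identify $F_\alpha$ with $\Nash_\alpha$, and handles the mixed case by composing with the $1$-Lipschitz expected-utility map of Lemma~\ref{lem:mixed}. No gaps.
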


\if 0
\begin{align*}
\sup_{f \in \F} \abs{f (\Utility) - f (\Utility')} &= \sup_{\PlayerIndex, \Strategy} \abs{\inf_{\StratProfile \mid \StratProfile_\PlayerIndex = \Strategy} \Utility_\PlayerIndex (\StratProfile) - \inf_{\StratProfile' \mid \StratProfile'_\PlayerIndex = \Strategy} \Utility'_\PlayerIndex (\StratProfile')} & \textsc{Definition of $\AdversarialValue$} \\
 &= \sup_{\PlayerIndex, \Strategy}\sup_{\StratProfile | \StratProfile_\PlayerIndex = \Strategy}\inf_{\StratProfile' | \StratProfile'_\PlayerIndex = \Strategy} \Utility_\PlayerIndex (\StratProfile) - \Utility'_\PlayerIndex (\StratProfile') & \textsc{Symmetry} \\
\end{align*}

\[
\abs{ \inf_{\StratProfile \mid \StratProfile_\PlayerIndex = \Strategy} \Utility_\PlayerIndex (\StratProfile) - \inf_{\StratProfile' \mid \StratProfile_\PlayerIndex = \Strategy} \Utility'_\PlayerIndex (\StratProfile')} = \sup_{\StratProfile \mid \StratProfile_\PlayerIndex = \Strategy}\inf_{\StratProfile' \mid \StratProfile_\PlayerIndex = \Strategy} \Utility_\PlayerIndex (\StratProfile) -  \Utility'_\PlayerIndex (\StratProfile') \leq  
\]
(to see this, note that decreasing any $\Utility_p(\StratProfile)$ by $\epsilon$ changes the pessimal utility by from $-\epsilon$ to $0$, increases by $0$ to $\epsilon$).  The result now follows via the definition of $\alpha$-maximin optimality and \Cref{lem:lipschitz-properties}.\ref{lem:lipschitz-properties:opt-fun}.
\fi

\if 0
\[
\inf_{\StratProfile \mid \StratProfile_\PlayerIndex = \Strategy} \Utility_\PlayerIndex (\StratProfile) - \inf_{\StratProfile' \mid \StratProfile_\PlayerIndex = \Strategy} \Utility'_\PlayerIndex (\StratProfile') - \epsilon \leq \inf_{\StratProfile \mid \StratProfile_\PlayerIndex = \Strategy} \Utility_\PlayerIndex (\StratProfile) - \inf_{\StratProfile' \mid \StratProfile_\PlayerIndex = \Strategy} \Utility'_\PlayerIndex (\StratProfile') \leq \inf_{\StratProfile \mid \StratProfile_\PlayerIndex = \Strategy} \Utility_\PlayerIndex (\StratProfile) - \inf_{\StratProfile' \mid \StratProfile_\PlayerIndex = \Strategy} \Utility'_\PlayerIndex (\StratProfile') + \epsilon \enspace.
\]

We first show that $\F$ is a $1$-Lipschitz function family.
\begin{align*}
\sup_{f \in \F} \abs{f (\Utility) - f (\Utility')} &= \sup_{\PlayerIndex, \StratProfile} \abs{\inf_{\StratProfile \mid \StratProfile_\PlayerIndex = \Strategy} \Utility_\PlayerIndex (\StratProfile) - \inf_{\StratProfile' \mid \StratProfile_\PlayerIndex = \Strategy} \Utility'_\PlayerIndex (\StratProfile')} & \textsc{Definition of $\AdversarialValue$} \\
\end{align*}

\cyrus{Proof isn't quite right, it is 1-Lipschitz though, it's probably easiest to argue this verbally by contradiction and case discussion.}
\fi

\if 0
Agent $\PlayerIndex$'s \mydef{regret} for playing strategy $\Strategy \in \StrategySet_\PlayerIndex$ rather than strategy $\tilde{\Strategy} \in \StrategySet_\PlayerIndex$ when all other agents are playing strategy profile $\StratProfile$ is defined as $\MV_{\PlayerIndex} (\Utility, \Strategy, \StratProfile)
    \doteq 
\Utility_\PlayerIndex(\tilde{\StratProfile}) - \Utility_\PlayerIndex(\StratProfile)$, where $\tilde{\StratProfile}_j = \StratProfile_j$ for all agents $j \neq \PlayerIndex$ and $\tilde{\StratProfile}_\PlayerIndex = \Strategy$.  Let $\F \doteq \{ f(\bm{u}) \doteq 
\MV_{\PlayerIndex} (\Utility,  \tilde{\Strategy}, \StratProfile)
\mid 
\PlayerIndex \in \SetOfPlayers, 
\tilde{\Strategy} \in \StrategySet_\PlayerIndex, 
\StratProfile \in \StratProfileSpace \}$.
We first show that $\F$ is a $2$-Lipschitz function family.
\begin{align*}
\sup_{f \in \F} \left| f (\Utility) - f (\Utility') \right| &= \sup_{\PlayerIndex, \StratProfile, \tilde{\StratProfile}} \left| (\Utility_\PlayerIndex (\tilde{\StratProfile})- \Utility_\PlayerIndex (\StratProfile)) - (\Utility_\PlayerIndex' (\tilde{\StratProfile}) - \Utility_\PlayerIndex' (\StratProfile)) \right| & \textsc{Definition of $\MV$} \\[-0.2cm]
&\leq \sup_{\PlayerIndex, \tilde{\StratProfile}} \left| \Utility_\PlayerIndex (\tilde{\StratProfile}) - \Utility_\PlayerIndex' (\tilde{\StratProfile}) \right| + \sup_{\PlayerIndex, \StratProfile} \left| \Utility_\PlayerIndex (\StratProfile) - \Utility_\PlayerIndex' (\StratProfile) \right|  & \textsc{Triangle inequality} \\[-0.05cm]
 &= 2\norm{\GameTuple - \GameTuple'}_\infty & \textsc{Definition of $\norm{\cdot}_\infty$} 
\end{align*}

We use this preliminary result to show equilibrium containment. 
As any pure equilibrium is also a mixed equilibrium, we need only show this containment for mixed equilibria. 
Note that a strategy profile is a Nash equilibrium if the regret of all players for all their strategies is non-negative. 
\begin{align*}
\Nash^\diamond_{\alpha} (\Utility) &= \{ \StratProfile \mid \smallsup{\PlayerIndex \in \SetOfPlayers, \tilde{\Strategy} \in \StrategySet^\diamond_\PlayerIndex} \MV_{\PlayerIndex} (\Utility, \tilde{\Strategy}, \StratProfile) + \alpha \geq 0 \} & \textsc{Definition of $\alpha$-Nash Equilibrium} \\[-0.15cm]
& \subseteq \{ \StratProfile \mid \smallsup{\PlayerIndex \in \SetOfPlayers, \tilde{\Strategy} \in \StrategySet^\diamond_\PlayerIndex} \MV_{\PlayerIndex} (\Utility, \tilde{\Strategy}, \StratProfile) + 2 \epsilon + \alpha \geq 0  \} & \textsc{\Cref{lem:lipschitz-properties} (\ref{lem:lipschitz-properties:opt-fun})} \\[-0.15cm]
&= \Nash^\diamond_{\alpha + 2\epsilon} (\Utility') & \textsc{Definition of $\GameTuple$}
\end{align*}

\noindent Taking $\alpha = 0$ shows the first case, and taking $\alpha = 2\epsilon$ and flipping $\Utility, \Utility'$ shows the second.
\fi

\section{Learning Framework}
\label{sec:learn}

In this section, we move on from approximating properties of games to actively learning them via repeated sampling.
We present our formal model of simulation-based games, along with 
uniform convergence bounds that provide guarantees about \mydef{empirical games}, which comprise estimates of the expected utilities of simulation-based games.
These bounds form the basis of our progressive sampling algorithm, which we present in the next section.
This algorithm learns empirical games that uniformly approximate their expected counterparts,
which implies that well-behaved game properties of simulation-based games are well estimated by their empirical counterparts.

\paragraph{Empirical Game Theory}
\label{sec:empirical_games}

We start by developing a formalism for modeling simulation-based games, which we use to formalize empirical games, the main object of study in EGTA.


\begin{definition}[Conditional
NFG]
\label{def:conditional-game}
A \mydef{conditional 
NFG}
$\ConditionalGame{\ConditionSpace} \doteq \langle \ConditionSpace, \SetOfPlayers, \{ \StrategySet_\PlayerIndex \}_{\PlayerIndex \in \SetOfPlayers} , \Utility(\cdot) \rangle$ consists of a set of conditions $\smash{\ConditionSpace}$, a set of agents $\smash{\SetOfPlayers}$, with pure strategy set $\smash{\StrategySet_\PlayerIndex}$ available to agent $\PlayerIndex$, and a vector-valued conditional utility function $\smash{\Utility : \StratProfileSpace \times \ConditionSpace \to \R^{\NumberOfPlayers}}$.
Given a condition $\smash{\ConditionValue \in \ConditionSpace}$, $\Utility(\cdot; \ConditionValue)$ yields a standard utility function of the form $\smash{\StratProfileSpace \to \R^{\NumberOfPlayers}}$. 
\end{definition}


As the name suggests, a conditional NFG depends on a condition.
A \emph{condition} is a very general notion.
It can incorporate an entropy source (i.e., a random uniform
sample from $[0,1]$) based on which the simulator can generate any randomness endogenous to the game.
It can also represent 
any number of exogenous 
variables (e.g., the weather) that might influence the agents' utilities.
For example, in an auction, the bidders' private values for the goods are exogenous variables, 
whereas a randomized tie-breaking rule would constitute endogenous randomness.
\if 0
Exogenous conditions are relevant in Bayesian games, such as auctions,
where bidders hold private information about their values for the goods being sold.
\fi
%
Representing all potential sources of randomness as a single random condition simplifies our formal model of simulation-based games, without sacrificing generality.%
\footnote{In practice, only the exogenous random condition
would be passed to the simulator, while the simulator itself produces potentially stochastic utilities as per the rules of the game.}
%
%


\if 0

We assume the simulation oracle takes as input a strategy profile together with as well as a random condition, the latter of which it uses to generate a utility value.
We then define the \mydef{data complexity} as the number of such samples (i.e., random conditions) that are required over the course of the algorithm, as samples may be a finite or expensive resource.%
\footnote{In general, the game may have a stochastic dependence---for example, on the weather---about which information may be available, but only through expensive sampling.}
Additionally, we define the \mydef{query complexity} as the number of times the oracle is queried, which is a coarse measure of computation time.

\fi

\if 0
We then define the \mydef{data complexity} as the number of such samples (i.e., calls to the oracle) that are required over the course of the algorithm. As is the case in practice, we assume that calls to the oracle are expensive, as simulating the game often requires simulation of multiple stochastic elements that interact in complex ways. The challenge then is to guarantee a uniform convergence bound with as few number of calls to the oracle as possible.
\fi

\if 0
Establishing uniform convergence guarantees for empirical games is non-trivial for the following reason: when estimating many parameters simultaneously, accurate inferences for each individual parameter do not necessarily imply a similar degree of accuracy for the parameters in aggregate.  This phenomenon is called \mydef{multiple comparisons problem} (MCP).
The \mydef{family wise error rate} (FWER) quantifies the probability that one or more of the 
approximation guarantees are violated; thus controlling the FWER is one way to control for multiple comparisons.

In this work, we consider two approaches to FWER control in empirical games.
The first builds on classical methods, using concentration inequalities to first establish confidence intervals about each parameter \emph{individually}, and then applying a statistical correction (e.g., Bonferroni or \v{S}id\'{a}k) to bound the probability that all 
approximation guarantees hold \emph{simultaneously}, often incurring
looseness in one or both steps.
The second is based on Rademacher averages, which directly bound the error of all parameters simultaneously by replacing the per-parameter concentration inequalities with a single concentration inequality, and an additional data-dependent term, thereby circumventing the looseness inherent in classical methods.
\fi

\begin{definition}[Expected 
NFG]
\label{def:expected-game}
Given a conditional 
NFG $\smash{\ConditionalGame{\ConditionSpace}}$ and a distribution $\ConditionDistribution$ over the set of conditions $\ConditionSpace$, we define the 
\mydef{expected utility function} $\smash{\Utility(\StratProfile; \ConditionDistribution) = \Expwrt{\ConditionValue \distributed \ConditionDistribution}{\Utility(\StratProfile; \ConditionValue)}}$, and the corresponding 
\mydef{expected 
NFG} as $\smash{\InducedGame{\ConditionDistribution} \doteq \langle \SetOfPlayers, \{ \StrategySet_\PlayerIndex \}_{\PlayerIndex \in \SetOfPlayers} \}, \Utility(\cdot; \ConditionDistribution) \rangle}$.
\end{definition}

Expected
NFGs serve as our mathematical model of simulation-based games.
Given the generality of the conditional NFGs on which they are based, expected NFGs are sufficient to model 
simulation-based games where the rules are deterministic but the initial conditions are random
(e.g., poker, where the cards dealt are random, or auctions where bidders' valuations are random); as well as 
games with randomness throughout (e.g., computerized backgammon, where the computer rolls the dice, or auctions where ties are broken randomly). 

Since the only available access to an expected NFG (i.e., a simulation-based game) is via a simulator, a key component of EGTA methodology is an empirical estimate of the game, 
which is produced by sampling random conditions from $\ConditionDistribution$, and then querying the simulator to obtain sample utilities at various strategy profiles. 


\if 0
NOTES:

know distribution (poker) or don't know distribution (auction)

poker: hands are fixed; randomness is dealer's card.
war games where initial armies and terrain are random -- could go either way.

simulator is fed samples/valuations/dealer's card/condition value.
cheating casino; cond'n dist'n (some cards missing) -- external to simulator.

dist'n is hidden both from us, and from the simulator; the simulator works off of just samples
\fi


\begin{definition}[Empirical
NFG]
\label{def:empirical-game}
Given a conditional
NFG $\smash{\ConditionalGame{\ConditionSpace}}$ together with a distribution $\ConditionDistribution$ over the set of conditions $\ConditionSpace$ from which we draw samples $\smash{\Samples = (\SamplePoint_1, \ldots, \SamplePoint_\NumberOfSamples) \distributed \ConditionDistribution^\NumberOfSamples}$ and then place $q(\StratProfile)\leq \NumberOfSamples$ simulation queries for each strategy profile $\StratProfile \in \StratProfileSpace$, we define the 
\mydef{empirical utility function} 
$\smash{\hat{\Utility}(\StratProfile; \Samples) \doteq \frac{1}{q(\StratProfile)}\sum_{\SampleIndex=1}^{q(\StratProfile)} \Utility(\StratProfile; \SamplePoint_\SampleIndex)}$, and the corresponding \mydef{empirical 
NFG}
as $\smash{\EmpiricalGame{\Samples} \doteq \langle \SetOfPlayers, \{ \StrategySet_\PlayerIndex \}_{\PlayerIndex \in \SetOfPlayers} \}, \hat{\Utility}(\cdot ; \Samples)\rangle}$.
\end{definition}


How expensive it is to sample random conditions depends on the sources of the randomness.
The distributions over the exogenous sources of randomness are unknown to the simulator, and thus potentially expensive data collection (e.g., market research) is required to gather samples.
In contrast, the distributions of endogenous sources of randomness are known (to the simulator, though not necessarily to the game analyst), and hence sampling from a single entropy source is sufficient for the simulator to simulate any randomness that is endogenous to the game.

We define \mydef{data complexity} as the number of samples drawn from distribution $\ConditionDistribution$, and \mydef{query complexity} as the number of times the simulator is queried to compute $\Utility(\StratProfile; \ConditionValue)$, given strategy profile $\StratProfile\in\StratProfileSpace$ and random condition $\ConditionValue\in \ConditionSpace$.
%
%
As noted above, query complexity is the natural metric for computation-bound settings (e.g., Starcraft~\cite{tavares2016rock}), while data complexity is more appropriate for data-intensive applications (e.g., auctions).

\if 0
\begin{observation}[Learnability] 
Fix a conditional NFG $\smash{\ConditionalGame{\ConditionSpace}}$ together with a distribution $\ConditionDistribution$ over the set of conditions $\ConditionSpace$ together with $\Samples = (\SamplePoint_1, \ldots, \SamplePoint_\NumberOfSamples) \sim \ConditionDistribution^\NumberOfSamples$, 
and the corresponding expected and empirical games, namely $\smash{\InducedGame{\ConditionDistribution}}$ and $\smash{\EmpiricalGame{\Samples}}$ with utility functions $\Utility \doteq \Utility (\cdot ; \ConditionDistribution)$ and $\hat{\Utility} \doteq \hat{\Utility} (\StratProfile; \Samples)$, respectively.
We abbreviate $\norm{f (\cdot; \Utility) - f (\cdot; \hat{\Utility})}_{\!\infty}$ by $\smash{\norm{\InducedGame{\ConditionDistribution} - \EmpiricalGame{\Samples}}_{\!\infty}}$.
If $f$ is a $\lambda$-Lipschitz property of $\GameTuple_{\ConditionDistribution}$,
and if, for some $\epsilon, \delta > 0$, $\displaystyle \mathbb{P}_{\Samples \sim \ConditionDistribution^{\NumberOfSamples}} \left( \smash{\norm{\InducedGame{\ConditionDistribution} - \EmpiricalGame{\Samples}}_{\!\infty} \le \nicefrac{\epsilon}{\lambda}} \right) \ge 1 - \delta$, then $f$ is $\epsilon$-learnable: i.e., $\norm{f (\cdot; \Utility) - f (\cdot; \hat{\Utility})}_{\!\infty} \le \epsilon$,  with probability at least $1 - \delta$.
\amy{learnability seems to be the same as well behaved!}
\label{obs:learnability}
\end{observation}
\fi

\if 0
\begin{proof}
Taking $\GameTuple$ to be $\InducedGame{\ConditionDistribution}$ and $\GameTuple'$ to be $\EmpiricalGame{\Samples}$, the dual conclusions of \Cref{thm:lipschitz-properties} hold with probability at least $1 - \delta$.
\end{proof}
\fi


\paragraph{Tail Bounds}

Recall 
that it is necessary to approximate a game to within $\nicefrac{\epsilon}{\lambda}$ to guarantee that a $\lambda$-Lipschitz property of that game can be approximated via an $\epsilon$-uniform approximation.
Our present goal, then, is to design algorithms that ``uniformly estimate'' empirical games from finitely many samples,
so that we can apply the machinery of Theorems~\ref{thm:lipschitz-properties} and~\ref{thm:lipschitz-properties-target-case} to make inferences about the quality of the equilibria and other well-behaved properties of simulation-based games.

This approach to learning empirical game properties uses
tail bounds, to show that each
utility is, with high probability, close to its empirical estimate.
Then, by applying a union bound, we can generate a uniform guarantee for the game: i.e., for all its utilities, simultaneously.

Much of the variation among methods comes from the choice of tail bounds.
The most straightforward choice of tail bound is Hoeffding's inequality, which was used by \citet{tuyls2020bounds}.
Like us, \citet{areyan2020improved} also used a variant of Bennett's inequality.
In the remainder of this section, we 
report data and query complexities for which these tail bounds guarantee, with high probability, an $\epsilon$-uniform approximation of a simulation-based game.
All of our results depend on the following assumption, which we make throughout:

\newtheorem{assumption}{Assumption}
\begin{assumption}
We consider a finite, conditional 
game $\ConditionalGame{\ConditionSpace}$ together with distribution $\ConditionDistribution$ such that for all $\smash{\ConditionValue \in \ConditionSpace}$ and $\smash{(\PlayerIndex, \StratProfile) \in \SetOfPlayers\times \StratProfileSpace}$, it holds that $\smash{\Utility_{\PlayerIndex}(\StratProfile; \ConditionValue) \in [-{\nicefrac{\UtilityRange}{2}}, {\nicefrac{\UtilityRange}{2}}]}$,
where $\UtilityRange \in \mathbb{R}$.
\end{assumption}

Not surprisingly, the number of parameters being estimated, influences both the data and query complexities.
We use $\abs{\GameTuple}$ to denote the game size, i.e., the number of (scalar) utilities being estimated, and $\UtilityIndices\subseteq \SetOfPlayers\times \StratProfileSpace$ to index over these utilities.

In a basic normal-form representation of a simulation-based game, the game size is simply the number of players times the number of strategy profiles: i.e., $\abs{\GameTuple} = \NumberOfPlayers \cdot \abs{\StratProfileSpace}$. 
In many games, however, structure and symmetry can be exploited, which render the effective size of the game much smaller.
In Tic-Tac-Toe, for example, since many strategies are equivalent up to either rotation or reflection, the effective number of strategy profiles can be dramatically less than $\abs{\StratProfileSpace}$.
Furthermore, being a zero-sum game, the utilities of one player are determined by those of the other, and hence only the utilities of one of the $\abs{\SetOfPlayers}$ players requires estimation.
In other words, although there are two players in Tic-Tac-Toe and an intractable number of strategy profiles,
\samy{}{the game size $\abs{\GameTuple}$ is effectively much much smaller than $\abs{\SetOfPlayers} \cdot \abs{\StratProfileSpace}$, as only a small portion of all utilities need to be estimated}.

\if 0
As already noted, this learning problem is non-trivial because it involves multiple comparisons.
We describe two potential solutions, both of which are intended to control the FWER.
The first is a classical method: it applies a Bonferroni correction to multiple per-parameter confidence intervals derived via Hoeffding's inequality; the second uses Rademacher averages.
Both approaches yield
bounds on the rate at which \emph{all\/} utility estimates converge to their expectations.
\fi



\paragraph{Hoeffding's Inequality}
\label{sec:hoeffding}

Hoeffding's inequality for sums of independent bounded random variables can be used to obtain tail bounds on the probability that an empirical mean differs greatly from its expectation.
We can use this inequality to estimate a single utility value, and then apply a union bound to estimate all utilities simultaneously.


\if 0
\begin{theorem}[Finite-Sample Bounds for Expected 
Games via Hoeffding's Inequality]
\label{thm:hoeffding-sc}
\sbhaskar{Consider finite, conditional 
game $\ConditionalGame{\ConditionSpace}$ together with distribution $\ConditionDistribution$ and index set $\smash{\UtilityIndices \subseteq \SetOfPlayers \times \StratProfileSpace}$ such that for all $\smash{\ConditionValue \in \ConditionSpace}$ and $\smash{(\PlayerIndex, \StratProfile) \in \UtilityIndices}$, it holds that $\smash{\Utility_{\PlayerIndex}(\StratProfile; \ConditionValue) \in [-{\nicefrac{\UtilityRange}{2}}, {\nicefrac{\UtilityRange}{2}}]}$,
where $\UtilityRange \in \mathbb{R}$.}{}
Then, with probability at least $\smash{1 - \delta}$, 
we may bound the deviation between $\Utility(\cdot; \ConditionDistribution)$ and $\hat{\Utility}(\cdot; \Samples)$ for a single index $\smash{(\PlayerIndex, \StratProfile) \in \UtilityIndices}$, and for all indices $\smash{(\PlayerIndex, \StratProfile) \in \UtilityIndices}$, respectively, as:
%
1.~$\left\lvert \Utility_{\PlayerIndex}(\StratProfile; \ConditionDistribution) - \hat{\Utility}_{\PlayerIndex}(\StratProfile; \Samples) \right\rvert \le \UtilityRange \sqrt{\frac{\ln \left(  \nicefrac{2}{\delta} \right)}{2\NumberOfSamples}}$
%
%
and
2.~$\norm{ \Utility_{\PlayerIndex}(\StratProfile; \ConditionDistribution) - \hat{\Utility}_{\PlayerIndex}(\StratProfile; \Samples) }_{\infty} \le \UtilityRange \sqrt{\frac{\ln \left( \nicefrac{2 \abs{\GameTuple}}{\delta} \right)}{2\NumberOfSamples}}$.
%
\end{theorem}

The first claim follows directly from Hoeffding's inequality, and the second, via a union bound.


Fixing a failure probability $\delta$, rather than compute the error bound 
$\epsilon$, given sample size $m$, it is often useful to calculate the necessary sample size to achieve accuracy $\epsilon$.
Accordingly, applying Hoeffding's inequality, the requisite sample complexity function $m_H (\epsilon, \delta, \UtilityRange) \doteq \frac{\UtilityRange^{2}\ln \left( \nicefrac{2}{\delta} \right)}{2\epsilon^{2}}$:
\fi

\begin{theorem}[Finite-Sample Bounds for Expected 
NFGs via Hoeffding's Inequality]
\label{thm:hoeffding}
\label{thm:Hoeffding}
Given $\epsilon > 0$, Hoeffding's Inequality guarantees a data complexity of $m^H (\epsilon, \nicefrac{\delta}{\abs{\GameTuple}}, \UtilityRange) \doteq \frac{\UtilityRange^{2}\ln \left( \nicefrac{2 \abs{\GameTuple}}{\delta} \right)}{2\epsilon^{2}}$,
and a query complexity of $\abs{\StratProfileSpace} + m^H(\epsilon, \nicefrac{\delta}{\abs{\GameTuple}}, \UtilityRange \sqrt{\abs{\StratProfileSpace}}) = \abs{\StratProfileSpace} +  \frac{\UtilityRange^2\abs{\StratProfileSpace}\ln\left(\nicefrac{2\abs{\GameTuple}}{\delta}\right)}{2\epsilon^2}$. 
\end{theorem}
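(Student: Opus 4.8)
The plan is to reduce to a single Hoeffding bound per scalar utility and then union-bound over all of them. Fix an index $(\PlayerIndex,\StratProfile)\in\UtilityIndices$. By Assumption~1, the samples $\Utility_\PlayerIndex(\StratProfile;\SamplePoint_1),\dots,\Utility_\PlayerIndex(\StratProfile;\SamplePoint_{q(\StratProfile)})$ are i.i.d.\ and confined to an interval of width $\UtilityRange$; since $\hat{\Utility}_\PlayerIndex(\StratProfile;\Samples)$ is their empirical mean and $\Utility_\PlayerIndex(\StratProfile;\ConditionDistribution)$ their common expectation, Hoeffding's inequality gives $\Probp{\abs{\hat{\Utility}_\PlayerIndex(\StratProfile;\Samples)-\Utility_\PlayerIndex(\StratProfile;\ConditionDistribution)}\ge\epsilon}\le 2\expp{-2q(\StratProfile)\epsilon^2/\UtilityRange^2}$. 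Taking $q(\StratProfile)=\NumberOfSamples$ for every profile and forcing the right-hand side below $\nicefrac{\delta}{\abs{\GameTuple}}$ is equivalent to $\NumberOfSamples\ge m^H(\epsilon,\nicefrac{\delta}{\abs{\GameTuple}},\UtilityRange)$, so $\NumberOfSamples=\ceil{m^H(\epsilon,\nicefrac{\delta}{\abs{\GameTuple}},\UtilityRange)}$ sampled conditions suffice per index.

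Next I would union-bound over the $\abs{\GameTuple}=\abs{\UtilityIndices}$ scalar utilities --- this needs no independence, which matters because the per-profile estimates all reuse the same draws $\Samples$ --- concluding that with probability at least $1-\delta$ we have $\norm{\Utility(\cdot;\ConditionDistribution)-\hat{\Utility}(\cdot;\Samples)}_\infty\le\epsilon$, i.e.\ $\EmpiricalGame{\Samples}$ is an $\epsilon$-uniform approximation of $\InducedGame{\ConditionDistribution}$. Since the same $\NumberOfSamples$ conditions are shared across all strategy profiles, the number of conditions drawn from $\ConditionDistribution$ is just $\NumberOfSamples$, giving the stated data complexity $m^H(\epsilon,\nicefrac{\delta}{\abs{\GameTuple}},\UtilityRange)$ (modulo the ceiling).

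For the query complexity, each of the $\abs{\StratProfileSpace}$ profiles is queried $q(\StratProfile)=\NumberOfSamples$ times, for a total of $\abs{\StratProfileSpace}\,\ceil{m^H(\epsilon,\nicefrac{\delta}{\abs{\GameTuple}},\UtilityRange)}$ queries; bounding $\ceil{x}\le x+1$ splits this into $\abs{\StratProfileSpace}+\abs{\StratProfileSpace}\,m^H(\epsilon,\nicefrac{\delta}{\abs{\GameTuple}},\UtilityRange)$, and because $m^H$ depends quadratically on its range argument, $\abs{\StratProfileSpace}\,m^H(\epsilon,\nicefrac{\delta}{\abs{\GameTuple}},\UtilityRange)=m^H(\epsilon,\nicefrac{\delta}{\abs{\GameTuple}},\UtilityRange\sqrt{\abs{\StratProfileSpace}})$, yielding the claimed expression.

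I do not expect any real obstacle here: Hoeffding plus a union bound is the whole probabilistic engine. The only points requiring care are bookkeeping --- distinguishing sampled conditions (shared, hence counted once) from simulator queries (one per profile per condition), recognizing that the additive $\abs{\StratProfileSpace}$ is exactly the slack from rounding $m^H$ up to an integer, and observing that folding the factor $\abs{\StratProfileSpace}$ into the range via $\UtilityRange\mapsto\UtilityRange\sqrt{\abs{\StratProfileSpace}}$ is what produces the compact form in the statement.
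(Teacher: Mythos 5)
Your proposal is correct and follows exactly the argument the paper intends (the paper states this result essentially without proof, noting only that it follows from Hoeffding's inequality plus a union bound over the $\abs{\GameTuple}$ utilities): per-index Hoeffding with range $\UtilityRange$, a union bound that requires no independence across the shared conditions, the additive $\abs{\StratProfileSpace}$ absorbed as the integer-rounding slack, and the factor $\abs{\StratProfileSpace}$ folded into $m^H$ via the substitution $\UtilityRange \mapsto \UtilityRange\sqrt{\abs{\StratProfileSpace}}$. No gaps.
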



In other words, if we have access to at least $m^H(\nicefrac{\delta}{\abs{\GameTuple}}, \UtilityRange)$ samples from $\ConditionDistribution$, and can place at least $\abs{\StratProfileSpace} + m^H(\epsilon, \nicefrac{\delta}{\abs{\GameTuple}}, \UtilityRange\sqrt{\abs{\StratProfileSpace}})$ simulator queries, Hoeffding's Inequality guarantees that we can produce an empirical NFG $\hat{\GameTuple}$ that is an $\epsilon$-uniform approximation of the expected NFG $\GameTuple_\ConditionDistribution$ with probability at least $1 - \delta$.
This empirical NFG can be generated by querying the simulator at each strategy profile, for each of $\lceil m^H(\epsilon, \nicefrac{\delta}{\abs{\GameTuple}}, \UtilityRange)\rceil$ samples from $\ConditionDistribution$.

\if 0
By assuming independence among agents' utilities, and then applying a
\v{S}id\'{a}k rather than a Bonferroni correction, a slightly tighter bound was derived previously~\citeauthor{tuyls2020bounds}~\cite{tuyls2020bounds}).%
\footnote{Specifically, we use $\mathbb{P}(A \vee B) \leq \mathbb{P}(A) + \mathbb{P}(B)$, whereas previous work~\citeauthor{tuyls2020bounds}~\cite{tuyls2020bounds} assumes independence and uses $\mathbb{P}(A \vee B) = 1 - (1 - \mathbb{P}(A))(1 - \mathbb{P}(B))$.}
Note, however, that utilities in simulation-based games could very well exhibit dependencies. 
For example, imagine a simulation-based game where agents' utilities depend on the weather: e.g., on snowy winter days, all utilities are high, while on rainy winter days, they are all low.
To learn the utilities in this game, one could first sample the weather, and then sample utilities conditioned on the weather.
Our bound holds in this case, so long as (only) the weather samples are independent.  That utilities are independent, assuming an arbitrary black-box simulator, is a needlessly strong assumption.
\fi

\paragraph{Bennett's Inequality, Known Variance}
\label{sec:bennett}


I.i.d.\ random variables $X_{1:m}$ with mean $\overline{X}$ are termed $\smash{\vargaussian}$-\mydef{sub-Gaussian} if they obey the Gaussian-Chernoff bound: i.e.,
$\Prob\!\left( \overline{X} \geq \Expect[\overline{X}] + \epsilon \right) \leq \exp\! \left( \frac{-\NumberOfSamples \epsilon^2}{2 \vargaussian} \right)$; 
equivalently,
$\Prob\!\left( \overline{X} \geq \Expect[\overline{X}] + \sqrt{\frac{2 {\sigma_{\mathcal{N}}^{\smash{2}}} \ln (\smash{\frac{1}{\delta}})}{\NumberOfSamples}} \right) \leq \delta$,
where $\smash{\vargaussian}$ is 
a \mydef{variance proxy}.
%
Using this characterization, Hoeffding's inequality
reads
``If $X_{i}$ has range $\UtilityRange$, then $X_{i}$ is $\mathsmaller{\nicefrac{\UtilityRange^2}{4}}$-sub-Gaussian,''
which matches the Gaussian bound under a worst-case assumption about variance,
because,
by Popoviciu's inequality (\citeyear{popoviciu1935equations}), the variance $\Var[X_{i}] \leq \mathsmaller{\nicefrac{\UtilityRange^2}{4}}$.
Thus, Hoeffding's inequality yields sub-Gaussian tail bounds; 
however, as it is stated in terms of the \emph{largest possible variance}, it is a loose
bound when $\Var[X_{i}] \ll \mathsmaller{\frac{\UtilityRange^2}{4}}$. 

\if 0
\sbhaskar{Knowledge of the variance tells us something about the 
\emph{variance proxy} $\smash{\vargaussian}$,}{}
\if 0
but it does not;
taking the range $\UtilityRange$ to $\infty$ allows $X_{i}$ to exhibit arbitrary tail behaviors.
\fi
\sbhaskar{but the \emph{range} $\UtilityRange$ also plays a role:}{} \amy{in what?} \bhaskar{In the variance-aware tail bound.}
\fi



The central limit theorem guarantees that $X_i$ is $\Var[X_i]$-sub-Gaussian, asymptotically (i.e., as $\NumberOfSamples\to \infty$); however, it provides no finite-sample guarantees.
By accounting for the range $\UtilityRange$ as well as the variance, we can derive finite-sample Gaussian-like tail bounds.

A ($\sigma_{\Gamma}^{2}, \UtilityRange_\Gamma$)-\mydef{sub-gamma}~\citep{boucheron2013concentration} random variable obeys
$\Prob\!\left( \overline{X} \geq \Expect[\overline{X}] + \frac{\UtilityRange_\Gamma \ln(\frac{1}{\delta})}{3\NumberOfSamples} + \sqrt{\frac{2\sigma_{\Gamma}^{\smash{2}}\ln(\smash{\frac{1}{\delta}})}{\NumberOfSamples}} \right) \leq \delta$.
%
This tail bound asymptotically matches the $\sigma_{\Gamma}$-sub-Gaussian tail bound, with an additional scale-dependent term acting as an asymptotically negligible correction for working with non-Gaussian distributions.
The key to understanding the tail behavior of sub-gamma random variables is to observe that the error consists of a \emph{hyperbolic} (fast-decaying) \emph{scale term}, $\slfrac{\UtilityRange_\Gamma \ln(\frac{1}{\delta})}{3\NumberOfSamples}$, and a \emph{root-hyperbolic} (slow-decaying) \emph{variance term}, ${\sqrt{\slfrac{2\vargamma\ln(\smash{\frac{1}{\delta}})}{\NumberOfSamples}}}$. 
Sub-gamma random variables thus yield mixed convergence rates, which decay quickly initially, while the $\UtilityRange_\Gamma$ term dominates, before slowing to the root-hyperbolic rate once the $\vargamma$ term comes to dominate.

While Bennett's inequality (\citeyear{bennett1962probability}) is usually stated as a sub-Poisson bound,%
\footnote{I.e., a bound of the form ${\Prob\!\left( \overline{X} \geq \Expect[\overline{X}] + \epsilon \right) \leq \exp\left(-\frac{\NumberOfSamples\sigma_{\GameTuple}^{\smash{2}}}{\UtilityRange_\GameTuple}h\left(\frac{\UtilityRange_\GameTuple \epsilon}{\sigma_{\GameTuple}^{\smash{2}}}\right)\right)}$, where $h(x)\doteq (1+x)\ln(1+x)-x$, for all $x\geq 0$.}
it immediately implies that if $X_{i}$ has range $\UtilityRange$ and variance $\sigma^2$, then $X_{i}$ is ($\sigma^2, \UtilityRange$)-sub-gamma.
We derive data and query complexity bounds that are consistent with this insight.


\begin{restatable}[Finite-Sample Bounds for Expected 
NFGs via Bennett's Inequality]{theorem}{thmBennett}
\label{thm:Bennett}
Given $\epsilon > 0$, Bennett's Inequality guarantees a data complexity of $$\smash{m^{B} (\epsilon, \nicefrac{\delta}{\abs{\GameTuple}}; \UtilityRange, \norm{\UtilityVariance}_{\infty})} \doteq \frac{\UtilityRange^2 \ln \left( \frac{2 \abs{\GameTuple}}{\delta} \right)}{\norm{\UtilityVariance}_\infty h \left( \frac{\UtilityRange\epsilon}{\norm{ \UtilityVariance}_\infty}\right)} \leq 2 \ln \frac{2 \abs{\GameTuple}}{\delta} \left( \frac{\UtilityRange}{3\varepsilon} + \frac{\norm{\UtilityVariance}_{\infty}}{\varepsilon^{2}} \right),$$
and a query complexity of
$\abs{\StratProfileSpace} + m^{B} (\epsilon, \nicefrac{\delta}{\abs{\GameTuple}}; \UtilityRange \abs{\StratProfileSpace}, \norm{\UtilityVariance}_{1,\infty}) \leq \abs{\StratProfileSpace} + 2 \ln \frac{2 \abs{\GameTuple}}{\delta} \left( \frac{\UtilityRange \abs{\StratProfileSpace}} {3 \varepsilon} + \frac{\norm{\UtilityVariance}_{1,\infty}}{\varepsilon^{2}} \right)$, 
where $\UtilityVariance_{\PlayerIndex} (\StratProfile) \doteq \Var_{\SamplePoint \distributed \ConditionDistribution}[ \Utility_{\PlayerIndex}(\StratProfile; \SamplePoint)]$ is the variance in the conditional game $\GameTuple_{\ConditionDistribution}$ of the utility $\Utility_{\PlayerIndex} (\StratProfile)$, so that, as usual, $\norm{\UtilityVariance}_{\infty}$ denotes the infinity norm of 
$\UtilityVariance$, namely
$\max_{\StratProfile \in \StratProfileSpace, \PlayerIndex \in \SetOfPlayers} \UtilityVariance_{\PlayerIndex}(\StratProfile)$.
Further, $\norm{\UtilityVariance}_{1,\infty}$ is defined as the $1$ norm over strategy profiles of the $\infty$ norm over players: i.e., the sum of the maxima over variances, namely
$\sum_{\StratProfile \in \StratProfileSpace} \max_{\PlayerIndex \in \SetOfPlayers} \UtilityVariance_{\PlayerIndex}(\StratProfile)$.
\end{restatable}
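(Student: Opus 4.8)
The plan is to invoke Bennett's inequality one scalar utility at a time, correct for multiplicity with a union bound, invert the resulting tail bound, and finally relax $h$ to its familiar quadratic lower bound; the range $\UtilityRange$ and the variances $\UtilityVariance$ will enter precisely through the sub-gamma (fast scale-term / slow variance-term) decomposition recalled above.

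\emph{Data complexity.} Fix an index $(\PlayerIndex,\StratProfile)\in\UtilityIndices$. By our boundedness assumption the i.i.d.\ samples $\Utility_\PlayerIndex(\StratProfile;\SamplePoint_1),\dots,\Utility_\PlayerIndex(\StratProfile;\SamplePoint_\NumberOfSamples)$ lie in $[-\nicefrac{\UtilityRange}{2},\nicefrac{\UtilityRange}{2}]$ and have variance $\UtilityVariance_\PlayerIndex(\StratProfile)\le\norm{\UtilityVariance}_\infty$, hence are $(\UtilityVariance_\PlayerIndex(\StratProfile),\UtilityRange)$-sub-gamma; Bennett's inequality, written as a sub-Poisson bound as above and applied to each of the two tails (whence the $2$ inside the logarithm), gives $\Prob\!\left(\left|\hat{\Utility}_\PlayerIndex(\StratProfile;\Samples)-\Utility_\PlayerIndex(\StratProfile;\ConditionDistribution)\right|\ge\epsilon\right)\le 2\exp\!\left(-\frac{\NumberOfSamples\,\UtilityVariance_\PlayerIndex(\StratProfile)}{\UtilityRange^2}h\!\left(\frac{\UtilityRange\epsilon}{\UtilityVariance_\PlayerIndex(\StratProfile)}\right)\right)$. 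The one-variable fact that $\psi(u)\doteq\nicefrac{u}{h(u)}$ is decreasing on $(0,\infty)$ --- indeed $\psi'(u)=(\ln(1+u)-u)/h(u)^2<0$ --- makes $\sigma^2\mapsto\sigma^2 h(\nicefrac{\UtilityRange\epsilon}{\sigma^2})=\UtilityRange\epsilon/\psi(\nicefrac{\UtilityRange\epsilon}{\sigma^2})$ decreasing, so we may enlarge $\UtilityVariance_\PlayerIndex(\StratProfile)$ up to $\norm{\UtilityVariance}_\infty$ in the exponent. Union-bounding over all $\abs{\GameTuple}$ indices, requiring total failure probability at most $\delta$, and solving for $\NumberOfSamples$ yields exactly $\NumberOfSamples\ge m^B(\epsilon,\nicefrac{\delta}{\abs{\GameTuple}};\UtilityRange,\norm{\UtilityVariance}_\infty)$. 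The closed-form bound follows from the elementary inequality $h(x)\ge\nicefrac{3x^2}{2(x+3)}$ for $x\ge0$, which I would prove by checking that $\phi(x)\doteq h(x)-\nicefrac{3x^2}{2(x+3)}$ satisfies $\phi(0)=\phi'(0)=0$ and $\phi''(x)=\nicefrac{x^2(x+9)}{(1+x)(x+3)^3}\ge0$, and then substituting $x=\nicefrac{\UtilityRange\epsilon}{\norm{\UtilityVariance}_\infty}$.

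\emph{Query complexity.} With a pool of $\NumberOfSamples=\lceil m^B(\epsilon,\nicefrac{\delta}{\abs{\GameTuple}};\UtilityRange,\norm{\UtilityVariance}_\infty)\rceil$ conditions in hand, query each profile $\StratProfile$ only $q(\StratProfile)=\lceil m^B(\epsilon,\nicefrac{\delta}{\abs{\GameTuple}};\UtilityRange,w_\StratProfile)\rceil$ times, where $w_\StratProfile\doteq\max_{\PlayerIndex:(\PlayerIndex,\StratProfile)\in\UtilityIndices}\UtilityVariance_\PlayerIndex(\StratProfile)$; since $m^B$ is increasing in its variance argument (the same monotonicity fact) and $w_\StratProfile\le\norm{\UtilityVariance}_\infty$, this respects $q(\StratProfile)\le\NumberOfSamples$. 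Rerunning the per-index Bennett/union-bound argument with $q(\StratProfile)$ queries and $\UtilityVariance_\PlayerIndex(\StratProfile)\le w_\StratProfile$ certifies an $\epsilon$-uniform approximation with probability $\ge1-\delta$, and the total query count is $\sum_\StratProfile q(\StratProfile)\le\abs{\StratProfileSpace}+\sum_\StratProfile m^B(\epsilon,\nicefrac{\delta}{\abs{\GameTuple}};\UtilityRange,w_\StratProfile)$, the $\abs{\StratProfileSpace}$ absorbing the per-profile ceilings. It remains to establish the aggregation inequality $\sum_\StratProfile m^B(\epsilon,\nicefrac{\delta}{\abs{\GameTuple}};\UtilityRange,w_\StratProfile)\le m^B(\epsilon,\nicefrac{\delta}{\abs{\GameTuple}};\UtilityRange\abs{\StratProfileSpace},\norm{\UtilityVariance}_{1,\infty})$ (noting $\norm{\UtilityVariance}_{1,\infty}=\sum_\StratProfile w_\StratProfile$), after which the $h\ge\nicefrac{3x^2}{2(x+3)}$ relaxation with parameters $\UtilityRange\abs{\StratProfileSpace}$ and $\norm{\UtilityVariance}_{1,\infty}$ produces the second displayed form. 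I would derive this aggregation inequality from two observations: $m^B$ is jointly degree-one homogeneous in $(\UtilityRange,\sigma^2)$, so $\abs{\StratProfileSpace}\,m^B(\epsilon,\cdot;\UtilityRange,\bar w)=m^B(\epsilon,\cdot;\UtilityRange\abs{\StratProfileSpace},\abs{\StratProfileSpace}\bar w)$ with $\bar w\doteq\nicefrac{1}{\abs{\StratProfileSpace}}\sum_\StratProfile w_\StratProfile$; and $\sigma^2\mapsto m^B(\epsilon,\cdot;\UtilityRange,\sigma^2)\propto(\sigma^2 h(\nicefrac{\UtilityRange\epsilon}{\sigma^2}))^{-1}$ is concave on $(0,\infty)$, so Jensen's inequality gives $\sum_\StratProfile m^B(\epsilon,\cdot;\UtilityRange,w_\StratProfile)\le\abs{\StratProfileSpace}\,m^B(\epsilon,\cdot;\UtilityRange,\bar w)$, tight when all $w_\StratProfile$ coincide.

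The main obstacle is this concavity claim. Substituting $u=\nicefrac{\UtilityRange\epsilon}{\sigma^2}$ and using $\psi=\nicefrac{u}{h(u)}$ as above, a short computation shows it is equivalent to $u\mapsto u^2\psi'(u)=u^2(\ln(1+u)-u)/h(u)^2$ being non-increasing on $(0,\infty)$; I would attack this by differentiating once more and reducing to a single-variable polynomial/Taylor inequality (the leading terms cancel at $u=0$, just as in the $\phi''\ge0$ verification above). Should that computation prove stubborn, there is a safe fallback: apply $h(x)\ge\nicefrac{3x^2}{2(x+3)}$ to each summand $m^B(\epsilon,\cdot;\UtilityRange,w_\StratProfile)$ before summing, which already yields $\abs{\StratProfileSpace}+2\ln\frac{2\abs{\GameTuple}}{\delta}\!\left(\frac{\UtilityRange\abs{\StratProfileSpace}}{3\epsilon}+\frac{\norm{\UtilityVariance}_{1,\infty}}{\epsilon^2}\right)$ --- i.e., the displayed closed form --- though not the intermediate $m^B(\epsilon,\cdot;\UtilityRange\abs{\StratProfileSpace},\norm{\UtilityVariance}_{1,\infty})$ expression as a literal upper bound. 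The remaining pieces --- the two-sided factor of $2$, the ceiling bookkeeping that produces the additive $\abs{\StratProfileSpace}$ (and the harmless convention $q(\StratProfile)=1$ when $w_\StratProfile=0$), and the check $q(\StratProfile)\le\NumberOfSamples$ --- are routine.
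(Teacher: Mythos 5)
Your argument follows essentially the same route the paper intends: two-sided Bennett in sub-Poisson form applied per utility index, a union bound over the $\abs{\GameTuple}$ indices (whence $\ln\frac{2\abs{\GameTuple}}{\delta}$), monotonicity of $\sigma^2\mapsto\sigma^2 h(\nicefrac{\UtilityRange\epsilon}{\sigma^2})$ to pass to $\norm{\UtilityVariance}_{\infty}$, the Bernstein-type relaxation $h(x)\ge\nicefrac{3x^2}{2(x+3)}$ for the closed forms, and the variance-adaptive allocation $q(\StratProfile)=\lceil m^{B}(\epsilon,\nicefrac{\delta}{\abs{\GameTuple}};\UtilityRange,\norm{\UtilityVariance(\StratProfile)}_{\infty})\rceil$ for the query count, which is exactly the scheme described in the text immediately following the theorem (and your exponent $\frac{\NumberOfSamples\sigma^2}{\UtilityRange^2}h(\cdot)$, with $\UtilityRange^{2}$ rather than the $\UtilityRange$ appearing in the paper's footnote, is the version consistent with the displayed formula for $m^{B}$). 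The one step you flag as uncertain---concavity of $\sigma^{2}\mapsto m^{B}(\epsilon,\cdot;\UtilityRange,\sigma^{2})$, which together with degree-one homogeneity and Jensen upgrades $\sum_{\StratProfile}m^{B}(\epsilon,\cdot;\UtilityRange,w_{\StratProfile})$ to the literal expression $m^{B}(\epsilon,\cdot;\UtilityRange\abs{\StratProfileSpace},\norm{\UtilityVariance}_{1,\infty})$---is in fact true: your reduction to the claim that $u\mapsto u^{2}\psi'(u)=u^{2}(\ln(1+u)-u)/h(u)^{2}$ is non-increasing is correct, and that function decreases monotonically from its limit $-2$ at $u\to0^{+}$, so the clean Jensen route closes; in any case your fallback of relaxing $h$ summand-by-summand already yields the displayed closed-form query bound, so nothing essential is missing.
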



Hence, if variance is known, and if we have access to at least $m^B(\epsilon, \nicefrac{\delta}{\abs{\GameTuple}}; \UtilityRange, \norm{\UtilityVariance}_{\infty})$ samples from $\ConditionDistribution$ and can place at least $\abs{\StratProfileSpace} + m^B(\epsilon, \nicefrac{\delta}{\abs{\GameTuple}}; \UtilityRange \abs{\StratProfileSpace}, \norm{\UtilityVariance}_{1,\infty})$ simulator queries, then Bennett's Inequality guarantees that we can produce an empirical NFG $\hat{\GameTuple}$ that is an $\epsilon$-uniform approximation of the expected NFG $\GameTuple_\ConditionDistribution$ with probability at least $1 - \delta$.
This empirical NFG can be generated by querying the simulator at each strategy profile $\StratProfile\in\StratProfileSpace$ with only $\lceil m^B(\epsilon, \nicefrac{\delta}{\abs{\GameTuple}}; \UtilityRange, \norm{\UtilityVariance(\StratProfile)}_\infty)\rceil$ of the available samples, where $\norm{\UtilityVariance(\StratProfile)}_\infty\doteq \max_{\PlayerIndex\in\SetOfPlayers}\UtilityVariance_\PlayerIndex(\StratProfile)$ denotes the maximum variance at each strategy profile across all players.
Since $m^B(\epsilon, \nicefrac{\delta}{\abs{\GameTuple}}; \UtilityRange, \norm{\UtilityVariance(\StratProfile)}_\infty)$ is strictly increasing in $\norm{\UtilityVariance(\StratProfile)}_\infty$, this result is consistent with the intuition that strategy profiles with higher (resp.\ lower) variance require more (resp.\ fewer) queries to be well-estimated.

\paragraph{Bennett's Inequality, Empirical Variance}
\label{sec:empirical-bennett}

\if 0
\sbhaskar{Bennett's inequality provides Gaussian-like tail bounds, with a scale-dependent term acting as an asymptotically negligible correction for working with non-Gaussian distributions.
Asymptotic central-limit-theorem bounds behave similarly, but lack corresponding finite-sample guarantees.}{}
\fi

Bennett's inequality assumes the range and variance of the random variable
are \emph{known}.
Various \emph{empirical\/} Bennett bounds have been shown~\citep{audibert2007variance,audibert2007tuning,maurer2009empirical}, which all essentially operate by bounding the variance of a random variable in terms of its range and \emph{empirical\/} variance, and then applying Bennett's inequality.
These empirical bounds nearly match their non-empirical counterparts, with a larger scale-dependent term that also corrects for the empirical estimates of variance.

\if 0
Our work differs from previous applications in that we require confidence intervals of \emph{uniform width}, and thus our bounds are limited by the \emph{maximum\/} variance over all parameters being estimated.
The maximum variance over a set of random variables is known as the \mydef{wimpy variance} \amy{don't need} \citep{boucheron2013concentration}.
\fi

The next theorem forms the heart of our pruning algorithm.
To derive this theorem, we first bound variance in terms of empirical variance using a novel sub-gamma tail bound (\Cref{thm:generalEBennett}), which is a refinement of the one derived by~\citet{cousins2020sharp} and used in~\citet{areyan2020improved}.
Then, we apply Bennett's inequality to the upper and lower tails of each individual utility using said variance bounds.
Note that these non-uniform
bounds immediately imply a uniform approximation guarantee, by replacing the empirical variances with the maximum empirical variance $\norm{\EUtilityVariance}_\infty$.

We state this theorem in terms of arbitrary index sets $\UtilityIndices \subseteq \SetOfPlayers \times \StratProfileSpace$, so that our result generalizes to any compactly represented normal-form game in which not all utilities need be estimated, including extensive-form and graphical games, \samy{}{and the Tic-Tac-Toe example discussed earlier}.



\begin{restatable}[Bennett-Type Empirical-Variance Sensitive Finite-Sample Bounds]{theorem}{thmEBennett}
For all $(\PlayerIndex, \StratProfile) \in \UtilityIndices$,
let
\begin{align*}
\EUtilityVariance_{\PlayerIndex}(\StratProfile) 
&\doteq 
\mathsmaller{\frac{1}{\NumberOfSamples - 1}} 
\ssum_{j = 1}^{\NumberOfSamples} 
  \left( \Utility_{\PlayerIndex}(\StratProfile; \SamplePoint_j) -
  \hat{\Utility}_{\PlayerIndex}(\StratProfile; \Samples) \right)^2;\\
\epsilon_{\EUtilityVariance, (\PlayerIndex, \StratProfile)}
&\doteq \mathsmaller{\frac{2 \UtilityRange^{2} \ln \left( \frac{3 \abs{\GameTuple}}{\delta} \right)}{3 \NumberOfSamples} + \sqrt{\Bigl(\frac{1}{3} + \frac{1}{2 \ln \left( \frac{3\abs{\GameTuple}}{\delta} \right)}\Bigr) \Bigl( \frac{\UtilityRange^{2} \ln \left( \frac{3\abs{\GameTuple}}{\delta} \right)}{\NumberOfSamples - 1}\Bigr)^{\smash{2}} + \frac{2 \UtilityRange^{2} \EUtilityVariance_\PlayerIndex(\StratProfile) \ln \left( \frac{3\abs{\GameTuple}}{\delta} \right)}{\NumberOfSamples}}};\\
\epsilon_{\mu, (\PlayerIndex, \StratProfile)}
&\doteq \mathsmaller{\frac{\UtilityRange\ln \left( \frac{3 \abs{\GameTuple}}{\delta} \right)}{3 \NumberOfSamples} + \sqrt{\frac{2 (\EUtilityVariance_\PlayerIndex (\StratProfile) + \epsilon_{\EUtilityVariance, (\PlayerIndex, \StratProfile)}) \ln \left( \frac{3\abs{\GameTuple}}{\delta} \right)}{\NumberOfSamples}}}.
\end{align*}
%
Then, with probability at least $1 - \delta$,
it holds that
$\left\lvert \Utility_{\PlayerIndex}(\StratProfile; \ConditionDistribution) - \hat{\Utility}_{\PlayerIndex}(\StratProfile; \Samples) \right\rvert \leq \epsilon_{\mu, (\PlayerIndex, \StratProfile)}$ for all 
{$(\PlayerIndex, \StratProfile)\in \UtilityIndices$}. 
Furthermore, when $\nicefrac{\delta}{\abs{\GameTuple}}\leq 0.03$, it holds that 
$\epsilon_{\mu, (\PlayerIndex, \StratProfile)} \leq
  \underbrace{
  \mathsmaller{\frac{2 \UtilityRange \ln (\mathsmaller{\frac{3 \abs{\GameTuple}}{\delta}})}{\NumberOfSamples - 1}}}_{\textsc{Scale}}
  +
  \underbrace{
  \mathsmaller{\sqrt{\frac{2 \EUtilityVariance_\PlayerIndex (\StratProfile) \ln (\vphantom{\scalebox{0.99}{\ensuremath{|}}} \smash{\frac{3 \abs{\GameTuple}}{\delta}})}{\NumberOfSamples}}}}_{\textsc{Variance}}$ for all 
  {$(\PlayerIndex, \StratProfile)\in \UtilityIndices$}, matching Bennett's inequality up to constant factors, with dependence on $\EUtilityVariance$ instead of $\UtilityVariance$.
\label{thm:eBennett}
\end{restatable}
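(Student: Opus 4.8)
The plan is to prove \Cref{thm:eBennett} in two stages, glued together by a union bound. First I would control the gap between the true variance $\UtilityVariance_{\PlayerIndex}(\StratProfile)$ and the sample variance $\EUtilityVariance_{\PlayerIndex}(\StratProfile)$ at each utility; then I would feed the resulting data-dependent upper bound on $\UtilityVariance_{\PlayerIndex}(\StratProfile)$ into the sub-gamma (Bennett-type) tail bound on the corresponding utility mean $\hat{\Utility}_{\PlayerIndex}(\StratProfile; \Samples)$. Both stated error expressions, $\epsilon_{\EUtilityVariance, (\PlayerIndex, \StratProfile)}$ and $\epsilon_{\mu, (\PlayerIndex, \StratProfile)}$, already have the characteristic sub-gamma shape --- a hyperbolic scale term plus a root-hyperbolic variance term --- so the proof reduces to invoking the two right tail bounds and tracking constants.

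\textbf{Stage one.} Fix an index $(\PlayerIndex, \StratProfile) \in \UtilityIndices$ and apply \Cref{thm:generalEBennett} --- our refinement of the empirical sub-gamma variance bound of \citet{cousins2020sharp} used by \citet{areyan2020improved} --- with failure probability $\nicefrac{\delta}{3\abs{\GameTuple}}$. Since $\Utility_{\PlayerIndex}(\StratProfile; \cdot) \in [-\nicefrac{\UtilityRange}{2}, \nicefrac{\UtilityRange}{2}]$, this gives $\UtilityVariance_{\PlayerIndex}(\StratProfile) \le \EUtilityVariance_{\PlayerIndex}(\StratProfile) + \epsilon_{\EUtilityVariance, (\PlayerIndex, \StratProfile)}$ with probability at least $1 - \nicefrac{\delta}{3\abs{\GameTuple}}$; note this is a genuinely \emph{empirical} bound, as $\epsilon_{\EUtilityVariance, (\PlayerIndex, \StratProfile)}$ is itself a deterministic function of the ($\Samples$-measurable) quantity $\EUtilityVariance_{\PlayerIndex}(\StratProfile)$. \textbf{Stage two.} Recall from Bennett's inequality (as discussed above) that a range-$\UtilityRange$, variance-$\sigma^{2}$ sample mean is $(\sigma^{2}, \UtilityRange)$-sub-gamma, so applying the upper and lower sub-gamma tail bounds to $\hat{\Utility}_{\PlayerIndex}(\StratProfile; \Samples)$, each with failure probability $\nicefrac{\delta}{3\abs{\GameTuple}}$, yields with probability at least $1 - \nicefrac{2\delta}{3\abs{\GameTuple}}$ that
\[
\bigl\lvert \Utility_{\PlayerIndex}(\StratProfile; \ConditionDistribution) - \hat{\Utility}_{\PlayerIndex}(\StratProfile; \Samples) \bigr\rvert \le \frac{\UtilityRange \ln(\nicefrac{3\abs{\GameTuple}}{\delta})}{3 \NumberOfSamples} + \sqrt{\frac{2 \UtilityVariance_{\PlayerIndex}(\StratProfile) \ln(\nicefrac{3\abs{\GameTuple}}{\delta})}{\NumberOfSamples}} \enspace .
\]
On the intersection of this event with the stage-one event the right-hand side is at most $\epsilon_{\mu, (\PlayerIndex, \StratProfile)}$, since that display is increasing in the variance and $\UtilityVariance_{\PlayerIndex}(\StratProfile) \le \EUtilityVariance_{\PlayerIndex}(\StratProfile) + \epsilon_{\EUtilityVariance, (\PlayerIndex, \StratProfile)}$; this substitution is pathwise, so no circularity arises from $\epsilon_{\mu, (\PlayerIndex, \StratProfile)}$ depending on $\EUtilityVariance_{\PlayerIndex}(\StratProfile)$. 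Hence the event $\bigl\{ \bigl\lvert \Utility_{\PlayerIndex}(\StratProfile; \ConditionDistribution) - \hat{\Utility}_{\PlayerIndex}(\StratProfile; \Samples) \bigr\rvert > \epsilon_{\mu, (\PlayerIndex, \StratProfile)} \bigr\}$ is contained in the union of one variance-failure and two mean-tail-failure events, each of mass $\nicefrac{\delta}{3\abs{\GameTuple}}$, so has probability at most $\nicefrac{\delta}{\abs{\GameTuple}}$; a final union bound over the $\abs{\GameTuple} = \abs{\UtilityIndices}$ indices gives the first claim.

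\textbf{The ``furthermore'' part} is a purely algebraic statement about $\epsilon_{\mu, (\PlayerIndex, \StratProfile)}$ under $\nicefrac{\delta}{\abs{\GameTuple}} \le 0.03$. I would use $\sqrt{a + b} \le \sqrt{a} + \sqrt{b}$ to peel a $\sqrt{2 \epsilon_{\EUtilityVariance, (\PlayerIndex, \StratProfile)} \ln(\nicefrac{3\abs{\GameTuple}}{\delta}) / \NumberOfSamples}$ summand off $\epsilon_{\mu, (\PlayerIndex, \StratProfile)}$, then bound $\epsilon_{\EUtilityVariance, (\PlayerIndex, \StratProfile)}$ itself via $\sqrt{\alpha^{2} + \beta} \le \alpha + \sqrt{\beta}$ to separate its scale part from its $\UtilityRange \sqrt{2 \EUtilityVariance_{\PlayerIndex}(\StratProfile) \ln(\nicefrac{3\abs{\GameTuple}}{\delta}) / \NumberOfSamples}$ part. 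After this rearrangement everything except the lone $\sqrt{2 \EUtilityVariance_{\PlayerIndex}(\StratProfile) \ln(\nicefrac{3\abs{\GameTuple}}{\delta}) / \NumberOfSamples}$ summand is of order $\UtilityRange \ln(\nicefrac{3\abs{\GameTuple}}{\delta}) / \NumberOfSamples$, and the hypothesis $\nicefrac{\delta}{\abs{\GameTuple}} \le 0.03$ --- equivalently $\ln(\nicefrac{3\abs{\GameTuple}}{\delta}) \ge \ln 100$ --- pins the accumulated numeric constants down enough that their sum is at most $2 \UtilityRange \ln(\nicefrac{3\abs{\GameTuple}}{\delta}) / (\NumberOfSamples - 1)$, which is the claimed Scale term, matching Bennett's inequality up to constants with $\EUtilityVariance$ in place of $\UtilityVariance$.

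The concentration invocations and the union bound are routine; the real work lives in \Cref{thm:generalEBennett} and the final collapse. Concretely, \textbf{the main obstacle} is to establish the empirical variance tail bound with constants tight enough that, once $\epsilon_{\EUtilityVariance, (\PlayerIndex, \StratProfile)}$ is threaded through a square root into $\epsilon_{\mu, (\PlayerIndex, \StratProfile)}$ and combined with the residual scale terms, the whole expression still collapses to exactly $2 \UtilityRange \ln(\nicefrac{3\abs{\GameTuple}}{\delta}) / (\NumberOfSamples - 1)$ under only the mild condition $\nicefrac{\delta}{\abs{\GameTuple}} \le 0.03$; any slack in the upstream constants would spoil this clean form.
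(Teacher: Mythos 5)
Your proposal follows the same route the paper describes for deriving this theorem: first bound the true variance by the empirical variance via the sub-gamma bound of \Cref{thm:generalEBennett} with failure probability $\nicefrac{\delta}{3\abs{\GameTuple}}$, then apply Bennett's inequality to the upper and lower tails of each utility with the same per-event budget, and close with a union bound over the three events per index and over $\UtilityIndices$, which is exactly where the $\ln(\nicefrac{3\abs{\GameTuple}}{\delta})$ factors come from. Your pathwise substitution of $\UtilityVariance_{\PlayerIndex}(\StratProfile) \le \EUtilityVariance_{\PlayerIndex}(\StratProfile) + \epsilon_{\EUtilityVariance, (\PlayerIndex, \StratProfile)}$ into the monotone tail expression, and the $\sqrt{a+b}\le\sqrt{a}+\sqrt{b}$ collapse for the ``furthermore'' clause under $\nicefrac{\delta}{\abs{\GameTuple}}\le 0.03$, are likewise the intended arguments, so the proposal is correct and essentially identical in structure.
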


\if 0
\begin{proof}
\cyrus{1.  variance bound
2.  Bennett's inequality
3.  Union bound}
\amy{i think we already have this proof in the arxiv/AAMAS paper.}
\end{proof}
\fi



\if 0
OLD: pre-Bhaskar
From the definition of $\epsilon_{\mu}$, we see that when $\EVarWimpy$ is $\approx \frac{\UtilityRange^2}{4}$ (near-maximal), the \textsc{Hoeffding} term applies, so this bound matches \Cref{thm:Hoeffding} to within constant factors (in particular, $\ln(\mathsmaller{\frac{3\abs{\GameTuple}}{\delta}})$ instead of $\ln(\mathsmaller{\frac{2\abs{\GameTuple}}{\delta}})$).
On the other hand, when $\EVarWimpy$ is small, \Cref{thm:eBennett} is much sharper than \Cref{thm:Hoeffding}.
When $\delta\leq 0.03$, a few simplifying inequalities yield
$\epsilon_{\mu} \leq
\underbrace{\mathsmaller{\frac{2\UtilityRange\ln(\mathsmaller{\frac{3\abs{\GameTuple}}{\delta}})}{\NumberOfSamples - 1}}}_{\textsc{Scale}} +
\underbrace{\mathsmaller{\sqrt{\frac{2\EVarWimpy\ln(\vphantom{\scalebox{0.99}{\ensuremath{|}}}\smash{\frac{3\abs{\GameTuple}}{\delta}})}{\NumberOfSamples}}}}_{\textsc{Variance}}$, 
which matches the standard sub-gamma Bennett's inequality up to constant factors, with dependence on $\EVarWimpy$ instead of $\VarWimpy$.
In the extreme, when $\EVarWimpy \approx 0$ (i.e., the game is \emph{near-deterministic}), 
\Cref{thm:eBennett} is an asymptotic improvement over \Cref{thm:Hoeffding} by a $\Theta \left( {\sqrt{\frac{\ln(\vphantom{\scalebox{0.99}{\ensuremath{|}}}\smash{\mathsmaller{\frac{\abs{\GameTuple}}{\delta}}})}{m}}} \right)$ factor, \emph{despite a priori unknown variance}.
\fi

When $\EVarWimpy_\PlayerIndex(\StratProfile)$ is $\approx \nicefrac{\UtilityRange^2}{4}$ (near-maximal), this bound is strictly looser than \Cref{thm:Hoeffding}, with an additional scale term and a $\ln(\mathsmaller{\frac{3\abs{\GameTuple}}{\delta}})$ constant factor instead of $\ln(\mathsmaller{\frac{2\abs{\GameTuple}}{\delta}})$.
On the other hand, when $\EVarWimpy_\PlayerIndex(\StratProfile)$ is small, \Cref{thm:eBennett} is much sharper than \Cref{thm:Hoeffding}.
In the extreme, when $\EVarWimpy_\PlayerIndex(\StratProfile)\approx 0$ (i.e., the game is \emph{near-deterministic}), \Cref{thm:eBennett} is an asymptotic improvement over \Cref{thm:Hoeffding} by a $\Theta \left( {\sqrt{\frac{\ln(\vphantom{\scalebox{0.99}{\ensuremath{|}}}\smash{\mathsmaller{\frac{\abs{\GameTuple}}{\delta}}})}{m}}} \right)$ factor, despite a priori unknown variance.
The beauty of this bound is how the sampling cost gracefully adapts to the inherent difficulty of the task at hand.

\begin{restatable}{corollary}{corEBennettComplexity}
\label{cor:eBennettComplexity}
\if 0
\sbhaskar{With probability at least $1 - \delta$, the data complexity for the bounds in 
Theorem 3.6 
is $\smash{\widehat{m}^{B}(\epsilon, \nicefrac{\delta}{\abs{\GameTuple}}; \UtilityRange, \norm{\UtilityVariance}_\infty)}\le 1 + 2\ln\frac{4\abs{\GameTuple}}{\delta}\left(\frac{5\UtilityRange}{2\epsilon} + \frac{\norm{\UtilityVariance}_\infty}{\epsilon^2}\right)$.}
\fi
With probability at least $1 - \nicefrac{\delta}{3}$, \Cref{thm:eBennett} guarantees a data complexity of at most $1 + 2\ln\frac{3\abs{\GameTuple}}{\delta}\left(\frac{5\UtilityRange}{2\epsilon} + \frac{\norm{\UtilityVariance}_\infty}{\epsilon^2}\right)$.
\end{restatable}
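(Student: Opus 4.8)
The plan is to invert the per-utility error bound of \Cref{thm:eBennett}, solving for the number of samples $\NumberOfSamples$ that forces $\epsilon_{\mu,(\PlayerIndex,\StratProfile)} \le \epsilon$ simultaneously for every $(\PlayerIndex,\StratProfile) \in \UtilityIndices$. Since $\nicefrac{\delta}{\abs{\GameTuple}} \le 0.03$, I start from the simplified bound supplied by that theorem,
\[
\epsilon_{\mu,(\PlayerIndex,\StratProfile)} \;\le\; \frac{2\UtilityRange\ln(\nicefrac{3\abs{\GameTuple}}{\delta})}{\NumberOfSamples-1} \;+\; \sqrt{\frac{2\EUtilityVariance_\PlayerIndex(\StratProfile)\ln(\nicefrac{3\abs{\GameTuple}}{\delta})}{\NumberOfSamples}},
\]
which already splits the error into a hyperbolic \textsc{Scale} term and a root-hyperbolic \textsc{Variance} term; what remains is to replace the data-dependent $\EUtilityVariance_\PlayerIndex(\StratProfile)$ by the \emph{a priori} quantity $\norm{\UtilityVariance}_\infty$, and then to solve a scalar inequality in $\NumberOfSamples$.

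For the variance replacement I invoke the variance-concentration event that already appears inside the proof of \Cref{thm:eBennett}: it is one of the three events to which the union bound there is applied, which is precisely why the failure budget here is $\nicefrac{\delta}{3}$ rather than $\delta$. Concretely, applying the sub-gamma variance tail bound (\Cref{thm:generalEBennett}, following \citet{cousins2020sharp}) so as to \emph{upper}-bound the sample variance, with probability at least $1-\nicefrac{\delta}{3}$ one has, for all $(\PlayerIndex,\StratProfile)$, $\EUtilityVariance_\PlayerIndex(\StratProfile) \le \UtilityVariance_\PlayerIndex(\StratProfile) + \bigO\!\bigl(\UtilityRange^2\ln(\nicefrac{3\abs{\GameTuple}}{\delta})/\NumberOfSamples\bigr) \le \norm{\UtilityVariance}_\infty + \bigO\!\bigl(\UtilityRange^2\ln(\nicefrac{3\abs{\GameTuple}}{\delta})/\NumberOfSamples\bigr)$ (equivalently, a Maurer--Pontil-type bound on $\sqrt{\EUtilityVariance_\PlayerIndex(\StratProfile)}$). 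Substituting this into the \textsc{Variance} term and using $\sqrt{a+b}\le\sqrt{a}+\sqrt{b}$ folds the $\EUtilityVariance$-versus-$\UtilityVariance$ slack into a second scale-order term, leaving a deterministic bound $\epsilon_{\mu,(\PlayerIndex,\StratProfile)} \le \frac{a}{\NumberOfSamples-1} + \sqrt{\frac{b}{\NumberOfSamples}}$ holding uniformly over $\UtilityIndices$, with $a = \nicefrac{5\UtilityRange}{2}\,\ln(\nicefrac{3\abs{\GameTuple}}{\delta})$ (the $2\UtilityRange$ from the original \textsc{Scale} term plus $\nicefrac{\UtilityRange}{2}$ from the folded-in correction, after a relaxation such as $\NumberOfSamples-1 \ge \nicefrac{\NumberOfSamples}{2}$) and $b = 2\norm{\UtilityVariance}_\infty\ln(\nicefrac{3\abs{\GameTuple}}{\delta})$.

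It then remains to solve $\frac{a}{\NumberOfSamples-1} + \sqrt{\frac{b}{\NumberOfSamples}} \le \epsilon$ for $\NumberOfSamples$. Writing $\NumberOfSamples' \doteq \NumberOfSamples-1$ and using $\sqrt{b/(\NumberOfSamples'+1)} \le \sqrt{b/\NumberOfSamples'}$, it suffices to solve $\frac{a}{\NumberOfSamples'} + \sqrt{\frac{b}{\NumberOfSamples'}} \le \epsilon$; viewing the left side as a quadratic in $1/\sqrt{\NumberOfSamples'}$ and rationalizing (or using the crude bound $\sqrt{b(b+4a\epsilon)} \le b + 2a\epsilon$) shows $\NumberOfSamples' \ge \frac{b}{\epsilon^2} + \frac{2a}{\epsilon}$ is enough, hence $\NumberOfSamples \ge 1 + \frac{b}{\epsilon^2} + \frac{2a}{\epsilon} = 1 + 2\ln\frac{3\abs{\GameTuple}}{\delta}\bigl(\frac{5\UtilityRange}{2\epsilon} + \frac{\norm{\UtilityVariance}_\infty}{\epsilon^2}\bigr)$, as claimed.

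The main obstacle is the constant bookkeeping in the variance-replacement step: the correction fed back into the \textsc{Variance} term itself decays like $\UtilityRange^2\ln(\cdot)/\NumberOfSamples$, so it must be absorbed into the \textsc{Scale} term without pushing its coefficient past $\nicefrac{5\UtilityRange}{2}$, which pins down exactly which $\sqrt{a+b}\le\sqrt{a}+\sqrt{b}$ splits and which relaxations ($\NumberOfSamples-1\ge\nicefrac{\NumberOfSamples}{2}$, dropping lower-order cross terms) are admissible and in what order; the quadratic-solving step is by comparison routine, and no concentration events beyond the variance one are needed, since the corollary only bounds $\epsilon_{\mu,(\PlayerIndex,\StratProfile)}$ as a function of $\NumberOfSamples$ and $\EUtilityVariance$.
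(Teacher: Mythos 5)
Your overall route is the right one, and it matches the structure the paper's surrounding results imply: treat the simplified $\epsilon_{\mu,(\PlayerIndex,\StratProfile)}$ bound as a scale term plus a root-variance term, replace $\EUtilityVariance_\PlayerIndex(\StratProfile)$ by $\norm{\UtilityVariance}_\infty$ on a single variance-concentration event charged $\nicefrac{\delta}{3}$ (this is exactly why the corollary's probability is $1-\nicefrac{\delta}{3}$ rather than $1-\delta$, and why an earlier draft with a four-way union bound carried $\ln\frac{4\abs{\GameTuple}}{\delta}$ instead of $\ln\frac{3\abs{\GameTuple}}{\delta}$), and then invert the resulting inequality in $\NumberOfSamples$. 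Your quadratic inversion is correct and tight in the right way: from $\frac{a}{\NumberOfSamples-1}+\sqrt{\nicefrac{b}{\NumberOfSamples}}\le\epsilon$ with $b=2\norm{\UtilityVariance}_\infty\ln\frac{3\abs{\GameTuple}}{\delta}$, the relaxation $\sqrt{b(b+4a\epsilon)}\le b+2a\epsilon$ gives $\NumberOfSamples\ge 1+\frac{b}{\epsilon^2}+\frac{2a}{\epsilon}$, which reproduces the stated form exactly, and the same computation underlies the paper's $m^B$ bound in \Cref{thm:Bennett} and the exact expression in \Cref{lemma:efficiency}.

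The genuine gap is the coefficient $\frac{5\UtilityRange}{2}$ in the scale term, which you assert ($2\UtilityRange$ from the simplified \textsc{Scale} term plus $\nicefrac{\UtilityRange}{2}$ from the folded-in variance correction) but do not establish. For the folded-in term to cost only $\frac{\UtilityRange}{2}\cdot\frac{\ln(\nicefrac{3\abs{\GameTuple}}{\delta})}{\NumberOfSamples-1}$, you need the one-sided bound $\EUtilityVariance_\PlayerIndex(\StratProfile)\le\UtilityVariance_\PlayerIndex(\StratProfile)+\kappa\,\frac{\UtilityRange^2\ln(\nicefrac{3\abs{\GameTuple}}{\delta})}{\NumberOfSamples}$ with $\kappa$ roughly $\nicefrac{1}{8}$ after the $\sqrt{a+b}\le\sqrt{a}+\sqrt{b}$ split; a generic Maurer--Pontil-type bound gives a correction closer to $2\UtilityRange^2\ln(\cdot)/\NumberOfSamples$, which would contribute about $2\UtilityRange\ln(\cdot)/\NumberOfSamples$ to the scale term and push the coefficient well past $\frac{5\UtilityRange}{2}$. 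Your argument therefore rests on an unstated constant in \Cref{thm:generalEBennett} that you cannot verify, and your accounting risks double-paying for variance estimation: the $2\UtilityRange$ in the simplified bound of \Cref{thm:eBennett} already absorbs the correction $\epsilon_{\EUtilityVariance,(\PlayerIndex,\StratProfile)}$ for one tail of the variance, and you then pay again for the reverse tail. The paper instead carries the exact expression through and obtains the constant as $\kappa_\delta=\frac{4}{3}+\sqrt{1+\frac{1}{2\ln(\cdot)}}\le\frac{4}{3}+\frac{7}{6}=\frac{5}{2}$ (visible in \Cref{thm:efficiency} and \Cref{lemma:efficiency}), which is where the $\frac{5\UtilityRange}{2\epsilon}$ actually comes from. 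To close the gap you would need to either state and use the precise form of the empirical-variance upper bound, or work from the unsimplified $\epsilon_{\mu,(\PlayerIndex,\StratProfile)}$ as the paper does.
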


\citet{areyan2020improved} use a similar empirical
variance bound in their Global Sampling (GS) algorithm,
\if 0
Their algorithm 
does not prune utilities when they are well-estimated, but rather bounds all utilities uniformly in terms of the largest variance over all utilities.
\fi
though theirs is sensitive only to the largest variance over all utilities.
Because the largest variance over all utilities is the data complexity bottleneck, our data complexity results likewise depend on the largest variance.
Our pruning algorithm, however, improves upon the query complexity of their GS algorithm, by pruning low-variance utilities
\if 0
As our algorithm bounds each utility individually, it improves on this prior art by pruning utilities with low variance 
\fi
before those with high variance.
Their approach requires fewer tail bounds,
which may yield a constant factor improvement in data complexity, but its query complexity is asymptotically inferior, except in pathological cases.%
\footnote{If variance is uniform, 
say $v^*$, then $\norm{\UtilityVariance}_{1,\infty}$ is $v^* \abs{\StratProfileSpace}$.
This setup yields the worst-case query complexity across all games with the same data complexity, 
as nothing can be pruned until near the final iteration, with high probability.
}

\if 0
\begin{verbatim}
Our version of this theorem vs.\ theirs:
for each util index, we estimate v, ev, emu separately.
therefore the conf interval for each util index depends on that variance.
in X, the authors estimated the largest variance in one step, which resulted in a constant factor improvement by reducing the number of requisite union bounds;
however, every conf interval width is bottlenecked by the largest variance.
\end{verbatim}

This\amy{the new!!!} strategy yields 
per-utility confidence intervals, and requires a union bound over $m$ upper tails, $m$ lower tails, and $m$ variances.
\fi

\if 0
\begin{verbatim}
oracle-complexity to encapsulate both sample + simulation complexity

sample:
1. one random draw of the game condition
2. a collection of random draws of game conditions
(sample oracle = draw from distribution)

simulation: one call to the simulator GIVEN a CONDITION, to play the game at one strategy profile, yields $n$ utilities, one per player

simulation complexity is always at least as large as sample complexity, b/c the simulator can be called many times with a single sample/condition; something like time complexity, b/c number of calls to simulator, assume each call is constant-time
(simulation oracle = black-box game)
\end{verbatim}
\fi


\section{Progressive Sampling with Pruning}
\label{sec:psp}

\amy{search for index and indices}\bhaskar{I think past the introduction, talking about pruning indices is more clear than pruning utilities. \amy{the intro doesn't talk about pruning utilities. i'm not sure anywhere does.} For the latter, we would have to say we prune utility functions maybe? A utility could be a sample utility, an expected utility, an empirical utility, a utility function. It doesn't seem clear to me that it's referring to the utility function. \amy{it's not referring to the function, and shouldn't be. the function is the game!} But an index clearly specifies the function, and pruning an index clearly prunes a utility function. \amy{a function? you've lost me. maybe i'm just too tired?}}

Having described the requisite tools, we now describe our algorithm for uniformly approximating a simulation-based game.
We call this algorithm Progressive Sampling with Pruning (\PSP{}),
as it does exactly that: it progressively samples \samy{utilities}{\sbhaskar{utility indices}{utilities}} (by querying strategy profiles), pruning those that are well estimated as soon as it determines that it can do so correctly.

\if 0
\sbhaskar{As in Theorem~\ref{thm:Bennett}, if we knew the variances $\bm{\UtilityVariance}_{\PlayerIndex}(\StratProfile)$ of the utilities $\Utility_{\PlayerIndex}(\StratProfile)$ at all indices $(\PlayerIndex, \StratProfile) \in \UtilityIndices$, 
we could derive the requisite number of queries for each strategy profile via Bennett's inequality. Our approach is variance agnostic -- we estimate the variance via empirical variance (Theorem~\ref{thm:eBennett})\sbhaskar{; still, we}{ -- but we still} aspire to do a similar amount of work as in the \sbhaskar{variance-aware}{known variance} case (Theorem~\ref{thm:Bennett}).}{}
\fi


As noted above, if we knew the maximum variance $\norm{\UtilityVariance(\StratProfile)}_\infty$ for each strategy profile $\StratProfile\in\StratProfileSpace$, we could derive the requisite number of queries, namely $\lceil m^B(\epsilon, \nicefrac{\delta}{\abs{\GameTuple}}; \UtilityRange, \norm{\UtilityVariance(\StratProfile)}_\infty)\rceil$, via Bennett's inequality (\Cref{thm:Bennett}).
By estimating the variance via empirical variance (\Cref{thm:eBennett}), we aspire to place a similar number of queries for each profile while remaining variance-agnostic.
As the empirical variance is a random variable, whether or not 
a particular number of queries is sufficient is itself random, and the probability of this event depends on the utilities
and the \emph{a priori\/} unknown tail behavior of their empirical variances.
Nevertheless, \PSP{} achieves our goal, matching both the data and query complexity of Theorem~\ref{thm:Bennett} with high probability, up to logarithmic factors.

Next, we sketch our basic algorithm.
We assume a sampling schedule that comprises successive sample sizes, after each of which we apply our empirical Bennett bound to all active (i.e., unpruned) utility indices, pruning an index once its empirical variance and sample size are sufficient to $\epsilon$-estimate it with high probability.
This process repeats until all indices have been well estimated, and are thus pruned, at which point the algorithm terminates and returns an $\epsilon$-approximation.

\cyrus{Did we want this, it's a tricky tangent:
\samy{}{\PSP{} 
continues to query a strategy profile until all its corresponding utility indices are pruned.
Pruning indices, rather than strategy profiles, can potentially reduce simulation costs when only a few players' utilities are required.
For example, in graphical games, where players' utilities only depend on the utilities of their neighbors, it may be sufficient to consider only a small subgraph of a game when querying a strategy profile with many pruned indices.}}

\amy{How many profiles are active at any point then depends on the distribution of the utility variances, because different variances lead to variations in the bounds on which \PSP{} depends.}

Since our algorithm does not assume \emph{a priori\/} knowledge of variance, it requires a 
carefully tailored (novel) schedule that repeatedly ``guesses and checks'' whether the current sample size is sufficient.
As we desire a 
competitive ratio guarantee\amy{discuss with cyrus},
we select geometrically increasing sample sizes, where each successive sample size exceeds the previous one by a constant \emph{geometric base factor} $\beta > 1$.
Here, there is an inherent trade off between the schedule length $T$ and $\beta$.
Large $T$ (small $\beta$) decreases statistical efficiency (due to a union-bound over iterations to correct for multiple comparisons), but large $\beta$ (small $T$) are also inefficient, as it may be that a sample of size $m$ was \emph{nearly\/} sufficient, in which case $\beta m$ overshoots the sufficient sample size by nearly a factor $\beta$.

Additionally, due to the structure of the empirical Bennett inequality, there exists a \emph{minimum sufficient sample size}, which we lower-bound as $\alpha$, before which no pruning is possible, as well as a \emph{maximum necessary sample size} $\omega$, after which we guarantee (via Hoeffding's inequality) that all utilities are $\epsilon$-well estimated w.h.p., so that the algorithm can terminate.
There is thus no benefit to any schedule beginning before $\alpha$ or ending after $\omega$, so our schedule need only span $\{\ceil{\alpha}, \ceil{\alpha}+1, \dots, \ceil{\omega}\}$.

In Appendix~\ref{sec:sampling_schedule}, we derive these extreme sample sizes $\alpha$ and $\omega$ for a given $\beta$ and $T$.
The number of times a sample of size $\alpha$ must expand by a factor $\beta$ to reach $\omega$ (and thus guarantee termination) is $\log_{\beta}\nicefrac{\omega}{\alpha}$, which gives the necessary schedule length.
Conveniently, since $\alpha$ and $\omega$ both depend on $T$, this dependence divides out in the log-ratio.
Thus we may solve for a geometric schedule in closed form using only basic algebra.


\begin{algorithm}
\algrenewcommand\algorithmicindent{1.0em}
\algrenewcommand\algorithmicindent{0.75em}
\begin{algorithmic}[1]
\Procedure{PSP}{$\ConditionalGame{\ConditionSpace}, \ConditionDistribution, \UtilityIndices, \UtilityRange, \delta, \epsilon, \beta$} $\to \tilde{\bm{\Utility}}$
 
\State \Input 
Conditional game $\ConditionalGame{\ConditionSpace}$,
condition distribution $\ConditionDistribution$,
index set $\UtilityIndices$,
utility range $\UtilityRange$,
failure probability $\delta \in (0, 1)$,
target error $\epsilon > 0$,
geometric ratio $\beta > 1$

\State \Output 
Empirical utilities $\tilde{\Utility}$,
for all indices $(\PlayerIndex, \StratProfile) \in \SetOfPlayers \times \StratProfileSpace$

\State $(U_{\PlayerIndex} (\StratProfile), V_{\PlayerIndex} (\StratProfile)) \gets (0, 0), \forall (\PlayerIndex, \StratProfile) \in \UtilityIndices$
\Comment{Initialize 
utility sum and sum of squares}\label{alg:psp:init-outputs}

\State $\alpha \gets \frac{2 \UtilityRange}{3 \varepsilon} \ln \frac{3 \abs{\GameTuple} \ScheduleLength}{\delta}$;
$\ScheduleLength \gets \log_{\beta}\frac{3 \UtilityRange}{4 \varepsilon}$; $m \gets 0$
\Comment{Initialize sampling schedule}

\For{$\TimeIndex \in 1, \dots, \ScheduleLength$} \Comment{Progressive sampling iterations}

    \State $\NumberOfSamples' \gets \ceil{\alpha\beta^{\TimeIndex}} - m$ \Comment{Geometric schedule sample size (Marginal)}

    \State $\NumberOfSamples \gets \ceil{\alpha\beta^{\TimeIndex}}$ \Comment{Geometric schedule sample size (Cumulative)}

    \State $\SamplePoint_{1}, \ldots, \SamplePoint_{m'} \sim \ConditionDistribution^{\NumberOfSamples'}$ \Comment{Draw $\NumberOfSamples'$ samples from condition distribution $\ConditionDistribution$}
    
    \For{$\StratProfile\in\StratProfileSpace$}
    \If{$(\PlayerIndex, \StratProfile)\in \UtilityIndices$ for some $\PlayerIndex\in\SetOfPlayers$}\Comment{If a profile $\StratProfile$ has an active index (i.e., is unpruned)}
    \State \textbf{query} $\Utility (\StratProfile; \SamplePoint_i) \textup{ for all } i \in \{1, \dots, m'\}$\Comment{Query profile $\StratProfile$ at all samples}
    \EndIf
    \EndFor
    
    \For{$(\PlayerIndex, \StratProfile) \in \UtilityIndices$} \Comment{Improve estimates at active (i.e., unpruned) indices}

    \State $\displaystyle (U_{\PlayerIndex} (\StratProfile), V_{\PlayerIndex} (\StratProfile)) \gets {(U_{\PlayerIndex} (\StratProfile), V_{\PlayerIndex} (\StratProfile)) + \left( 
    \sum_{i=1}^{\NumberOfSamples'} \Utility_{\PlayerIndex}(\StratProfile; \SamplePoint_{i}), \sum_{i=1}^{\NumberOfSamples'} \Utility^{2}_{\PlayerIndex}(\StratProfile; \SamplePoint_{i}) \right)}$
    \Comment{Update statistics}\label{alg:psp:interval_unpruned-gs}
     
    \vspace{-0.25cm}

    \State $\hat{\UtilityVariance}_{\PlayerIndex}(\StratProfile) \gets \frac{V_{\PlayerIndex} (\StratProfile) - \left( \frac{U_{\PlayerIndex} (\StratProfile)^2}{m} \right)}{m-1}$
    \Comment{Empirical variance}
\label{step:empiricalVariance}

    \State $\tilde{\UtilityVariance}_{\PlayerIndex}(\StratProfile) \gets  \mathsmaller{\hat{\UtilityVariance}_{\PlayerIndex}(\StratProfile) +\frac{2 \UtilityRange^{2} \ln \left( \frac{3 \abs{\GameTuple}\ScheduleLength}{\delta} \right)}{3 \NumberOfSamples} + \sqrt{\Bigl(\frac{1}{3} + \frac{1}{2 \ln \left( \frac{3\abs{\GameTuple}\ScheduleLength}{\delta} \right)}\Bigr) \Bigl( \frac{\UtilityRange^{2} \ln \left( \frac{3\abs{\GameTuple}\ScheduleLength}{\delta} \right)}{\NumberOfSamples - 1}\Bigr)^{\smash{2}} + \frac{2 \UtilityRange^{2} \EUtilityVariance_\PlayerIndex(\StratProfile) \ln \left( \frac{3\abs{\GameTuple}\ScheduleLength}{\delta} \right)}{\NumberOfSamples}}\!}$
    \Comment{Variance upper bound}\label{step:varianceUpperBound}
    
    \State $\bm{\epsilon}_{\PlayerIndex} (\StratProfile) \gets \min\! \left\{ c\sqrt{\frac{\ln({\frac{3\abs{\GameTuple} \ScheduleLength}{\delta}})}{2\NumberOfSamples}}, \ \frac{\UtilityRange \ln \left( \frac{3\abs{\GameTuple}\ScheduleLength}{\delta}\right)}{3\NumberOfSamples} + \sqrt{\frac{2\tilde{\UtilityVariance}_{\PlayerIndex}(\StratProfile) \ln \left({\frac{3\abs{\GameTuple}\ScheduleLength}{\delta}}\right)}{\NumberOfSamples}} \, \right\}$
    \Comment{Hoeffding + empirical Bennett}\label{step:empiricalBennett}
    
    \State $\tilde{\bm{\Utility}}_{\PlayerIndex} (\StratProfile) \gets \frac{U_\PlayerIndex (\StratProfile)}{\NumberOfSamples}$ 
    \Comment{Store empirical means}
    
    \EndFor
    
    \State $\UtilityIndices \gets \smash{\{ (\PlayerIndex, \StratProfile) \in \UtilityIndices \mid \bm{\epsilon}_{\PlayerIndex} (\StratProfile) > \epsilon \}}$
\label{alg:psp:pruning_well-estimated indices}
    \Comment{Prune well-estimated indices} 

\if 0    
    \State {\color{red}$\UtilityIndices \gets \smash{\{ (\PlayerIndex, \StratProfile) \in \UtilityIndices \mid \Regret_{\PlayerIndex}^{\downarrow} (\StratProfile) = 0 \}}$
\label{alg:psp:pruning_regretful indices}
    \Comment{Prune high-regret indices, or keep indices that could be best responses}}
\fi    
    
    \If{$\UtilityIndices = \emptyset$}
\label{alg:psp:termination}
    \Comment{Termination condition}
        
        \if 0
        \State $\tilde{\epsilon} \gets \norm{\tilde{\bm{\epsilon}}}_\infty$
        
    	\State $\Return~\smash{\left( \tilde{\bm{\Utility}}, \tilde{\epsilon} \right)}$
    	\fi
    	
    	\State $\Return~\smash{\tilde{\bm{\Utility}}}$ \Comment{Return estimated utilities\bhaskar{Shouldn't this be $\hat{\bm{\Utility}}$ everywhere}\amy{yes, i suppose so. i wonder why it isn't? there was a reason.}}\label{alg:psp:return}
    \EndIf

\EndFor

\EndProcedure
\end{algorithmic}

\caption{Progressive Sampling with Pruning}
\label{alg:psp}
\end{algorithm}


\paragraph{Correctness and Efficiency}

We use our sampling schedule to prove the correctness and efficiency of our algorithm: 
i.e., that it learns an $\epsilon$-approximation of a simulation-based game with high probability, with finite data and query complexity.

\begin{restatable}[Correctness]{theorem}{thmCorrectness}
If \textsc{PSP}$(\ConditionalGame{\ConditionSpace}, \ConditionDistribution, \UtilityRange, \delta, \epsilon, \beta)$ outputs $\tilde{\bm{\Utility}}$ 
then with probability at least $1 - \delta$, it holds that
$\norm{\Utility_{\PlayerIndex}(\StratProfile; \ConditionDistribution) - \tilde{\Utility}_{\PlayerIndex}(\StratProfile)}_{\infty} 
\leq \epsilon$.
\label{thm:correctness}
\end{restatable}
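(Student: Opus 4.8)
The plan is to exhibit a single high-probability ``good event'' $\mathcal{E}$, with $\Probp{\mathcal{E}}\ge 1-\delta$, on which every confidence interval that \PSP{} ever uses contains the true expected utility it targets, and then to observe that on $\mathcal{E}$ the algorithm terminates within its schedule and returns estimates that are $\epsilon$-accurate entrywise. Write $\ScheduleLength$ for the schedule length and $m_1 < m_2 < \dots < m_{\ScheduleLength}$ for its successive sample sizes, all fixed up front from $\epsilon,\delta,\UtilityRange,\beta,\abs{\GameTuple}$ as in Appendix~\ref{sec:sampling_schedule}. At iteration $t$, \PSP{} samples so that every still-active index $(\PlayerIndex,\StratProfile)\in\UtilityIndices$ is supported by $m_t$ samples and, for each, forms the empirical-Bennett interval of Theorem~\ref{thm:eBennett} at sample size $m_t$, instantiated with failure parameter $\nicefrac{\delta}{\ScheduleLength}$; it \emph{prunes} $(\PlayerIndex,\StratProfile)$ --- permanently setting $\tilde{\Utility}_{\PlayerIndex}(\StratProfile)\doteq\hat{\Utility}_{\PlayerIndex}(\StratProfile;\Samples)$ --- as soon as that interval's half-width $\epsilon_{\mu,(\PlayerIndex,\StratProfile)}$ is at most $\epsilon$.

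First I would bound the probability of failure. For $t\in\{1,\dots,\ScheduleLength\}$ let $A_t$ be the event that, at sample size $m_t$, some active index $(\PlayerIndex,\StratProfile)$ has $\abs{\Utility_{\PlayerIndex}(\StratProfile;\ConditionDistribution)-\hat{\Utility}_{\PlayerIndex}(\StratProfile;\Samples)} > \epsilon_{\mu,(\PlayerIndex,\StratProfile)}$. Theorem~\ref{thm:eBennett}, applied with failure probability $\nicefrac{\delta}{\ScheduleLength}$, gives $\Probp{A_t}\le\nicefrac{\delta}{\ScheduleLength}$; crucially, that theorem already ranges over all of $\UtilityIndices$ and over the variance estimate and both one-sided mean deviations (this is the $3\abs{\GameTuple}$ hidden in its logarithms), so no further correction is needed within an iteration. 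A union bound over the $\ScheduleLength$ iterations --- no independence is required, so the nesting of the samples across iterations is immaterial --- gives $\Probp{\bigcup_{t=1}^{\ScheduleLength} A_t}\le\delta$; set $\mathcal{E}\doteq\bigl(\bigcup_t A_t\bigr)^{\mathrm{c}}$, so $\Probp{\mathcal{E}}\ge 1-\delta$.

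Next I would verify the conclusion on $\mathcal{E}$. Accuracy is immediate: if $(\PlayerIndex,\StratProfile)$ is pruned at iteration $t$, then by the pruning rule the half-width there is $\le\epsilon$, and on $\mathcal{E}$ the corresponding interval contains $\Utility_{\PlayerIndex}(\StratProfile;\ConditionDistribution)$, so $\abs{\Utility_{\PlayerIndex}(\StratProfile;\ConditionDistribution)-\tilde{\Utility}_{\PlayerIndex}(\StratProfile)}\le\epsilon$. For termination --- which also shows that the hypothesis ``\PSP{} outputs $\tilde{\bm{\Utility}}$'' is always met --- one uses the construction of the maximum necessary sample size $\omega$ from Appendix~\ref{sec:sampling_schedule}: the half-width $\epsilon_{\mu,(\PlayerIndex,\StratProfile)}$ admits a \emph{deterministic}, sample-size-only upper bound obtained by substituting the worst-case empirical variance $\tfrac{m}{m-1}\cdot\tfrac{\UtilityRange^{2}}{4}$ (equivalently, the Hoeffding half-width of Theorem~\ref{thm:Hoeffding}) for the random $\EUtilityVariance_{\PlayerIndex}(\StratProfile)$; this bound decreases to $0$ in the sample size, and $m_{\ScheduleLength}\ge\omega$ is chosen precisely so that it is $\le\epsilon$ there. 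Hence at iteration $\ScheduleLength$ every remaining index is pruned, so on $\mathcal{E}$ all of $\UtilityIndices$ is pruned with an $\epsilon$-accurate estimate, yielding $\norm{\Utility_{\PlayerIndex}(\StratProfile;\ConditionDistribution)-\tilde{\Utility}_{\PlayerIndex}(\StratProfile)}_{\infty}\le\epsilon$.

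I expect the main obstacle to be the multiple-comparisons bookkeeping rather than any single inequality: since Theorem~\ref{thm:eBennett} already absorbs the per-index, per-tail, and variance-estimate corrections, the only extra price \PSP{} pays is the factor-$\ScheduleLength$ inflation for re-applying it across the schedule, and for that allocation to be legitimate $\ScheduleLength$ must be determined \emph{before} any sampling --- which is exactly why the schedule endpoints $\alpha$ (minimum sufficient) and $\omega$ (maximum necessary) are solved for in closed form, with their dependence on $\ScheduleLength$ cancelling in $\log_{\beta}(\nicefrac{\omega}{\alpha})$. A secondary subtlety worth a remark is that $\epsilon_{\mu,(\PlayerIndex,\StratProfile)}$ is \emph{not} monotone in the sample size (the empirical variance fluctuates), so termination genuinely needs the deterministic Hoeffding-type half-width at $m_{\ScheduleLength}$, not the empirical-Bennett half-width itself.
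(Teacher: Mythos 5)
Your proposal is correct and follows essentially the same route as the paper's proof: a per-iteration application of the empirical-Bennett bound (whose internal $3\abs{\GameTuple}$ accounting covers the variance estimate and both mean tails for every index), a union bound over the $\ScheduleLength$ iterations at level $\nicefrac{\delta}{\ScheduleLength}$ each, accuracy directly from the pruning criterion, and termination from the schedule's final sample size forcing a deterministic Hoeffding-type half-width below $\epsilon$. Your closing remark that the empirical-Bennett half-width is not monotone in the sample size, so termination must rest on the deterministic worst-case bound rather than the random interval itself, is a point the paper leaves implicit and is worth making explicit.
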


\if 0
\begin{verbatim}
test: does the mean fall w/in +=- epsilon of empirical mean? is the mean in the correct confidence interval?

run $T |\GameTuple|$ empirical Bennett tests: i.e., 

by union bound over $T$ steps and $|\GameTuple|$ profiles,

for each utility, upper bound on variance and two-tailed bound

w/ prob $1-\delta$, all tail bounds hold
$v_p(s) \le \tilde{v}_p(s)$

and by Bennett, $u$ is within plus/minus $\hat{u}_p(s)$

if pruned, only b/c $\epsilon$ well estimated;
i.e., as soon as something is pruned, it's $\epsilon$ well estimated

only missing part: what if everything isn't pruned?
ah hah! by construction, schedule guarantees pruning for everything not already pruned, \\
by Hoeffding, on the last iteration.
\end{verbatim}
\fi


\if 0
We measure efficiency in two different ways,
because there are two expensive resources, whose costs we seek to minimize---the cost of drawing samples from the oracle, and the cost of running the simulator (i.e., computation time).

The samples come from one black box, which returns random conditions.
The utilities come from a second black box, which depends on samples.
\fi

The next lemma bounds the number of queries placed at a strategy profile $\StratProfile\in\StratProfileSpace$.
As desired, we match the number of queries required by Bennett's inequality up to logarithmic factors.

\begin{lemma}
\label{lemma:efficiency}
With probability at least $1 - \nicefrac{\delta}{3}$, \textsc{PSP}$(\ConditionalGame{\ConditionSpace}, \ConditionDistribution, \UtilityRange, \delta, \epsilon, \beta)$ queries strategy profile 
$\StratProfile \in \StratProfileSpace$, at most $m^{\textup{PSP}} \left( \epsilon, \nicefrac{\delta}{T\abs{\GameTuple}}; \UtilityRange, \norm{\UtilityVariance (\StratProfile)}_\infty; \beta \right)$ times, defined as
\begin{align*}
&
1 + \beta\ln\frac{3T\abs{\GameTuple}}{\delta}\left(\frac{\kappa_\delta\UtilityRange}{\epsilon} + \frac{\norm{\UtilityVariance(\StratProfile)}_\infty}{\epsilon^2} + \sqrt{2\cdot\frac{\kappa_\delta\UtilityRange}{\epsilon}\cdot\frac{\norm{\UtilityVariance(\StratProfile)}_\infty}{\epsilon^2} + \left(\frac{\norm{\UtilityVariance(\StratProfile)}_\infty}{\epsilon^2}\right)^2}\right)\\
&
\leq 1 + 2\beta\ln \frac{3T\abs{\GameTuple}}{\delta}\left( \frac{5\UtilityRange}{2\varepsilon} + \frac{\norm{\UtilityVariance(\StratProfile)}_{\infty}}{\varepsilon^{2}} \right)
\\
&
\in \mathcal{O}\left(\log\left(\frac{\abs{\GameTuple}\log(\nicefrac{\UtilityRange}{\epsilon})}{\delta}\right)\left( \frac{\UtilityRange}{\epsilon} + \frac{\norm{\UtilityVariance(\StratProfile)}_{\infty}}{\epsilon^{2}} \right)\right)
\enspace .
\end{align*}
\end{lemma}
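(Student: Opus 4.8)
The plan is to follow a single query at profile $\StratProfile$ through the schedule: the number of times $\PSP{}$ queries $\StratProfile$ is the sample size of the iteration at which the last still-active index $(\PlayerIndex, \StratProfile)$, $\PlayerIndex \in \SetOfPlayers$, is pruned --- or the terminal size $\lceil \omega \rceil$ if the schedule is exhausted first. By the pruning rule, $(\PlayerIndex, \StratProfile)$ is pruned as soon as the empirical-Bennett half-width $\epsilon_{\mu, (\PlayerIndex, \StratProfile)}$ of \Cref{thm:eBennett}, recomputed from the current samples, drops to $\epsilon$ or below. So two things must be done: reduce this \emph{random} pruning sample size to a deterministic quantity, and then solve $\epsilon_{\mu, (\PlayerIndex, \StratProfile)} \le \epsilon$ for the sample size $\NumberOfSamples$. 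The result is essentially the schedule-aware, per-profile analogue of \Cref{cor:eBennettComplexity}, the factors $\beta$ and $\ScheduleLength$ being the price of the geometric schedule and the union bound over its iterations.

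\textbf{Step 1: a good event on which the empirical variances are controlled.} Let $\mathcal{E}$ be the event that, at every schedule iteration and every index $(\PlayerIndex, \StratProfile)$ it inspects, the empirical variance $\EUtilityVariance_{\PlayerIndex}(\StratProfile)$ does not exceed the true variance $\UtilityVariance_{\PlayerIndex}(\StratProfile) \le \norm{\UtilityVariance(\StratProfile)}_\infty$ by more than a correction that is lower-order relative to $\epsilon_{\mu, (\PlayerIndex, \StratProfile)}$ (absorbed below into a constant $\kappa_\delta$, using $\nicefrac{\delta}{\abs{\GameTuple}} \le 0.03$). By the sub-gamma variance tail bound underlying \Cref{thm:eBennett} (a refinement of \citet{cousins2020sharp}), taken at level $\nicefrac{\delta}{3\ScheduleLength\abs{\GameTuple}}$, together with a union bound over the $\le \ScheduleLength\abs{\GameTuple}$ such events, $\Prob(\mathcal{E}) \ge 1 - \nicefrac{\delta}{3}$ --- exactly the claimed failure probability (the remaining $\nicefrac{2\delta}{3}$ being spent on the two-sided mean tails in \Cref{thm:correctness}). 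On $\mathcal{E}$, plugging this variance bound into the definition of $\epsilon_{\mu, (\PlayerIndex, \StratProfile)}$ upper-bounds it by a \emph{deterministic, decreasing} function of $\NumberOfSamples$ of the form $\nicefrac{\kappa_\delta \UtilityRange \ln(\frac{3\ScheduleLength\abs{\GameTuple}}{\delta})}{\NumberOfSamples} + \sqrt{\nicefrac{2\norm{\UtilityVariance(\StratProfile)}_\infty \ln(\frac{3\ScheduleLength\abs{\GameTuple}}{\delta})}{\NumberOfSamples}}$.

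\textbf{Step 2: minimal sufficient sample size and geometric overshoot.} Let $\bar{\NumberOfSamples}(\StratProfile)$ be the least $\NumberOfSamples$ at which this deterministic envelope is $\le \epsilon$. Writing $a = \nicefrac{\kappa_\delta \UtilityRange}{\epsilon}$ and $b = \nicefrac{\norm{\UtilityVariance(\StratProfile)}_\infty}{\epsilon^2}$, solving ``envelope $\le \epsilon$'' --- a quadratic in $\nicefrac{1}{\sqrt{\NumberOfSamples}}$ --- and rationalizing gives $\bar{\NumberOfSamples}(\StratProfile) \le \ln(\frac{3\ScheduleLength\abs{\GameTuple}}{\delta})\bigl(a + b + \sqrt{2ab + b^2}\bigr)$, which is precisely the bracketed expression in the statement. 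On $\mathcal{E}$, each index at $\StratProfile$ is pruned at the first schedule iteration whose size is $\ge \bar{\NumberOfSamples}(\StratProfile)$; such an iteration exists because the schedule's terminal size $\omega$ --- certified by Hoeffding's inequality (\Cref{thm:Hoeffding}) --- dominates the worst case of $\bar{\NumberOfSamples}(\StratProfile)$, attained when $\norm{\UtilityVariance(\StratProfile)}_\infty$ is maximal (see Appendix~\ref{sec:sampling_schedule}). Since consecutive schedule sizes grow by a factor $\beta$, that iteration's size is at most $\beta\,\bar{\NumberOfSamples}(\StratProfile)$, and a $\lceil\cdot\rceil$ rounding contributes the additive $1$: this is the first displayed bound. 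The second follows from $\sqrt{2ab+b^2} \le a+b$ together with a bound on $\kappa_\delta$, and the $\mathcal{O}(\cdot)$ bound follows because $\ScheduleLength$ is only polylogarithmic in $\nicefrac{\UtilityRange}{\epsilon}$ (Appendix~\ref{sec:sampling_schedule}), so $\ln(\frac{3\ScheduleLength\abs{\GameTuple}}{\delta}) = \mathcal{O}(\log(\nicefrac{\abs{\GameTuple}\log(\nicefrac{\UtilityRange}{\epsilon})}{\delta}))$.

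\textbf{Main obstacle.} The delicate point is Step 1's deterministic reduction. The half-width $\epsilon_{\mu, (\PlayerIndex, \StratProfile)}$ is random and \emph{not} monotone in $\NumberOfSamples$, because the empirical variance fluctuates, so ``the first sample size at which pruning occurs'' is not a priori well behaved and the geometric-overshoot argument cannot be applied to it directly. The resolution is to restrict attention to the event $\mathcal{E}$, on which the empirical variances are \emph{uniformly} (over all $\le \ScheduleLength$ iterations, via the union bound) upper bounded; this collapses $\epsilon_{\mu, (\PlayerIndex, \StratProfile)}$ to the monotone deterministic envelope, and the overshoot argument then runs against the envelope rather than the random width. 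The remaining work is bookkeeping: charging exactly $\nicefrac{\delta}{3}$ to this step (so that it is disjoint from the budget used in the correctness proof), confirming that the corrections from estimating the variance are genuinely absorbed into $\kappa_\delta$ when $\nicefrac{\delta}{\abs{\GameTuple}} \le 0.03$, and checking via the appendix's schedule construction that the schedule is long enough to realize $\bar{\NumberOfSamples}(\StratProfile)$ for every profile, including the maximal-variance worst case.
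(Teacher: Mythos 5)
Your proposal is correct and follows essentially the same route as the paper's proof: a high-probability event (at total budget $\nicefrac{\delta}{3}$, via a union bound over the $T\abs{\GameTuple}$ variance tails) under which the empirical-Bennett width is dominated by a deterministic sufficient sample size $\bm{\NumberOfSamples}^{*}_{\PlayerIndex}(\StratProfile)$, followed by the geometric-overshoot argument that the first schedule point at or above it is at most $\beta$ times larger. Your explicit treatment of the non-monotonicity of $\epsilon_{\mu,(\PlayerIndex,\StratProfile)}$ in $\NumberOfSamples$ is a point the paper glosses over by delegating to its complexity corollary, but it does not change the argument.
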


Next, we use \Cref{lemma:efficiency} to bound the data and query complexities of our algorithm,
interpolating between the best and worst case.
The best case is deterministic (i.e., $\norm{\UtilityVariance}_{\infty} = 0$); as the variance terms are zero, the bounds are asymptotically proportional to $\nicefrac{1}{\epsilon}$ instead of $\nicefrac{1}{\epsilon^2}$.
In the worst case, when $\norm{\UtilityVariance}_{\infty} = \nicefrac{\UtilityRange^2}{4}$ and $\norm{\UtilityVariance}_{1, \infty} = \abs{\StratProfileSpace}\norm{\UtilityVariance}_{\infty}$, the schedule is exhausted, 
and we recover Theorem~\ref{thm:Hoeffding}, but we pay a factor $\log \log \left( \nicefrac{c}{\epsilon} \right)$ 
for \samy{having}{sampling progressively to have} potentially stopped early.
In sum, even without assuming \emph{a priori\/} knowledge of the variances of the utilities, we 
match the efficiency bounds of Theorem~\ref{thm:Bennett} up to logarithmic factors.


\begin{restatable}[Efficiency]{theorem}{thmEfficiency}
\label{thm:efficiency}
Let $T = \ceil{ \log_{\beta} \frac{3 \UtilityRange}{4 \varepsilon} }$ be the schedule length and let $\kappa_\delta\doteq \frac{4}{3} + \sqrt{1 + \frac{1}{2\ln\left(\frac{3T\abs{\GameTuple}}{\delta}\right)}}$.
With probability at least $1 - \nicefrac{\delta}{3}$, the data complexity is at most
\begin{align*}
m^{\textup{PSP}}\left(\epsilon, \nicefrac{\delta}{T\abs{\GameTuple}}; \UtilityRange, \norm{\UtilityVariance}_\infty; \beta\right)
&
\leq 1 + 2\beta\ln \frac{3T\abs{\GameTuple}}{\delta}\left( \frac{5\UtilityRange}{2\varepsilon} + \frac{\norm{\UtilityVariance}_{\infty}}{\varepsilon^{2}} \right)
\\
&
\in \mathcal{O}\left(\log\left(\frac{\abs{\GameTuple}\log(\nicefrac{\UtilityRange}{\epsilon})}{\delta}\right)\left( \frac{\UtilityRange}{\epsilon} + \frac{\norm{\UtilityVariance}_{\infty}}{\epsilon^{2}} \right)\right)
\enspace ,
\end{align*}
and the query complexity is at most
\begin{align*}
2\abs{\StratProfileSpace} + m^{\textup{PSP}}\left(\epsilon, \nicefrac{\delta}{T\abs{\GameTuple}}; \UtilityRange\abs{\StratProfileSpace}, \norm{\UtilityVariance}_{1,\infty}; \beta\right)
&
\leq 1 + 2\abs{\StratProfileSpace} + 2\beta\ln \frac{3T\abs{\GameTuple}}{\delta}\left( \frac{5\UtilityRange \abs{\StratProfileSpace}}{2\varepsilon} + \frac{\norm{\UtilityVariance}_{1,\infty}}{\varepsilon^{2}} \right) 
\\
&
\in \mathcal{O}\!\left(\!\log\!\left(\frac{\abs{\GameTuple}\log(\nicefrac{\UtilityRange}{\epsilon})}{\delta}\!\right)\!\left(\!\frac{\UtilityRange\abs{\StratProfileSpace}}{\epsilon} + \frac{\norm{\UtilityVariance}_{1,\infty}}{\epsilon^{2}}\!\right)\!\right)
\enspace .
\end{align*}
\if 0
\cyrus{Updated to:}
\begin{align}
m &\leq 1 + \beta\ln \frac{3T\abs{\GameTuple}}{\delta}\left( \frac{7\UtilityRange}{3\varepsilon} + \frac{2\norm{\UtilityVariance}_{\infty}}{\varepsilon^{2}} \right) \\
  &\in \mathcal{O}\left(\log\left(\frac{\abs{\GameTuple}\log(\nicefrac{\UtilityRange}{\epsilon})}{\delta}\right)\left( \frac{\UtilityRange}{\epsilon} + \frac{\norm{\UtilityVariance}_{\infty}}{\epsilon^{2}} \right)\right) \enspace ,
\end{align}
and query complexity
\begin{align}
m &\leq 1 + \beta\ln \frac{3T\abs{\GameTuple}}{\delta}\left( \frac{7\UtilityRange \abs{\GameTuple}}{3\varepsilon\NumberOfPlayers} + \frac{2\norm{\UtilityVariance}_{1,\infty}}{\varepsilon^{2}} \right) \\
  &\in \mathcal{O}\left(\log\left(\frac{\abs{\GameTuple}\log(\nicefrac{\UtilityRange}{\epsilon})}{\delta}\right)\left(\frac{\UtilityRange\abs{\GameTuple}}{\epsilon\NumberOfPlayers} + \frac{\norm{\UtilityVariance}_{1,\infty}}{\epsilon^{2}}\right)\right) \enspace .
\end{align}
\cyrus{Need to check, but 7/3 should be 4 here?}
\fi
\end{restatable}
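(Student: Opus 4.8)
The plan is to derive both bounds by aggregating the per-profile query bound of \Cref{lemma:efficiency} over strategy profiles, once we observe how data and query complexity are built from the per-profile counts. Throughout, condition on the single high-probability event of \Cref{lemma:efficiency} (probability at least $1-\nicefrac{\delta}{3}$, obtained by a union bound over the $T$ schedule rounds, with the per-round union over the $\abs{\GameTuple}$ utilities already folded into the $\ln\frac{3T\abs{\GameTuple}}{\delta}$ term), on which, simultaneously for every $\StratProfile\in\StratProfileSpace$, the number of queries $q(\StratProfile)$ placed at $\StratProfile$ satisfies $q(\StratProfile)\le m^{\textup{PSP}}\!\left(\epsilon,\nicefrac{\delta}{T\abs{\GameTuple}};\UtilityRange,\norm{\UtilityVariance(\StratProfile)}_\infty;\beta\right)$. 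Because \PSP{} draws from a single shared sample pool and queries each profile against a prefix of it, the data complexity equals $\max_{\StratProfile}q(\StratProfile)$, whereas the simulator is called $q(\StratProfile)$ times per profile, so the query complexity equals $\sum_{\StratProfile}q(\StratProfile)$.

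For the data complexity, note that $v\mapsto m^{\textup{PSP}}(\epsilon,\nicefrac{\delta}{T\abs{\GameTuple}};\UtilityRange,v;\beta)$ is increasing in $v$, so $\max_{\StratProfile}q(\StratProfile)\le m^{\textup{PSP}}(\epsilon,\nicefrac{\delta}{T\abs{\GameTuple}};\UtilityRange,\norm{\UtilityVariance}_\infty;\beta)$ using $\norm{\UtilityVariance}_\infty=\max_{\StratProfile}\norm{\UtilityVariance(\StratProfile)}_\infty$. To reach the explicit form, write the parenthesised factor of $m^{\textup{PSP}}$ as $a+b+\sqrt{2ab+b^2}$ with $a=\nicefrac{\kappa_\delta\UtilityRange}{\epsilon}$, $b=\nicefrac{\norm{\UtilityVariance}_\infty}{\epsilon^2}$; since $2ab+b^2\le(a+b)^2$ we get $\sqrt{2ab+b^2}\le a+b$, hence the factor is at most $2(a+b)$, and $\kappa_\delta\le\nicefrac{5}{2}$ (which holds whenever $3T\abs{\GameTuple}/\delta\ge 4$, e.g.\ under the standing assumption $\nicefrac{\delta}{\abs{\GameTuple}}\le 0.03$, since then $3T\abs{\GameTuple}/\delta\ge 100$), giving $2a\le\nicefrac{5\UtilityRange}{\epsilon}$ and so the bound $1+2\beta\ln\frac{3T\abs{\GameTuple}}{\delta}\bigl(\tfrac{5\UtilityRange}{2\epsilon}+\tfrac{\norm{\UtilityVariance}_\infty}{\epsilon^2}\bigr)$. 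Substituting $T=\ceil{\log_\beta\frac{3\UtilityRange}{4\epsilon}}=\Theta(\log(\nicefrac{\UtilityRange}{\epsilon}))$ and treating $\beta$ as constant collapses this to the stated $\mathcal{O}(\cdot)$ expression.

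For the query complexity, start from $\sum_{\StratProfile}q(\StratProfile)\le\sum_{\StratProfile}m^{\textup{PSP}}(\epsilon,\nicefrac{\delta}{T\abs{\GameTuple}};\UtilityRange,\norm{\UtilityVariance(\StratProfile)}_\infty;\beta)$ and decompose $m^{\textup{PSP}}$ into the leading $1$, the ``range'' term $\beta\ln(\cdot)\tfrac{\kappa_\delta\UtilityRange}{\epsilon}$, the ``variance'' term $\beta\ln(\cdot)\tfrac{\norm{\UtilityVariance(\StratProfile)}_\infty}{\epsilon^2}$, and the cross term $\beta\ln(\cdot)\sqrt{2\tfrac{\kappa_\delta\UtilityRange}{\epsilon}\tfrac{\norm{\UtilityVariance(\StratProfile)}_\infty}{\epsilon^2}+(\tfrac{\norm{\UtilityVariance(\StratProfile)}_\infty}{\epsilon^2})^2}$. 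The range and variance terms sum exactly, contributing $\tfrac{\kappa_\delta\UtilityRange\abs{\StratProfileSpace}}{\epsilon}$ (one per profile) and $\tfrac1{\epsilon^2}\sum_{\StratProfile}\norm{\UtilityVariance(\StratProfile)}_\infty=\tfrac{\norm{\UtilityVariance}_{1,\infty}}{\epsilon^2}$ by definition of $\norm{\cdot}_{1,\infty}$. The crucial step is the cross term: the map $v\mapsto\sqrt{2\tfrac{\kappa_\delta\UtilityRange}{\epsilon}\tfrac{v}{\epsilon^2}+(\tfrac{v}{\epsilon^2})^2}$ is concave in $v$ (its second derivative is $-q_1^2/(\cdot)^{3/2}<0$ for the form $\sqrt{q_1 v+q_2 v^2}$) with value $0$ at $v=0$, so Jensen's inequality gives $\sum_{\StratProfile}\sqrt{2\tfrac{\kappa_\delta\UtilityRange}{\epsilon}\tfrac{\norm{\UtilityVariance(\StratProfile)}_\infty}{\epsilon^2}+(\tfrac{\norm{\UtilityVariance(\StratProfile)}_\infty}{\epsilon^2})^2}\le\abs{\StratProfileSpace}\sqrt{2\tfrac{\kappa_\delta\UtilityRange}{\epsilon}\tfrac{\norm{\UtilityVariance}_{1,\infty}}{\abs{\StratProfileSpace}\epsilon^2}+\tfrac{\norm{\UtilityVariance}_{1,\infty}^2}{\abs{\StratProfileSpace}^2\epsilon^4}}=\sqrt{2\tfrac{\kappa_\delta\UtilityRange\abs{\StratProfileSpace}}{\epsilon}\tfrac{\norm{\UtilityVariance}_{1,\infty}}{\epsilon^2}+(\tfrac{\norm{\UtilityVariance}_{1,\infty}}{\epsilon^2})^2}$, which is exactly the cross term of $m^{\textup{PSP}}$ evaluated at range $\UtilityRange\abs{\StratProfileSpace}$ and variance $\norm{\UtilityVariance}_{1,\infty}$. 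Recombining the three pieces, and absorbing the $\abs{\StratProfileSpace}$ leading $1$'s plus a further $\abs{\StratProfileSpace}+1$ of slack into $2\abs{\StratProfileSpace}$, yields $\sum_{\StratProfile}q(\StratProfile)\le 2\abs{\StratProfileSpace}+m^{\textup{PSP}}(\epsilon,\nicefrac{\delta}{T\abs{\GameTuple}};\UtilityRange\abs{\StratProfileSpace},\norm{\UtilityVariance}_{1,\infty};\beta)$; the explicit and $\mathcal{O}(\cdot)$ forms then follow exactly as in the data-complexity case, with $\UtilityRange$ replaced by $\UtilityRange\abs{\StratProfileSpace}$ and $\norm{\UtilityVariance}_\infty$ by $\norm{\UtilityVariance}_{1,\infty}$. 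I expect the concavity fact for the cross term to be the one non-routine step: it is precisely what lets the per-profile root-hyperbolic ``variance'' terms aggregate into a single root-hyperbolic term (so that we recover \Cref{thm:Bennett} up to logarithmic factors rather than losing a $\sqrt{\abs{\StratProfileSpace}}$ factor), while everything else reduces to \Cref{lemma:efficiency}, elementary algebra, and careful accounting of additive constants.
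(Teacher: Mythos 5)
Your proposal is correct and follows essentially the same route as the paper: both arguments reduce to the per-profile bound of \Cref{lemma:efficiency} (itself obtained by rounding the sufficient Bennett sample size up to the next schedule point, costing a factor $\beta$, with union bounds over players, profiles, and schedule rounds folded into $\ln\frac{3T\abs{\GameTuple}}{\delta}$), then take a maximum over profiles for data complexity and a sum for query complexity. The only difference is that the paper defers the aggregation of the per-profile cross terms to the proof of \Cref{thm:Bennett}, whereas you supply it explicitly via the concavity/Jensen argument for $v\mapsto\sqrt{q_1 v+q_2 v^2}$ --- which is the right calculation and is exactly what prevents the loss of a $\sqrt{\abs{\StratProfileSpace}}$ factor in the variance term.
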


Note that from standard lower bounds for mean estimation \citep{devroye2016sub}, the data complexity of \emph{any\/} estimator must be such that
$m \geq \frac{\norm{\UtilityVariance}_{\infty}}{\varepsilon^{2}}
  \ln \frac{\abs{\GameTuple}}{\delta}$;
similarly, query complexity must be such that
$m \geq  \frac{\norm{\UtilityVariance}_{1, \infty}}{\varepsilon^{2}} \ln \frac{\abs{\GameTuple}}{\delta}$.
Such lower-bounds hold \emph{even when variance is known}.
Although mean-estimation lower bounds apply, the union bound is tight for appropriately negatively-dependent random variables, so there exist conditional game structures for which the above complexities are necessary.
%
%
Other than constant factor terms, log log terms relating to schedule length, and fast decaying $\frac{\UtilityRange}{\varepsilon}$ terms, these lower bounds match our results.
It is thus straightforward to divide our sample complexity bounds by these lower bounds to obtain a competitive ratio over any other learning algorithm for simulation-based games that produces a uniform approximation.

\amy{Before deploying our algorithm in practice, it would be worthwhile to optimize $\beta$ for the game of interest.}
\cyrus{Need to minimize $\beta\ln \frac{3T\abs{\GameTuple}}{\delta}$}


\section{Experiments with Algorithms}
\label{sec:expts_algos}

In this section, we explore
the behavior of \PSP{} on a suite of games generated using GAMUT~\cite{nudelman2004run}, a state-of-the-art game generator.
We enhance these games with various noise structures, so that they suitably mimic simulation-based games.
Our findings can be summarized as follows:
\begin{itemize}

\item We show that the number of queries that can be saved by pruning is highly sensitive to the utility variance distribution.

\item We demonstrate that in practice, \PSP{} consistently outperforms theoretical upper bounds, and achieves significantly lower query complexities than \GS.

\end{itemize}

\subsection{Experimental Setup}
\label{subsec:experimentalSetup}

We focus on congestion and zero-sum games in these experiments, enhancing these games with noise structures so that they suitably mimic simulation-based games.


\paragraph{Finite Congestion Games.} 
Congestion games are a class of games known to exhibit pure strategy Nash equilibria~\citep{rosenthal1973class}.
A tuple $\smash{\CongestionGame = (\SetOfPlayers, \SetOfFacilities, \{ \FacilityCostFunction_\FacilityIndex \mid \FacilityIndex \in \SetOfFacilities \})}$ is a \mydef{congestion game}~\citep{christodoulou2005price}, where $\smash{\SetOfPlayers = \{1, \ldots, \NumberOfPlayers\}}$ is a set of agents and $\smash{\SetOfFacilities = \{1, \ldots, \NumberOfFacilities\}}$ is a set of facilities.
A strategy $\smash{\Strategy_\PlayerIndex} \in 2^\SetOfFacilities\setminus\emptyset$ for player $\PlayerIndex$ is a non-empty set of facilities, and $\smash{\FacilityCostFunction_\FacilityIndex}$ is a (universal: i.e., non-agent specific) cost function associated with facility $\FacilityIndex$. 

In a finite congestion game, among all their available strategies, each agent prefers one that minimizes their cost.
Given strategy profile $\StratProfile \in \StratProfileSpace$, agent $\PlayerIndex$'s cost is defined as $\smash{\PlayerCost_\PlayerIndex(\StratProfile) = \sum_{\FacilityIndex\in\Strategy_\PlayerIndex} \FacilityCostFunction_\FacilityIndex(n_\FacilityIndex(\StratProfile))}$, where $n_\FacilityIndex(\StratProfile)$ is the number of agents who select facility $\FacilityIndex$ in $\StratProfile$.
We denote GAMUT's random distribution over congestion games for fixed $\NumberOfPlayers$ and $\NumberOfFacilities$, with utilities bounded in the interval $(-\nicefrac{u_0}{2}, \nicefrac{u_0}{2})$, by $\smash{\RandomCongestionGame(\NumberOfPlayers, \NumberOfFacilities, u_0)}$.


\paragraph{Random Zero Sum.} 
A zero-sum game has $2$ agents, each with $k \in \mathbb{Z}_+$ pure strategies, with the agent's utilities defined as the negation of one another's: i.e., $\Utility_2 (\StratProfile) = -\Utility_1 (\StratProfile)$, for all $\StratProfile \in \StratProfileSpace$.
When generating such games at random, we assume a uniform distribution over utilities, with the first agent's utilities drawn i.i.d.\ as $\smash{\Utility_1 (\StratProfile) \sim U(\nicefrac{-u_0}{2}, \nicefrac{u_0}{2})}$, for some $u_0 \in \mathbb{R}$.
We denote GAMUT's distribution over Random Zero Sum games generated in this way by $\RandomZeroSum(k, u_0)$.

\paragraph{Noise distributions.}
Since the only distributional quantities our bounds depend on are scale 
and variance,
it suffices to study games with additive noise and non-uniform variance.
We use i.i.d.\ additive noise variables $\mathcal{N}_{\PlayerIndex, \StratProfile}$ for each index $(\PlayerIndex, \StratProfile)\in \SetOfPlayers\times \StratProfileSpace$, each with range $[-\nicefrac{d}{2}, \nicefrac{d}{2}]$, 
expected value 0, and fixed variance $\UtilityVariance_{\textup{max}}$.
We then control the variance at each individual index by scaling the corresponding noise variable.
In particular, utilities generated by the simulator have the form $\Utility_\PlayerIndex(\StratProfile; y)\sim \Utility_\PlayerIndex(\StratProfile) + \gamma_{\PlayerIndex, \StratProfile}\mathcal{N}_{\PlayerIndex, \StratProfile}$, where each scale $\gamma_{\PlayerIndex, \StratProfile}$ is sampled from a Beta($\alpha$, $\beta$) distribution over $[0, 1]$, for some $\alpha, \beta > 0$ upon generation of the game.
Using a Beta distribution,
we can model many different variance structures.
In our experiments, the noise variable 
is always a scaled and shifted Bernoulli random variable, generating either $-\nicefrac{d}{2}$ or $\nicefrac{d}{2}$ with equal probability.
It then has variance $\UtilityVariance_\textup{max} = \nicefrac{d^2}{4}$, enabling finer control over the variances of individual utilities.

\subsection{Effectiveness of \PSP{} w.r.t. Utility Variance}

\begin{figure}[!t]
    \centering
    \includegraphics[width=\columnwidth]
    {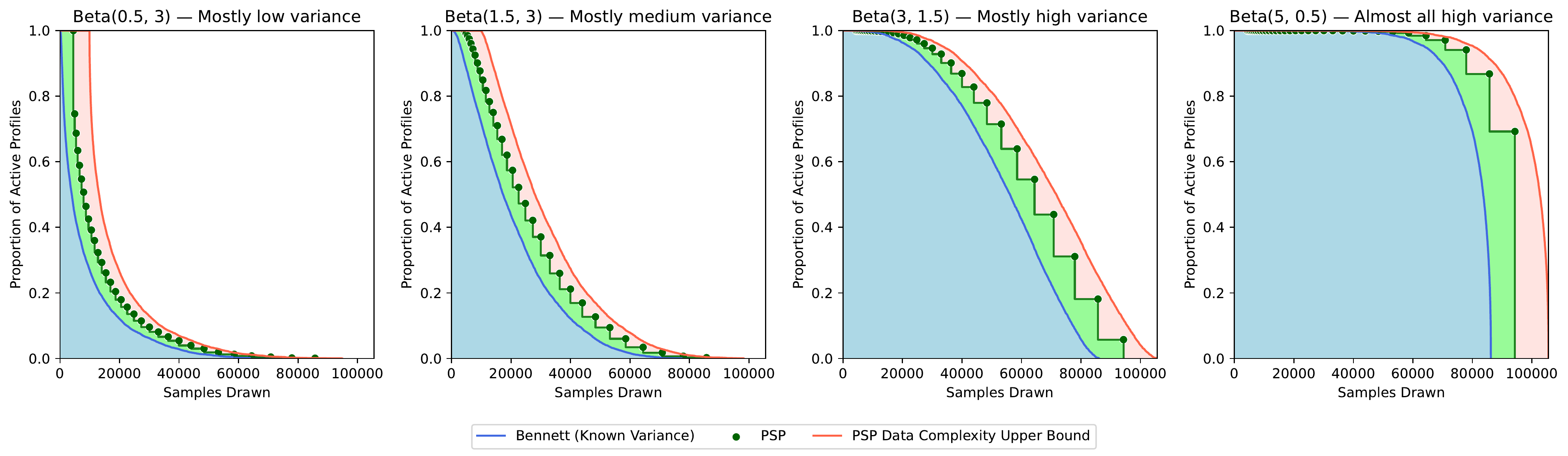}
    \caption{Proportion of pure strategy profiles that are active over the course of the \PSP{} algorithm for four games with different distributions of individual utility variances.\cyrus{Tiny figure text}}
    \label{fig:proportion_pruned}
\end{figure}


\PSP{}
queries strategy profiles until they all attain a target error guarantee.
Along the way, empirical estimates of variance are used to prune profiles roughly
in order of increasing maximum variance $\norm{\UtilityVariance(\StratProfile)}_\infty$.
How many profiles are active at any point thus depends on the distribution of the utility variances, because different variances lead to variations in the bounds on which \PSP{} depends.

This next set of experiments is designed to highlight how the amount of work done (i.e., the number of queries placed) by \PSP{} varies with the distribution of utility variances.
As expected, we find that the less (resp.\ more) the utility variances are concentrated near the maximum variance, the smaller (resp.\ larger) the number of queries \PSP{} requires to reach its target error guarantee.

In these experiments, we sample four games from $\RandomZeroSum(80, 2)$, where for each game, we use scaled and shifted Bernoulli additive noise variables with $d=20$ and $\gamma_{\PlayerIndex, \StratProfile} \sim \textup{Beta}(\alpha, \beta)$ for a certain $\alpha, \beta > 0$.
We then run \PSP{} on each game with utility range $c\doteq 22$, failure probability $\delta \doteq 0.05$, error threshold $\epsilon \doteq 0.2$, and geometric ratio $\beta\doteq 1.1$.
For each iteration of the algorithm, we plot the proportion of strategy profiles that were active during that iteration,
versus the number of samples that were drawn thus far.
\amy{i'm missing how/why these points are ``drawn''? they are the product of mkt research. is that research then input to the simulator.}
A point $(m, p)$ can thus be read as, ``during this run of \PSP{}, a proportion $p$ of strategy profiles were active, and hence queried, \emph{at each of the first $m$ drawn samples}.''
Since \PSP{} only prunes profiles at the end of each iteration, we connect these points with a decreasing step function.

\amy{please proofread}
We also report lower and upper (PAC-style) bounds on this $\PSP$ step function.
The number of samples for which a strategy profile $\StratProfile$ is active is lower bounded by $\smash{m^{B}(\epsilon, \frac{\delta}{\abs{\GameTuple}}; \UtilityRange, \norm{\UtilityVariance(\StratProfile)}_\infty)}$  (\Cref{thm:Bennett}) and upper bounded (with high probability) by $\smash{m^{\textup{PSP}}(\epsilon, \frac{\delta}{T\abs{\GameTuple}}; \UtilityRange, \norm{\UtilityVariance(\StratProfile)}_\infty; \beta)}$  (\Cref{lemma:efficiency}), respectively.
Since both of these these bounds are strictly increasing in $\norm{\UtilityVariance(\StratProfile)}_\infty$, they also bound the number of samples for which all strategy profiles whose maximum variance is greater than or equal to $\norm{\UtilityVariance(\StratProfile)}_\infty$ are active.
We report the proportion of active profiles corresponding to these lower and upper bounds, for all strategy profiles $\StratProfile\in\StratProfileSpace$.

\Cref{fig:proportion_pruned} depicts clear differences in how quickly strategy profiles are pruned, depending on $\alpha$ and $\beta$, which dictate how utility variances are distributed.
Since for each sample drawn, \PSP{} queries every active strategy profile, the total area under the \PSP{} curve is directly proportional to the total number of queries that were required to reach the target error guarantee.
By similar reasoning, the area under the lower and upper bound curves, respectively, are also proportional to lower and (high probability) upper bounds on this total number of queries.
We can therefore infer that in games where the utility variances are concentrated at a much smaller value than $\UtilityVariance_{\textup{max}}$ (modeled by $\textup{Beta}(0.5, 3)$), query complexity is dramatically reduced through pruning.
On the other hand, in games where the utility variances are concentrated closer to $\UtilityVariance_{\textup{max}}$ (modeled by $\textup{Beta}(5, 0.5)$), pruning has a much smaller effect on query complexity.

\subsection{Performance of \PSP{} vs. \GS{}}

Next, we compare the performance of the \PSP{} and \GS{} algorithms.
Such a comparison is a difficult due to the differing use cases for the two algorithms.
We run \PSP{} when we have a target error $\epsilon$ in mind, and the algorithm optimizes the query complexity to reach that target.
We run \GS{} when we are given a data set consisting of $m$ samples together with responses to the queries corresponding to each sample, and the algorithm outputs the tightest possible guarantee on the confidence radius.
Since \PSP{} has been our focus in this paperProportion, in our plots, we use the horizontal axis for the independent variable $\epsilon$, and the vertical axis for the resulting query and data complexities.
When plotting results from the \GS{} algorithm, however, we plot as if the vertical axis were the independent variable and the horizontal axis, dependent.
It is important that our results are interpreted this way, as \GS{} is not able to provide any strong \emph{a priori\/} guarantees regarding the query complexity required to produce a certain $\epsilon$ guarantee without access to the true maximum utility variance of a game.

\amy{wait! for Enrique results got much much better when we scaled up. what about for us?} \bhaskar{I think we generally get similar results. The comparisons stay the same. I can check again when I have time to be sure.}
\amy{well, he was using regret pruning, not well-estimated, so you should try to replicate that result for the AAAI paper.}

We experiment with congestion games sampled from $\RandomCongestionGame(3, 3, 2)$,%
\footnote{We use a relatively small game (three players, 343 strategies each) to show our \PSP{} algorithm is competitive even in games with few profiles. We see similar results for larger games.}
and Bernoulli additive noise variables with $d=20$ in two settings: a Single Game and Many Games.
In the Single Game setting, we sample one congestion game and draw each $\gamma_{\PlayerIndex, \StratProfile} \sim \textup{Beta}(1.5, 3)$.
We then run $\GS$ using uniform empirical Bennett bounds 10 times at a variety of sample sizes $m$, and $\PSP$ 10 times for a variety of target error $\epsilon$. 
Using $\textup{Beta}(1.5, 3)$ models a plausible empirical game where most utilities have moderate variance, but some outlier utilities have high variance.
In the Many Games setting, we instead sample 9 congestion games and draw each $\gamma_{\PlayerIndex, \StratProfile} \sim \textup{Beta}(1.5 + i, 3 + j)$, using each $(i, j) \in \{-0.1, 0, 0.1\}^2$ in one game.
For each of these 9 games, we then run $\GS$ once for a variety of sample sizes $m$, and $\PSP$ once for a variety of target errors $\epsilon$.

On each graph in \Cref{fig:eps_vs_complexity}, we plot the following:
\begin{itemize}
    \item GS-H: \GS{} using Hoeffding Bounds (\Cref{thm:Hoeffding}).
    
    \item GS-B: \GS{} using uniform Bennett bounds (\Cref{thm:Bennett}) and known maximum variance.
    
    \item GS-EB: \GS{} using uniform empirical Bennett bounds, 
    with dependence on $\norm{\EUtilityVariance}_\infty$ rather than individual empirical variances~\cite{areyan2020improved}.
    
    \item GS-EB Upper Bound: PAC-style upper bound on the empirical Bennett data complexity (\Cref{thm:eBennett}).
    
    \item PSP: \PSP{} using Hoeffding and Empirical Bennett bounds (\Cref{alg:psp}).
    
    \item PSP Upper Bound: PAC-style upper bound on the query complexity of \PSP{} (\Cref{thm:efficiency}).
\end{itemize}

\noindent
For the algorithms that compute an empirical Bennett bound (i.e., GS-EB and PSP), we plot each run of the algorithm in the single game case, and a min-max interval in the many games case.
    
\begin{figure}[!t]
    \centering
    \includegraphics[width=\columnwidth]
    {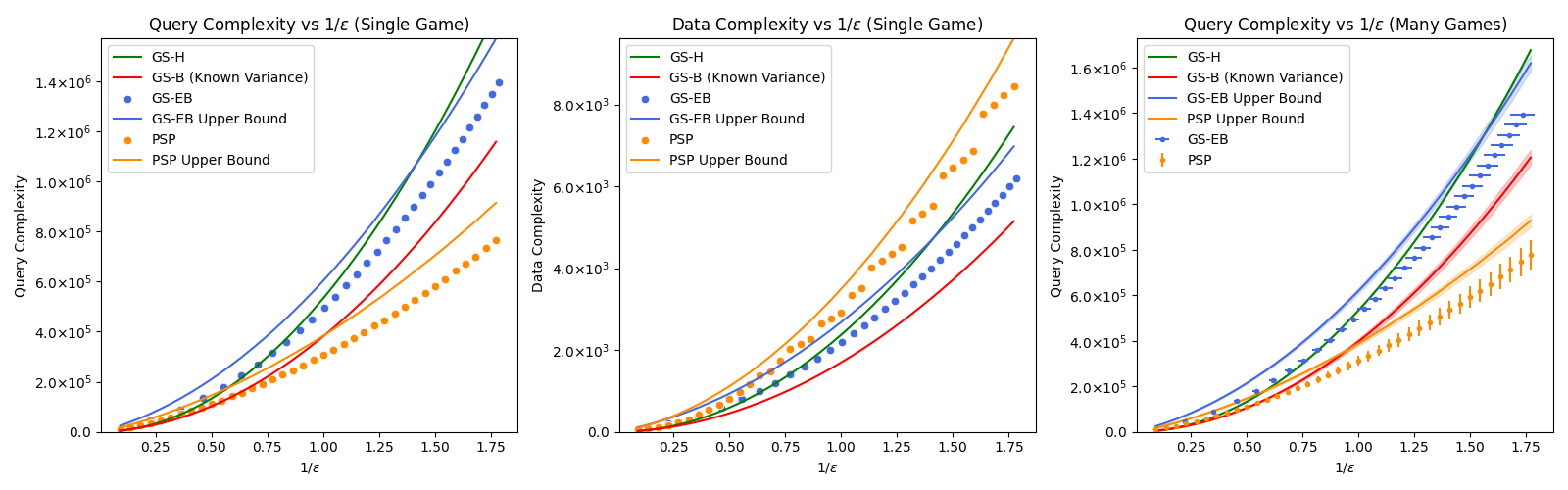}
    \caption{Performance comparisons of \GS{} using Hoeffding and Uniform Empirical Bennett vs.{} \PSP{}, along with their corresponding bounds, for various sample sizes $m$ and target errors $\epsilon$, respectively.}
    \label{fig:eps_vs_complexity}
\end{figure}

In the Single Game graphs, although we plot 10 runs of both algorithms, all that is visible is a single run's worth of points.
This suggests that for a single game, from one run to another, there is little variation in the $\epsilon$ guarantees produced by $\GS$ or in the query complexities of $\PSP$.
On the other hand, in the Many Games graph, we see much more variation, again showing that changes in the distribution of utility variances impact these algorithms.

We observe that \PSP{} significantly outperforms \GS-EB in query complexity, even consuming fewer queries than \GS-B, which assumes known maximum variance, while performing significantly worse with respect to data complexity, even when compared to \GS-H.
The same relations hold when comparing \PSP{} Upper Bound to $\GS$-EB Upper Bound.
We thus conclude that \PSP{} outperforms \GS{} using uniform empirical Bennett bounds,
thus empirically outperforming the state-of-the-art.

\if 0
\bhaskar{This paragraph no longer holds in these experiments. Not entirely sure why.}
\sbhaskar{We also observe that \GS-EB consistently produces looser $\epsilon$-guarantees than \GS-H given a fixed sample size.
While most of the utilities in these games have moderate variance, a context where empirical Bennett bounds outperform Hoeffding's, there are some outliers with very high variance for which empirical Bennett performs worse than Hoeffding.
On the other hand, \PSP{} prunes the utilities with moderate variance early,
and then queries only the outlier high-variance utilities until they are well estimated, producing better $\epsilon$-guarantees, particularly for smaller values of $\epsilon$.}{}
\fi

\bhaskar{I'm having the thought that if we had one more plot above for the $\textup{Beta}(5, 0.5)$ setting (variances concentrated around maximum variance), then we would have a clear edge case where even though PSP is able to prune a bit, it won't do better than GS.}
\amy{great idea!}


In summary, \PSP{} is the preferred algorithm when minimizing query complexity is paramount; it was designed to be thus, and our experiments confirm that it achieves this desideratum.
An exception is the pathological case where all utility variances are uniform, in which case \PSP{} results in both greater query complexity and greater data complexity.
We note anecdotally, however, that 
even in games with variances concentrated near the maximum variance ($\gamma_{\PlayerIndex, \StratProfile} \sim \textup{Beta}(3, 1.5)$), \PSP{} still matches \GS{} in terms of query complexity; it is only ever marginally better or worse.

\section{Experiments with Properties}
\label{sec:expts_props}

Upon termination, \PSP{} returns an empirical estimate of a simulation-based game with utilities which, with probability at least $1 - \delta$, are no more than some target error $\epsilon$ away from the true utilities.
It follows that 
all well-behaved (i.e., Lipschitz continuous in utilities) properties of this game can likewise be learned.
In this section, we further investigate through experimentation the learnability of two key properties of games, namely regret (i.e., equilibria) and power-mean welfare.
Our findings can be summarized as follows:
\begin{itemize}
\item Although regret is 2-Lipschitz, and thus amenable to statistical EGTA in theory, we exhibit games in which learning better approximations of the game steadily grants access to better $\epsilon$-equilibria, as well as games where this is often not the case.

\item We demonstrate that $\rho$-power-mean welfare is not well behaved for certain values of $\rho$, as the average supremum error between the empirical estimates and the true of $\rho$-power-mean welfare can grow very large.
\end{itemize}

\subsection{Regret}
\label{subsec:regret}

\begin{figure}
    \centering
    \includegraphics[width=\columnwidth]{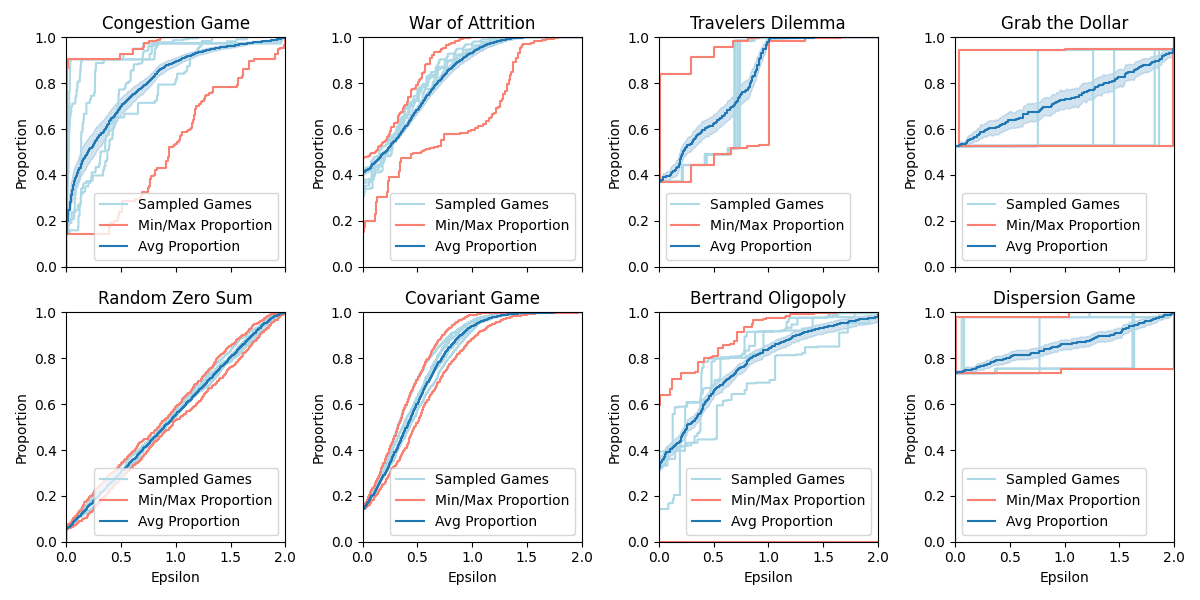}
    \caption{Average proportion of pure strategy profiles that are $\epsilon$-Nash equilibria over 100 sampled games vs.\ $\epsilon$ for 8 different GAMUT games. Exact proportions for 5 of the 100 sampled games are also plotted, along with the maximum and minimum proportions over the 100 games.}
    \label{fig:eps_nash_games}
\end{figure}



We ran these experiments on randomly generated game instances from eight GAMUT game classes.
For each game class, we generate 100 games with all parameters randomized except those needed to fix the game size between 300 and 350 pure strategy profiles (e.g., for Random Zero Sum games, we fix the number of strategies per player at 18, which yields $18^2=324$ pure strategy profiles).

In \Cref{fig:eps_nash_games}, we plot the proportion of pure%
\footnote{The fact that we consider only pure strategy profiles is a limitation of this analysis.}
strategy profiles that are $\epsilon$-Nash equilibria for various values of $\epsilon > 0$. 
A spurious equilibrium is an $\epsilon$-equilibrium 
that is not also a Nash equilibrium.
The figure shows how the rate at which better approximations of the game (smaller $\epsilon$) eliminate spurious equilibria can depend on the game's inherent structure.
In Random Zero Sum games, for example, spurious equilibria can be steadily removed as stronger $\epsilon$ guarantees are attained.

On the other hand, in Grab the Dollar and Dispersion, we see phase transitions.
In all instances of these games, there is a cutoff $\epsilon^*$ s.t.\ for all $\epsilon < \epsilon^*$, each $\epsilon$-Nash equilibrium is also a Nash equilibrium (i.e., there are no spurious $\epsilon$-Nash equilibria), and for all $\epsilon \geq \epsilon^*$, each $\epsilon$-Nash equilibrium is also an $\epsilon^*$-Nash equilibrium (i.e., all spurious Nash equilibria are $\epsilon^*$-Nash equilibria).
When $\epsilon^*$ is large, the first case renders it trivial (i.e., very few samples are required) to produce an $\epsilon$-uniform approximation, for $\epsilon < \epsilon^*$ and near $\epsilon^*$, and hence learn all Nash equilibria.
On the other hand, when $\epsilon^*$ is very small, the second case renders even spurious equilibria good approximate Nash equilibria, which again can be learned from just a few samples.
As a result, statistical EGTA methodology may be overkill in these games in some cases.
Still, when $\epsilon^*$ is not small enough to capture good approximate equilibria, and simultaneously not large enough to render learning
trivial, our EGTA methods provide a means of efficiently learning Nash equilibria,
as they are designed to gracefully adapt to the inherent, yet \emph{a priori\/} unknown, difficulty of the task at hand.


\if 0
In summary, when the goal is to learn equilibria, in some classes of games, there is an incremental advantage to additional sampling (slightly smaller $\epsilon$ eliminate only a few spurious equilibria), whereas in others there can be a phase transition: i.e., some threshold value $\epsilon^*$ above which almost everything is deemed an equilibrium and below which nothing but true equilibria are.
The former are more amenable to using statistical EGTA methodoloogy to learn equilibria than the latter.
\fi

\subsection{Non-Lipschitz Properties}

\begin{figure}
    \centering
    \includegraphics[width=0.7\columnwidth]
    {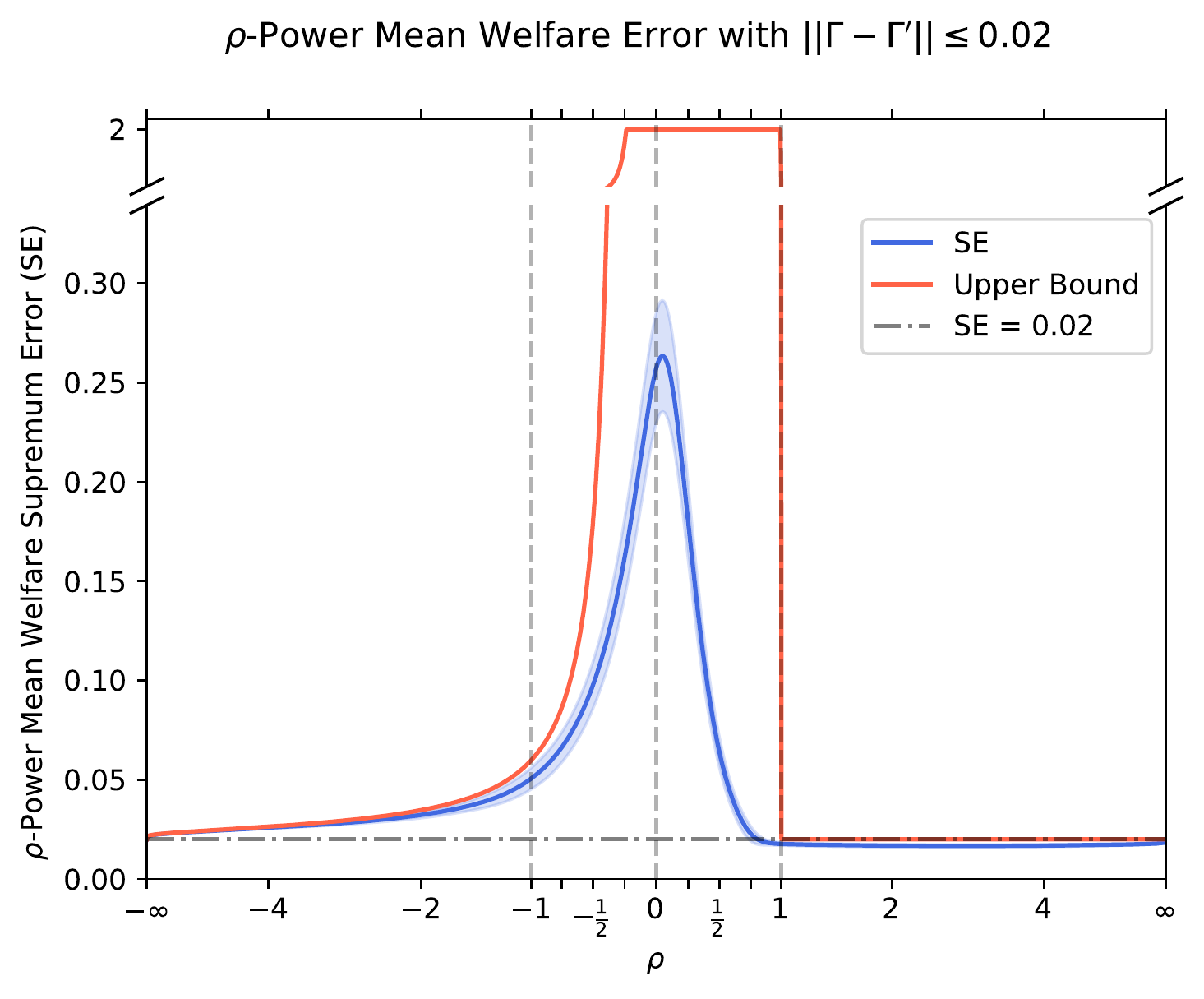}
    \caption{Average supremum error between the empirical estimates and the true $\rho$-power-mean welfares at each strategy profile for various values of $\rho$. Vertical lines are plotted at each $\rho \in \{-1, 0, 1\}$.
    \amy{change colors in plot, especially SE=0.02. and add spaces to legend: SE = 0.02}
    \cyrus{Side caption and/or wider fig?}
    }
    \label{fig:power_mean}
\end{figure}

\if 0
\begin{figure}
    \centering
    \includegraphics[width=0.6\columnwidth]
    {experiments/welfare_2.png}
    \label{fig:}
\end{figure}
\fi


\if 0
$\sup_{\StratProfile \in \StratProfileSpace} |p(s; \GameTuple) - p(s; \GameTuple')|$ 

$|\inf_{\StratProfile \in \StratProfileSpace} p(s; \GameTuple) - \inf_{s \in \StratProfileSpace} p(s; \GameTuple')|$ 

$|\sup_{\StratProfile \in \StratProfileSpace} p(s; \GameTuple) - \sup_{s \in \StratProfileSpace} p(s; \GameTuple')|$ 
\fi

In this experiment, we sample a congestion game $\GameTuple$ from $\RandomCongestionGame(3, 3, 2)$, and then uniformly draw 100 arbitrary normal form games $\GameTuple'$, each satisfying $\norm{\GameTuple - \GameTuple'}_\infty\leq \epsilon \doteq 0.02$.
Each draw represents a possible empirical estimate of $\GameTuple$ that might be learned by \PSP{} or \GS.
For values of $\rho \in [-10, 10]$, we then plot the average of the supremum, across all strategy profiles $\StratProfile \in \StratProfileSpace$, of the errors between the empirical estimate of the $\rho$-power-mean welfare at $\StratProfile$ and the (actual) $\rho$-power-mean welfare at $\StratProfile$.
In \Cref{fig:power_mean}, we observe that $\rho$-power-mean welfare is $\epsilon$-well-estimated for $\rho > 1$, as expected, and is also fairly well-estimated for $\rho < -1$.
Since $\rho$-power-mean welfare is $\max_{\PlayerIndex \in \SetOfPlayers} \bm{w}_{\PlayerIndex}^{\nicefrac1\rho}$-Lipschitz continuous, we expect high error as $\rho \to 0^{-}$, and as it is Lipschitz discontinuous for $\rho \in [0, 1)$, we also expect high error in this region.


Although power means are not Lipschitz-continuous, for $\rho \in [0, 1)$, and may consequently
be more difficult to estimate than Lipschitz-continuous properties, their sample complexity is not necessarily unbounded.
In particular, for $\rho = 0$, the difficulty with estimating the power mean stems from the fact that the derivative can grow in an unbounded fashion as some utility value approaches zero.
Said derivatives, however, are only unbounded when that utility value is exactly zero.
Thus, as we move away from this singularity, our estimates will converge, but the rate of convergence may be slow.
In contrast, other interesting properties of games, like the price of anarchy, are not even continuous---the price of anarchy depends on the maximum welfare at an equilibrium, which is not a continuous function of utilities---and thus no convergent estimator can exist for them.

\FloatBarrier

\ifpoa

\if 0
In this section, we give an example of a property of a game that is not well behaved, and thus cannot be estimated using the methodology developed in this paper.
That property is a welfare-optimizing equilibrium, and its corresponding value, be it welfare-maximizing or welfare-minimizing, and be it utilitarian, egaliatarian, or some other power-mean welfare.
We show that this property is not well behaved by counterexample.
\fi

\if 0
SETS
extreme equilibrium outcomes
maximal consonant equilibrium outcome \\
maximal dissonant equilibrium outcome

extreme $\Lambda$-stable outcomes
maximal consonant $\Lambda$-stable outcome \\
maximal dissonant $\Lambda$-stable outcome

VALUES
maximal equilibrium consonance \\
maximal equilibrium dissonance

maximal $\Lambda$-stable consonance \\
maximal $\Lambda$-stable dissonance
\fi

\section{Estimating Extreme Welfare Equilibria}

\if 0
In this section, we study the estimation of extreme equilibria in games, measured in terms of the welfare achieved.
An extreme equilibrium can be the best (optimal) or the worst (pessimal),
and the welfare can be utilitarian, egalitarian, or any other monotonically non-decreasing measure of welfare, including power-mean welfare.
\fi

\if 0
While the extreme \emph{values\/} of all Lipschitz properties of a game are well behaved, it is not the case that the witnesses---the outcomes that realize those values---are likewise well behaved.
Perhaps suprisingly, (Nash) equilibria, the witnesses that realize the extreme value of regret~\cite{areyan2020improved}, are well behaved, but this question stood open for at least a decade%
\footnote{Since \citet{vorobeychik2010probabilistic}.}
before it was resolved.
\fi

We now turn our attention to extreme equilibria---the best or the worst, i.e., optimal or pessimal, respectively---measured in terms of the welfare achieved.
We call the welfare-maximal equilibria maximally \emph{consonant}, and the welfare-minimizing ones, maximally \emph{dissonant}.
Analogously, we call the values of these equilibria the maximal consonance and dissonance.

As it turns out, neither the maximally consonant nor maximally dissonant equilibria, nor their values, is amenable to statistical EGTA methodology, i.e., these properties are not well behaved.
Consequently, we define relaxations of these properties, that we call maximally consonant and maximally dissonant $\Lambda$-stable outcomes, which are similar in spirit to their equilibrium counterparts, but which are well behaved.
Specifically, we replace the equilibrium constraints with their Lagrangian relaxations based on a tunable parameter $\Lambda \ge 0$.
This change in the definitions does not alter the spirit of the properties, only the letter, because as $\Lambda$ goes to infinity, the penalty for violating the constraints becomes infinite, so they are not violated.
On the other hand, when $\Lambda$ is finite, these definitions can be understood as permitting small oscillations around extreme equilibria, thus expanding the scope of acceptable play.

For extreme $\Lambda$-stable outcomes, we obtain a positive result, as these properties are well behaved.
However, having overcome what might have appeared to be the shortcoming that was preventing the estimation of extreme equilibria, we find ourselves facing a more serious stumbling block.
Although we can now derive a more satisfying bound on these extreme
properties, given an $\epsilon$-uniform approximation of a game, this bound grows with $\Lambda$.
In other words, fixing the number of samples and letting $\Lambda$ increase so as to produce solutions closer and closer to an extreme equilibrium yields a larger and larger confidence interval around the property's estimate.
Alternatively, fixing $\epsilon$ and letting $\Lambda$ grow, $\epsilon$-accurate estimation of the property requires more and more samples.
This result is not entirely surprising, as we have effectively interpolated between two extreme cases: $0$-stable outcomes, which are well behaved, and $\infty$-stable outcomes, which are not.

\newcommand{\constant}{c}

\if 0
\begin{figure}[htbp]
\vspace{-0.25cm}
\centering
\begin{game}{2}{2}
      & $A$     & $B$\\
$A$   & $\bm{\gamma,-\gamma}$ & $-\gamma,\gamma$ \\
$B$   & $\gamma-\constant,-\gamma$ & $\bm{\constant-\gamma,\gamma}$ \\
\end{game}
\vspace{-0.25cm}
\caption{Inapproximable extreme equilibria.}
\label{fig:inapproxExtremeWelfare}
\end{figure}
\fi


\begin{restatable}[Extreme Equilibria are not Lipschitz]{observation}{obsInapproximable}
\label{obs:inapproximable}
Consider the following game family $\GameTuple (\gamma)$ parameterized by $\gamma \in (0,1)$, for any fixed $\constant \geq 0$, with utility matrix $\begin{bmatrix} (\bm{\gamma,-\gamma}) & (-\gamma,\gamma) \\
(\gamma-\constant,-\gamma) & (\bm{\constant-\gamma,\gamma}) \\
\end{bmatrix}$ 
(possible equilibria shown in bold).
For $\GameTuple (-\gamma)$ and $\GameTuple (\gamma)$ with corresponding utilities $\Utility_{-\gamma}, \Utility_{\gamma}$, it holds that
$\forall \epsilon \ge \frac{\abs{\gamma}}{2}, \norm{\Utility_{-\gamma} - \Utility_\gamma}_\infty \leq 2\gamma \leq \epsilon$, but
$\mathsmaller{\smallabs{
    \smallsup{\StratProfile \in \Nash(\Utility)}
        \Welfare(\StratProfile; \Utility) \, \ - 
    \smallsup{\StratProfile \in \Nash(\Utility_\gamma)}
        \Welfare(\Utility_\gamma, \StratProfile)
    }} =
\mathsmaller{\smallabs{
    \smallinf{\StratProfile \in \Nash(\Utility)}
        \Welfare(\StratProfile; \Utility) \, \ - 
    \smallinf{\StratProfile \in \Nash(\Utility_\gamma)}
        \Welfare(\Utility_\gamma, \StratProfile)
    }} = 
\constant$.
\end{restatable}

In fact, extreme equilibria are not even continuous! \amy{what does this mean? equilibria is a set-valued thing.}
For any fixed $\gamma \in (0,1)$, there exists a finite sample complexity that is sufficient to
distinguish between $\Utility_{-\gamma}$ and $\Utility_{\gamma}$.
However, that sample complexity tends to infinity as $\gamma$ tends to zero.
As a result, there does not exist a \emph{bound\/} on sample complexity within which we can learn to distinguish between $\Utility_{-\gamma}$ and $\Utility_{\gamma}$.
In other words, extreme welfare equilibria are inapproximable.




\if 0
\cyrus{Amy Version}

\begin{figure}[t]
\begin{game}{2}{2}
      & $A$     & $B$\\
$A$   & $\bm{-\constant \gamma, \gamma}$ & $(\constant+1) \gamma, -\gamma$\\
$B$   & $-(\constant+1) \gamma, \gamma$ & $\bm{\constant \gamma, -\gamma}$ \\
\end{game}
\caption{Amy's game}
\end{figure}

\begin{observation}[Extreme Equilibria are Inapproximable]
\label{obs:counterexample}
Consider the game family $\GameTuple (\gamma)$ parameterized by $\gamma \in (0,1)$ depicted in \Cref{fig:inapproxExtremeWelfare}.
For $\GameTuple \doteq \GameTuple (0)$ and $\GameTuple (\gamma)$ with corresponding utilities $\Utility, \Utility_\gamma$, it holds that
$\forall \epsilon \ge \abs{\gamma}, \norm{\Utility - \Utility_\gamma}_\infty \leq \epsilon$, but
$\smallabs{
    \smallsup{\StratProfile \in \Nash(\Utility)}
        \Welfare(\StratProfile; \Utility) \, \ - 
    \smallsup{\StratProfile \in \Nash(\Utility_\gamma)}
        \Welfare(\Utility_\gamma, \StratProfile)
    } > 2(\constant - 1) \gamma$
for some $\constant > 1$,
and
$\smallabs{
    \smallinf{\StratProfile \in \Nash(\Utility)}
        \Welfare(\StratProfile; \Utility) \, \ - 
    \smallinf{\StratProfile \in \Nash(\Utility_\gamma)}
        \Welfare(\Utility_\gamma, \StratProfile)
    } > 2(\constant - 1) \gamma$
for some $\constant < -1$.
\end{observation}

\begin{proof}
When $\gamma > 0$, the column player plays $A$, so that the only and thus maximally consonant Nash equilibrium of $\GameTuple$ is $(A,A)$, with utilitarian welfare $(1 - \constant) \gamma$.
However, $(B,B)$ is a $2 \gamma$-Nash equilibrium of $\GameTuple(\gamma)$ with utilitarian welfare $(\constant - 1) \gamma$; further, it is the maximally consonant such equilibrium, as long as $\constant > 1$.
By construction, $\GameTuple (\gamma)$ is an $\epsilon$-uniform approximation of $\GameTuple$, but the utilitarian welfare of the best exact Nash and the best $2 \gamma$-Nash differ by $2 (\constant - 1) \gamma > 0$, since $\constant > 1$: \amy{dependence on $\gamma$. do we care? since $c$ can be whatever.}
i.e., arbitrarily.
\end{proof}

\begin{proof}
When $\gamma < 0$, the column player plays $A$, so that the only and thus maximally dissonant Nash equilibrium of $\GameTuple$ is $(B,B)$, with utilitarian welfare $(\constant - 1) \gamma$.
However, $(A,A)$ is a $2 \gamma$-Nash equilibrium of $\GameTuple(\gamma)$ with utilitarian welfare $(1 - \constant) \gamma$; further, it is the maximally dissonant such equilibrium, as long as $\constant < -1$.
By construction, $\GameTuple (\gamma)$ is an $\epsilon$-uniform approximation of $\GameTuple$, but the utilitarian welfare of the worst exact Nash and the worst $2 \gamma$-Nash differ by $2 (\constant - 1) \gamma > 0$, since $\constant < -1$: \amy{dependence on $\gamma$. do we care? since $c$ can be whatever.}
i.e., arbitrarily.
\end{proof}
\fi

\if 0
A proof of this fact would reduce the mean-estimation problem to learning these properties.
For all $\gamma > 0$, the \emph{mean-estimation problem\/} is to determine the expected value of a random variable to within $\gamma$-accuracy.
If we can determine w.h.p.\ that the entries in $B$ are not 1, or equivalently, that $\gamma=0$,
then we could solve the mean-estimation problem.
\amy{HOW?}
\fi

\if 0
Make $B$ Bernoulli with bias $\gamma$.
As $\gamma \to \infty$, it becomes impossible to determine whether $B$ will ever realize a tail at all.

The number of coin flips $n$ required to see a tail with probability $\delta$ approaches $1 - \nicefrac{\gamma}{n}$ (probability of $n$ heads).
\fi

\if 0
CLAIM: there exist two games, u and u', same range c, s.t.\ they differ in this property by up to c, and the sample complexity of distinguishing b/n the two games is unbounded.

PROOF: to determine if game is u or u',
u is when $\gamma=0$, u' o/w
is B,B 1 or something smaller than 1?
reduction:
construct game
if can distinguish, can solve the mean estimation problem
but, there are lower bounds for mean estimation
require var/eps$^2$ samples
take eps as small as we want, and that takes the sample complexity as large as we want
\fi

\begin{restatable}[Extreme Equilibria are Inapproximable]{corollary}{corInapproximable}
For all $\epsilon > 0, \delta \in (0, \mathsmaller{\frac{1}{2}})$, there does not exist a finite sample size $m < \infty$ that is sufficient to $\epsilon$-estimate maximally consonant and maximally dissonant equilibria, or their values, with probability at least $1 - \delta$.
\label{cor:inapproximable}
\end{restatable}

\if 0


Although optimal welfare is bounded (\cref{lemma:welfare}),
the difference between the welfares of the worst equilibria in the game and its estimate can be unbounded.
To illustrate this point, consider that $\GameTuple$ may have an arbitrarily bad $0$-equilibrium that has regret $2\gamma$ in $\GameTuple'$, making the denominator in $\Anarchy (\GameTuple')$ too large relative to that of $\Anarchy (\GameTuple)$.  Likewise, $\GameTuple'$ can have an arbitrarily bad 0-equilibrium that is a $2\gamma$ equilibrium in $\GameTuple$, making the denominator of $\Anarchy (\GameTuple')$ too small relative to that of $\Anarchy (\GameTuple)$.  Consequently, this naive estimation strategy can be arbitrarily incorrect in both directions, for any finite $\gamma$.
Nonetheless, we can and do bound the POA:
first, in a somewhat unsatisfying manner,
making no assumptions on $\GameTuple$;
and later, more convincingly,
after making suitable assumptions on $\GameTuple$.

\todo{I (Cyrus) claim that such a thing can occur, and it definitely can for pure equilibria, but I would like to see an example in mixed games.  I think a modified prisoner's dilemma would work, where slight changes would encourage cooperative play?}

\[
\begin{matrix}
(a, a) & (1, -1) \\
(-1, 1) & (b, b) \\
\end{matrix}
\]

We can take $a > 1$ to be arbitrarily large to set the optimal welfare.  Now, for $b < 1$, this is the only equilibrium \todo{I think, please confirm}, but for $b \geq 1$, we obtain a second equilibrium that is arbitrarily worse than the first. 

\fi

\newcommand{\MaxDis}[1][]{{\rm{MD}_{#1}}}
\newcommand{\MaxCon}[1][]{{\rm{MC}_{#1}}}

Next, we derive natural bidirectional bounds on extreme equilibria and extreme $\Lambda$-stable outcomes.
We write $\MaxDis(\Utility)$ and $\MaxCon(\Utility)$ to denote the set of maximally dissonant and maximally consonant equilibria of the game defined by $\Utility$.
As all elements of these sets have the same values, we overload this notation, using it to denote both the set (i.e., the witnesses) and the value of all its elements; the specific meaning should always be clear from context.

Given a game $\Utility$, the maximally dissonant equilibrium value $\MaxDis (\Utility)$ is upper bounded by the maximally consonant one $\MaxCon (\Utility)$.
To obtain a lower bound on $\MaxDis (\Utility)$,
we consider the value of the maximally dissonant $2\epsilon$-equilibria in an $\epsilon$-approximation $\Utility'$ of $\Utility$, which underestimates the maximally dissonant equilibrium value of $\Utility$.
Similarly, to obtain an upper bound on $\MaxCon (\Utility)$, we consider the value of the maximally consonant $2\epsilon$-equilibria in $\Utility'$, which overestimates the maximally consonant equilibrium value of $\Utility$.
Note that both these bounds can be arbitrarily loose, as the extreme $2\epsilon$-equilibria in $\Utility'$ may not correspond to extreme equilibria in $\Utility$ (see Observation~\ref{obs:inapproximable}).

We can tighten these bounds by making an artificial stability assumption, namely
that the welfare of \emph{every\/} $\alpha$-Nash equilibrium in $\Utility$ approximates 
(by $\alpha\gamma$, for some $\gamma > 0$)
that of an exact Nash equilibrium (also in $\Utility$), as this (unreasonable) assumption renders well behaved the welfare of extreme equilibria (see Appendix~\ref{app:poa_appendix}).
%
\if 0
As usual, we fix two compatible games
characterized by utility functions $\Utility$ and $\Utility'$, 
and we assume that $\Utility'$ is an $\epsilon$-uniform approximation of $\Utility$.

\begin{restatable}{theorem}{thmExtreme}
[Approximating Extreme Equilibria]
\label{thm:extremeNash}
Fix $\epsilon > 0$.
Assume $\Welfare: \StratProfileSpace \times (\StratProfileSpace \to \R^{\NumberOfPlayers}) \to \R$
is monotonic non-decreasing so that $\Utility \preceq \bm{v}$ component-wise implies $\Welfare(\StratProfile; \Utility) \le \Welfare(\StratProfile; \bm{v})$, for all $\StratProfile \in  \StratProfileSpace$ and utilities $\Utility$ and $\bm{v}$.
It holds that
\if 0
extreme equilibrium outcomes
maximal consonant equilibrium outcome \\
maximal dissonant equilibrium outcome
\fi
%
$\inf_{\StratProfile \in \Nash_{2\epsilon}(\Utility')} \Welfare (\StratProfile; \Utility' - \epsilon) \leq \MaxDis (\Utility) \leq \MaxCon (\Utility) \leq
\sup_{\StratProfile \in \Nash_{2\epsilon}(\Utility')} \Welfare (\StratProfile; \Utility' + \epsilon)$.
\end{restatable}

\begin{restatable}{corollary}{corExtreme}
Suppose as in Theorem~\ref{thm:extremeNash}.
Further, suppose $\exists \gamma > 0$
s.t.\ $\forall \alpha > 0, \forall \StratProfile \in \Nash_{\alpha}(\Utility), \exists s' \in \Nash (\Utility)$
s.t.\ $\abs{\Welfare (\StratProfile; \Utility) - \Welfare (\StratProfile'; \Utility)} \leq \alpha \gamma$.
We may then refine the upper (lower) bound on $\MaxDis(\Utility)$ ($\MaxCon(\Utility)$) as
$\MaxDis(\Utility) \leq \inf_{\StratProfile \in \Nash (\Utility')} \Welfare (\StratProfile; \Utility' + \epsilon) + 2 \gamma \epsilon$
and
$\sup_{\StratProfile \in \Nash (\Utility')} \Welfare (\StratProfile; \Utility' - \epsilon) - 2 \gamma \epsilon \leq \MaxCon(\Utility)$,
respectively.
\label{cor:extremeNash}
\end{restatable}
\fi
%
This condition is a variant of the \emph{$\epsilon$-$\Delta$ approximate stability} property introduced by Awasthi \emph{et al.\/} \cite{awasthi2010nash}, who find that games with certain structures, such as zero-sum games,
satisfy these conditions.  Interestingly, while these authors use the condition to bound the error induced by \emph{computational\/} approximation, our usage demonstrates that it can also be used to control the error due to \emph{statistical\/} approximation.

As the stability condition is difficult to justify, we instead propose an alternative quantity as a substitute measure for extreme equilibria, namely extreme $\Lambda$-stable outcomes, in which we replace the equilibrium constraints with their Lagrangian relaxation.
We now write $\MaxDis_{\Lambda} (\Utility)$ ($\MaxCon_{\Lambda} (\Utility)$) to denote the maximally dissonant (consonant) $\Lambda$-stable \emph{outcomes\/} of the game defined by $\Utility$.

\begin{definition}
\label{def:Lambda-stable}
Given a game $\GameTuple$ with corresponding utilities $\Utility$, and $\Lambda \ge 0$, a \mydef{maximally dissonant $\Lambda$-stable outcome} is an element of
$\MaxDis_{\Lambda} (\Utility) \doteq \inf_{\StratProfile \in \StratProfileSpace} \Welfare(\StratProfile; \Utility) + \Lambda \Regret(\StratProfile; \Utility)$.
A \mydef{maximally consonant $\Lambda$-stable outcome} is defined analogously: i.e.,
$\MaxCon_{\Lambda} (\Utility) \doteq \sup_{\StratProfile \in \StratProfileSpace} \Welfare(\StratProfile; \Utility) - \Lambda \Regret(\StratProfile; \Utility)$.
\end{definition}

These new properties are Lipschitz,
as we show presently, and thus are better behaved than extreme equilibria, in that they can be well estimated via a uniform approximation of the game.
We also provide a proof that
$\MaxDis_{\Lambda}^* (\Utility)$ and $\MaxCon_{\Lambda}^* (\Utility)$ are Lipschitz properties.
These quantities are analogous to their counterparts, but are defined in terms of the \mydef{excess regret} $\Regret^{*}(\StratProfile; \Utility) \doteq \Regret(\StratProfile; \Utility) - \inf_{\StratProfile' \in \StratProfileSpace} \Regret(\StratProfile'; \Utility)$,
which is relevant in case the smallest regret is not zero.
%
The Lipschitz constant of $\Regret^{*}(\StratProfile; \Utility)$ is 4, a result which can be obtained via the Lipschitz calculus (Theorem~\ref{thm:lipschitz-facts}).


\begin{restatable}[Approximating Extreme $\Lambda$-Stable Outcomes]{theorem}{thmLambdaStable}
\label{thm:LambdaStable}
Assume $\Welfare: \StratProfileSpace \times (\StratProfileSpace \to \R^{\NumberOfPlayers}) \to \R$ is 
a $\lambda_{\Welfare}$-Lipschitz property.
It holds that
$\MaxDis_{\Lambda} (\Utility)$ is $\lambda_{\Welfare} + 2\Lambda$-Lipschitz, which immediately implies that
$\abs{ \MaxDis_{\Lambda} (\Utility) - \MaxDis_{\Lambda} (\Utility') } \leq (\lambda_{\Welfare} + 2\Lambda) \epsilon$.
%
It also holds that
$\MaxDis_{\Lambda}^* (\Utility)$ is $\lambda_{\Welfare} + 4\Lambda$-Lipschitz, which immediately implies that
$\abs{ \MaxDis_{\Lambda}^* (\Utility) - \MaxDis_{\Lambda} (\Utility') } \leq (\lambda_{\Welfare} + 4\Lambda) \epsilon$.
%
Likewise, for $\MaxCon_{\Lambda} (\Utility)$ and $\MaxCon_{\Lambda}^* (\Utility)$.
\end{restatable}


Not only are extreme $\Lambda$-stable outcomes mathematically preferable to extreme equilibria because of their estimation properties, we believe they are also as, if not more, justifiable as descriptors or predictors of the play of a game.
The standard motivation for considering welfare \emph{at equilibrium\/} is that we expect ideal players to converge to equilibrium play.
On the other hand, it is reasonable to expect non-ideal players 
to play near, but not exactly at, equilibria~\cite{Fudenberg98}.
By expanding the scope of 
play to include approximate equilibria, we are able to accommodate more realistic behavior.
At the same time, the parameter $\Lambda$ allows us to control the extent to which non-equilibrium play is acceptable.
Taking $\Lambda = 0$ ignores equilibrium behavior entirely, while the other extreme approaches equilibrium behavior: letting $\Lambda \to \infty$, $\MaxDis_{\Lambda} (\Utility) \to \MaxDis (\Utility)$ (likewise, $\MaxCon_{\Lambda} (\Utility) \to \MaxCon (\Utility)$).

We can interpret $\Lambda$-stable outcomes that deviate from equilibrium play as ranging from fully cooperative to fully non-cooperative.
The welfare term is a cooperative one, as players are colluding to make the world as wonderful (or terrible) a place as possible.
The regret term, in contrast, is non-cooperative, but $\Lambda$ permits some flexibility in behavior.
In the case of a maximally dissonant (colluding) outcome,
$\Lambda = 0$ implies 
players are all playing pessimally: i.e., 
they are colluding to cause as much suffering to all players (themselves included) as possible.
The opposite is true of maximally consonant (colluding) outcome, where players are colluding to create a utopia.
On the other hand, as $\Lambda \to \infty$, the players move away from cooperative behavior towards rational
equilibrium 
behavior.
Indeed, $\Lambda$-stability allows us to model a range of cooperative to non-cooperative behaviors.

\input{poa/poa_thewellwellory}

Next, paralleling the structure of our discussion of extreme equilibria, where we introduced $\Lambda$-stable outcomes as an alternative to extreme equilibria, we now introduce an alternative to the anarchy ratio, namely the anarchy gap.

\begin{definition}[Anarchy and Stability Gaps]
The \mydef{anarchy gap} is defined as
$\AnarchyGap (\Utility) \doteq \sup_{\StratProfile \in \StratProfileSpace} \Welfare(\StratProfile) -
\inf_{\StratProfile \in \Nash(\Utility)} \Welfare(\StratProfile)$, 
and the \mydef{stability gap} is defined similarly.
Combining ideas, given $\Lambda \ge 0$,
the \mydef{$\Lambda$-anarchy gap} is defined as the \emph{Lagrangian relaxation} over the equilibrium set, i.e.,
$\AnarchyGap_{\Lambda} (\Utility) \doteq \sup_{\StratProfile \in \StratProfileSpace} \Welfare(\StratProfile) - \inf_{\StratProfile \in \StratProfileSpace} \Welfare(\StratProfile) + \Lambda \Regret(\StratProfile; \Utility)$,
and the \mydef{$\Lambda$-stability gap} is defined similarly.

If $\Utility$ has an equilibrium, then $\AnarchyGap_{\Lambda} (\Utility) \ge 0$.
However, as equilibria are not guaranteed to exist in games without mixed strategies, 
$\AnarchyGap_{\Lambda} (\Utility)$
can become very negative, as $\Lambda$ tends toward infinity.
Thus, we also define a relaxation of this gap:
given $\Lambda \ge 0$,
the \mydef{$\Lambda^{*}$-anarchy gap} is defined as
$\AnarchyGap^{*}_{\Lambda} (\Utility) \doteq \sup_{\StratProfile \in \StratProfileSpace} \Welfare(\StratProfile) - \inf_{\StratProfile \in \StratProfileSpace} \Welfare(\StratProfile) + \Lambda \Regret^* (\StratProfile; \Utility)$,
%
where, as above, $\Regret^{*} (\StratProfile; \Utility)$ is the \mydef{excess regret}.
As $\Regret^{*} (\StratProfile; \Utility) \ge 0$
and $\smallsup{x \in X} f(x) \ge \,\, \smallinf{x \in X} f(x)$,
it follows that $\AnarchyGap^{*}_{\Lambda} (\Utility) \ge 0$.
\end{definition}


These latter two measures of anarchy, parameterized by $\Lambda$, are Lipschitz properties, and thus are well behaved, so that they can be well approximated via a uniform approximation of the game.

\begin{restatable}{theorem}{thmApproxGap}[Approximating the Anarchy Gap]
Assume $\Welfare: \StratProfileSpace \times (\StratProfileSpace \to \R^{\NumberOfPlayers}) \to \R$ is 
a $\lambda_{\Welfare}$-Lipschitz property.
It holds that
$\AnarchyGap_{\Lambda} (\Utility)$ is $2 (\lambda_{\Welfare} + \Lambda)$-Lipschitz, which immediately implies
$\abs{ \AnarchyGap_{\Lambda} (\Utility) - \AnarchyGap_{\Lambda} (\Utility') } \leq 2 (\lambda_{\Welfare} + \Lambda) \epsilon$.
It also holds that
$\AnarchyGap^*_{\Lambda} (\Utility)$ is $2 (\lambda_{\Welfare} + \Lambda)$-Lipschitz, which immediately implies
$\abs{ \AnarchyGap^*_{\Lambda} (\Utility) - \AnarchyGap^*_{\Lambda} (\Utility') } \leq 4 (\lambda_{\Welfare} + \Lambda) \epsilon$.
\label{thm:anarchy-gap}
\end{restatable}

\if 0
\begin{proof}
We prove this theorem using the Lipschitz calculus (Theorem~\ref{thm:lipschitz-properties}).
By Lemma~\ref{lem:regret-2Lipschitz}, Regret is 2-Lipschitz.
Moreover $\inf$ and $\sup$ provide a 1-Lipschitz multiplicative factor.
Finally, addition (or equivalently, subtraction) requires that we add the Lipschitz constants of the corresponding addends.
We begin at the leaves in Figure~\ref{fig:lipschitz-anarchy-gap}, computing Lipschitz constants, and back those values up the syntax tree to arrive at a $2(\lambda_{\Welfare} + \Lambda)$ Lipschitz constant for $\AnarchyGap_{\Lambda} (\Utility)$.
The inequality then follows via Observation~\ref{obs:lipschitz-uniform}.
\end{proof}

\begin{figure}
\begin{center}
\begin{tikzpicture}[xscale=2]
\node[rectangle,draw=black] (sub) at (0, 0) {$-: ( \lambda_{\Welfare} ) + (\lambda_{\Welfare} + 2\Lambda)$};

\node[rectangle,draw=black] (sup) at (-1, -1) {$\displaystyle\sup_{\StratProfile \in \StratProfileSpace}: 1 \times (\lambda_{\Welfare})$};
\node[rectangle,draw=black] (inf) at (1, -1) {$\displaystyle\inf_{\StratProfile \in \StratProfileSpace}: 1 \times (\lambda_{\Welfare} + 2\Lambda)$};

\node[rectangle,draw=black] (w1) at (-1, -2) {$\displaystyle\Welfare (\StratProfile): \lambda_{\Welfare}$};

\node[rectangle,draw=black] (plus) at (1, -2) {$\displaystyle+: (\lambda_{\Welfare}) + (2\Lambda) $};

\node[rectangle,draw=black] (w2) at (0.5, -3) {$\displaystyle\Welfare (\StratProfile): \lambda_{\Welfare}$};

\node[rectangle,draw=black] (times) at (1.5, -3) {$\displaystyle\times: (\Lambda) \times (2)$};

\node[rectangle,draw=black] (Lam) at (1, -4) {$\Lambda: \Lambda$};

\node[rectangle,draw=black] (reg) at (2, -4) {$\Regret (\StratProfile): 2$};

\draw[->] (sub) -- (sup);
\draw[->] (sub) -- (inf);

\draw[->] (sup) -- (w1);

\draw[->] (inf) -- (plus);

\draw[->] (plus) -- (w2);
\draw[->] (plus) -- (times);

\draw[->] (times) -- (Lam);
\draw[->] (times) -- (reg);

\end{tikzpicture}
\end{center}
\caption{Abstract syntax tree derivation of the Lipschitz constant for $\AnarchyGap_{\Lambda} (\Utility)$.}
\label{fig:lipschitz-anarchy-gap}
\end{figure}
\fi

This theorem can be applied using one's preferred measure of welfare 
together with the corresponding Lipschitz constant.
\if 0
Further, we can explain the factor of two difference in the bounds between $\AnarchyGap_{\Lambda} (\Utility)$ and $\AnarchyGap_{\Lambda}^* (\Utility)$ using the same logic that explains the factor of two improvement we see between Theorems~\ref{thm:lipschitz-properties} and~\ref{thm:lipschitz-properties-target-case}.
Recall that this improvement manifests because the target value of the regret property, namely 0, is known.
Similarly, 
\fi
When an equilibrium is known to exist, we can use the regret property.
If, however, an equilibrium is not known to exist (e.g., a pure strategy equilibrium in a normal-form game), then we resort to using the excess regret property, and suffer a factor of two loss of accuracy.

All of the aforementioned results apply not only to the anarchy but to stability as well, for the corresponding definitions.
Furthermore, analogs of these ratios and gaps can be defined for mixed games.
As stated earlier, all our bounds hold, regardless of whether the game is pure or mixed.

\if 0
\amy{mixed!}
As the anarchy ratio is not well-defined in games for which an equilibrium does not exist, we similarly define the
\mydef{mixed anarchy ratio} for mixed games as $\AnarchyRatio^{\diamond} (\Utility) \doteq \AnarchyRatio (\Utility^{\diamond})$.
The \mydef{mixed anarchy gap} is similarly defined for the mixed game, i.e., $\AnarchyGap^{\diamond} (\Utility) \doteq \AnarchyGap(\Utility^{\diamond})$.
\noindent
Moreover, the \mydef{mixed $\Lambda$-anarchy gap} is $\AnarchyGap_{\Lambda}^{\diamond} (\Utility) \doteq \AnarchyGap_{\Lambda} (\Utility^{\diamond})$.
Likewise for the stability gaps.
\fi




\if 0
The standard motivation for considering welfare \emph{at equilibrium\/} is that we expect ideal players to converge to equilibrium play.
On the other hand, it is reasonable to expect non-ideal players 
to play near, but not exactly at, equilibria.
By expanding the scope of non-cooperative play to include approximate equilibria, we are more sensitive to realistic worst-case behavior.
The parameter $\Lambda$ allows us to control the extent to which non-equilibrium play is acceptable (in the \emph{non-cooperative} term).
Taking $\Lambda = 0$ ignores equilibrium behavior entirely, while
taking $\Lambda \to \infty$, $\AnarchyGap_{\Lambda} (\Gamma)$ approaches $\AnarchyGap(\Gamma)$, because as $\Lambda$ goes to infinity, the penalty for violating the constraints becomes infinite, so they are not violated at equilibrium.

In the cooperative term, players are are working together to make the world as wonderful a place as possible, as measured by welfare.
In the non-cooperative term, in contrast, $\Lambda$ permits some flexibility in player behavior.
$\Lambda = 0$ actually implies the players are all playing pessimally: i.e., even worse than playing their minimax strategies, they are colluding to cause as much suffering to all players (including themselves) as possible.
On the other hand, as $\Lambda \to \infty$, the players are no longer cooperating at all, rather they are all playing rationally,
approaching equilibrium behavior.
For $\Lambda > 0$, the players are not necessarily colluding, but nor are they playing very near equilibrium.
Indeed, this framework allows us to model a range of collusive behaviors, 
\fi

The $\Lambda$-anarchy gap, by design, is a measure for which we can derive a more satisfying bound than the corresponding $\Lambda$-anarchy ratio.
However, given an $\epsilon$-uniform approximation of a game, this bound, like our bound on extreme $\Lambda$-stable outcomes, grows with $\Lambda$.
Analogously, fixing the number of samples and letting $\Lambda$ increase so as to subtract a value that is closer and to that of a pessimal equilibrium, yields a larger and larger confidence interval around the gap's estimate.
Alternatively, fixing $\epsilon$ and letting $\Lambda$ grow, $\epsilon$-accurate estimation of the gap requires more and more samples.
We visualize this phenomenon in Appendix~\ref{app:poa_appendix}, Figure~\ref{fig:anarchygapresults}.
\amy{how did we generate these plots? looks like this might end up being a forward reference to the bounds, if we move this section to earlier, and leave the exp'ts in the appendix. but maybe these exp'ts should also be in the paper?}


Some have argued against Nash, or any static, equilibrium as the preferred solution concept, and in favor of alternatives that take into account the dynamics of learning agents, thereby deeming near-equilibrium behavior an acceptable outcome of game play~\cite{young1993evolution,omidshafiei2019alpha,wicks2005algorithm}.
Our results are yet another argument in this vein, with regard to optimal Nash equilibria, based not on computational complexity~\cite{Gilboa89,Conitzer08}, but rather on statistical estimation.
In particular, there does not exist a finite $\NumberOfSamples$ that is sufficient to estimate a welfare-optimizing Nash equilibrium well; on the other hand, estimating an extreme $\Lambda$-stable outcome for any finite 
$\Lambda$ is a problem we solve in this paper.

\if 0
Using the language of $\Lambda$-stable outcomes, the anarchy gap is defined as the difference between the maximally consonant $0$-stable outcome and the maximally dissonant $\infty$-stable outcome.
It is possible to imagine other gaps of interest, such as the maximally consonant $\Lambda$-stable outcome vs.\ the maximally dissonant $\Lambda$-stable outcome, for any finite $\Lambda$.
The study of additional anarchy and stability gaps is left for future work. 

So too is \emph{empirical\/} mechanism design~\cite{vorobeychik2006empirical,Viqueira19cmg} (EMD) using the ideas outlined in this paper.
EMD is mechanism design with EGTA at its core, rather than traditional game-theoretic analysis.
In the search for mechanisms, designers generally seek extreme solutions: e.g., extreme equilibria.
As neither extreme equilibria nor the price of anarchy are well behaved, we suggest that EMD researchers optimize extreme $\Lambda$-stable outcomes or some version of an anarchy gap instead.

\enrique{I have a suspicion we will be those researchers! ;) Joking aside, the next step here is to design a game, such as SCML ANAC, via EMD with this new methodology. This is so cool; thanks for including me!}
\fi

\fi

\if 0
\begin{algorithm}[H]
\begin{algorithmic}[1]
\Procedure{PPOA-Estimation}{$\GameTuple, \ConditionDistribution$}

\State \Input \amy{inputs should be at least $\GameTuple$ and some noise}

\State \Output

\State Fix $\delta = 0.1$

\For{$m \in \{10, 20, 30, \ldots, 100\}$}

    \For{$t = 1, 2, \ldots 100$}
        
        \State $\epsilon \gets \GS (m, \amy{etc.})$
        
        \State $L_t = L(\GameTuple'_t)$
        \amy{do you mean something like: $L_t = L(\GameTuple'_1, \ldots, \GameTuple'_t)$}
    \EndFor
    
    \State \Return $\nicefrac{1}{100} \left( \sum_{t=1}^{100} L_t \right)$, with 95\% confidence intervals around this mean

\EndFor

\EndProcedure
\end{algorithmic}

\cyrus{I'm not sure why we have this: PSP can be used for PPOA estimation?}

\caption{PPOA-Estimation}
\label{alg:PPOAEstimation}
\end{algorithm}
\subsection{Pure Price of Anarchy Experiments}

\amy{Goals of the exp'ts
\begin{itemize}
\item to test the naive, unbounded estimator. is it really unbounded in practice?
\item which naive estimator? should we use $0$-Nash in the denominator, or $2\epsilon$-Nash. if we use the latter, eqa are guaranteed to exist. but  either denominator could still be very very small (or 0), b/c welfare could be very very small (or 0).
\item so maybe try both estimators, and then count how many times the denominator is zero b/c either 1. there are no eqa, vs.\ 2. how many times welfare is just too small. and: \emph{try to design env'ts where welfare won't be too small --- Enrique claims to have already done this}
\item something else we discussed trying was: measuring $\gamma$ -- the distance b/n the welfare of the set of eqa of the true game and that of the $4\epsilon$-eqa of the estimated game. we can then use the tighter bound of the conditional anarchy theorem.
\end{itemize}
}

The goal of this section is to show how to use the machinery developed in this paper to estimate the \emph{pure\/} price of anarchy.
The \mydef{pure price of anarchy} is defined by restricting both the numerator and denominator in definition of the price of anarchy to only pure strategies.
Formally,

\begin{definition}[The Pure Price of Anarchy]
Given game $\GameTuple$ with utility function $\Utility$, the \mydef{pure price of anarchy} (PPOA) is defined as $\AnarchyRatio (\GameTuple) \doteq \frac{\max_{\StratProfile' \in \StratProfileSpace} \Welfare (\Utility, \StratProfile')}{\min_{\StratProfile \in \Nash(\GameTuple)} \Welfare (\Utility, \StratProfile)}$.
\label{def:purePOA}
\end{definition}

\enrique{I will try to set up the experiments so that we avoid division by zero, for example, by having welfare much greater than $\epsilon$}. 

Henceforth, we assume that the ground-truth game $\GameTuple$ has at least one pure Nash equilibrium. By Theorem~\ref{thm:poa}, this assumption implies that any $\epsilon$-uniform approximation $\GameTuple'$ of $\GameTuple$ has at least one pure $2\epsilon$-Nash equilibrium. 

Fix a game $\GameTuple$. Given a sequence of $\epsilon$-uniform approximations $\{\GameTuple'_j\}_{j=1}^k$ 
of $\GameTuple$,
we construct the following empirical estimators\amy{Monte Carlo estimates?} of the bounds in Theorem~\ref{thm:poa}.  

\begin{enumerate}
    \item For the lower bound Equation~\ref{eq:POANaiveBounds}:
    \begin{eqnarray}
        L(\GameTuple) = \frac{\max_{\StratProfile \in \StratProfileSpace} \Welfare (\Utility, \StratProfile) - \NumberOfPlayers\epsilon}{\max_{\StratProfile \in \Nash_{2\epsilon}(\GameTuple)} \Welfare (\Utility, \StratProfile) + \NumberOfPlayers\epsilon};
        &
        \hat{L}(\{\GameTuple'_j\}_{j=1}^k) = \dfrac{1}{k}\mathlarger{\sum_{j=1}^k}\frac{\max_{\StratProfile \in \StratProfileSpace} \Welfare (\Utility'_j, \StratProfile) - \NumberOfPlayers\epsilon}{\max_{\StratProfile \in \Nash_{2\epsilon}(\GameTuple'_j)} \Welfare (\Utility'_j, \StratProfile) + \NumberOfPlayers\epsilon} 
        \label{eq:empiricalLowerBound}
    \end{eqnarray}
    
    \item For the upper bound Equation~\ref{eq:POANaiveBounds}:
    \begin{eqnarray}
        U(\GameTuple) = \frac{\max_{\StratProfile \in \StratProfileSpace} \Welfare (\Utility, \StratProfile) + \NumberOfPlayers\epsilon}{\min_{\StratProfile \in \Nash_{2\epsilon}(\GameTuple)} \Welfare (\Utility, \StratProfile) -\NumberOfPlayers\epsilon}; &
        \hat{U}(\{\GameTuple'_j\}_{j=1}^k) = \dfrac{1}{k}\mathlarger{\sum_{j=1}^k}\frac{\max_{\StratProfile \in \StratProfileSpace} \Welfare (\Utility'_j, \StratProfile) + \NumberOfPlayers\epsilon}{\min_{\StratProfile \in \Nash_{2\epsilon}(\GameTuple'_j)} \Welfare (\Utility'_j, \StratProfile) -\NumberOfPlayers\epsilon} 
        \label{eq:empiricalUpperBound}
    \end{eqnarray} 
    
    \item For the upper bound Equation~\ref{eq:POABetterBounds}:
    \amy{Enrique, please complete!!!}
    
    \item For the PPOA, for $x > 0$,
    \enrique{So far I have only experimented with $x \in \{0, 2\epsilon\}$}
    \amy{is there any theory about these estimators, when $x > 0$? like Eq'n 1 with $4x$ if your estimator uses $2x$?}
    \begin{eqnarray}
        M_x (\GameTuple) = \frac{\max_{\StratProfile \in \StratProfileSpace} \Welfare (\Utility, \StratProfile)}{\min_{\StratProfile \in \Nash_{x}(\GameTuple)} \Welfare (\Utility, \StratProfile)}; &
        \hat{M}_x (\{\GameTuple'_j\}_{j=1}^k) = \dfrac{1}{k}\mathlarger{\sum_{j=1}^k} \frac{\max_{\StratProfile \in \StratProfileSpace} \Welfare (\Utility'_j, \StratProfile)}{\min_{\StratProfile \in \Nash_{x} (\GameTuple'_j)} \Welfare (\Utility'_j, \StratProfile)}
        \label{eq:empiricalMean}
    \end{eqnarray}

\end{enumerate}

\begin{figure}[t!]
\centering
\begin{subfigure}{.5\textwidth}
  \centering
  \includegraphics[width=6cm,height=6cm,keepaspectratio]{poa/PPOA1-Analytical.png}
  \label{fig:congestionGamesSub1}
\end{subfigure}%
\begin{subfigure}{.5\textwidth}
  \centering
  \includegraphics[width=6cm,height=6cm,keepaspectratio]{poa/PPOA2-Analytical.png}
  \label{fig:congestionGamesSub2}
\end{subfigure}
\vspace{-0.25cm}
\caption{Monte Carlo estimates of the pure price of anarchy and the bounds in Equation~(\ref{eq:POANaiveBounds}) as a function of the number of samples needed to construct $\epsilon$-uniform approximations of congestion game $\GameTuple_1$ (a) and $\GameTuple_2$ (b) w.h.p. \enrique{$\delta = 0.1$}. We denote the lower and upper bounds by $L(\GameTuple)$ and $U(\GameTuple)$, respectively, and the true PPOA by $A(\GameTuple)$.}
\label{fig:congestionGames}
\end{figure}

Figure~\ref{fig:congestionGames} compares the quality of these Monte Carlo estimates as a function of the number of samples needed to construct a sequence of games, each of which $\epsilon$-uniformly approximates one of two congestion games (described below) with high probability.
We simulate noise in these congestion games as described in~\Cref{subsec:experimentalSetup}.
This figure was generated using Algorithm~\ref{alg:PPOAEstimation}.
(The algorithm as shown only estimates $L$, but estimation of $U$ and $M$ is analogous.)

The two congestion games are defined as follows:

\vspace{-2.5mm}

\paragraph{Congestion Game 1}, $\GameTuple_1$.
There are six facilities, which we label 0, 1, 2, 3, 4, and 5; and three agents, A, B, and C, each with sets of pure strategies given by: 
$\StrategySet_A = \{\{0, 3\}, \{1, 5, 4\}\}, 
 \StrategySet_B = \{\{1, 4\}, \{2, 3, 5\}\}, 
 \StrategySet_C = \{\{2, 5\}, \{0, 3, 4\}\}$.
We refer to these strategies in order and so, strategy 0 of agent $A$ consists of the set of facilities $\{0, 3\}$.
There are two pure Nash, $(0, 0, 0)$ and $(1, 1, 1)$.
The utilitarian welfare of $(0, 0, 0)$ is $- 2 - 2 - 2 = -6$, which is also the utilitarian-welfare-maximizing strategy profile.
The welfare of $(1, 1, 1)$ is $-3 (1 + 2 + 2) = -15$.
Hence, the pure price of anarchy is $\nicefrac{-6}{-15} = \nicefrac{2}{5}$, the worst possible price of anarchy for this class of games~\cite{christodoulou2005price}.

At the worst Nash equilibrium in this game, all agents are indifferent between playing their equilibrium strategy and unilaterally deviating.
Consequently, the na\"ive PPOA estimator, $M_0 (\GameTuple)$, shown in blue in Figure~\ref{fig:congestionGames} (a), fails to estimate the PPOA, even given an $\epsilon$-uniform approximation of $\GameTuple_1$ with high probability.
An alternative estimator, $M_{2 \epsilon} (\GameTuple)$, which considers all (pure) $2\epsilon$-Nash equilibria, is shown in yellow.
By Observation~\ref{obs:learnability}.
all equilibria of $\GameTuple_1$ are $2 \epsilon$-equilibria of a uniform approximation of $\GameTuple_l$ with high probability; hence, $M_{2 \epsilon} (\GameTuple)$ estimates the PPOA.

\vspace{-2.5mm}

\paragraph{Congestion Game 2} $\GameTuple_2$. 
There are again six facilities, but only two agents A and B.
The strategies sets are:
$\StrategySet_A = \{\{0, 3\}, \{1, 4, 5\}\}, 
\StrategySet_B = \{\{1, 4\}, \{2, 3, 5\}\}$.
In this game, there is a unique pure-strategy Nash equilibrium: $(0, 0)$ with welfare $-2(1 + 1) = -4$.
This outcome is also utilitarian-welfare maximizing. 
Hence, the pure price of anarchy is 1.
Moreover, the unique equilibrium is strong in the sense that agents strictly prefer their equilibrium strategy to all potential unilateral deviations.
Consequently, even the na\"ive estimator, $M_0 (\GameTuple)$, shown in blue in Figure~\ref{fig:congestionGames} (b), converges (almost immediately) to the PPOA, given an $\epsilon$-uniform approximation of $\GameTuple_2$ with high probability.

\section{POA Experiments Overview}

We discussed two ways of doing experiments, an exact and a Monte Carlo.

\subsection{Exact mixed ILP}

Following~\cite{sandholm2005mixed}, Mixed-Integer program (\ref{eq:mixed-ILP}) computes mixed-Nash equilibria of two-player games\footnote{I think it extends to more than two players, but let's focus on two players for simplicity.} with some objective function. I write the program here with our notation and specialized it to minimizing egalitarian welfare. 
A piece of notation I use here that I don't see in the paper is $\Strategy_\PlayerIndex \in \StrategySet_\PlayerIndex$.

I use notation $[\cdot]$ to denote a variable over some indices, instead of stacking sub-indices. For example, $\pi[\Strategy_\PlayerIndex]$ for the probability of player $\PlayerIndex$ playing strategy $\Strategy_\PlayerIndex$ instead of writing $\pi_{\Strategy_\PlayerIndex}$. 

The decision variables, and their respective domains, are

\begin{center}
\begin{tabular}{rl}
$z \in \R$                                  & egalitarian welfare. \\ 
$\pi[\Strategy_\PlayerIndex] \in \R^+$        & probability of player $\PlayerIndex$ playing $\Strategy_\PlayerIndex$. \\
$\Utility_\PlayerIndex \in \R$              & maximum utility player $\PlayerIndex$ makes at $\epsilon$-equilibrium. \\
$\Utility[\Strategy_\PlayerIndex] \in \R$   & player $\PlayerIndex$'s expected utility of playing $\Strategy_\PlayerIndex$. \\
$r[\Strategy_\PlayerIndex] \in \R^+$          & player $\PlayerIndex$'s regret of playing $\Strategy_\PlayerIndex$. \\
$b[\Strategy_\PlayerIndex] \in \{0, 1\}$    & indicator whether $\Strategy_\PlayerIndex$ is allowed to have positive probability. \\
$y_\PlayerIndex \in \{0, 1\}$               & auxiliary variable to minimize egalitarian objective. 
\end{tabular}
\end{center}

The formulation is based around the following property. If $b[\Strategy_\PlayerIndex] = 1$, then the strategy $\Strategy_\PlayerIndex$ is not allowed to be in the support, $\pi[\Strategy_\PlayerIndex] = 0$. If $b[\Strategy_\PlayerIndex] = 0$, then $\Strategy_\PlayerIndex$ is allowed to be in the support but the regret of the strategy for the player must be $\epsilon$. Note that $\epsilon > 0$ is a parameter of (\ref{eq:mixed-ILP}).

\begin{equation}
\label{eq:mixed-ILP}
\begin{array}{lrcll}
\text{minimize}
    & z &\\\\
\text{subject to}
    & \displaystyle\sum\limits_{\Strategy_\PlayerIndex \in \StrategySet_\PlayerIndex}  \pi[\Strategy_\PlayerIndex]
    & = &
    1,      
    & \forall \PlayerIndex \in \SetOfPlayers\\\\

    & \Utility[\Strategy_\PlayerIndex]
    & = &
        \displaystyle\sum\limits_{\Strategy_{1 - \PlayerIndex} \in \StrategySet_{1 - \PlayerIndex}} 
        \pi[\Strategy_{1 - \PlayerIndex}]  \cdot \Utility(\Strategy_\PlayerIndex, \Strategy_{1 - \PlayerIndex}),
    & \forall \PlayerIndex \in \SetOfPlayers, \forall \Strategy_\PlayerIndex \in \StrategySet_\PlayerIndex \\\\\

    & \Utility_\PlayerIndex 
    & \ge &
        \Utility[\Strategy_\PlayerIndex],
    & \forall \PlayerIndex \in \SetOfPlayers, \forall \Strategy_\PlayerIndex \in \StrategySet_\PlayerIndex \\\\\

    & r[\Strategy_\PlayerIndex]
    & = &
        \Utility_\PlayerIndex - \Utility[\Strategy_\PlayerIndex],
    & \forall \PlayerIndex \in \SetOfPlayers, \forall \Strategy_\PlayerIndex \in \StrategySet_\PlayerIndex \\\\\

    & \pi[\Strategy_\PlayerIndex]
    & \leq &
        1 - b[\Strategy_\PlayerIndex],
    & \forall \PlayerIndex \in \SetOfPlayers, \forall \Strategy_\PlayerIndex \in \StrategySet_\PlayerIndex \\\\\

    & r[\Strategy_\PlayerIndex]
    & \leq &
        \infty \cdot b[\Strategy_\PlayerIndex] + \epsilon
    & \forall \PlayerIndex \in \SetOfPlayers, \forall \Strategy_\PlayerIndex \in \StrategySet_\PlayerIndex \\\\\

    & z
    & \geq &
        \Utility_\PlayerIndex - \infty \cdot y_\PlayerIndex
    & \forall \PlayerIndex \in \SetOfPlayers \\\\

    & \displaystyle\sum\limits_{\PlayerIndex \in \SetOfPlayers} y_\PlayerIndex
    & = &
        |\SetOfPlayers| - 1
    & \forall \PlayerIndex \in \SetOfPlayers 
\end{array}
\end{equation}

\enrique{The above MILP minimizes $\inf_{\StratProfile \in \epsilon-\textrm{EQ}} \{  \min_{\PlayerIndex} \Utility_\PlayerIndex(\StratProfile)\}$}

\amy{But you can't solve this: $\inf_{\StratProfile \in \StratProfileSpace} \{  \min_{\PlayerIndex} \Utility_\PlayerIndex(\StratProfile) + \Lambda \Regret(\StratProfile) \}$?}

\amy{$\sup_{\StratProfile \in \StratProfileSpace} \Welfare(\StratProfile) - \inf_{\StratProfile \in \textrm{EQ}} \Welfare(\StratProfile)$}
\amy{max min (unrestricted) minus min min over equilibria}

\cyrus{$\sup_{\StratProfile \in \StratProfileSpace} \Welfare(\StratProfile) - \inf_{\StratProfile \in \StratProfileSpace} \{ \Welfare(\StratProfile) + \lambda \Regret(\StratProfile) \}$}

\enrique{$\sup_{\StratProfile \in \StratProfileSpace} \min_{\PlayerIndex} \Utility_{\PlayerIndex}(\StratProfile) - \inf_{\StratProfile \in \StratProfileSpace} \{  \min_{\PlayerIndex} \Utility_\PlayerIndex(\StratProfile) + \lambda \Regret(\StratProfile) \}$}

\enrique{$\sup_{\StratProfile \in \StratProfileSpace} \min_{\PlayerIndex} \Utility_{\PlayerIndex}(\StratProfile) - \inf_{\StratProfile \in \StratProfileSpace} \{  \min_{\PlayerIndex} \Utility_\PlayerIndex(\StratProfile) + \lambda \max_{\PlayerIndex \in \SetOfPlayers} \sup_{\StratProfile'_{\PlayerIndex} \in \StratProfileSpace_{\PlayerIndex}} \Regret_{\PlayerIndex} (\StratProfile_{\PlayerIndex}, \StratProfile'_{\PlayerIndex}, \StratProfile_{-\PlayerIndex}) \}$}

\todo{1. minimize objective, instead of maximize; 2. move from this formulation to the Lagrangian, to make things smooth}

Note that for POA we need the unrestricted egalitarian-maximizing strategy profile in addition to the egalitarian-maximizing state over $\epsilon$-equilibria given by (\ref{eq:mixed-ILP}). I think computing the value of the unrestricted egalitarian-maximizing state either reduces to checking egalitarian value of all pure strategies or is a simple LP. 

Equipped with mixed ILP (\ref{eq:mixed-ILP}) (and a solution to the unrestricted egalitarian-maximizing welfare), we can run experiments as follows

\begin{enumerate}
    \item Draw a poker game, $\GameTuple$. 
    \item Assume that we can compute all exact quantities for $\GameTuple$. \\
    For example, using (\ref{eq:mixed-ILP}) on $\GameTuple$ with $\epsilon = 0$, we can compute the exact egalitarian POA of $\GameTuple$.
    \item Run PSP on $\GameTuple$ and get an estimated game $\hat{\GameTuple}$ that is a  $\epsilon$-uniform approximation of $\GameTuple$.
    \item Run (\ref{eq:mixed-ILP}) on $\hat{\GameTuple}$ with $2\epsilon$.\\
    Now we have POA for $\GameTuple$ and $\hat{\GameTuple}$. At this point we can check whatever POA guarantees we have. What are these?
\end{enumerate}

\subsection{Monte Carlo}

At a high-level, the Monte Carlo approach works as follow

\begin{enumerate}
    \item Draw a poker game, $\GameTuple$. 
    \item Run PSP on $\GameTuple$ and get an estimated game $\hat{\GameTuple}$. 
    \item From $\hat{\GameTuple}$ we can define a distribution over estimated games as follows \enrique{complete this}. 
    \item Run GAMBIT on a game drawn from the above defined distribution. Get a list of equilibria. Record the value of the one with minimum welfare (or whatever objective function, really). 
    \item Repeat and average across all repetitions!
\end{enumerate}

\subsection{Poker Game}

We consider the following 2-player poker game. The game is played using one deck of standard 52-cards. A card has exactly one suit and one rank. A suit is a member of set $\{C, D, H, S\}$, where $C$ stands for Clubs, $D$ for Diamonds, $H$ for Hearts, and $D$ for Spades. A rank is a member of set $\{2, 3, 4, 5, 6, 7, 8, 9, 10, J, Q, K, A\}$, where $J$ stands for Jack, $Q$ for Queen, $K$ for King, and $A$ for ace. A card with suite $X$ and rank $Y$ can be modeled as pair $(X, Y) \in \{C, D, H, S\} \times \{2, 3, 4, 5, 6, 7, 8, 9, 10, J,\allowbreak Q, K, A\}$. A hand consists of five cards from the deck. A random hand consists of 5 randomly chosen cards from the deck. In this game, all cards are selected without replacement. 

The game proceeds as follows. First, each player receives a random hand. After receiving their hands, players must simultaneously discard one card from their hand. Second, one random card is selected out of the remaining 42 cards in the deck. This card then completes a hand with the remaining four cards for each player. Finally, using the standard hand ranking\footnote{\url{https://en.wikipedia.org/wiki/List_of_poker_hands}} the winner with the highest-ranked hand wins and makes a utility of +1 while the other player makes a utility of -1.

We can model a round of this poker game in the language of our work as follows. Fix a hand for each player, say $(C_1^\PlayerIndex, C_2^\PlayerIndex, C_3^\PlayerIndex, C_4^\PlayerIndex, C_5^\PlayerIndex)$ where $C_j^\PlayerIndex$ is the $j$th card of player $\PlayerIndex \in \{1, 2\}$. Player $\PlayerIndex$'s pure strategy set is $\{C_1^\PlayerIndex, C_2^\PlayerIndex, C_3^\PlayerIndex, C_4^\PlayerIndex, C_5^\PlayerIndex\}$, i.e., the player can discard any card in his hand. The set of conditions consists of the cards remaining in the deck after removing the cards from the players' hands. Mathematically, $\ConditionSpace = \{C, D, H, S\} \times \{2, 3, 4, 5, 6, 7, 8, 9, 10, J,\allowbreak Q, K, A\} \setminus \{(C_1^\PlayerIndex, C_2^\PlayerIndex, C_3^\PlayerIndex, C_4^\PlayerIndex, C_5^\PlayerIndex)\}_{i=1,2}$. Note that given a condition $\ConditionValue \in \ConditionSpace$ (i.e., a card from the remaining deck) and the players' hands, each player's utility function is well-defined, as explained above.

Given a distribution $\ConditionDistribution$ over $\ConditionSpace$, we have a well-defined expected normal-form game. A player's expected utility function is given by the expected utility over the remaining deck under distribution $\ConditionDistribution$. As is normal of poker games, we experiment with $\ConditionDistribution$ being the uniform distribution over cards in the remaining deck. For concreteness, the expected utility is given by 

$$
\Expwrt{\ConditionValue \distributed \ConditionDistribution}{\Utility_\PlayerIndex(\StratProfile; \ConditionValue)} = 
\sum\limits_{
y \in 
\{C, D, H, S\} 
\times \{2, 3, 4, 5, 6, 7, 8, 9, 10, J, Q, K, A\} 
\setminus 
\{(C_1^\PlayerIndex, 
C_2^\PlayerIndex, 
C_3^\PlayerIndex, 
C_4^\PlayerIndex, 
C_5^\PlayerIndex)\}_{i=1,2}} 
(-1)^{{\bf 1}[\PlayerIndex \text{ lose}]} Pr[y]
$$

We can now denote by 
$\GameTuple(\{
(C_1^\PlayerIndex, 
C_2^\PlayerIndex, 
C_3^\PlayerIndex, 
C_4^\PlayerIndex, 
C_5^\PlayerIndex)\}_{i=1,2})$ 
the a game where players receive the corresponding hands. Note that there is natural distributions over these game, namely, the uniform distribution over pairs of hands. This distribution would then be our experimental distribution. An experiment would first consist of drawing a game from this experimental distribution and then run PSP with distribution $\ConditionDistribution$ over the remaining deck.

\subsection{Black-box optimization}

For simplicity, let's consider a 2-player.

Player 0 has a set of $n$ pure strategies, $\StrategySet_0 = \{\Strategy_{01}, \Strategy_{02}, \ldots, \Strategy_{0n} \}$.

Player 1 has a set of $m$ pure strategies, $\StrategySet_1 = \{\Strategy_{11}, \Strategy_{12}, \ldots, \Strategy_{1m} \}$. 

Denote by $\vec{\Strategy} = 
\left(
\pi[\Strategy_{0}], 
\pi[\Strategy_{1}]
\right)$ a mixed-strategy profile. 

Our objective function is to minimize $\Regret(\vec{\Strategy})$. Let's write this objective in detail.

\begin{center}
\begin{tabular}{llll}
$\Regret(\vec{\Strategy})$ 
    & = & 
    $\max(  \Regret_0(\vec{\Strategy}), 
            \Regret_1(\vec{\Strategy}) 
        )$ 
    & Definition of regret over all players.\\\\

    & = & 
    $\max\limits_{\PlayerIndex\in \{0, 1\}}\{
            \max\limits_{\pi[\Strategy'_{\PlayerIndex}]}
                \Utility_\PlayerIndex(\pi[\Strategy'_{\PlayerIndex}], \pi[\Strategy_{(1-\PlayerIndex)}])
            -
            \Utility_\PlayerIndex(\vec{\Strategy})
        \}$ 
    & Definition of regret of a player at $\vec{\Strategy}$.\\\\
\end{tabular}
\end{center}

This objective requires solving two inner optimization problems, one for each players' best response fixing the other player's mix. We can simplify that inner optimization problem by observing that, given a fixed mix of a player, the other player's best response set contains a pure strategy. Thus, the inner optimization problem reduces to a lookup among pure strategies instead of a linear program over mixed strategies. 

Abusing notation, we write $\Strategy_{ij}$ for the degenerate mix where the $i$th player plays its $j$th strategy.
The full objective function is then,

$$\Regret(\vec{\Strategy}) = 
\max\{
        \max\limits_{0 \le j \le n}
            \Utility_0(\Strategy_{0j}, \pi[\Strategy_{1}]) 
        -
            \Utility_0(\vec{\Strategy}), \quad 
        \max\limits_{0 \le j \le m}
            \Utility_1(\Strategy_{1j}, \pi[\Strategy_{0}])
        -
        \Utility_1(\vec{\Strategy})
\}
$$

Our optimization problem is then,

$$
\min\limits_{\vec{\Strategy}} 
\max\{
        \max\limits_{0 \le j \le n}
            \Utility_0(\Strategy_{0j}, \pi[\Strategy_{1}]) 
        -
            \Utility_0(\vec{\Strategy}), \quad 
        \max\limits_{0 \le j \le m}
            \Utility_1(\Strategy_{1j}, \pi[\Strategy_{0}])
        -
        \Utility_1(\vec{\Strategy})
\}
$$

\subsection{Sanity Check Experiments}

A first sanity check is to compute equilibria by minimizing regret using a black-box optimizer. 
I implemented prisoner's dilemma and then used a Bayesian optimizer from skopt's library to minimize the above objective. Bayesian optimization is usually unconstrained, an this is the case in skopt. Thus, I introduce penalties from deviation from valid probability distributions in the objective function. 

Over 100 runs of the optimizer with EI acquisition function, 50 calls to the optimizer and 10 random restart, the optimizer always but once found the unique equilibrium. For BoS, the result was similar, but the solver failed a bit more often and sometimes found one pure Nash, and sometimes the other. The solver never found the mixed Nash. 

However, for rock, paper, scissors, the solver never found the unique mixed Nash, and only sometimes did it get close to it. I am not very optimistic this approach would work for our experiments. 

Update: after a bit more hyper-parameter optimization (i.e., by hand trying different parameters of the solver), slightly better results occur. But, the task becomes computationally taxing, which is not surprising. 
\fi

\section{Conclusion}
\label{sec:conc}

\if 0
To the best of our knowledge, ours is the first 
EGTA methodology that extends beyond the problem of learning equilibria in simulation-based games to learning more generic game properties.
In particular, our generic Lipschitz-continuity based framework can learn regret, adversarial values, and welfare, and identify
witnesses that optimize them.
Further, our framework can easily be extended to other properties, and is thus applicable in empirical mechanism design, and other domains where such properties would need to be estimated.
\fi

The main contributions of this paper are twofold.
First, we present a simple yet general framework in which to analyze game properties, which recovers known results on the learnability of game-theoretic equilibria.
We show that other properties of interest also fit into our framework, e.g., power-mean and Gini social welfare.
These choices show the flexibility and ease with which our methodology may be applied.
Furthermore, by analyzing these simple choices, we observe that not all game properties are well behaved.


Second, we develop a novel algorithm that learns well-behaved game properties with provable efficiency guarantees.
The key new insights that our paper offers are efficiency bounds with a dependence on utility variance, not just its range, as shown previously by \citet{tuyls2020bounds}.
While \citet{areyan2020improved} also employed variance-aware bounds in 
their regret-based pruning algorithm, they did not provide any efficiency guarantees. 
We improve upon their variance-aware bounds by deriving a novel sub-gamma tail bound, and we achieve efficiency guarantees via a novel progressive sampling schedule.
In future work, these improvements could be combined with \citeauthor{areyan2020improved}'s  regret-based pruning
to yield still tighter efficiency guarantees when the property of interest is specifically regret: i.e., equilibria.


While our algorithms are nearly optimal for uniformly approximating simulation-based games, which is \emph{sufficient} for learning well-behaved game properties, \citeauthor{areyan2020improved}'s regret-based pruning algorithm, which learns non-uniform approximations, demonstrates that doing so is not always \emph{necessary}.
In future work, we plan to develop additional pruning criteria that increase
efficiency by 
pruning any utilities that are not relevant to learning the property at hand.
We are hopeful that a generic algorithm can be developed that prunes efficiently while maintaining the generality of our theory, rather than resorting to multiple algorithms, each of which learns a specific property.


\if 0

In conclusion, we note that EGTA has been applied in a variety of practical settings for which simulators are readily available, including trading agent analyses in supply chains~\citep{vorobeychik2006empirical,jordan2007empirical}; ad auctions~\citep{jordan2010designing,DBLP:conf/uai/ViqueiraCMG19}; and energy markets~\citep{ketter2013autonomous}; designing network routing protocols~\citep{wellman2013analyzing}; strategy selection in real-time games~\citep{tavares2016rock}; and the dynamics of reinforcement learning algorithms, like AlphaGo \citep{tuyls2020bounds}.
Our proposed new EGTA methodology has the potential to impact many of these applications, some of which are critical for society.

\fi

\if 0
\begin{acks}
This work was supported in part by NSF Grant CMMI-1761546.
\end{acks}
\fi

\pagebreak[3]
\FloatBarrier

\cyrus{Casing in references, pages incorrect}

{
\bibliographystyle{ACM-Reference-Format}

\renewcommand*{\bibfont}{\small}

\setlength{\bibsep}{0.95ex}
\bibliography{bibliography}
}

\pagebreak[4]

\iftrue
\appendix
\onecolumn

\renewcommand\thefigure{\thesection.\arabic{figure}}
\renewcommand\thealgorithm{\thesection.\arabic{algorithm}}

\section{Proofs of Lipschitz Properties in Games}
\label{app:lipschitz_proofs}

\subsection{Lipschitz Calculus}

\begin{theorem}[Lipschitz Facts]
\label{thm:lipschitz-facts}
%
1.~Linear Combination: If $g_{1:m}$ are $\lambda_{1:m}$-Lipschitz, all w.r.t.\ the same two norms, and $\bm{w} \in \R^{m}$, then the function $x \mapsto \bm{w} \cdot g(x)$
is $\sum_{i=1}^{m} \lambda_{i} \abs{\bm{w}_{i}}$-Lipschitz.
\if 0
\item Scalar Multiplication: Given a scalar value $\alpha \in \R$, the function $f: A \to A$ s.t.\ $f(x) = \alpha x$ is $|\alpha|$-Lipschitz w.r.t.\ any norm $\norm{\cdot}_{A}$ over $A$.

\item Summation: If $g_{1:m}$ are $\lambda_{1:m}$-Lipschitz, all w.r.t.\ the same two norms, then the function $x \mapsto \sum_{i=1}^{m} g_{i} (x)$ is $\norm{\lambda}_{1}$-Lipschitz. 

\item Convex Combination: If $g_{1:m}$ are each $\lambda$-Lipschitz, all w.r.t.\ the same two norms, and probability vector $\bm{w} \in [0, 1]^{m}$ s.t. $\norm{\bm{w}}_{1} = 1$, then the function $x \mapsto \sum_{i=1}^{m} \bm{w}_{i} g_{i} (x)$ is $\lambda$-Lipschitz.
\fi
%
2.~Composition: If $h: A \to B$ and $g: B \to C$ are $\lambda_{h}$- and $\lambda_{g}$-Lispchitz w.r.t.\ norms $\norm{\cdot}_{A}$, $\norm{\cdot}_{B}$, \& $\norm{\cdot}_{C}$, then
$(g \circ h): A \to C$ is $\lambda_{h} \lambda_{g}$ Lispchitz w.r.t.\ $\norm{\cdot}_{A}$ and $\norm{\cdot}_{C}$.
%
3.~The infimum and supremum operations are 1-Lipschitz continuous: i.e.,
if for all $x \in \mathcal{X}$, $f(x; \Utility)$ is $\lambda$-Lipschitz in $\Utility$ then $\inf_{x \in \mathcal{X}} f(x; \Utility)$ is also $\lambda$-Lipschitz in $\Utility$.
Likewise, for the supremum.
\end{theorem}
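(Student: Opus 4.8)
The plan is to prove the three facts in sequence, each as a direct consequence of the definition of Lipschitz continuity combined with the triangle inequality; none of them requires more than a few lines, so I will mostly just indicate the governing inequality in each case.

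For the linear-combination rule, I would fix inputs $x, x'$ in the common domain and, using that the target space here is $\R$ with the absolute value, write $\abs{\bm{w} \cdot g(x) - \bm{w} \cdot g(x')} = \abs{\sum_{i=1}^{m} \bm{w}_{i} \bigl( g_{i}(x) - g_{i}(x') \bigr)}$, apply the triangle inequality to move the sum outside, and then invoke the $\lambda_{i}$-Lipschitz bound $\abs{g_{i}(x) - g_{i}(x')} \le \lambda_{i} \norm{x - x'}$ on each summand, obtaining $\sum_{i=1}^{m} \abs{\bm{w}_{i}} \lambda_{i} \norm{x - x'}$, which is the claimed constant. (This implicitly uses that the $g_{i}$ are scalar-valued, which is exactly the setting in which the paper applies the rule.) For composition, I would simply chain the two bounds: for $x, x' \in A$, $\norm{g(h(x)) - g(h(x'))}_{C} \le \lambda_{g} \norm{h(x) - h(x')}_{B} \le \lambda_{g} \lambda_{h} \norm{x - x'}_{A}$, where the first inequality is the $\lambda_{g}$-Lipschitz property of $g$ applied at the points $h(x), h(x') \in B$, and the second is the $\lambda_{h}$-Lipschitz property of $h$.

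For the infimum/supremum rule, the key observation is the elementary fact that $\bigl\lvert \inf_{x} a(x) - \inf_{x} b(x) \bigr\rvert \le \sup_{x} \abs{a(x) - b(x)}$ for any bounded real-valued $a, b$ on a common index set, and likewise for suprema. I would establish this by noting that for every $x$ we have $a(x) \le b(x) + \sup_{x'} \abs{a(x') - b(x')}$; taking the infimum over $x$ on the left gives $\inf_{x} a(x) \le b(x) + \sup_{x'} \abs{a(x') - b(x')}$ for all $x$, hence $\inf_{x} a(x) \le \inf_{x} b(x) + \sup_{x'} \abs{a(x') - b(x')}$, and swapping the roles of $a$ and $b$ yields the reverse inequality. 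Applying this with $a(\cdot) = f(\cdot; \Utility)$ and $b(\cdot) = f(\cdot; \Utility')$, and using that each $f(x; \cdot)$ is $\lambda$-Lipschitz (so $\sup_{x} \abs{f(x; \Utility) - f(x; \Utility')} \le \lambda \norm{\Utility - \Utility'}_{\infty}$), gives $\bigl\lvert \inf_{x} f(x; \Utility) - \inf_{x} f(x; \Utility') \bigr\rvert \le \lambda \norm{\Utility - \Utility'}_{\infty}$; the supremum case is identical after replacing $f$ by $-f$.

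There is no genuine obstacle here; the only points requiring a little care are to run the infimum argument \emph{without} assuming the infimum is attained — which is why I phrase it through the two-sided $\sup_{x} \abs{a(x) - b(x)}$ bound rather than picking a minimizer — and to confirm that the quantities involved are finite, which holds in all cases of interest because utilities, and hence all the game properties considered in the paper, are bounded.
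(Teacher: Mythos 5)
Your proof is correct, and it is exactly the ``straightforward consequence of the definition'' that the paper has in mind: the paper itself gives no proof of \Cref{thm:lipschitz-facts}, instead deferring to a citation of \citet{heinonen2005lectures}, so your argument simply supplies the omitted standard details (triangle inequality for the linear combination, chaining for composition, and the two-sided bound $\bigl\lvert \inf_{x} a(x) - \inf_{x} b(x) \bigr\rvert \le \sup_{x} \abs{a(x) - b(x)}$ for the infimum/supremum rule). Your care in avoiding the assumption that the infimum is attained, and in reducing the supremum case to the infimum case via negation, is exactly right and matches how the paper later uses these facts in \Cref{thm:LipschitzProperties}.
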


\if 0
The property $f$ that computes a convex combination of utilities is 1-Lipschitz by the linear combination rule (Thm.~\ref{thm:lipschitz-facts}), because utilities are 1-Lipschitz in themselves.
Consequently, any findings about the Lipschitz continuity of game properties immediately apply to games with mixed strategies, because any $\lambda$-property $g$ of a game $\Utility$ may be composed with $f$ to arrive at a property $g \circ f$ of the mixed game $\Utility^{\diamond}$, which, by the composition rule  (Thm.~\ref{thm:lipschitz-facts}), is then $\lambda$-Lipschitz.
\fi

\subsection{Mixed Utilities are Lipschitz Continuous}

\begin{lemma}
The functional $f: \MixedStratProfileSpace \times (\StratProfileSpace \to \R^{\NumberOfPlayers}) \to \R$ that computes a player's utility given a mixed strategy,
e.g., $\Utility^{\diamond}_{\PlayerIndex} (\StratProfile)$ for some $\PlayerIndex \in \SetOfPlayers$,
is 1-Lipschitz continuous in utilities.
\label{lem:mixed}
\end{lemma}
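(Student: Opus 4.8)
The plan is to observe that for any fixed mixed strategy profile $\StratProfile \in \MixedStratProfileSpace$, the quantity $\Utility^{\diamond}_{\PlayerIndex}(\StratProfile)$ is, by definition of expected utility, a convex combination of the pure-profile utilities $\Utility_{\PlayerIndex}(\bm{a})$ over $\bm{a} \in \StratProfileSpace$, with coefficients $\bm{w}_{\bm{a}}(\StratProfile) \doteq \prod_{q \in \SetOfPlayers} \StratProfile_q(a_q)$. These coefficients are nonnegative and sum to $1$, since each $\StratProfile_q$ is a distribution over $\StrategySet_q$ and the profile space factorizes. First I would write out $f(\StratProfile; \Utility) = \sum_{\bm{a} \in \StratProfileSpace} \bm{w}_{\bm{a}}(\StratProfile)\,\Utility_{\PlayerIndex}(\bm{a})$ explicitly, so that $f(\StratProfile; \cdot)$ is manifestly a linear functional of the utility vector with $\ell_1$-norm of coefficients equal to $1$.

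Next I would invoke the Lipschitz calculus (\Cref{thm:lipschitz-facts}). Each coordinate map $\Utility \mapsto \Utility_{\PlayerIndex}(\bm{a})$ is $1$-Lipschitz with respect to $\norm{\cdot}_\infty$ on utilities, because it is a single coordinate projection. Applying the linear-combination rule with weights $\bm{w}_{\cdot}(\StratProfile)$ then gives that $\Utility \mapsto f(\StratProfile; \Utility)$ is $\bigl(\sum_{\bm{a}} 1 \cdot \abs{\bm{w}_{\bm{a}}(\StratProfile)}\bigr)$-Lipschitz, i.e.\ $1$-Lipschitz, uniformly in $\StratProfile$. Concretely, for compatible games $\Utility, \Utility'$,
\[
\abs{f(\StratProfile; \Utility) - f(\StratProfile; \Utility')}
= \Bigl\lvert \ssum_{\bm{a} \in \StratProfileSpace} \bm{w}_{\bm{a}}(\StratProfile)\bigl(\Utility_{\PlayerIndex}(\bm{a}) - \Utility'_{\PlayerIndex}(\bm{a})\bigr) \Bigr\rvert
\le \ssum_{\bm{a} \in \StratProfileSpace} \bm{w}_{\bm{a}}(\StratProfile)\,\norm{\Utility - \Utility'}_\infty
= \norm{\Utility - \Utility'}_\infty .
\]
Finally, taking the supremum over $\StratProfile \in \MixedStratProfileSpace = \mathcal{X}$ yields $\norm{f(\cdot; \Utility) - f(\cdot; \Utility')}_\infty \le \norm{\Utility - \Utility'}_\infty$, which is exactly the statement that $f$ is a $1$-Lipschitz property in the sense of \Cref{def:lipschitz}.

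I do not anticipate a genuine obstacle here; the only thing that requires care is making the convex-combination structure of the mixed utility fully explicit (the product form of the weights over the factored profile space, and the fact that they sum to one), after which everything follows from the linear-combination rule. An alternative, equivalent route would be to apply the $1$-Lipschitz convex-combination special case of \Cref{thm:lipschitz-facts} directly, but spelling out the triangle-inequality bound above is cleanest and self-contained. This lemma is then the ingredient that, composed via the composition rule, lets any $\lambda$-Lipschitz property of the pure game lift to a $\lambda$-Lipschitz property of the mixed game $\Utility^{\diamond}$.
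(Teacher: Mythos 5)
Your proposal is correct and takes essentially the same route as the paper's proof, which likewise observes that the mixed utility is a convex combination of pure-profile utilities and invokes the linear-combination rule of the Lipschitz calculus (\Cref{thm:lipschitz-facts}) together with the fact that utilities are $1$-Lipschitz in themselves. You merely make the product-form weights and the triangle-inequality step explicit, which the paper leaves implicit.
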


\begin{proof}
The functional $f$ converts a player's utility at a mixed strategy profile into the expected value of their utility over pure strategy profiles, given the mixture.
This operation is $1$-Lipschitz, because taking an expectation is a convex combination, which is 1-Lipschitz by the linear combination rule, because utilities are 1-Lipschitz themselves.
\end{proof}

\subsection{Lipschitz Continuous Game Properties}




\begin{restatable}[Lipschitz Properties]{theorem}{thmLipschitzProperties}
\label{thm:LipschitzProperties}
1. The Gini social welfare function is 1-Lipschitz:
i.e.,
\[
\sup_{\StratProfile \in \StratProfileSpace}
\abs{ \Gini_{\bm{w}^{\downarrow}} (\Strategy; \Utility) - \Gini_{\bm{w}^{\downarrow}} (\Strategy; \Utility') } \leq \norm{\Utility - \Utility'}_{\infty}
\enspace .
\]
2. Given a player $\PlayerIndex \in \SetOfPlayers$,
their adversarial value is 1-Lipschitz:
i.e.,
\[
\sup_{\Strategy \in \StrategySet_{\PlayerIndex}}
\abs{ \AdversarialValue_{\PlayerIndex} (\Strategy; \Utility) - \AdversarialValue_{\PlayerIndex} (\Strategy; \Utility') } \leq \norm{\Utility - \Utility'}_{\infty}
\enspace .
\]
3. Pure regret is 2-Lipschitz:
i.e.,
\[
\sup_{\StratProfile \in \StratProfileSpace}
\abs{ \Regret (\StratProfile; \Utility) - \Regret (\StratProfile; \Utility') } \leq 2 \norm{\Utility - \Utility'}_{\infty}
\enspace .
\]
\cyrus{Why no enumerate?}
\end{restatable}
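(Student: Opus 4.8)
The plan is to obtain all three statements as direct consequences of the Lipschitz calculus of \Cref{thm:lipschitz-facts}, supplemented by one auxiliary observation: sorting, as a map $\bm{x} \mapsto \bm{x}^{\uparrow}$ on $\R^{\NumberOfPlayers}$, is $1$-Lipschitz with respect to $\norm{\cdot}_{\infty}$. This follows from monotonicity of order statistics: if $\norm{\bm{x} - \bm{y}}_{\infty} \le \epsilon$ then $y_i - \epsilon \le x_i \le y_i + \epsilon$ for every coordinate $i$, so the $k$-th smallest entry of $\bm{x}$ lies between the $k$-th smallest entries of $\bm{y} - \epsilon \bm{1}$ and $\bm{y} + \epsilon \bm{1}$, which are $\bm{y}^{\uparrow}_{k} - \epsilon$ and $\bm{y}^{\uparrow}_{k} + \epsilon$; hence $\norm{\bm{x}^{\uparrow} - \bm{y}^{\uparrow}}_{\infty} \le \epsilon$. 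In every case the elementary ``atom'' is the evaluation map $\Utility \mapsto \Utility_{\PlayerIndex}(\StratProfile)$, which is $1$-Lipschitz from the space of utility functions (with $\norm{\cdot}_{\infty}$) to $\R$, since $\abs{\Utility_{\PlayerIndex}(\StratProfile) - \Utility'_{\PlayerIndex}(\StratProfile)} \le \norm{\Utility - \Utility'}_{\infty}$. I would fix the index argument of the property (a strategy profile $\StratProfile$, or a strategy $\tilde{\Strategy}$), show the resulting map in $\Utility$ is Lipschitz with the claimed constant, and then note that since the bound is uniform over that index, taking the supremum over the index set as in \Cref{def:lipschitz} preserves it.

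For \emph{Gini social welfare}, fix $\StratProfile$ and write $\Gini_{\bm{w}^{\downarrow}}(\StratProfile; \Utility) = \bm{w}^{\downarrow} \cdot \Utility^{\uparrow}(\StratProfile)$ as a composition of: the $1$-Lipschitz evaluation $\Utility \mapsto \Utility(\StratProfile) \in \R^{\NumberOfPlayers}$; the $1$-Lipschitz sort $\bm{x} \mapsto \bm{x}^{\uparrow}$; and the linear functional $\bm{x} \mapsto \bm{w}^{\downarrow} \cdot \bm{x}$, which by the linear-combination rule is $\norm{\bm{w}^{\downarrow}}_{1} = 1$-Lipschitz because $\bm{w}^{\downarrow} \in \triangle^{\NumberOfPlayers}$. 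The composition rule then gives a $1$-Lipschitz map, i.e.\ $\abs{\Gini_{\bm{w}^{\downarrow}}(\StratProfile; \Utility) - \Gini_{\bm{w}^{\downarrow}}(\StratProfile; \Utility')} \le \norm{\Utility - \Utility'}_{\infty}$ for every $\StratProfile$, and taking the supremum over $\StratProfile$ yields claim~1. For \emph{adversarial value}, fix $\PlayerIndex$ and $\tilde{\Strategy} \in \StrategySet_{\PlayerIndex}$; each map $\Utility \mapsto \Utility_{\PlayerIndex}(\StratProfile)$ with $\StratProfile_{\PlayerIndex} = \tilde{\Strategy}$ is $1$-Lipschitz, and $\AdversarialValue_{\PlayerIndex}(\tilde{\Strategy}; \cdot)$ is the infimum of this family over $\{\StratProfile : \StratProfile_{\PlayerIndex} = \tilde{\Strategy}\}$, hence $1$-Lipschitz by the infimum rule; the supremum over $\tilde{\Strategy}$ gives claim~2. (Restricting the opponents to pure strategies in the definition of $\AdversarialValue_{\PlayerIndex}$ is without loss of generality, since a linear function over a product of simplices attains its minimum at a vertex.)

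For \emph{regret}, fix $\StratProfile$ and $\PlayerIndex$. The map $\Utility \mapsto \sup_{\StratProfile' \in \Adjacent_{\PlayerIndex, \StratProfile}} \Utility_{\PlayerIndex}(\StratProfile')$ is $1$-Lipschitz, being a supremum of $1$-Lipschitz evaluation maps (supremum rule), and $\Utility \mapsto -\Utility_{\PlayerIndex}(\StratProfile)$ is $1$-Lipschitz; by the linear-combination rule with weight vector $(1, -1)$, whose $\ell_{1}$-norm is $2$, the sum $\Regret_{\PlayerIndex}(\StratProfile; \Utility)$ is $2$-Lipschitz. Then $\Regret(\StratProfile; \Utility) = \max_{\PlayerIndex \in \SetOfPlayers} \Regret_{\PlayerIndex}(\StratProfile; \Utility)$ is a supremum over the finite set $\SetOfPlayers$ of $2$-Lipschitz functions, hence $2$-Lipschitz; taking the supremum over $\StratProfile$ gives claim~3. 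The only ingredient that is not an immediate invocation of \Cref{thm:lipschitz-facts} is the non-expansiveness of sorting in $\norm{\cdot}_{\infty}$ needed for Gini welfare, so that is the one step I expect to spell out in full; the rest is bookkeeping. A secondary point to be careful about is the order of the two suprema --- the one defining each property (over $\StratProfile'$, or over $\SetOfPlayers$) and the outer one over the index set in \Cref{def:lipschitz} --- but since every pointwise bound is the same constant multiple of $\norm{\Utility - \Utility'}_{\infty}$, the outer supremum introduces nothing new.
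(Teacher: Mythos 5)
Your proposal is correct, and parts~2 and~3 (adversarial value and regret) follow essentially the same decomposition as the paper: evaluation maps are $1$-Lipschitz, the infimum/supremum and max rules preserve the constant, and the difference of two $1$-Lipschitz maps is $2$-Lipschitz. The genuine divergence is in part~1. You handle Gini welfare by proving an auxiliary lemma --- that the sorting map $\bm{x} \mapsto \bm{x}^{\uparrow}$ is non-expansive in $\norm{\cdot}_{\infty}$, via monotonicity of order statistics --- and then composing evaluation, sort, and the $1$-Lipschitz linear functional $\bm{x} \mapsto \bm{w}^{\downarrow} \cdot \bm{x}$. The paper instead avoids the sorting map entirely: it uses the rearrangement-inequality identity $\bm{w}^{\downarrow} \cdot \Utility^{\uparrow}(\StratProfile) = \inf_{\bm{w} \in \Pi(\bm{w}^{\downarrow})} \bm{w} \cdot \Utility(\StratProfile)$ (the decreasing weights paired with the increasing utilities minimize over all permutations of the weights), which expresses Gini welfare as an infimum of convex combinations of the \emph{unsorted} utilities, so that only the linear-combination and infimum rules of \Cref{thm:lipschitz-facts} are needed. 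The two routes are of comparable length: yours requires the order-statistics lemma but is arguably more transparent about where the $1$ comes from; the paper's stays strictly inside its own Lipschitz calculus at the cost of the permutation identity, which itself is a nontrivial (if standard) observation. Both are valid, and your treatment of the outer supremum over the index set is careful and matches the paper's ``as $\StratProfile$ was arbitrary'' step.
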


\begin{proof}
1. Fix an arbitrary strategy profile $\StratProfile \in \StratProfileSpace$.
Recall the definition of Gini social welfare, namely $\Gini_{\bm{w}^{\downarrow}} (\StratProfile; \Utility) \doteq 
\bm{w}^{\downarrow} \cdot \Utility^{\uparrow} (\StratProfile)$.
Because $\bm{w}^{\downarrow} \in \triangle^{\NumberOfPlayers}$ is decreasing and $\Utility^{\uparrow} (\StratProfile)$ is increasing, it holds that
$\Gini_{\bm{w}^{\downarrow}} (\StratProfile; \Utility)
=
\inf_{\bm{w} \in \Pi (\bm{w}^{\downarrow})} \bm{w} \cdot \Utility^{\uparrow} (\StratProfile) =
\inf_{\bm{w} \in \Pi (\bm{w}^{\downarrow})} \bm{w} \cdot \Utility (\StratProfile)$,
where $\Pi(\bm{w}^{\downarrow})$ is the set of all possible permutations of 
$\bm{w}^{\downarrow}$.
Now, since $\norm{\bm{w}^{\downarrow}}_{1} = 1$, the dot product is a convex combination, which 
is $1$-Lipschitz.
Taking an infimum of $1$-Lipschitz functions is also $1$-Lipschitz.
Thus, by composition,  $\Gini_{\bm{w}^{\downarrow}} (\StratProfile; \Utility)$ is $1 \cdot 1 = 1$-Lipschitz.
As $\StratProfile$ was arbitrary, the result holds for all $\StratProfile \in \StratProfileSpace$.

2. Fix an arbitrary strategy $\Strategy \in \StrategySet_{\PlayerIndex}$.
By definition,
$\AdversarialValue_{\PlayerIndex} (\Strategy; \Utility) = \inf_{\StratProfile \in \StratProfileSpace \mid \StratProfile_{\PlayerIndex} = \Strategy} \Utility_{\PlayerIndex} (\StratProfile)$.
Utilities are 1-Lipschitz continuous in themselves.
Taking an infimum of $1$-Lipschitz functions is also $1$-Lipschitz.
Thus, by composition, $\AdversarialValue_{\PlayerIndex} (\Strategy; \Utility)$ is $1 \cdot 1 = 1$-Lipschitz.
As $\Strategy$ was arbitrary, the result holds for all $\Strategy \in \StrategySet_{\PlayerIndex}$.

3. Fix an arbitrary strategy profile $\StratProfile \in \StratProfileSpace$.
By definition,
\[
\Regret (\StratProfile; \Utility) = \max_{\PlayerIndex \in \SetOfPlayers} \Regret_{\PlayerIndex} (\StratProfile; \Utility) = \max_{\PlayerIndex \in \SetOfPlayers} \sup_{\StratProfile' \in \Adjacent_{\PlayerIndex, \StratProfile}} \left( \Utility_{\PlayerIndex} (\StratProfile') - \Utility_{\PlayerIndex} (\StratProfile) \right)
\enspace .
\]
Utilities are 1-Lipschitz continuous in themselves.
The difference between utilities is then 2-Lipschitz, because it can be construed as the sum of two 1-Lipschitz continuous functions, and negating a function does not change its Lipschitz constant.
Finally, as the maximum and supremum operations are both 1-Lipschitz,
by composition, $\Regret(\StratProfile; \Utility)$ is $1 \cdot 1 \cdot (1 + 1) = 2$-Lipschitz.
As $\StratProfile$ was arbitrary, the result holds for all $\StratProfile \in \StratProfileSpace$.
\end{proof}

\begin{figure*}
\begin{center}
\begin{tikzpicture}[xscale=2,yscale=1.2]

\node[rectangle,draw=black] (profiles) at (1, -1) {$\displaystyle\inf_{\StratProfile \in \StratProfileSpace \mid \StratProfile_{\PlayerIndex} = \Strategy} (\cdot) : 1 \times (1) = 1$};

\node[rectangle,draw=black] (utilities) at (1, -2) {$\displaystyle\Utility_{\PlayerIndex} (\StratProfile): 1$};

\draw[->] (profiles) -- (utilities);

\end{tikzpicture}
\quad\quad
\begin{tikzpicture}[xscale=2,yscale=1.4]

\node[rectangle,draw=black] (infimum) at (1, -1) {$\displaystyle \inf_{\bm{w} \in \pi (\bm{w}^{\downarrow})} (\cdot) : 1 \times (1) = 1$};

\node[rectangle,draw=black] (linear) at (1, -2) {$\displaystyle \bm{w} \cdot (\cdot): \sum_{i=1}^{\NumberOfPlayers} \bm{w}_{i} (1) = 1$};

\node[rectangle,draw=black] (utilities) at (1, -3) {$\displaystyle\Utility (\StratProfile): 1$};

\draw[->] (infimum) -- (linear);
\draw[->] (linear) -- (utilities);

\end{tikzpicture}
\quad\quad
\begin{tikzpicture}[xscale=2,yscale=1.1]

\node[rectangle,draw=black] (players) at (1, -1) {$\displaystyle\max_{\PlayerIndex \in \SetOfPlayers} (\cdot): 1 \times (2) = 2$};

\node[rectangle,draw=black] (profiles) at (1, -2) {$\displaystyle\sup_{\StratProfile' \in \Adjacent_{\PlayerIndex} (\StratProfile; \Utility)} (\cdot): 1 \times (2) = 2$};

\node[rectangle,draw=black] (minus) at (1, -2.9) {$\displaystyle (\cdot) - (\cdot) : (1) + (1) = 2$};

\node[rectangle,draw=black] (here) at (0.5, -3.6) {$\displaystyle\Utility_{\PlayerIndex} (\StratProfile'): 1$};

\node[rectangle,draw=black] (there) at (1.5, -3.6) {$\displaystyle\Utility_{\PlayerIndex} (\StratProfile): 1$};

\draw[->] (players) -- (profiles);

\draw[->] (profiles) -- (minus);

\draw[->] (minus) -- (here);
\draw[->] (minus) -- (there);

\end{tikzpicture}
\end{center}
\caption{\small
Abstract syntax tree derivation of the Lipschitz constant for $\AdversarialValue_{\PlayerIndex} (\Strategy; \Utility)$, left, 
$\Gini_{\bm{w}^{\downarrow}} (\StratProfile; \Utility)$, middle,
and $\Regret (\StratProfile; \Utility)$, right. The empty parentheses $(\cdot)$ indicate the arguments to the operators, while the filled parentheses (e.g., $(1)$) are filled with the respective arguments' Lipschitz constants.}
\label{fig:ast}
\end{figure*}

\subsection{Uniform Approximations of Game Properties}

We now show~\Cref{thm:lipschitz-properties}.

\thmExtremal*

\begin{proof}
To show (1),
first note the following:
\[
\sup_{x \in \mathcal{X}} f(x; \Utility) - \sup_{x \in \mathcal{X}} f(x; \Utility') \leq \sup_{x \in \mathcal{X}} (f(x; \Utility) - f(x; \Utility')) \leq \sup_{x \in \mathcal{X}} \abs{ f(x; \Utility) - f(x; \Utility')} \leq \lambda\epsilon \enspace.
\]
The final inequality follows from Observation~\ref{obs:lipschitz-uniform}.
Symmetric reasoning yields 
$\sup_{x \in \mathcal{X}} f(x; \Utility') - \sup_{x \in \mathcal{X}} f(x; \Utility) \leq \lambda \epsilon$, implying the statement for suprema.  
We obtain a similar result for infima, by noting that suprema of $-f$ are infima of $f$.

The infimum claim of (2) 
follows similarly from the supremum claim, and the supremum claim holds as follows, where the first inequality is due to Observation~\ref{obs:lipschitz-uniform}, the second, to the assumption that $\hat{x}$ is $\alpha$-optimal,
and the third, to (1):
\[
f (\hat{x}; \Utility) \geq f (\hat{x}; \Utility') - \lambda\epsilon \geq \sup_{x \in \mathcal{X}} f(x; \Utility') - \lambda\epsilon - \alpha \geq \sup_{x \in \mathcal{X}} f(x; \Utility) - 2 \lambda \epsilon - \alpha \enspace.
\]
\end{proof}

We now show \Cref{thm:lipschitz-properties-target-case}.

\thmWitness*

\begin{proof}
By Observation~\ref{obs:lipschitz-uniform},
\[
\forall x \in F_{0} (\Utility): \, f(x; \Utility) = v^* \implies f(x; \Utility') - v^* \leq \lambda \epsilon \implies x \in F_{\lambda \epsilon} (\Utility')
\]
\[
\forall x \in F_{\lambda \epsilon}(\Utility'): \, f (x; \Utility') - v^* \leq \lambda \epsilon \implies f (x; \Utility) - v^* \leq \lambda \epsilon + \lambda \epsilon = 2 \lambda \epsilon \implies x \in F_{2 \lambda \epsilon}(\Utility)
\]
\end{proof}

\section{Proofs of Finite-Sample Bounds for Expected NFGs}
\label{app:finite-sample-bounds}

We begin by showing ~\Cref{thm:Bennett}.

\thmBennett*

\begin{proof}
Bennett's Inequality (\citeyear{bennett1962probability}) states that


\begin{equation}
    \mathsmaller{
    {\Prob\!\left( \bar{X} \geq \Expect[\bar{X}] + \epsilon \right) \leq \exp\left(-\frac{\NumberOfSamples\sigma_{\GameTuple}^{\smash{2}}}{\UtilityRange_\GameTuple}h\left(\frac{\UtilityRange_\GameTuple \epsilon}{\sigma_{\GameTuple}^{\smash{2}}}\right)\right)}
    }
\label{eqn:sub-gamma}
\end{equation}
where $h(x)\doteq (1+x)\ln(1+x)-x$ for all $x\geq 0$. Substituting $-\bar{X}$ for $\bar{X}$, we get an identical reverse-tail bound, which when combined with a union bound yields
\begin{equation*}
    \mathsmaller{
    {\Prob\!\left( \abs{\bar{X} - \Expect[\bar{X}]} \geq \epsilon \right) \leq 2\exp\left(-\frac{\NumberOfSamples\sigma_{\GameTuple}^{\smash{2}}}{\UtilityRange_\GameTuple}h\left(\frac{\UtilityRange_\GameTuple \epsilon}{\sigma_{\GameTuple}^{\smash{2}}}\right)\right)}
    }
\end{equation*}
Taking the variance proxy $\sigma_{\Gamma}^2$ to be $\UtilityVariance_{\PlayerIndex}(\StratProfile)$ and the scale proxy $\UtilityRange_{\Gamma}$ to be $\UtilityRange$,
\sbhaskar{we derive Bennett's inequality, which bounds}{we obtain the following guarantee on } the deviation between $\Utility_{\PlayerIndex}(\StratProfile; \ConditionDistribution)$ and $\hat{\Utility}_{\PlayerIndex} (\StratProfile; \Samples)$ for a single $\smash{(\PlayerIndex, \StratProfile) \in \UtilityIndices}$ \sbhaskar{as}{:}
\begin{equation*}
    \mathsmaller{
    {\Prob\!\left( \abs{\hat{\Utility}_{\PlayerIndex} (\StratProfile; \Samples) - \Utility_{\PlayerIndex}(\StratProfile; \ConditionDistribution)} \geq \epsilon \right) \leq 2\exp\left(-\frac{\NumberOfSamples\UtilityVariance_{\PlayerIndex}(\StratProfile)}{\UtilityRange}h\left(\frac{\UtilityRange \epsilon}{\UtilityVariance_{\PlayerIndex}(\StratProfile)}\right)\right)}
    }
\end{equation*}
Thus, to arrive at an $\epsilon$-radius confidence interval in which
$\left\lvert \Utility_{\PlayerIndex}(\StratProfile; \ConditionDistribution) - \hat{\Utility}_{\PlayerIndex}(\StratProfile; \Samples) \right\rvert \le \epsilon$
with probability at least $1 -\delta$, \sbhaskar{we apply Bennett's inequality twice, each time using $\nicefrac{\delta}{2}$, to bound the upper and lower utility tails, which can be achieved with}{we upper bound the right hand side by $\delta$. This constraint is achieved with}

\sbhaskar{
\begin{equation}
\bm{\NumberOfSamples}_{\PlayerIndex}(\StratProfile) = \ln \frac{2}{\delta} \left(\frac{\UtilityRange}{3\epsilon} + \frac{\UtilityVariance_{\PlayerIndex}(\StratProfile)}{\epsilon^{2}} + \sqrt{\frac{2 \UtilityRange \UtilityVariance_{\PlayerIndex}(\StratProfile)}{3\epsilon^{3}} + \left( \frac{\UtilityVariance_{\PlayerIndex}(\StratProfile)^2}{\epsilon^2} \right)^{2}}\right) \leq 2\ln \frac{2}{\delta} \left( \frac{\UtilityRange}{3\varepsilon} + \frac{\UtilityVariance_{\PlayerIndex}(\StratProfile)}{\varepsilon^{2}}\right)
\end{equation}
}
{
\begin{equation}
    m^B(\epsilon, \delta; \UtilityRange, \UtilityVariance_{\PlayerIndex}(\StratProfile)) \doteq \frac{\UtilityRange^2\ln\left(\frac{2}{\delta}\right)}{\UtilityVariance_{\PlayerIndex}(\StratProfile) h\left(\frac{\UtilityRange\epsilon}{\UtilityVariance_{\PlayerIndex}(\StratProfile)}\right)} \leq 2\ln \frac{2}{\delta} \left( \frac{\UtilityRange}{3\varepsilon} + \frac{\UtilityVariance_{\PlayerIndex}(\StratProfile)}{\varepsilon^{2}}\right)
\label{eqn:mps}
\end{equation}
}
samples. \sbhaskar{}{Notice that $m^B$ is monotonically increasing and concave in $\UtilityVariance_{\PlayerIndex}(\StratProfile)$ (this can be confirmed by analyzing its first and second derivatives).} \bhaskar{Should we prove? Proofs are algebraically intensive, but conceptually direct (Second derivative is negative).}

To obtain a similar $\epsilon$-$\delta$ guarantee over strategy profile $\StratProfile$, namely $\norm{\Utility(\StratProfile; \ConditionDistribution) - \hat{\Utility}(\StratProfile; \Samples)}_{\infty} \le \epsilon$, it suffices (applying a union bound over $\NumberOfPlayers$ players) to draw at least

\sbhaskar{
\begin{equation}
\bm{\NumberOfSamples}(\StratProfile) = \max_{\PlayerIndex \in \SetOfPlayers} 2\ln \frac{2\NumberOfPlayers}{\delta} \left( \frac{\UtilityRange}{3\varepsilon} + \frac{\UtilityVariance_{\PlayerIndex}(\StratProfile)}{\varepsilon^{2}}\right) = 2\ln \frac{2\NumberOfPlayers}{\delta} \left( \frac{\UtilityRange}{3\varepsilon} + \frac{\norm{\UtilityVariance(\StratProfile)}_{\infty}}{\varepsilon^{2}}\right)
\end{equation}
}
{
\begin{equation}
    \max_{\PlayerIndex\in\SetOfPlayers} m^B(\epsilon, \nicefrac{\delta}{\NumberOfPlayers}; \UtilityRange, \UtilityVariance_{\PlayerIndex}(\StratProfile)) = m^B(\epsilon, \nicefrac{\delta}{\NumberOfPlayers}; \UtilityRange; \norm{\UtilityVariance(\StratProfile)}_\infty) \leq 2\ln \frac{2\NumberOfPlayers}{\delta} \left( \frac{\UtilityRange}{3\varepsilon} + \frac{\norm{\UtilityVariance(\StratProfile)}_{\infty}}{\varepsilon^{2}}\right)
\label{eqn:ms}
\end{equation}
}

samples.
Here, we must simulate a strategy profile until the utility indices of \emph{all\/} players are well-estimated.
Thus, the bottleneck is the worst-case (per-player) variance, which explains the use of the infinity norm over players.


Now note that the total \sbhaskar{sample}{data} complexity of $\epsilon$-$\delta$ estimating a game \sbhaskar{may then be bounded by}{is} the maximum \sbhaskar{sample}{data} complexity of $\epsilon$-$\delta$ estimating any strategy profile\sbhaskar{, which again by a union bound gives a probability $1-\delta$ guarantee:}{. The total data complexity is then}

\begin{align}
m 
&\leq \max_{\StratProfile \in \StratProfileSpace} 2\ln \frac{2\abs{\GameTuple}}{\delta} \left( \frac{\UtilityRange}{3\varepsilon} + \frac{\norm{\UtilityVariance (\StratProfile)}_{\infty}}{\varepsilon^{2}} \right) \\
&= 2\ln \frac{2\abs{\GameTuple}}{\delta}\left( \frac{\UtilityRange}{3\varepsilon} + \max_{\StratProfile \in \StratProfileSpace} \frac{\norm{\UtilityVariance(\StratProfile)}_{\infty}}{\varepsilon^{2}} \right) \\
&= 2\ln \frac{2\abs{\GameTuple}}{\delta}\left( \frac{\UtilityRange}{3\varepsilon} + \frac{\norm{\UtilityVariance}_{\infty}}{\varepsilon^{2}} \right)
\end{align}

\bhaskar{Replace above with}
\sbhaskar{}
{
\begin{equation}
    \max_{\StratProfile\in\StratProfileSpace} m^B(\epsilon, \nicefrac{\delta}{\abs{\GameTuple}}; \UtilityRange, \norm{\UtilityVariance(\StratProfile)}_\infty)) = m^B(\epsilon, \nicefrac{\delta}{\abs{\GameTuple}}; \UtilityRange; \norm{\UtilityVariance}_\infty) \leq 2\ln \frac{2\abs{\GameTuple}}{\delta} \left( \frac{\UtilityRange}{3\varepsilon} + \frac{\norm{\UtilityVariance}_{\infty}}{\varepsilon^{2}}\right)\enspace.
\end{equation}
}

\noindent
Similarly, to arrive at the \sbhaskar{simulation}{query} complexity, we simply sum rather than maximize over strategy profiles:
\begin{align}
m
&\leq \sum_{\StratProfile \in \StratProfileSpace} \max_{\PlayerIndex \in \SetOfPlayers} 2\ln \frac{2\abs{\GameTuple}}{\delta} \left( \frac{\UtilityRange}{3\varepsilon} + \frac{\UtilityVariance_{\PlayerIndex}(\StratProfile)}{\varepsilon^{2}} \right) \\
&= 2\ln \frac{2\abs{\GameTuple}}{\delta}\left( \frac{\UtilityRange \abs{\GameTuple}}{3\varepsilon} + \sum_{\StratProfile \in \StratProfileSpace} \max_{\PlayerIndex \in \SetOfPlayers} \frac{\UtilityVariance_{\PlayerIndex}(\StratProfile)}{\varepsilon^{2}} \right) \\
&= 2\ln \frac{2\abs{\GameTuple}}{\delta}\left( \frac{\UtilityRange \abs{\GameTuple}}{3\varepsilon} + \frac{\norm{\UtilityVariance}_{1,\infty}}{\varepsilon^{2}} \right) \enspace .
\end{align}
\bhaskar{Replace above with}
\sbhaskar{}{
\begin{align*}
    & & &\sum_{\StratProfile \in \StratProfileSpace}\left\lceil m^B(\epsilon, \nicefrac{\delta}{\abs{\GameTuple}}; \UtilityRange, \norm{\UtilityVariance(\StratProfile)}_\infty)\right\rceil\\
    &\le& &\abs{\StratProfileSpace} + \sum_{\StratProfile \in \StratProfileSpace}m^B(\epsilon, \nicefrac{\delta}{\abs{\GameTuple}}; \UtilityRange, \norm{\UtilityVariance(\StratProfile)}_\infty)\\
    &\le& &\abs{\StratProfileSpace} + \abs{\StratProfileSpace} m^B\left(\varepsilon, \nicefrac{\delta}{\abs{\GameTuple}}; \UtilityRange, \frac{1}{\abs{\StratProfileSpace}}\sum_{\StratProfile\in \StratProfileSpace}\norm{\UtilityVariance(\StratProfile)}_\infty\right)& &[{\textup{Jensen's Inequality + Concavity of $m^B$}}]\\
    &=& &\abs{\StratProfileSpace} + m^B(\varepsilon, \nicefrac{\delta}{\abs{\GameTuple}};c\abs{\StratProfileSpace}, \norm{\UtilityVariance}_{1, \infty})& &[{\textup{Definition of $m^B$ + Algebra}}]\\
    &\leq& &\abs{\StratProfileSpace} + 2\ln \frac{2\abs{\GameTuple}}{\delta}\left( \frac{\UtilityRange \abs{\StratProfileSpace}}{3\varepsilon} + \frac{\norm{\UtilityVariance}_{1,\infty}}{\varepsilon^{2}} \right)\enspace.
\end{align*}
}
\end{proof}

In order to show~\Cref{thm:eBennett}, we prove a more general theorem. ~\Cref{thm:eBennett} then follows through the application of a union bound over all utilities. We begin with the following Lemma:

\begin{lemma}
\label{lemma:variance_bound}
Let $X \doteq (X_1, \dots, X_m)$ be a vector of independent random variables each with finite variance, assume that $|X_i - \mathbb{E}[X_i]|\leq c$ almost surely for all $i$, and define $\Bar{X}\stackrel{.}{=} \frac{1}{m}\sum_{i=1}^m X_i$. Then $\hat{v}(X) \stackrel{.}{=} \frac{1}{m-1}\sum_{i=1}^m(X_i - \Bar{X})^2$ is $\left(1, \frac{c^2m}{4(m-1)^2}\right)$-self-bounding, and hence by \cite{boucheron2000sharp}, with probability at least $1 - \delta$ over the draw of $X$, it holds that
\begin{equation*}
    \Expect[\hat{v}(X)] \leq \Hat{v}(X) +  \frac{2c^2\ln\frac{1}{\delta}}{3m} + \sqrt{\left(\frac{1}{3} + \frac{1}{2\ln\frac{1}{\delta}}\right)\left(\frac{c^2\ln\frac{1}{\delta}}{m-1}\right)^2 + \frac{2c^2\Hat{v}(X)\ln\frac{1}{\delta}}{m}}\enspace.
\end{equation*}
Furthermore, we also have that $\hat{v}(X)$ is $\left(\frac{m}{m-1}, 0\right)$-self-bounding, and hence by \cite{boucheron2009concentration}, with probability at least $1 - \delta$ over the draw of $X$, it holds that
\begin{equation*}
    \hat{v}(X) \leq \Expect[\hat{v}(X)] + \frac{2m+1}{6(m-1)}\cdot\frac{c^2\ln\frac{3}{\delta}}{m} + \sqrt{\frac{2c^2v\ln\frac{3}{\delta}}{m-1}}\enspace.
\end{equation*}
\begin{proof}
Let $\textsf{Z}(X)\stackrel{.}{=}\hat{v}(X)$, and for each index $j$, define $\textsf{Z}_j(X)$ by
\begin{equation*}
    \textsf{Z}_j(X)\stackrel{.}{=}\frac{1}{m - 1}\sum_{i=1, i\neq j}^m(X_i - \Bar{X}_{\setminus j})^2
\end{equation*}
where $\Bar{X}_{\setminus j}$ denotes $\sum_{i=1,i\neq j} X_i$. We will first show that $\textsf{Z}_j(X) = \textsf{Z}(X) - \frac{1}{m}(X_j - \Bar{X}_{\setminus j})^2$, as this form comes in handy many times. Starting from the definition of $\textsf{Z}_j(X)$, we add and subtract $\frac{1}{m-1}(X_j - X_{\setminus j})^2$ to the entire expression, and then add and subtract $\Bar{X}$ to the argument of the sum, to obtain:
\begin{align*}
    \textsf{Z}_j(X)&= \frac{1}{m - 1}\left(\sum_{i=1}^m(X_i - \Bar{X}_{\setminus j})^2 - (X_j-\Bar{X}_{\setminus j})^2\right)\\
    &=\frac{1}{m - 1}\Bigg(\sum_{i=1}^m((X_i - \Bar{X}) + (\Bar{X} - \Bar{X}_{\setminus j}))^2 - (X_j-\Bar{X}_{\setminus j})^2\Bigg)\\
    &=\frac{1}{m-1}\Bigg(\sum_{i=1}^m(X_i-\Bar{X})^2 +2\sum_{i=1}^m(X_i-\Bar{X})(\Bar{X}-\Bar{X}_{\setminus j}) +\sum_{i=1}^m(\Bar{X}-\Bar{X}_{\setminus j})^2 -(X_j-\Bar{X}_{\setminus j})^2\Bigg).
\end{align*}
Since $\Bar{X} - \Bar{X}_{\setminus j}$ is independent of $i$ and $\sum_{i=1}^m(X_i - \Bar{X})=0$, we have that $2\sum_{i=1}^m(X_i-\Bar{X})(\Bar{X}-\Bar{X}_{\setminus j}) = 0$. This, in combination with the definition of $\textsf{Z}(X)$, gives us
\begin{align*}
    \textsf{Z}_j(X)&=\textsf{Z}(X) + \frac{1}{m - 1}\left[m(\Bar{X}-\Bar{X}_{\setminus j})^2 - (X_j-\Bar{X}_{\setminus j})^2\right].
\end{align*}
But we have that $\Bar{X} = \frac{1}{m}X_j + \frac{m - 1}{m}\Bar{X}_{\setminus j}$, which implies that
\begin{align}
    \textsf{Z}_j(X)\nonumber &= \textsf{Z}(X) + \frac{1}{m-1}\left[m\left(\frac{1}{m}X_j - \frac{1}{m}\Bar{X}_{\setminus j}\right)^2 - (X_j-\Bar{X}_{\setminus j})^2\right]\nonumber\\
    &=\textsf{Z}(X) + \frac{1}{m-1}\left[-\frac{m - 1}{m}\left(X_j - \Bar{X}_{\setminus j}\right)^2\right]\nonumber\\
    &=\textsf{Z}(X) - \frac{1}{m}\left(X_j - \Bar{X}_{\setminus j}\right)^2\label{ag_zj-equation}.
\end{align}
We will now show that $\textsf{Z}$ is a $\left(1, \frac{c^2m}{4(m-1)^2}\right)$-self-bounding function with scale $c^2/m$. Since Equation \ref{ag_zj-equation} implies that $\textsf{Z}(X) = \textsf{Z}_j(X) + \frac{1}{m}\left(X_j - \Bar{X}_{\setminus j}\right)^2$ for all indices $j$, and the definition of $X$ guarantees that $X_j-\Bar{X}_{\setminus j}\leq c$ for all indices $j$, we have that for all indices $j$, it must be the case that
\begin{equation*}
    \textsf{Z}_j(X)\leq \textsf{Z(X)}\leq \textsf{Z}_j(X)+\frac{c^2}{m}.
\end{equation*}
Finally, we will now show that
\begin{equation*}
    \sum_{i=1}^m(\textsf{Z}(X)-\textsf{Z}_j(X))\leq Z(X) + \frac{c^2m}{4(m-1)^2}.
\end{equation*}
By applying Equation \ref{ag_zj-equation}, we have
\begin{align*}
    \sum_{j=1}^m(\textsf{Z}(X)-\textsf{Z}_j(X)) &= \sum_{j=1}^m\left(\textsf{Z}(X)-\textsf{Z(X)}+\frac{1}{m}(X_j-\Bar{X}_{\setminus j})^2\right)\\
    &= \frac{1}{m}\sum_{j=1}^m(X_j-\Bar{X}_{\setminus j})^2
\end{align*}
But by the definition of $X$, it holds that $\Bar{X}_{\setminus j} = \frac{1}{m-1}\left(m\Bar{X} - X_j\right)$. Hence, we have that
\begin{align*}
\sum_{j=1}^m(\textsf{Z}(X)-\textsf{Z}_j(X)) &= \frac{1}{m}\sum_{j=1}^m\left(\frac{m}{m-1}X_j-\frac{m}{m-1}\Bar{X}\right)^2\\
&=\frac{m}{m-1}\frac{1}{m-1}\sum_{j=1}^m\left(X_j-\Bar{X}\right)^2\\
&=\frac{m}{m-1}\textsf{Z}(X)\\
&\leq \textsf{Z}(X) + \frac{c^2m}{4(m-1)^2}.
\end{align*}\enspace.
Therefore, we have shown that $\textsf{Z}(X) = \hat{v}(X)$ is both $\left(1, \frac{c^2m}{4(m-1)^2}\right)$-self-bounding and $\left(\frac{m}{m-1}, 0\right)$-self-bounding with scale $\frac{c^2}{m}$ in both cases.
\end{proof}
\end{lemma}

\begin{theorem}[General Bennett-Type Empirical-Variance Sensitive Finite-Sample Bounds]
\label{thm:generalEBennett}
Let $X_1, \dots, X_m$ be independent random variables with finite variance, and assume that $|X_i - \mathbb{E}[X_i]|\leq c$ almost surely for all $i$. Define $\Bar{X}\stackrel{.}{=}\frac{1}{m}\sum_{i=1}^mX_i$ and take
\begin{align*}
    \hat{v} &\stackrel{.}{=} \frac{1}{m-1}\sum_{j=1}^m (X_i - \Bar{X})^2\\
    \epsilon_{\hat{v}} &\stackrel{.}{=} \frac{2c^2\ln\frac{3}{\delta}}{3m} + \sqrt{\left(\frac{1}{3} + \frac{1}{2\ln\frac{3}{\delta}}\right)\left(\frac{c^2\ln\frac{3}{\delta}}{m-1}\right)^2 + \frac{2c^2\hat{v}\ln\frac{3}{\delta}}{m}}\\
    \epsilon_\mu &\stackrel{.}{=} \frac{c\ln\frac{3}{\delta}}{3m} + \sqrt{\frac{2(\hat{v} + \epsilon_{\hat{v}})\ln\frac{3}{\delta}}{m}}\enspace.
\end{align*}
Then with probability at least $1-\delta$ over the draw of $X$, we have that $|\Bar{X} - \Expect\Bar{X}|\leq \epsilon_\mu$. Furthermore, if $\delta \leq 0.03$, then for any draw of $X$, it holds that
\begin{equation*}
    \epsilon_\mu \leq  \frac{2c\ln\frac{3}{\delta}}{m-1} + \sqrt{\frac{2\hat{v}\ln\frac{3}{\delta}}{m}}\enspace.
\end{equation*}
\end{theorem}
\begin{proof}
By Bennett's Inequality, with probability at least $1 - \delta / 3$ over the draw of $X$, it holds that $\Bar{X} - \Expect \Bar{X} \leq \frac{c\ln\frac{3}{\delta}}{3m} + \sqrt{\frac{2\Expect[\hat{v}]\ln\frac{3}{\delta}}{m}}$. For the same reason, we also have that with probability at least $1 - \delta / 3$ over the draw of $X$, it holds that $\Expect\Bar{X} - \Bar{X} \leq \frac{c\ln\frac{3}{\delta}}{3m} + \sqrt{\frac{2\Expect[\hat{v}]\ln\frac{3}{\delta}}{m}}$. Finally, by \Cref{lemma:variance_bound}, with probability at least $1 - \delta / 3$ over the draw of $X$, it holds that $\Expect[\hat{v}] \leq \hat{v} + \epsilon_{\hat{v}}$. Applying a union bound, we get the above guarantee.

We will now derive the stated upper bound on $\epsilon_{\mu}$. First note that
\begin{align*}
    \hat{v} + \epsilon_{\hat{v}} &= \hat{v} + \frac{2c^2\ln\frac{3}{\delta}}{3m} + \sqrt{\left(\frac{1}{3} + \frac{1}{2\ln\frac{3}{\delta}}\right)\left(\frac{c^2\ln\frac{3}{\delta}}{m-1}\right)^2 + \frac{2c^2\hat{v}\ln\frac{3}{\delta}}{m}}\\
    &< \hat{v} + \left(\frac{2}{3} + \frac{1}{4\ln\frac{3}{\delta}}\right)\frac{c^2\ln\frac{3}{\delta}}{m-1} + \sqrt{\left(\frac{1}{3} + \frac{1}{2\ln\frac{3}{\delta}}\right)\left(\frac{c^2\ln\frac{3}{\delta}}{m-1}\right)^2 + \frac{2c^2\hat{v}\ln\frac{3}{\delta}}{m}} \\
    &= \left(\sqrt{\frac{c^2\ln\frac{3}{\delta}}{2(m-1)}} + \sqrt{\left(\frac{1}{6} + \frac{1}{4\ln\frac{3}{\delta}}\right)\frac{c^2\ln\frac{3}{\delta}}{m-1} + \hat{v}}\right)^2\enspace.
\end{align*}
We then have that
\begin{align*}
    \epsilon_\mu &= \frac{c\ln\frac{3}{\delta}}{3m} + \sqrt{\frac{2(\hat{v} + \epsilon_{\hat{v}})\ln\frac{3}{\delta}}{m}}\\
    &< \frac{4c\ln\frac{3}{\delta}}{3(m-1)} + \sqrt{\left(\frac{1}{3} + \frac{1}{2\ln\frac{3}{\delta}}\right)\left(\frac{c\ln\frac{3}{\delta}}{m-1}\right)^2 + \frac{2\hat{v}\ln\frac{3}{\delta}}{m}}\\
    &\leq \left(\frac{4}{3} + \sqrt{\frac{1}{3} + \frac{1}{2\ln\frac{3}{\delta}}}\right)\frac{c\ln\frac{3}{\delta}}{m-1} + \sqrt{\frac{2\hat{v}\ln\frac{3}{\delta}}{m}}\enspace.
\end{align*}
When $\delta \leq 0.03$, we get the stated upper bound.
\end{proof}

In order to show ~\Cref{cor:eBennettComplexity}, we show a more general corollary of ~\Cref{thm:generalEBennett}. Again, ~\Cref{cor:eBennettComplexity} follows after a union bound over utilities.

\begin{corollary}
\label{cor:generalEBennettComplexity}
Take the assumptions of ~\Cref{thm:generalEBennett}, and define $v \doteq \Expect[\hat{v}] =  \frac{1}{m}\sum_{i=1}^m\Expect[(X_i - \Bar{X})^2]$. For all $\epsilon > 0$, if
\begin{equation*}
    m\geq 1 + \frac{5c\ln\frac{3}{\delta}}{\epsilon} + \frac{2v\ln\frac{3}{\delta}}{\epsilon^2},
\end{equation*}
then with probability at least $1 - \frac{\delta}{3}$ over the draw of $X$, we have that $\epsilon_\mu \leq \epsilon$. 
\begin{proof}
From ~\Cref{thm:generalEBennett}, we have that
\begin{equation*}
    \epsilon_\mu < \frac{4c\ln\frac{3}{\delta}}{3(m-1)} + \sqrt{\left(\frac{1}{3} + \frac{1}{2\ln\frac{3}{\delta}}\right)\left(\frac{c\ln\frac{3}{\delta}}{m-1}\right)^2 + \frac{2\hat{v}\ln\frac{3}{\delta}}{m}}\enspace.
\end{equation*}
By \Cref{lemma:variance_bound}, we have with probability at least $1 - \frac{\delta}{3}$ over the draw of $X$ that 
\begin{align*}
    \epsilon_\mu &< \frac{4c\ln\frac{3}{\delta}}{3(m-1)} + \sqrt{\left(\frac{1}{3} + \frac{1}{2\ln\frac{3}{\delta}}\right)\left(\frac{c\ln\frac{3}{\delta}}{m-1}\right)^2 + \frac{2\left(v + \frac{2m+1}{6(m-1)}\cdot\frac{c^2\ln\frac{3}{\delta}}{m} + \sqrt{\frac{2c^2v\ln\frac{3}{\delta}}{m-1}}\right)\ln\frac{3}{\delta}}{m}}\\
    &< \frac{4c\ln\frac{3}{\delta}}{3(m-1)} + \sqrt{\left(\frac{1}{3} + \frac{(4m+2)(m-1)}{6m^2} + \frac{1}{2\ln\frac{3}{\delta}}\right)\left(\frac{c\ln\frac{3}{\delta}}{m-1}\right)^2 + 2\cdot \frac{c\ln\frac{3}{\delta}}{m-1}\sqrt{\frac{2v\ln\frac{3}{\delta}}{m}} + \frac{2v\ln\frac{3}{\delta}}{m}}\\
    &= \frac{4c\ln\frac{3}{\delta}}{3(m-1)} + \sqrt{\left(1 - \frac{m+1}{3m^2} + \frac{1}{2\ln\frac{3}{\delta}}\right)\left(\frac{c\ln\frac{3}{\delta}}{m-1}\right)^2 + 2\cdot \frac{c\ln\frac{3}{\delta}}{m-1}\sqrt{\frac{2v\ln\frac{3}{\delta}}{m}} + \frac{2v\ln\frac{3}{\delta}}{m}}\\
    & < \frac{4c\ln\frac{3}{\delta}}{3(m-1)} + \sqrt{\left(1 + \frac{1}{2\ln\frac{3}{\delta}}\right)\left(\frac{c\ln\frac{3}{\delta}}{m-1}\right)^2 + 2\sqrt{1 + \frac{1}{2\ln\frac{3}{\delta}}}\cdot \frac{c\ln\frac{3}{\delta}}{m-1}\sqrt{\frac{2v\ln\frac{3}{\delta}}{m}} + \frac{2v\ln\frac{3}{\delta}}{m}}\\
    &= \left(\frac{4}{3} + \sqrt{1 + \frac{1}{2\ln\frac{3}{\delta}}}\right)\frac{c\ln\frac{3}{\delta}}{(m-1)} + \sqrt{\frac{2v\ln\frac{3}{\delta}}{m}}\\
    &< \left(\frac{4}{3} + \sqrt{1 + \frac{1}{2\ln\frac{3}{\delta}}}\right)\frac{c\ln\frac{3}{\delta}}{(m-1)} + \sqrt{\frac{2v\ln\frac{3}{\delta}}{m-1}}\enspace.
\end{align*}
Setting the right hand side less than or equal to $\epsilon$ and then solving for $m$, we get 
\begin{equation*}
    m\geq 1 + \left(\frac{4}{3} + \sqrt{1 + \frac{1}{2\ln\frac{3}{\delta}}}\right)\frac{c\ln\frac{3}{\delta}}{\epsilon} + \frac{v\ln\frac{3}{\delta}}{\epsilon^2} + \sqrt{2\left(\frac{4}{3} + \sqrt{1 + \frac{1}{2\ln\frac{3}{\delta}}}\right)\frac{c\ln\frac{3}{\delta}}{\epsilon}\cdot \frac{v\ln\frac{3}{\delta}}{\epsilon^2} + \left(\frac{v\ln\frac{3}{\delta}}{\epsilon^2}\right)^2}\enspace.
\end{equation*}
Since $\sqrt{2ab + b^2}\leq a + b$, a looser lower bound on $m$ is
\begin{equation*}
    m\geq 1 + \left(\frac{8}{3} + \sqrt{4 + \frac{2}{\ln\frac{3}{\delta}}}\right)\frac{c\ln\frac{3}{\delta}}{\epsilon} + \frac{2v\ln\frac{3}{\delta}}{\epsilon^2}\enspace.
\end{equation*}
Noting that $\frac{8}{3} + \sqrt{4 + \frac{2}{\ln\frac{3}{\delta}}} < 5$ for $\delta \leq 0.75$ (a trivial assumption given that $1 - \delta$ is supposed to be a ``high'' probability), we are done.
\end{proof}
\end{corollary}

\section{Derived Games}
\if 0
\amy{a CE is a Nash of an extended game, so if we can estimate Nash, we can estimate CE. IDEA: do the analysis on the blown up game, but the estimation and computation on the original game. \\
def'n: an epsilon-CE is when following rec'n is epsilon-NE. \\
1. blow up game for all joint dis'tns. 2. check whether following the rec'ns is a Nash. if it is, the joint dist'n is a CE. \\
so, if following the rec'ns is a 2eps-NE in the blown up game, then the joint dist'n is a 2eps-CE. \\
conjecture: blowing up an empirical game does not change $\epsilon$. \\
if true, then:
empirical game is an eps-approx of the original game.
empirical blown up game is an eps-approx of the original blown-up game.
regret in empirical blown up game is 2eps-approx of regret in the original blown up game, which implies the joint dist'n is a 2eps-CE.
}
\fi

\paragraph{Correlated Equilibrium}
A \mydef{correlated equilibrium}, where players can correlate their behavior based on a private signal drawn from a joint distribution, is likewise defined in terms of regret: if no player regrets their strategy, given that all other players follow their equilibrium strategies, then the joint distribution is a correlated equilibrium.

One way to understand the correlated equilibria of a (base) game is by constructing an expanded game, where players condition their behavior on strategic recommendations, given some joint distribution over strategy profiles.
When obeying recommendations is an $\epsilon$-Nash equilibrium in this expanded game, the joint distribution is defined as an $\epsilon$-correlated equilibrium.
As constructing such an expanded game involves taking convex combinations of the utilities in the (base) game, using the weights specified by the joint distribution, a 1-Lipschitz operation, an $\epsilon$-uniform approximation of the (base) game is likewise an $\epsilon$-uniform approximation of the expanded game.
\Cref{cor:dual_containment} then gives us dual containment of Nash equilibria in the
expanded and approximate games.
As this argument applies to all joint distributions over strategy profiles simultaneously, \amy{why is simultaneity important?} whenever the obedient strategy profile is a Nash equilibrium of an expanded game, it is a $2\epsilon$-Nash equilibrium of a corresponding approximate expanded game, and further, any $2\epsilon$-Nash equilibrium of an approximate expanded game is a $4\epsilon$-Nash equilibrium of the expanded game.
Therefore, whenever a joint distribution is a correlated equilibrium of a 
game, it is a $2\epsilon$-correlated equilibrium of a corresponding approximate game, and further, any $2\epsilon$-correlated equilibrium of an approximate game is a $4\epsilon$-correlated equilibrium of the
game.

\paragraph{Team Games}
We define a \mydef{team game} as one derived from a base game $\GameTuple$, \amy{how exactly does the base game matter here? what if the utilities depended on all players? would you just throw away that information?} and a set of teams, which is a partition of the players in the base game into groups.
The strategy set for each group is the Cartesian product of the strategy sets of its members, and the utility function for each group is a welfare function of the utility functions of the group's members: e.g., utilitarian when team score is the sum of per-member scores, and egalitarian when a team is judged by its weakest member.
Given a uniform approximation of a normal-form game, a uniform approximation of the derived team game is preserved to within a factor of $\lambda$, so long as the welfare function is $\lambda$-Lipschitz.

\if 0
\cyrus{
\[
\{ T_{1}, T_{2}, \dots, T_{t} \}
\]
partition $\SetOfPlayers$
\[
\Utility^{T,\Welfare}_{i}(\StratProfile) = \Welfare\left(\Utility_{T_{i,1}}(\StratProfile), \Utility_{T_{i,2}}(\StratProfile), \dots \right)
\]
Actions for $T_{i}$ are $\StrategySet_{T_{i,1}} \times \StrategySet_{T_{i,2}} \times \dots$, thus strategy profile space in team game is isomorphic to strategy profile space in original game.
}

\amy{
number of players is smaller.
number of possible strategies per player (now teams) is larger. \\
the utility function is a function of the strategy space, which changes from the base game to the team game. making an eps-change in the original game, still only makes a ($\lambda$?) eps-change in the team game. welfare collapses the vector-valued utility of team members to a scalar. \\
2 args to the ufun. they both get changed in the team game. changing the strategy profile (first!) is 1-Lipschitz; its like applying the identity function. using welfare (afterwards!) is $\lambda$-Lipschitz. \\
example: half-court basketball. 1-v-9, you get a point when you score. 5-v-5, you get a point whenever anyone on your team scores.
}
\fi

\section{Sampling Schedule}
\label{sec:sampling_schedule}

Suppose a geometric schedule \emph{base} $\beta > 1$, and
let $\ScheduleLength > 1$ represent the iteration number.
It is mathematically convenient to let $\ScheduleLength$ be any real number $\geq 2$, so we take $\floor{\ScheduleLength}$ to be the schedule length.

To derive $\alpha$, we assume the best case, namely $0$ variance, in Bennett's inequality, which yields the lower-bound on the minimum sufficient sample size
$\alpha \doteq \frac{2 \UtilityRange}{3 \varepsilon} \ln \frac{3 \ScheduleLength \abs{\GameTuple}}{\delta}$.
\sbhaskar{As this lower-bound is loose, we instead take $\alpha \doteq \frac{7 \UtilityRange}{3 \varepsilon} \ln \frac{3 \ScheduleLength \abs{\GameTuple}}{\delta}$, as per the discussion following \Cref{thm:eBennett}.}{}\bhaskar{We will need to derive the correct tighter lower bound later.}
Likewise, to derive $\omega$, we assume the worst case, namely $\nicefrac{\UtilityRange^2}{4}$ variance in Bennett's inequality, which in fact invokes Hoeffding's inequality,
so the
necessary sample size
$\omega \doteq \frac{\UtilityRange^{2}}{2\varepsilon^{2}} \ln \frac{3 \ScheduleLength \abs{\GameTuple}}{\delta}$.


Together, these derivations yield the \emph{sample size ratio} \sbhaskar{$\frac{\omega}{\alpha} = \frac{3 \UtilityRange}{14 \varepsilon}$}{$\frac{\omega}{\alpha} = \frac{3 \UtilityRange}{4 \varepsilon}$}.
Thus, for a $\beta$-geometric schedule, we require
$\ScheduleLength + 1 \geq \ceil{\log_{\beta}\frac{\omega}{\alpha}} = \ceil{\log_{\beta}\frac{3 \UtilityRange}{4 \varepsilon}}$. 
%
This reasoning yields
$\ScheduleLength = \log_{\beta}\frac{3\UtilityRange}{4\varepsilon}$,
where the schedule ranges from iterations $1, \dots, \floor{\ScheduleLength}$. 
%
Finally, we take as our geometric sampling schedule $\bm{m}_{\TimeIndex} \doteq \ceil{\alpha\beta^{\TimeIndex}}$, for all $\TimeIndex \in 1, \dots, \floor{\ScheduleLength}$.


\section{Algorithm and Proofs of its Correctness and Efficiency}
\label{app:algo_corr_eff}

\subsection{Correctness Proof}



We now show \Cref{thm:correctness}.

\thmCorrectness*

\begin{proof}
At each time step $t \in [T]$ and
for each utility profile $\Utility_{\PlayerIndex}(\StratProfile)$,
we perform three tests in which we check that our empirical estimates satisfy their respective high probability bounds.
\if 0
Two of them pertain to the empirical mean:
is it within $\epsilon$ of the true mean?
And the third pertains to the variance:
using the empirical variance,
give me an upper bound on the true variance.

- upper bound on variance.
v smaller than v tilde
- avg +- epsilon from exp'n

\fi
On Line~\ref{step:varianceUpperBound}, we upper bound the true variance $\UtilityVariance_{\PlayerIndex}(\StratProfile)$ by the empirical variance $\tilde{\UtilityVariance}_{\PlayerIndex}(\StratProfile)$.
On Line~\ref{step:empiricalBennett}, we bound the deviation between the empirical mean $\hat{\Utility}_{\PlayerIndex}(\StratProfile; \Samples)$ and the true mean $\hat{\Utility}_{\PlayerIndex}(\StratProfile; \ConditionDistribution)$.
By a union bound, 
$\UtilityVariance_{\PlayerIndex}(\StratProfile) \le \tilde{\UtilityVariance}_{\PlayerIndex}(\StratProfile)$ and
$\abs{\hat{\Utility}_{\PlayerIndex}(\StratProfile; \Samples) - \hat{\Utility}_{\PlayerIndex}(\StratProfile; \ConditionDistribution)} \le \bm{\epsilon}_{\PlayerIndex}(\StratProfile; \Samples)$,
for all $(\PlayerIndex, \StratProfile) \in \SetOfPlayers \times \StratProfileSpace$
simultaneously, with probability $1 - \nicefrac{\delta}{T}$.
By another union bound, the same holds at all time steps $t \in [T]$ simultaneously with probability $1-\delta$.

Thus, whenever an index is pruned on Line~\ref{alg:psp:pruning_well-estimated indices}, the corresponding utility is well estimated: i.e., it is sufficiently close to its true value w.h.p..
Therefore, if Algorithm~\ref{alg:psp} terminates, it does so having correctly estimated all utilities: i.e., $\tilde{\Utility}$ is an $\epsilon$-estimate of $\Utility$ w.h.p..

It remains to prove termination.
By construction of the schedule, when $t=T$, it holds that $\bm{\epsilon}_{\PlayerIndex}(\StratProfile) \le  c\sqrt{\frac{\ln({\frac{3\abs{\GameTuple} \ScheduleLength}{\delta}})}{2\NumberOfSamples}} \le \epsilon$, $(\PlayerIndex, \StratProfile) \in \SetOfPlayers \times \StratProfileSpace$.
Therefore, at time $T$, anything left to prune is necessarily pruned on Line~\ref{alg:psp:pruning_well-estimated indices}, so that $\UtilityIndices = \emptyset$, and the algorithm terminates.
\end{proof}

\subsection{Efficiency Proof}

We now show \Cref{thm:efficiency}.

\thmEfficiency*

\begin{proof}
\bhaskar{I've commented out old proof}
Fix a utility index $(\PlayerIndex, \StratProfile) \in \UtilityIndices$.
By \Cref{cor:generalEBennettComplexity}, we have that with probability at least $1 - \frac{\delta}{3T\abs{\GameTuple}}$, Algorithm~\ref{alg:psp} \Cref{step:empiricalBennett} produces a (high probability) $\epsilon$-bound for $(\PlayerIndex, \StratProfile)$ when

\begin{equation}
m\leq \bm{\NumberOfSamples}_{\PlayerIndex}^{*}(\StratProfile) \doteq 1 + 2\ln \frac{3T\abs{\GameTuple}}{\delta} \left(\frac{5\UtilityRange}{2\epsilon} + \frac{ \UtilityVariance_{\PlayerIndex}(\StratProfile)}{\epsilon^{2}} \right) \enspace.
\end{equation}

Although $\bm{\NumberOfSamples}_{\PlayerIndex}^{*}(\StratProfile)$ may not exist in the sampling schedule, there always exists some $\hat{\bm{\NumberOfSamples}}_{\PlayerIndex}(\StratProfile)$ in the schedule between $\bm{\NumberOfSamples}_{\PlayerIndex}^{*}(\StratProfile)$ and $\beta\bm{\NumberOfSamples}_{\PlayerIndex}^{*}(\StratProfile)$. Moreover,
\begin{equation}
\hat{\bm{\NumberOfSamples}}_{\PlayerIndex}(\StratProfile) \leq \beta\bm{\NumberOfSamples}_{\PlayerIndex}^{*}(\StratProfile)
\leq 1 + 2\beta\ln\frac{3T\abs{\GameTuple}}{\delta} \left(\frac{5\UtilityRange}{2\epsilon} + \frac{ \UtilityVariance_{\PlayerIndex}(\StratProfile)}{\epsilon^{2}}\right)
\in \mathcal{O}\left( \log\left(\frac{\abs{\GameTuple}\log(\nicefrac{\UtilityRange}{\epsilon})}{\delta}\right)\left(\frac{\UtilityRange}{\epsilon} + \frac{\UtilityVariance_{\PlayerIndex}(\StratProfile)}{\epsilon^{2}} \right)\right) \enspace.
\end{equation}

Having calculated the number of samples $\hat{\bm{\NumberOfSamples}}_{\PlayerIndex}(\StratProfile)$ which suffices to $\epsilon$-uniformly estimate $\Utility_{\PlayerIndex}(\StratProfile)$ at a single utility index $(\PlayerIndex, \StratProfile)$ with probability at least $1 - \nicefrac{\delta}{T\abs{\GameTuple}} = 1 - \nicefrac{\delta}{T\abs{\StratProfileSpace}\abs{\SetOfPlayers}}$,
we next apply a union bound over $\SetOfPlayers$
so that
$\hat{\bm{\NumberOfSamples}}(\StratProfile) = \max_{\PlayerIndex \in \SetOfPlayers}
\hat{\bm{\NumberOfSamples}}_{\PlayerIndex}(\StratProfile)$
simulations of profile $\StratProfile$ suffice to $\epsilon$-uniformly estimate $\Utility(\StratProfile)$ at strategy profile $\StratProfile$ with probability at least $1 - \nicefrac{\delta}{T\abs{\StratProfileSpace}}$.
Thus, we guarantee \Cref{alg:psp} prunes $\StratProfile$ w.h.p.\ before exceeding $\hat{\bm{\NumberOfSamples}}(\StratProfile)$ simulations of $\StratProfile$.

Continuing to follow the logic in the proof of \Cref{thm:Bennett} beyond \eqref{eqn:ms} yields the desiderata.
In brief, we apply another union bound, this time over strategy profiles; thus with probability at least $1 - \nicefrac{\delta}{T}$, the same holds simultaneously for $\Utility$: i.e., for all utility indices $(\PlayerIndex, \StratProfile) \in \UtilityIndices$.
We then bound the sample (maximum evaluations at any profile) and simulation (total evaluations over all profiles) complexities of $\epsilon$-$\delta$ uniformly estimating $\GameTuple$.
\end{proof}

\ifpoa{
\section{Proofs about Extreme Welfare Equilibria}
\label{app:poa_appendix}

\subsection{Inapproximability Result}

We now show Observation~\ref{obs:inapproximable}.

\obsInapproximable*

\begin{proof}
By construction, $\GameTuple (-\gamma)$ is a $2\gamma \leq \epsilon$-uniform approximation of $\GameTuple (\gamma)$.
When $\gamma < 0$, the column player plays $A$, to which the row player responds with $A$.
The only, and thus both the
best and worst
equilibrium of $\GameTuple$ is $(A,A)$, with utilitarian welfare $0$.
For $\gamma > 0$, by similar reasoning the only equilibrium is $(B,B)$, which has utilitarian welfare $c$.
The utilitarian welfare of the best (and worst) welfare equilibria in $\GameTuple(-\gamma)$ and $\GameTuple(\gamma)$, therefore, differ by $\constant > 0$: i.e., arbitrarily.
\end{proof}

We now show Corollary~\ref{cor:inapproximable}.

\corInapproximable*

\paragraph{Proof Sketch}
The game family $\GameTuple (\gamma)$ can be used to show that there does not exist a finite sample size $m < \infty$ that is sufficient to $\epsilon$-estimate maximally consonant and maximally dissonant equilibria, or their values.
A proof of this fact equates learning these properties with a special case of mean-estimation, namely the problem of determining whether a coin's bias for heads is $0$ or $\abs{\gamma}$ for some $\gamma \approx 0$.
Consider the following simulation-based game, based on the game in Observation~\ref{obs:inapproximable}: 
at any profile $\StratProfile$, the simulator returns $\Utility_{0} (\StratProfile) + (B,-B)$, where $B$ is a Bernoulli coin with probability $\abs{\gamma}$ of heads, multiplied by $\sgn(\gamma)$.
More specifically, it returns $\Utility_{0} (\StratProfile) + (\sgn(\gamma),-\sgn(\gamma))$ when a head is observed, and $\Utility_{0} (\StratProfile)$ otherwise.
Determining whether $(A,A)$ or $(B,B)$ is the (unique) equilibrium of this game, i.e., whether $\gamma$ is positive or negative, is 
is as difficult as flipping a single heads. 
As $\gamma \to 0$, the number of coin flips required to see even a single heads with probability at least $1 - \delta$, and hence the number of samples required to distinguish these equilibria with fixed positive probability, approaches infinity.
As $\gamma$ is simply a game parameter, we thus conclude that \emph{no\/} finite sample size is sufficient to uniformly \amy{uniform across $\gamma$'s; a sample size that works for all $\gamma$. $\gamma$ is the problem instance. result should hold for all problem instances.} estimate the extreme equilibria of games in this family, i.e., extreme equilibria are inapproximable.

\subsection{Approximation Proofs: $\MaxDis$ and $\MaxCon$}

As usual, we fix two compatible games
characterized by utility functions $\Utility$ and $\Utility'$, 
and we assume that $\Utility'$ is an $\epsilon$-uniform approximation of $\Utility$.



\begin{restatable}{theorem}{thmExtreme}
[Approximating Extreme Equilibria]
\label{thm:extremeNash}
Fix $\epsilon > 0$.
Assume $\Welfare: \StratProfileSpace \times (\StratProfileSpace \to \R^{\NumberOfPlayers}) \to \R$
is monotonic non-decreasing so that $\Utility \preceq \bm{v}$ component-wise implies $\Welfare(\StratProfile; \Utility) \le \Welfare(\StratProfile; \bm{v})$, for all $\StratProfile \in  \StratProfileSpace$ and utilities $\Utility$ and $\bm{v}$.
It holds that
\if 0
extreme equilibrium outcomes
maximal consonant equilibrium outcome \\
maximal dissonant equilibrium outcome
\fi
%
$\inf_{\StratProfile \in \Nash_{2\epsilon}(\Utility')} \Welfare (\StratProfile; \Utility' - \epsilon) \leq \MaxDis (\Utility) \leq \MaxCon (\Utility) \leq
\sup_{\StratProfile \in \Nash_{2\epsilon}(\Utility')} \Welfare (\StratProfile; \Utility' + \epsilon)$.
\end{restatable}

Our proof relies on Thm. 2.2 of~\cite{areyan2020improved}, namely $\Nash(\Utility) \subseteq \Nash_{2\epsilon} (\Utility') \subseteq \Nash_{4\epsilon}(\Utility)$,
where $\Nash$ denotes the set of equilibria of the game $\Utility$.

\begin{proof}
\begin{align*}
& \inf_{\StratProfile \in \Nash_{2\epsilon} (\Utility')} \Welfare (\StratProfile; \Utility' - \epsilon) \\
& \leq \inf_{\StratProfile \in \Nash (\Utility)} \Welfare (\StratProfile; \Utility' - \epsilon) & \Nash (\Utility) \subseteq \Nash_{2\epsilon} (\Utility') \\
& \leq \inf_{\StratProfile \in \Nash (\Utility)} \Welfare (\StratProfile; \Utility) \ = \ \MaxDis(\Utility) & \textsc{monotonicity} \\
& \leq \sup_{\StratProfile \in \Nash (\Utility)} \Welfare (\StratProfile; \Utility) \ = \ \MaxCon(\Utility)  & \inf_{x \in X} f(x) \leq \sup_{x \in X} f(x) \\
& \leq \sup_{\StratProfile \in \Nash_{2\epsilon}(\Utility')} \Welfare (\StratProfile; \Utility) & \Nash (\Utility) \subseteq \Nash_{2\epsilon}(\Utility') \\
& \leq \sup_{\StratProfile \in \Nash_{2\epsilon}(\Utility')} \Welfare (\StratProfile; \Utility' + \epsilon) & \textsc{monotonicity}
\end{align*}
\end{proof}



\begin{restatable}{corollary}{corExtreme}
Suppose as in Theorem~\ref{thm:extremeNash}.
Further, suppose $\exists \gamma > 0$
s.t.\ $\forall \alpha > 0, \forall \StratProfile \in \Nash_{\alpha}(\Utility), \exists s' \in \Nash (\Utility)$
s.t.\ $\abs{\Welfare (\StratProfile; \Utility) - \Welfare (\StratProfile'; \Utility)} \leq \alpha \gamma$.
We may then refine the upper (lower) bound on $\MaxDis(\Utility)$ ($\MaxCon(\Utility)$) as
$\MaxDis(\Utility) \leq \inf_{\StratProfile \in \Nash (\Utility')} \Welfare (\StratProfile; \Utility' + \epsilon) + 2 \gamma \epsilon$
and
$\sup_{\StratProfile \in \Nash (\Utility')} \Welfare (\StratProfile; \Utility' - \epsilon) - 2 \gamma \epsilon \leq \MaxCon(\Utility)$,
respectively.
\label{cor:extremeNash}
\end{restatable}

\begin{proof}
$\MaxDis(\Utility) =
\inf_{\StratProfile \in \Nash (\Utility)} \Welfare(\StratProfile; \Utility) 
    \leq 
\inf_{\StratProfile \in \Nash_{2\epsilon} (\Utility)} \Welfare(\StratProfile; \Utility) + 2 \epsilon \gamma
    \leq 
\inf_{\StratProfile \in \Nash (\Utility')} \Welfare(\StratProfile; \Utility) + 2 \epsilon \gamma
    \leq 
\inf_{\StratProfile \in \Nash (\Utility')} \Welfare(\StratProfile; \Utility' + \epsilon) + 2 \epsilon \gamma$,
where the first inequality is by assumption, with $\alpha = 2\epsilon$; 
the second follows because 
$\Nash (\Utility') \subseteq \Nash_{2\epsilon} (\Utility)$; and
the last, by monotonicity.
Analogously,
$\sup_{\StratProfile \in \Nash (\Utility')} \Welfare(\StratProfile; \Utility' - \epsilon) - 2 \epsilon \gamma
    \leq 
\sup_{\StratProfile \in \Nash (\Utility')} \Welfare(\StratProfile; \Utility) - 2 \epsilon \gamma
    \leq 
\sup_{\StratProfile \in \Nash_{2\epsilon} (\Utility)} \Welfare(\StratProfile; \Utility) - 2 \epsilon \gamma
    \leq 
\sup_{\StratProfile \in \Nash (\Utility)} \Welfare(\StratProfile; \Utility) =
\MaxCon(\Utility)$,
where the first inequality follows by monotonicity;
the second follows because 
$\Nash (\Utility') \subseteq \Nash_{2\epsilon} (\Utility)$; and
the third inequality is by assumption, with $\alpha = 2\epsilon$.
\end{proof}

We now show Theorem~\ref{thm:LambdaStable}.

\thmLambdaStable*

\begin{proof}
We prove this theorem using the Lipschitz calculus (Thm.~\ref{thm:lipschitz-facts}).
Recall that regret is 2-Lipschitz.
Moreover $\inf$ provides a 1-Lipschitz multiplicative factor.
Finally, addition requires that we add the Lipschitz constants of the corresponding addends.
We begin at the leaves in Figure~\ref{fig:lipschitz}, computing Lipschitz constants, and back those values up the syntax tree to arrive at $2(\lambda_{\Welfare} + \Lambda)$ Lipschitz constant for $\MaxDis_{\Lambda} (\Utility)$.
The inequality then follows via Observation~\ref{obs:lipschitz-uniform}.

The proofs in the remaining three cases---$\MaxCon_{\Lambda} (\Utility)$, 
$\MaxDis_{\Lambda}^* (\Utility)$, and
$\MaxCon{\Lambda}^* (\Utility)$---are analogous.
\end{proof}

\begin{figure}
\begin{center}
\begin{tikzpicture}[xscale=2,yscale=0.9]


\node[rectangle,draw=black] (inf) at (1, -1) {$\displaystyle\inf_{\StratProfile \in \StratProfileSpace}: 1 \times (\lambda_{\Welfare} + 2\Lambda)$};


\node[rectangle,draw=black] (plus) at (1, -2) {$\displaystyle+: (\lambda_{\Welfare}) + (2\Lambda) $};

\node[rectangle,draw=black] (w2) at (0.5, -3) {$\displaystyle\Welfare (\StratProfile): \lambda_{\Welfare}$};

\node[rectangle,draw=black] (times) at (1.5, -3) {$\displaystyle\times: (\Lambda) \times (2)$};

\node[rectangle,draw=black] (Lam) at (1, -4) {$\Lambda: \Lambda$};

\node[rectangle,draw=black] (reg) at (2, -4) {$\Regret (\StratProfile; \Utility): 2$};



\draw[->] (inf) -- (plus);

\draw[->] (plus) -- (w2);
\draw[->] (plus) -- (times);

\draw[->] (times) -- (Lam);
\draw[->] (times) -- (reg);

\end{tikzpicture}
\end{center}
\caption{Abstract syntax tree depicting the derivation of the Lipschitz constant for $\MaxDis_{\Lambda} (\Utility)$.}
\label{fig:lipschitz}
\end{figure}

\subsection{Approximation Proofs: $\AnarchyRatio$, $\StabilityRatio$, and $\AnarchyGap$}

\begin{restatable}{theorem}{thmApproxPrices}
[Approximating the Prices of Anarchy and Stability]
Assume as in Theorem~\ref{thm:extremeNash}.
Then, so long as the numerator and denominator of each bound is positive, 
%
$\max \left\{ 1, \frac{\sup_{\StratProfile \in \StratProfileSpace} \Welfare (\Utility' - \epsilon, \StratProfile)}{\sup_{\StratProfile \in \Nash_{2\epsilon}(\Utility')} \Welfare (\Utility' + \epsilon, \StratProfile)} \right\} 
\leq \StabilityRatio (\Utility)
\leq \AnarchyRatio (\Utility) 
\leq  \frac{\sup_{\StratProfile \in \StratProfileSpace} \Welfare (\Utility' + \epsilon, \StratProfile)}{\inf_{\StratProfile \in \Nash_{2\epsilon}(\Utility')} \Welfare (\Utility' - \epsilon, \StratProfile)}$.
\label{thm:approxPrices}
\end{restatable}



\begin{proof}
It suffices to show that
$\sup_{\StratProfile \in \StratProfileSpace} \Welfare(\Utility' - \epsilon, \StratProfile) \leq \sup_{\StratProfile \in \StratProfileSpace} \Welfare(\Utility, \StratProfile) \leq \sup_{\StratProfile \in \StratProfileSpace} \Welfare(\Utility' + \epsilon, \StratProfile)$,
but this follows immediately from the monotonicity of $\Welfare$, and
$\inf_{\StratProfile \in \Nash_{2\epsilon}(\Utility')} \Welfare (\Utility' - \epsilon, \StratProfile) \leq
\inf_{\StratProfile \in \Nash (\Utility)} \Welfare (\Utility, \StratProfile) =
\MaxDis (\Utility) \leq
\MaxCon (\Utility) =
\sup_{\StratProfile \in \Nash (\Utility)} \Welfare (\Utility, \StratProfile) \leq
\sup_{\StratProfile \in \Nash_{2\epsilon} (\Utility')} \Welfare (\Utility' + \epsilon, \StratProfile)$,
which follows from Thm.~\ref{thm:extremeNash}.
\end{proof}

\begin{restatable}{corollary}{corApproxPrices}
Assume as in~\Cref{cor:extremeNash}.
%
\if 0
We may then bound the rate at which the difference between the worst equilibrium welfare in $\GameTuple$ and $\GameTuple'$ tends to 0 linearly in $\epsilon$
\[
\abs{\inf_{\StratProfile \in \Nash(\GameTuple)} \Welfare(\Utility, \StratProfile) - \inf_{\StratProfile' \in \Nash(\GameTuple')} \Welfare(\Utility', \StratProfile')} \leq \NumberOfPlayers\epsilon(1 + 2\gamma) \enspace.
\]
Consequently, we may bound the price of anarchy as
\[
\frac{\sup_{\StratProfile \in \StratProfileSpace} \Welfare (\Utility', \StratProfile) - \NumberOfPlayers\epsilon}{\inf_{\StratProfile \in \Nash^\diamond(\GameTuple')} \Welfare (\Utility', \StratProfile) + \NumberOfPlayers\epsilon(1 + 2\gamma)} \leq \AnarchyRatio (\GameTuple) \leq \frac{\sup_{\StratProfile \in \StratProfileSpace} \Welfare (\Utility', \StratProfile) + \NumberOfPlayers\epsilon}{\max(0, \inf_{\StratProfile \in \Nash^\diamond(\GameTuple')} \Welfare (\Utility', \StratProfile) -\NumberOfPlayers\epsilon(1 + 2\gamma))} \enspace.
\]
\fi
We may then refine the lower bound, improving the supremum (i.e., maximally consonant) equilibrium value to an infimum (i.e., maximally dissonant), 
again so long as 
the denominator is positive, 
as
$\frac{\sup_{\StratProfile \in \StratProfileSpace} \Welfare (\Utility' - \epsilon, \StratProfile)}{\inf_{\StratProfile \in \Nash (\Utility')} \Welfare (\Utility' + \epsilon, \StratProfile) + 2 \gamma \epsilon} \leq \AnarchyRatio (\Utility)$.
Analogously,
%
$\StabilityRatio (\Utility) \leq \frac{\sup_{\StratProfile \in \StratProfileSpace} \Welfare (\Utility' + \epsilon, \StratProfile)}{\sup_{\StratProfile \in \Nash (\Utility')} \Welfare (\Utility' - \epsilon, \StratProfile) - 2 \gamma \epsilon}$.
\label{cor:approxPrices}
\end{restatable}



\begin{proof}
The claim follows by lower bounding the numerator of $\AnarchyRatio(\Utility)$ (monotonicity), and upper bounding its denominator via \Cref{cor:extremeNash}, and vice versa for $\StabilityRatio(\Utility)$.
\end{proof}

We now show Theorem~\ref{thm:anarchy-gap}.

\thmApproxGap*

\begin{proof}
The proof of this theorem parallels that of Theorem~\ref{thm:LambdaStable}.
\end{proof}

\if 0
\begin{figure}
\begin{center}
\begin{tikzpicture}[xscale=2,yscale=0.9]


\node[rectangle,draw=black] (inf) at (1, -1) {$\displaystyle\inf_{\StratProfile \in \StratProfileSpace} : 1 \times (\lambda_{\Welfare} + 4\Lambda)$}; 


\node[rectangle,draw=black] (plus) at (1, -2) {$\displaystyle+: (\lambda_{\Welfare}) + (4\Lambda) $};

\node[rectangle,draw=black] (w2) at (0.5, -3) {$\displaystyle\Welfare (\StratProfile): \lambda_{\Welfare}$};

\node[rectangle,draw=black] (times) at (1.5, -3) {$\displaystyle\times: (\Lambda) \times (4)$};

\node[rectangle,draw=black] (Lam) at (1, -4) {$\Lambda: \Lambda$};

\node[rectangle,draw=black] (reg) at (2, -4) {$\Regret^{*} (\StratProfile; \Utility): 4$};



\draw[->] (inf) -- (plus);

\draw[->] (plus) -- (w2);
\draw[->] (plus) -- (times);

\draw[->] (times) -- (Lam);
\draw[->] (times) -- (reg);

\end{tikzpicture}
\end{center}
\caption{Abstract syntax tree derivation of the Lipschitz constant for $\MaxDis_{\Lambda}^* (\Utility)$.}
\end{figure}
\fi

\subsection{Anarchy Gap Experiments}

To generate Figure~\ref{fig:anarchygapresults}, we sampled congestion games at random using GAMUT~\cite{nudelman2004run}.
Let $\GameTuple(n, k)$ be a congestion game drawn from GAMUT with $n$ players and $k$ facilities, and $\Utility_{n, k}$ its utility function. For these experiments, $\GameTuple(n, k)$ was our ground-truth game. Our goal was to observe the empirical performance of the bound given in~\Cref{thm:anarchy-gap}.
We chose $\Welfare$ to be utilitarian welfare.

\begin{figure}[t]
\centering
\includegraphics[width=0.5\textwidth]{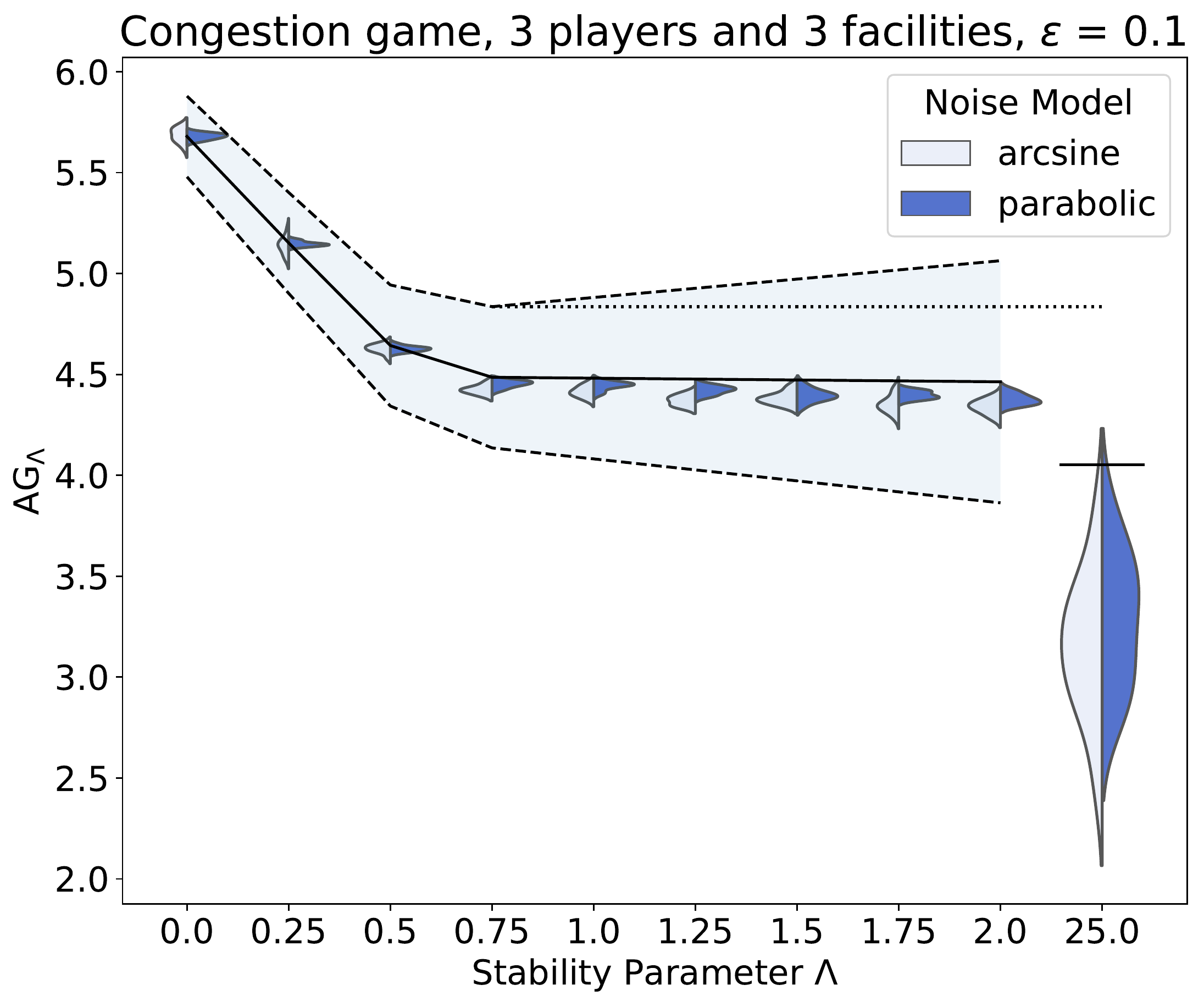}
\caption{Anarchy gap results}
\label{fig:anarchygapresults}
\end{figure}

For this purpose, we drew $\epsilon$-uniform approximations of $\GameTuple(n, k)$ by adding random noise to $\Utility_{n, k}$. Specifically, given $\Utility_{n, k}$, we constructed $\GameTuple'(n, k)$, an $\epsilon$-uniform approximation of $\GameTuple(n, k)$, by setting $\Utility'_{n, k} = \Utility_{n, k} + \bm{\epsilon}$. Here, $\bm{\epsilon}$ is a vector of randomly drawn components where each component, $\epsilon_{\PlayerIndex, \StratProfile}$, corresponds to a pair (player, strategy profile) of $\GameTuple(n, k)$. 

We experimented with two noise models: a mean-concentrated model and a biased model. 
In the mean-concentrated model, for each $(\PlayerIndex, \StratProfile)$, we draw $\epsilon_{\PlayerIndex, \StratProfile}$ from the \cyrus{Used to be: Wigner-semicircular distribution scaled to $[-\epsilon, \epsilon]$, which has (relatively small) standard deviation $\nicefrac{\varepsilon}{4}$.}
parabolic, i.e., $\beta(2, 2)$, distribution scaled to $[-\epsilon, \epsilon]$, which has (relatively small) standard deviation $\frac{\varepsilon}{4\sqrt{5}}$.
The goal of this model is to generate utilities of the estimated game that are closer to those of the ground-truth game (as one would expect via, for example, the central limit theorem) than worst-case or tail bounds would mandate.
In the biased model, for each $(\PlayerIndex, \StratProfile)$, we draw $\epsilon_{\PlayerIndex, \StratProfile}$ from the arcsine distribution, i.e., $\beta(\frac{1}{2}, \frac{1}{2})$, scaled to range $[-\epsilon, \epsilon]$, which has the relatively large standard deviation of $\nicefrac{\varepsilon}{\sqrt{2}}$. The goal of this model is to generate utilities of the estimated game that are farther away from those of the ground-truth game, as more extreme games are more likely to exhibit large changes in $\AnarchyGap_{\Lambda}(\cdot)$.

\Cref{fig:anarchygapresults} shows a representative result for $\AnarchyGap_{\Lambda}(\Utility_{3, 3})$. The figure shows several violin plots, one plot for each value of $\Lambda \in \{0, 0.25, 0.5, \ldots, 2.0\}$. Each half of a violin plots the distribution of $\AnarchyGap_{\Lambda}(\Utility'_{3, 3})$ for each of reduced bias and biased noise models. The plot also shows the upper and lower bound of~\Cref{thm:anarchy-gap}, as dashed lines, as well as a dotted line representing an improved upper-bound on $\AnarchyGap_{\Lambda}$, as this quantity decreases monotonically in $\Lambda$ (because the non-cooperative term is monotonic in $\Lambda$, and the cooperative term is constant).

Our experiments reveal that our estimator is biased.
Unsurprisingly, it appears less biased in the parabolic case than in the arcsine case, and the variance is also lower in the parabolic case.
At $\Lambda=20$, the empirical estimates are extremely noisy.
This result is consistent with our observation that extreme equilibria cannot be well estimated (\Cref{obs:inapproximable}).
Even a very small change in utilities can effect whether a strategy profile is an equilibrium or not, and can lead to a large change in extreme equilibrium values.


\cyrus{More justification of anarchy gap concept. Anarchy gap is measured in the same units as utilities and welfare.}

\cyrus{Why not PoA: if we double utils, PoA doesn't change. Not that useful as an objective in mechanism design.}


}
\fi

\end{document}